\documentclass[submission,copyright,creativecommons]{eptcs}

\usepackage[T1]{fontenc} 

\usepackage{amssymb} 

\usepackage{calculi}
\usepackage{amsthm}
\usepackage{thmtools}
\usepackage{listings}


\declaretheorem[style=plain,name=Theorem]{theorem}

\declaretheorem[style=plain,name=Corollary]{corollary}
\declaretheorem[style=plain,name=Lemma]{lemma}
\declaretheorem[style=definition,qed=$\blacksquare$,name=Definition]{definition}

\definecolor{strings}{rgb}{0,0.5,0}
\definecolor{emphs}{rgb}{0.64,0.08,0.08}
\definecolor{comments}{rgb}{0.17,0.57,0.68}
\colorlet{keywords}{blue!50!cyan}

\lstdefinestyle{tinysol}{
  language=C,
  captionpos=b,
  numbers=left,
  numberstyle=\tiny,
  frame=lines,
  showspaces=false,
  showtabs=false,
  breaklines=true,
  showstringspaces=false,
  breakatwhitespace=true,
  emph={malloc, calloc, realloc, free, memset},
  emphstyle={\rmfamily\bfseries\color{emphs}},
  commentstyle=\color{comments},
  morekeywords={contract, account, this, var, prop, value, balance, sender},
  keywordstyle={\bfseries\color{keywords}},
  stringstyle=\color{strings},
  basicstyle=\ttfamily\small,
  escapechar=@
}

\pdfstringdefDisableCommands{%
  \def\EPI{epi}
  \def\ENV{env}
  \def\WCLANG{WC}
}

\newcommand{\ghostsubsection}[1]{%
\phantomsection%
\addcontentsline{toc}{subsection}{#1}%
}

\def\DEFSYM{\ensuremath{\overset{\Delta}{=}}}
\newcommand{\code}[1]{\texttt{#1}}

\def\WCLANG{\texttt{WC}}
\newcommand{\CONF}[1]{\ensuremath{\left<#1\right>}}
\newcommand{\UPDATE}[2]{\ensuremath{\kern-2pt\left[#1 \mapsto #2\right]}}
\newcommand{\SUBSTITUTE}[2]{\ensuremath{\left\{#1/#2\right\}}}
\newcommand{\WCCALL}[3]{\ensuremath{\code{call $#1$.$#2$($#3$)}}}

\def\VALUES{\ensuremath{\normalfont\text{\sffamily Val}}}
\def\EXPR{\ensuremath{\normalfont\text{\sffamily Exp}}}
\def\PROC{\ensuremath{\normalfont\text{\sffamily Proc}}}
\def\STM{\ensuremath{\normalfont\text{\sffamily Stm}}}

\def\MNAMES{\ensuremath{\normalfont\text{\sffamily MNames}}} 
\def\VNAMES{\ensuremath{\normalfont\text{\sffamily VNames}}} 
\def\FNAMES{\ensuremath{\normalfont\text{\sffamily FNames}}} 
\def\CNAMES{\ensuremath{\normalfont\text{\sffamily CNames}}} 

\newcommand{\ENV}[1]{\ensuremath{\normalfont\text{\sffamily env}_{#1}}}

\newcommand{\SETENV}[1]{\ensuremath{\normalfont\text{\sffamily Env}_{#1}}}

\def\TRUE{\ensuremath{\text{\sffamily T}}}
\def\FALSE{\ensuremath{\text{\sffamily F}}}

\newcommand{\IFTHENELSE}[3]{\ensuremath{\textbf{if~} #1 \textbf{~then~} #2 \textbf{~else~} #3}}
\newcommand{\LOCNEW}[2]{\NEW{#1 \mathrel{\code{:=}} #2}}
\newcommand{\LOCREAD}[2]{#1 \triangleright \code{($#2$)}}
\newcommand{\LOCWRITE}[2]{#1 \triangleleft \code{<$#2$>}}

\newcommand{\TLOC}[2][i \in I]{\ensuremath{\text{\sffamily loc}\SET{#2}}_{#1}}
\def\TRET{\ensuremath{I^\text{\sffamily ret}}}
\newcommand{\TIFC}[1]{\ensuremath{I^{#1}}}

\def\titlerunning{Typing Composite Subjects}

\title{\titlerunning}

\newlength\QEDSymbolSpace
\settowidth\QEDSymbolSpace{$\blacksquare$}
\newenvironment{fixqed}
{\begin{minipage}[b]{\dimexpr\linewidth-\QEDSymbolSpace}\centering}
{\end{minipage}\vspace{-1em}}

\author{%
Luca Aceto\thanks{The work of Luca Aceto and Stian Lybech was supported by the Icelandic Research Fund Grant No.\@ 218202-05(1-3).}
\institute{%
  Dept.~of Computer Science \\ 
  Reykjavik University      \\ 
  Reykjavik (Iceland)
}
\institute{%
  Gran Sasso Science Institute\\
  L'Aquila (Italy)
}
\email{luca@ru.is}
\and
Daniele Gorla
\institute{%
  Dept.~of Computer Science           \\ 
  Sapienza University of Rome (Italy) \\ 
}
\email{gorla@di.uniroma1.it}
\and
Stian Lybech\textsuperscript{\thefootnote}
\institute{%
  Dept.~of Computer Science \\
  Reykjavik University      \\ 
  Reykjavik (Iceland)
}
\email{stian21@ru.is}
}

\begin{document}
\maketitle

\section{Introduction}\label{wc:sec:introduction}
Process calculi are formalisms for modelling and reasoning about concurrent and distributed programs, and one way of reasoning about the safety of such programs is by means of \emph{type systems}.
In that approach, types are usually assigned to atomic identifiers, such as variables, function names and, in the case of process calculi with communication primitives, to \emph{channel names}.
For example, the original \PI-calculus \cite{PICALC,milner_walker_parrow1992picalc} features communication primitives for transmitting single, atomic names on named channels: $\OUTPUT{x}{z}.P$ outputs the name $z$ on the channel $x$ and continues as $P$, whilst $\INPUT{x}{y}.P$ receives a name on the channel $x$ and binds it to $y$ within the continuation $P$.
In standard \PI-calculus terminology, $x$ is referred to as the \emph{subject}, and $y$ (resp.\@ $z$) as the \emph{object} of the input (resp.\@  output) construct.
This can be compared to a procedure $x$ in an imperative-style language, taking a single parameter, with $y$ and $z$ as the formal and actual parameters, respectively.

Further extensions of the \PI-calculus also allow polyadicity for objects, such that multiple values can be transmitted in a single communication, written $\INPUT{x}{\VEC{y}}.P$ and $\OUTPUT{x}{\VEC{z}}.P$ for input and output, respectively, where $\VEC{y}$ and $\VEC{z}$ represent vectors of names.
This is known as the polyadic \PI-calculus \cite{sangiorgi2003pi}, and although it is well-known that polyadicity can be encoded in the monadic \PI-calculus, this extension makes the correspondence with imperative procedure calls even clearer, since procedures usually are able to take multiple arguments.
This correspondence between synchronisation in the \PI-calculus, and classic imperative features such as procedures, references and variables is well-known and has already been explored in a number of papers, e.g.\@ \cite{hirschkoff2020references_picalc,KS02,NR99,RV96,Sangiorgi98,Vasco95,Walker95}.

However, not all process calculi use only \emph{atomic} identifiers as channels.
For example, the language \EPI{} \cite{CARBONEMAFFEIS} extends the \PI-calculus with polyadic \emph{synchronisation}.
Thus, in \EPI, subjects can be \emph{vectors} of names of arbitrary length, e.g.\@ $\OUTPUT{x_1 \cdot x_2}{z}.P$ or $\INPUT{x_1 \cdot x_2 \cdot x_3}{y}.P$; and since names can be passed around, such name vectors can also be composed at runtime, although they cannot grow in length.
Every \PI-calculus process is then also an \EPI-process, with the length of subjects restricted to a single name; hence, the \PI-calculus is also referred to as $\PI^1$, and every fixed length $n$ of subjects then yields a language $\PI^n$, with \EPI{} being the union of all these languages.

Polyadic synchronisation and runtime composition of channels provide a more realistic way of expressing communication in a distributed setting, compared to the standard \PI-calculus.
Consider for example an IP address in CIDR notation: it is not an \emph{atomic} identifier, but rather a \emph{composition} of elements, which also may be built at runtime.
Unlike the case for polyadic objects, such name compositions cannot be expressed in the standard \PI-calculus without introducing divergence, as was shown by Carbone and Maffeis in \cite{CARBONEMAFFEIS}.
Admitting polyadic subjects into the language thus yields increased expressivity.

The aforementioned correspondence with imperative features then poses a new question: namely, how to view polyadic synchronisation in an imperative setting.
Our proposal in the present paper is that subject vectors are comparable to \emph{name-spaced} procedures and variables, such as in class-based and object-oriented programming, where methods and fields are grouped under a shared class name.
A natural way to represent a method call
\begin{center}
  \WCCALL{X}{f}{v_1, \ldots, v_n}
\end{center}

\noindent to a method $f$, in a class or object $X$, with actual parameters $v_1, \ldots, v_n$, would therefore simply be as an output
\begin{equation*}
  \OUTPUT{X \cdot f}{v_1, \ldots, v_n}
\end{equation*}

\noindent where we assume $X, f$ are names.\footnote{This representation technically only makes use of the $\PI^2$ sublanguage of $\EPI$, but one could easily imagine a language with e.g.\@ nested classes, which would require arbitrarily long subject vectors to represent an arbitrary degree of nesting.}
Likewise, an input 
\begin{equation*}
  \INPUT{X \cdot f}{\VEC{y}} . P
\end{equation*}

\noindent would correspond to a \emph{declaration} of a method $f$ with body $P$ in a class or object named $X$.
We shall explore this correspondence in detail in the following.

This encoding, in itself, is quite trivial as indicated above.
However, we can then use the intuitions afforded by this encoding to guide the development of a type system for \EPI{}, which again should be such that it corresponds to an ``expectable'' type system for a simple, class-based or object-oriented, imperative language.
This is non-trivial, since types are normally given to individual names, whereas subjects in \EPI{} are composites.
Hence, the type of a subject $x_1 \cdot \ldots \cdot x_n$ must somehow be derived from the types of the individual names $x_1, \ldots, x_n$.
This can be done in several possible ways, each with some benefits and drawbacks, depending on what subject vectors are intended to represent.

\section{On typing polyadic subjects}\label{wc:sec:related_works}
Structured channel names pose some interesting problems for the development of type systems, because types are normally given to \emph{atomic} identifiers.
This is explored in some detail in \cite{lybech/2024/iandc/rhocalc}, which describes the \RHO-calculus: a language with structured channels, but without any atomic identifiers.
The type system therein is consequently very limited, and in practice only suitable for encodings of languages that \emph{do} feature atomic channel names, such as the \PI-calculus.

In \EPI, the situation is somewhat better, since even though channels are composable, they are at least composed of atomic names.
However, composing channels at runtime still means that we cannot assign a \emph{single} type to each channel, as is otherwise standard in type systems for the \PI-calculus (e.g.\@ \cite{PICALC, PICAPABLE,sangiorgi2003pi,turner1996phd}).

A third language, that allows polyadic subjects is the \PSI-calculus \cite{bengtson2009psi, bengtson2011psi}.
Briefly, the \PSI-calculus is a generalisation of many of the variants and extensions of the \PI-calculus.
It is a generic framework based on nominal sets \cite{gabbay2002nominal}, and the idea is that one defines three nominal sets (called \emph{terms}, \emph{conditions} and \emph{assertions}) and three operations on them (called \emph{channel equivalence}, \emph{assertion composition} and \emph{entailment}), which are supplied to the \PSI-calculus framework to yield an \emph{instance} of the \PSI-calculus.
In \cite{bengtson2009psi, bengtson2011psi}, Bengtson et al.\@ then show how the \PI-calculus and several of its variants, including \EPI, can be obtained as instances, by choosing different settings for these parameters.
For example, both subjects and objects of communication are drawn from the set of terms, so an instance with both polyadic subjects and objects can be obtained by defining terms as lists of names.
In \cite{parrow2014higher}, Parrow et al.\@ further extend the \PSI-calculus with a construct for higher-order communication, and in \cite{huttel/2024/iandc/hopsitypes} it is shown that this variant of the \PSI-calculus can even encode the \RHO-calculus.

A key point of the \PSI-calculus approach is that certain results (e.g.\@ on bisimulation equivalences) can be formulated and proved for the abstract \PSI-calculus, and they are then automatically inherited by every instance.
This idea was then taken up by Hüttel in a series of papers \cite{huttel2011typed, huttel2013resourcespsi, huttel2016sessionpsi}, wherein he created generic type systems (of different typing disciplines) for the \PSI-calculus and proved a general subject-reduction result.
The generic type systems take a number of parameters, including the parameters for a \PSI-calculus instance, and yield type systems for that instance which then inherit the subject-reduction property, thereby reducing the amount of work needed in order to show soundness of the type system.

However, the aforementioned approach also has some disadvantages.
Being a \emph{static} analysis technique, type systems are closely tied to the \emph{syntax} of the language, but in the \PSI-calculus key elements of the syntax are left unspecified (namely, terms, conditions and assertions).
This, in turn, means that the generic type systems must require type rules for these syntactic categories to be provided as parameters for the instantiation. 
The aforementioned papers do give some examples of instances, including type rules for these syntactic categories; however, all except one of these instances are for variants of the \PI-calculus which only have \emph{single} names as subjects.
Thus, these papers do not actually provide much insight on the question of how to type composite subjects: they merely push the question of how to assign a single type to a composite subject onto the instance parameters.

The sole exception is an instance described in \cite{huttel2011typed}, which is a type system for the (\PSI-calculus instance of) D\PI{} \cite{DPICALC}.
In this language, a process $P$ executes at a specific \emph{location} $l$, written $l[P]$, and can then migrate between locations with the construct $\code{go}~l'.P'$.
The located processes are organised in a single-level network, i.e.\@ without nested networks occurring inside locations.
As was shown in \cite{CARBONEMAFFEIS}, this language can actually be encoded in \EPI, specifically in the sub-language $\PI^2$ (where subject vectors all have length 2).
The relevant clauses of the encoding are as follows:
\begin{center}
\begin{math}
\begin{array}{r @{~} l}
  \PTRANS[{l[P]}]              & = \PTRANS[P](l) \\
  \PTRANS[\OUTPUT{x}{z}.P](l)  & = \OUTPUT{l \cdot x}{z}.\PTRANS[P](l) \\
\end{array}\qquad
\begin{array}{r @{~} l}
  \PTRANS[\code{go}~k.P](l)    & = \PTRANS[P](k) \\
  \PTRANS[\INPUT{x}{y}.P](l)   & = \INPUT{l \cdot x}{y}.\PTRANS[P](l)
\end{array}
\end{math}
\end{center}

The type language (another parameter) given for the type system in \cite{huttel2011typed} is then the following:
\begin{equation*}
  T \DCLSYM \TCHAN{T} \ORSYM \TLOC{x_i : \TCHAN{T_i}} \ORSYM B
\end{equation*}

\noindent where $I$ is a finite index set, $x$ is a channel name and $B$ denotes base types (integers, booleans etc.).
The type rule for assigning a single type to a subject vector is then given as
\begin{center}
\begin{semantics}
  \RULE[t-loc]
  { E \vdash l : \TLOC{x_i : \TCHAN{T_i} } \AND \exists j \in I  \SUCHTHAT E \vdash x_j : \TCHAN{T_j}}
  { E \vdash l \cdot x_j : \TCHAN{T_j} }
\end{semantics}
\end{center}

\noindent where $E$ is a type environment.
As can be seen, the choice here is to repeat the type for $x_j$, so it must occur both on its own and inside the type of the location.
In effect, this means that there is no difference w.r.t.\@ channel capabilities between using the composite subject $l \cdot x_j$ and just using the name $x_j$.
Furthermore, as channel names $x_i$'s appear directly inside the location types $\TLOC{x_i : \TCHAN{T_i}}$, this setting also encounters problems with \ALPHA-conversion, in case some of the $x_i$'s are restricted.
In general, these choices may suffice for this \emph{specific} instance, since it is limited to \emph{encoded} D\PI-processes, with the encoding ensuring that no processes would lead to problems with the chosen type language and type rules.
However, as a \emph{general} type system for $\PI^2$, this choice does not seem sufficient, nor does it offer much insight into how it might be generalised to the full \EPI{} language, where subject vectors may have an arbitrary length.

Another approach is described in \cite[Chapt. 6.5]{carbone2005phd}, where Carbone creates two type systems for \EPI{}: a `structural' one and a `nominal' one.
Of these, to our mind, only the latter provides a satisfactory solution to the problem of deriving a composite type $T$ for a subject vector such as $x_1 \cdot x_2$ from the types of the constituents $x_1$ and $x_2$.
This is done by using \emph{named types} with type names $I$, and two separate type environments, $\Delta$ and $\Gamma$.
The first maps single names to type names, e.g.\@ $\Delta(x_1) = I_1$ and $\Delta(x_2) = I_2$, and the second maps composite type names to types, e.g.\@ $\Gamma(I_1 \cdot I_2) = T$. 
This avoids the problem with \ALPHA-conversion, in case a name is restricted, since names do not appear directly in the types.

The downside of the aforementioned approach is that this way of structuring the types bears little semblance to familiar concepts from object-oriented languages, which is what we are aiming for in the present work.
As argued in the preceding section, the correspondence between methods grouped under a shared class name, and inputs with a common prefix in their subjects, is straightforward.
In OO-terminology, the type of a class or object is its interface, consisting of the signatures of the fields and methods it declares, but the correspondence between the type-name vectors and such interfaces is less obvious.

In the present paper we shall focus on developing a simple type system for \EPI{} for ensuring correct channel usage, whilst ensuring that the intuition from the correspondence with the imperative, object-oriented paradigm remains clear.
Our proposal stems from the observation that the capability of a name to be used as a channel and its capability to be used in compositions are orthogonal, and, furthermore, that the order in a composition matters; i.e.\@ $x_1\cdot x_2$ is different from $x_2\cdot x_1$, and different compositions may have distinct behaviours. 
For example, $x_1$ alone can deliver integers, $x_1\cdot x_2$ can deliver pairs of integers and $x_1\cdot x_3$ can deliver booleans. 
We use a tree-structure to represent both capabilities in our types as follows:
Each node in such a `tree type' is labelled with a type name $I$ and a communication capability $C$, describing whether the vector can be used as a channel or not (and, in that case, what it can communicate).
The root node describes the type of a single name (e.g.\@ $x_1$); each sub-tree below the root describes the type of a vector composed with $x_1$, and so also which compositions are allowed.
Thus, as \EPI{} allows for name vectors of arbitrary lengths, types for names are trees of arbitrary height.

The text is structured as follows:
We start in Section~\ref{wc:sec:epi_language} by presenting the syntax and labelled operational semantics of \EPI{}.
We then move to the focus of this paper and discuss in Section~\ref{wc:sec:epi_type_system} the question of how to type subject vectors.  
The type system we propose satisfies the usual properties of subject reduction and safety, which alone assure us of the quality of our proposal. 
In fact, it is equivalent to a type system proposed by Carbone in \cite{carbone2005phd}, although his types have a different structure, as discussed above.
However, our proposal of `tree-shaped' types also resembles the concept of interfaces, known from type systems for object-oriented languages.

We spell out this correspondence in detail, following the approach of \cite{KS02,NR99,RV96,Sangiorgi98,Vasco95,Walker95}. 
In Section~\ref{wc:sec:wc_language}, we define the simple class-based language \WCLANG, (\textbf{W}hile with \textbf{C}lasses), an imperative language obtained by extending \code{While} \cite{nielson_nielson2007semantics_with_applications} with classes, fields and methods, together with an ``expectable'' type system for \WCLANG{} programs.
Then, in Section~\ref{wc:sec:encoding}, we propose an encoding of \WCLANG{} into \EPI{} and show that the type system presented here for \EPI{} exactly corresponds to the ``expectable'' one for \WCLANG. 
This comparison contributes to understanding the relationship between our types and conventional types of OO languages and provides \EPI{} with a type system that makes it a good metalanguage for the semantics of typed OO languages.

\section{The \EPI-calculus}\label{wc:sec:epi_language}
The version of \EPI{} presented in \cite{CARBONEMAFFEIS} extends the \emph{monadic} \PI-calculus only with polyadic synchronisation;
a natural further extension is to also allow polyadicity for \emph{objects}, as in the polyadic \PI-calculus \cite{PICALC}.
Furthermore, we extend \EPI{} with: data values $v$, consisting of names, integers and booleans; integer and boolean expressions $\VEC{e}$ in outputs $\OUTPUT{\VEC{x}}{\VEC{e}}$; and a guarded choice operator 
$\sum_{i=1}^n[e_i]P_i$, where the $e_i$ are boolean expressions. 
Hence, the syntax of \EPI\ processes that we consider in this paper is:
\begin{syntax}[h]
  P    \in \PROC   \IS \NIL 
                   \OR \INPUT{\VEC{x}}{\VEC{y}}.P 
                   \OR \OUTPUT{\VEC{x}}{\VEC{e}}.P
                   \OR P_1 \PAR P_2                
                   \OR \NEW{\VEC{x}}P              
                   \OR \REPL{P}                    
                   \OR \sum_{i=1}^n[e_i]P_i\quad (n \geq 1)    \tabularnewline 
  e    \in \EXPR   \IS \op(\VEC{e})
                   \OR v                           \tabularnewline
  v    \in \VALUES \IS \NAMES \UNION \mathbb{Z} \UNION \mathbb{B}
\end{syntax} 

\noindent where: $\NAMES$ is the set of names, ranged over by $x, y$; $\mathbb{Z}$ is the set of integers; $\mathbb{B} = \SET{\TRUE, \FALSE}$ is the set of boolean values, and $\op$ denotes a collection of standard operators on these values.
We use a tilde $\VEC{\cdot}$ to denote an ordered sequence of elements, including for example name vectors (that can occur both as subjects and as objects of prefixes).

Most constructs are as in the \PI-calculus:
$\NIL$ is the stopped process (we usually omit trailing $\NIL$ after prefixes); $P_1 \PAR P_2$ is the parallel composition of processes $P_1$ and $P_2$; $\NEW{\VEC{x}}P$ creates new names $\VEC{x}$ with visibility limited to $P$; and the replication operator $\REPL{P}$ creates arbitrarily many copies of $P$.
The input and output operators will synchronise on the \emph{vector} of names $\VEC{x}$, and we likewise allow the data transmitted to be a list of values $\VEC{v}$.
In the output operator $\OUTPUT{\VEC{x}}{\VEC{e}}$, we allow expressions to appear in object position of the output, which must evaluate to a list of values $\VEC{v}$ to be transmitted.
The guarded choice operator $\sum_{i=1}^n[e_i]P_i$, which will be often abbreviated to $\sum [\VEC{e}]\VEC{P}$, allows us to choose between the processes $P_i$ for which the associated guards $e_i$ evaluate to the boolean value $\TRUE$ (true).
Henceforth, we shall write $\IFTHENELSE{e}{P_\TRUE}{P_\FALSE}$ in place of 
$[e]P_\TRUE + [\neg e]P_\FALSE$.

For the operators $\op$ appearing in expressions $e$, we shall assume that they, and their arguments, can be evaluated to a single value by some semantics $\trans_e$, which we shall not detail further.
We shall write $\VEC{e} \trans_e \VEC{v}$ as an abbreviation of $e_1 \trans_e v_1, \ldots , e_n \trans_e v_n$.
Note that we assume that no operation may yield a name as a value.
Thus, names may appear as arguments to an operator $\op$, e.g.\@ for equality testing, but an operator may not be used to \emph{create} a name.

We can give the semantics for processes as a (mostly) standard, early labelled semantics.
First, we define the labels:
\begin{syntax}[h]
  \ALPHA \IS \TAU 
         \OR \LSEND{\VEC{x}}{\NEW{\VEC{y}}\VEC{v}} 
         \OR \LRECV{\VEC{x}}{\VEC{v}}
\end{syntax}
The $\TAU$ label is the unobservable action, and the label $\LRECV{\VEC{x}}{\VEC{v}}$ is for inputting values $\VEC{v}$ from the name vector $\VEC{x}$.
The label $\LSEND{\VEC{x}}{\NEW{\VEC{y}}\VEC{v}}$ is for (bound) output: here, $\VEC{y}$ is a list of bound names, such that $\VEC{y} \subseteq \VEC{v}$, i.e.\@ they bind into the object vector, and $\FRESH{\VEC{y}}\VEC{x}$ (meaning $\VEC{y} \cap \VEC{x} = \emptyset$), i.e.\@ none of the bound names can appear in the subject vector.
Hence, we define $\BN{\LSEND{\VEC{x}}{\NEW{\VEC{y}}\VEC{v}}} = \VEC{y}$, and $\BN{\alpha} = \emptyset$ for all other labels $\alpha$.
If there are no bound names, we shall just write $\LSEND{\VEC{x}}{\VEC{v}}$ instead of $\LSEND{\VEC{x}}{(\NEWSYM\epsilon)\VEC{v}}$, which is the label for \emph{free} output. 

\begin{figure}
\begin{center}
\begin{semantics}
   \RULE[E-Out][epi_early_out]({ \VEC{e} \trans_e \VEC{v} })
    { }
    { \OUTPUT{\VEC{x}}{\VEC{e}}.P \trans[\LSEND{\VEC{x}}{\VEC{v}}] P } 

  \RULE[E-Open][epi_early_open]({
    \begin{array}{l}
      \VEC{z} \subseteq \VEC{v} \\
      \FRESH{\VEC{z}\ }\ \VEC{x},\VEC{y}
    \end{array}
  })
    { P \trans[\LSEND{\VEC{x}}{\NEW{\VEC{y}}\VEC{v}}] P' }
    { \NEW{\VEC{z}}P \trans[\LSEND{\VEC{x}}{\NEW{\VEC{y},\VEC{z}}\VEC{v}}] P' }

  \RULE[E-Com$_1$][epi_early_com1]({
      \FRESH{\VEC{y}}P_2})
    { P_1 \trans[\LSEND{\VEC x}{\NEW{\VEC{y}}\VEC v}] P_1' \AND P_2 \trans[\LRECV{\VEC{x}}{\VEC{v}}] P_2' }
    { P_1 \PAR P_2 \trans[\TAU] \NEW{\VEC{y}}\PAREN{P_1' \PAR P_2'} }

  \RULE[E-Sum][epi_early_sum]({ e_i \trans_e \TRUE  })
    { P_i \trans[\ALPHA] P_i' }
    { \sum [\VEC{e}]\VEC{P} \trans[\ALPHA] P_i' }
\end{semantics}
\begin{semantics}
  \RULE[E-In][epi_early_in]
    { }
    { \INPUT{\VEC{x}}{\VEC{y}}.P \trans[\LRECV{\VEC{x}}{\VEC{v}}] P\SUBSTITUTE{\VEC{v}}{\VEC{y}} }

  \RULE[E-Res][epi_early_res](\FRESH{\VEC{x}}\ALPHA)
    { P \trans[\ALPHA] P' }
    { \NEW{\VEC{x}}P \trans[\ALPHA] \NEW{\VEC{x}}P' }

  \RULE[E-Par$_1$][epi_early_par1](\FRESH{\BN{\ALPHA}}P_2)
    { P_1 \trans[\ALPHA] P_1' }
    { P_1 \PAR P_2 \trans[\ALPHA] P_1' \PAR P_2 }

  \RULE[E-Rep][epi_early_rep]
    { \REPL{P} \PAR P \trans[\ALPHA] P' }
    { \REPL{P} \trans[\ALPHA] P' }

\end{semantics}
\end{center}
\caption{Early labelled semantics for the $\EPI$-calculus.}
\label{wc:fig:epi_early_semantics}
\end{figure}

The early, labelled semantics is given by the transition relation $\trans[\ALPHA]$ defined by the rules in Figure~\ref{wc:fig:epi_early_semantics}.
In the semantic rules, a substitution, written $P\SUBSTITUTE{\VEC{v}}{\VEC{x}}$, is the pointwise (capture avoiding) replacement of each name $x_i$ by the corresponding value $v_i$; thus we require that $\VEC{x}$ and $\VEC{v}$ have the same arity, and we therefore assume that terms are well-sorted.
Note that we omit the symmetric versions of the rules \nameref{epi_early_par1} and \nameref{epi_early_com1}.
Furthermore, we assume all rules are defined up to \ALPHA-equivalence, and that bound names within $\NEW{\VEC{x}}P$ can be reordered.
Lastly, we use the notation $\FRESH{\VEC{x}}{P}$ to say that the names $\VEC{x}$ are \emph{fresh} for $P$, analogously to the notation used for labels.

The \nameref{epi_early_par1}, \nameref{epi_early_res}, \nameref{epi_early_open} and \nameref{epi_early_in} are standard, similar to e.g.\@ the labelled semantics for the \PI-calculus given in \cite{parrow2001introduction}, but extended in the obvious way to allow both polyadic objects, as in the polyadic \PI-calculus \cite{PICALC}, and polyadic subjects, as in \cite{CARBONEMAFFEIS}.
As usual, polyadicity (i.e.\@ polyadic objects) leads to have a bound output label both in the premise and in the conclusion of the \nameref{epi_early_open} rule.
In the rule \nameref{epi_early_com1}, the list of bound names $\VEC{y}$ in $\LSEND{\VEC x}{\NEW{\VEC y}\VEC v}$ can be empty; thus, it can act both as an ordinary communication rule and as a close rule.

Finally, as usual, we write $\trans[\tau]*$ to denote the reflexive and transitive closure of $\trans[\tau]$.

\section{A simple type system for \EPI}\label{wc:sec:epi_type_system}
The type system we wish to create is a standard type system for ensuring correct channel usage, similar to the simple type system for the polyadic \PI-calculus \cite{sangiorgi2003pi, turner1996phd}.
The types are intended to describe \emph{data} (i.e.\@ channel usage and values), so they will be assigned once and do not change.
In the type system for the polyadic \PI-calculus, the types $T$ are of the form $\TCHAN{\VEC{T}}$, for names that can be used as channels (subjects) to send/receive values of types $\VEC{T}$, and $\TNIL$, for names that cannot be used as channels.
Other data values (e.g.\@ integers and booleans) have a base type $B$ as usual (e.g.\@ $\TINT$, $\TBOOL$).
Thus, for example, $\TCHAN{\TINT, \TCHAN{\TNIL}}$ is the type of a channel that can be used to communicate two values: an integer and a name of type $\TCHAN{\TNIL}$, which in turn can be used as a channel to send/receive names that cannot be used as channels.

A key problem in devising a type system for \EPI{} is to decide how type judgments for subject vectors $\VEC{x}$ should be concluded, if types are given to individual names.
Of course, this does not have necessarily to be the case: types could be given directly to vectors of names.
However, that is an inflexible solution. Indeed, since subject vectors can be composed at runtime in \EPI, using the above-mentioned approach would necessitate that the type environment $\Gamma$ must contain entries for all name vectors that  may potentially be created during the execution of a process.
This might be sufficient in certain scenarios, e.g.\@ with translations from other languages, where the form of all subject vectors may be known (and known not to change at runtime), but as a general solution it does not seem satisfactory.
Hence, we maintain that types should be given to individual names.
A number of possible solutions then come to mind: 
\begin{itemize}
  \item We can require that all names in a name vector must have the same type to be well-typed.
    Thus, if $x_1$ has type $T$ and $x_2$ has type $T$, then $x_1 \cdot x_2$ would have type $T$ as well, and so would $x_2 \cdot x_1$.
    This solution does not seem satisfactory, since it does not distinguish between using the names individually, and using the composition.

  \item We can let types be \emph{sets} of capabilities.
  The type of a composition of names could then be the intersection of the sets.
    For example, if 
    \begin{equation*}
      \Gamma \vdash x_1 : \SET{ \TCHAN{T_1}, \TCHAN{T_2} } \qquad \text{and} \qquad  \Gamma \vdash x_2 : \SET{ \TCHAN{T_2}, \TCHAN{T_3} }
    \end{equation*}

    \noindent then $\Gamma \vdash x_1 \cdot x_2 : \SET{ \TCHAN{T_2} }$.
    However, that would lead to issues of non-well-foundedness, if we want to allow names to be sent on the vector, which are also \emph{in} the vector.
    For example, to type the process $\OUTPUT{x_1 \cdot x_2}{x_1}.\NIL$, we would need the type of $x_1 \cdot x_2$ to contain $\TCHAN{T}$, which implies that both the types (sets) of $x_1$ and $x_2$ must contain $\TCHAN{T}$. 
    But since $T$ is the type of $x_1$, then we have some non well-foundedness issues.

  \item We can allow names to appear in types to express which compositions should be allowed.
    This is the method used in the type system in \cite{huttel2011typed} for the distributed \PI-calculus D\PI{} \cite{DPICALC}, as described above in Section~\ref{wc:sec:introduction}.
    Suppose, for example, that we have $\Gamma \vdash x_1 : \SET{ x_2, x_3 }$, to signify that $x_1$ can be followed by either $x_2$ or $x_3$ in a composition.
    This avoids the problem of non-well-foundedness, since we could then have e.g.\@ that $\Gamma \vdash x_2 : \TCHAN{\SET{x_2, x_3}}$, which would make it possible to type the process $\OUTPUT{x_1 \cdot x_2}{x_1}.\NIL$.
    However, having names appear directly in types creates other problems, if the names are bound.
    Consider the process
    \begin{equation*}
      P \DEFSYM \NEW{x_2 : \TCHAN{\SET{x_2, x_3}}}( \OUTPUT{x_1 \cdot x_2}{x_1}.\NIL \PAR \INPUT{x_1 \cdot x_2}{y}.\NIL )
    \end{equation*}

    \noindent with the type annotation added in the restriction.
    If $\Gamma \vdash x_1 : \SET{x_2, x_3}$ as before, then $P$ is well-typed.
    However, since $x_2$ is bound, then $P$ is \ALPHA-equivalent to another process $P'$ where the name $x_2$ has been converted to some other, fresh name, e.g.\@ $z$.
    But $z$ does not appear in the type of $x_1$ in $\Gamma$, so \ALPHA-conversion must also be applied to $\Gamma$, lest well-typedness would not be preserved by \ALPHA-equivalence.
    This is doable, but at least inelegant, since it would mean that all restrictions, at least implicitly, must be assumed to have scope over $\Gamma$.
    Furthermore, it also does not allow us to distinguish between the use of $x_2$ alone and in the composition $x_1 \cdot x_2$.
\end{itemize}

\subsection{The language of types}\label{wc:sec:epi_simple_type_language}
In light of the considerations above, we shall choose our language of types as follows.
Firstly, to avoid using channel names directly in types, we shall instead use \emph{named types} as was done in \cite{carbone2005phd}.
Thus, we introduce a separate, countably infinite set of atomic \emph{type names} $\TNAMES$, ranged over by $I$, which is distinct from the set of channel names $\NAMES$.
To avoid the issues of non-well-foundedness, we shall give every type a name from this set, i.e.\@ we shall be using \emph{named types}.

Secondly, the capability of a name to be used as a channel and its capability to be used in compositions appear to be more or less orthogonal.
Hence, we shall distinguish between the `channel capabilities' of a name and its `composition capabilities.'
We shall therefore use a pair for the type of a name: the first component describes whether the name can be used as a channel or not, and, in the former case, what it can be used to communicate; the second component describes whether the name can be composed with other names and, if so, which ones. 

Lastly, as $x_1 \cdot x_2$ is distinct from $x_2 \cdot x_1$, we may want to allow a name to have different meanings, depending on \emph{where} it occurs in a composition.
The type of a name is therefore not a single tuple, but a \emph{tree} of capabilities.
One way to view this is to think of each type name $I$ as a declaration of a name space, which in turn may contain other name space declarations.

\begin{definition}[Types]\label{wc:def:epi_simple_type_language}
We use the following language of types and type environments:
\begin{center}
\begin{syntax}[h]
  B                             \IS I \OR \TINT \OR \TBOOL                \NAME{base types}       \tabularnewline
  C                             \IS \TCHAN{\VEC{B}} \OR \TNIL             \NAME{capability types} \tabularnewline
  T              \in \TYPES     \IS B \OR (C, \Delta)                     \NAME{types}            \tabularnewline
  \Gamma, \Delta \in \TYPEENVS  \IS \NAMES \UNION \TNAMES \PARTIAL \TYPES \NAME{type environments}
\end{syntax}
\end{center}

\noindent We assume that $\Gamma$ can be written as an unrodered sequence of pairs $x: T , \Gamma'$, with $\emptyset$ denoting the empty environment; we use the same notation to denote that the environment $\Gamma'$ is extended with the pair assigning type $T$ to $x$.
\end{definition}

$\Gamma$ stores assumptions about the types of names, as well as about type names.
The idea in this type language is that a name $x$ should only have one \emph{base type} $B$, as recorded in $\Gamma$.
The type of a name should then almost always be a type name $I$, except when we type the continuation after an input, since the bound names here may also be substituted for the other data types.
Furthermore, for a type name $I$ to be meaningful, it must also have its own entry in $\Gamma$, which must be a tuple $(C, \Delta)$, where $C$ is a capability type and $\Delta$ is a type environment that records the meaning of type names under the present name.
We shall use the notation $I \mapsto (C, \Delta)$ to refer to a type name plus its entry, and we say a type environment $\Gamma$ is \emph{well-formed} if all type names have entries.
We shall only consider well-formed type environments in the following.

We shall henceforth use a \emph{typed} syntax for processes, with explicit (base) types added to the declaration of new names, i.e.\@ $\NEW{\VEC{x} : \VEC{B}}P$, where $|\VEC{x}| = |\VEC{B}|$.
Likewise, types must also now appear in the list of bound names in labels, so we write $\LSEND{\VEC{x}}{\NEW{\VEC{z} : \VEC{B}}\VEC{v}}$.

The type language above is richer than  the usual one for the \PI-calculus, although most of the complexities will not appear directly in the syntax of processes, since the idea is that names should only be given base types $B$.
However, with this more complex type language, we can also achieve the same effect as some of the simpler solutions mentioned above, such as for example requiring that all names in a vector must have the same type.
This can easily be achieved, since types are referred to by their type name $I$, so multiple names can have the \emph{same} type (and not just structurally similar types).
As an added benefit, named types also allow us to type processes such as $\OUTPUT{x}{x}.\NIL$ without requiring explicit recursive types to be added to the language, since $x$ can now simply be given the type $I \mapsto (\TCHAN{I}, \EMPTYSET)$.

\begin{figure}[t]
\begin{center}
\begin{semantics}
  \RULE[t-val][epi_simple_type_val]({B = 
    \begin{cases}
      \TINT     & \text{if $v \in \mathbb{Z}$} \tabularnewline
      \TBOOL    & \text{if $v \in \mathbb{B}$} \tabularnewline
      \Gamma(v) & \text{if $v \in \NAMES$}
    \end{cases}
  })
    { }
    { \Gamma \vdash v : B}

  \RULE[t-vec$_1$][epi_simple_type_vec1]
    { (\snd \circ \Delta \circ \Gamma)(x) = \Delta_x  \AND \Gamma; \Delta_x \vdash \VEC{x} : C }
    { \Gamma; \Delta \vdash x \cdot \VEC{x} : C }

  \RULE[t-out][epi_simple_type_out]
    { \Gamma; \Gamma \vdash \VEC{x} : \TCHAN{\VEC{B}} \AND \Gamma \vdash \VEC{e} : \VEC{B} \AND \Gamma \vdash P }
    { \Gamma \vdash \OUTPUT{\VEC{x}}{\VEC{e}} . P }

  \RULE[t-sum][epi_simple_type_sum]
    { \forall i. (\Gamma \vdash e_i : \TBOOL \AND \Gamma \vdash P_i) }
    { \Gamma \vdash \sum [\VEC{e}]\VEC{P} }
\end{semantics}
\begin{semantics}
  \tabularnewline
  \RULE[t-op][epi_simple_type_op]
    { \Gamma \vdash \VEC{e} : \VEC{B} \AND \op : \VEC{B} \to B }
    { \Gamma \vdash \op(\VEC{e}) : B }
  \tabularnewline

  \RULE[t-vec$_2$][epi_simple_type_vec2]
    { (\fst \circ \Delta \circ \Gamma)(x) = C }
    { \Gamma; \Delta \vdash x : C }

  \RULE[t-in][epi_simple_type_in]
    { \Gamma; \Gamma \vdash \VEC{x} : \TCHAN{\VEC{B}} \AND \Gamma, \VEC{y} : \VEC{B} \vdash P }
    { \Gamma \vdash \INPUT{\VEC{x}}{\VEC{y}}.P }

  \RULE[t-rep][epi_simple_type_rep]
    { \Gamma \vdash P}
    { \Gamma \vdash \REPL{P} }
\end{semantics}

\bigskip
\begin{semantics}
  \RULE[t-nil][epi_simple_type_nil]
    { }
    { \Gamma \vdash \NIL }
\end{semantics}\qquad
\begin{semantics}
  \RULE[t-res][epi_simple_type_res]
    { \Gamma, \VEC{x} : \VEC{B} \vdash P }
    { \Gamma \vdash \NEW{\VEC{x} : \VEC{B}}P }
\end{semantics}\qquad
\begin{semantics}
  \RULE[t-par][epi_simple_type_par]
    { \Gamma \vdash P_1 \AND \Gamma \vdash P_2}
    { \Gamma \vdash P_1 \PAR P_2 }
\end{semantics}
\end{center}
\caption{Type rules for values, subject vectors, and processes.}
\label{wc:fig:epi_simple_type_judgments}
\end{figure}

\subsection{The type system}
The type judgement $\Gamma \vdash P$ ensures that $P$ uses channels as prescribed by $\Gamma$, ensures that it only contains well-typed expressions 
(e.g., by rejecting $\TRUE + 2$), and also rules out obviously malformed processes such as \mbox{$\INPUT{x \cdot 2}{y}.P'$}, since such processes would lead to the semantics being stuck.
As we assume well-sortedness of terms, we do not check for arity-mismatch in output objects, although this can also be included if desired.

The rules for values, subject vectors and processes are given in Figure~\ref{wc:fig:epi_simple_type_judgments}.
For expressions, if $e$ is a value, we use rule \nameref{epi_simple_type_val}; if $e$ contains an operator $\op$, we use rule \nameref{epi_simple_type_op}, that relies on the signature $\VEC{B} \to B$, obtained by considering $\op$ as a function $\op : \VALUES^* \to \VALUES$ (e.g.\@ $+ : \code{int}^2 \to \code{int}$ and $\land : \code{bool}^2 \to \code{bool}$).
As expectable, the main peculiarity of these rules is the typing for subject vectors $\VEC{x}$ (required in the premise of rules \nameref{epi_simple_type_in} and \nameref{epi_simple_type_out}).
This judgment is of the form $\Gamma; \Gamma \vdash \VEC{x} : C$, and is concluded by the rules \nameref{epi_simple_type_vec1} and \nameref{epi_simple_type_vec2}.
To conclude that $x_1 \cdot x_2 \cdot \ldots\cdot x_n$ (for $n > 1$) has type $C$, we proceed as follows: 
\begin{enumerate}
  \item we start with rule \nameref{epi_simple_type_vec2}, that takes $x_1$ and looks it up in $\Gamma$; since $\Gamma$ is well-formed, this lookup must yield $\Gamma(x_1) = I_{x_1}$, with $\Gamma(I_{x_1}) = (C_{x_1}, \Delta_{x_1})$;
  \item we then use function $\snd$ to extract the second component of this tuple, i.e. the environment $\Delta_{x_1}$, that will be used to give the meaning of a selection of type names below $I_{x_1}$ (the names that may be composed with $x_1$ are the names that occur in the domain of $\Delta_{x_1}$);
  \item we recur on $x_2 \cdot \ldots\cdot x_n$, but we now use $\Delta_{x_1}$ for giving a type to $I_{x_2}$ (= $\Gamma(x_2)$) in point 1 above.
\end{enumerate}
\noindent For vectors of length 1, we use \nameref{epi_simple_type_vec1}, which consist of the same first two steps above, except that it uses the function $\fst$ to extract the returned channel capability $C$.

As an example, imagine that $x_1$ is a name that: (1) when used alone, it can carry integers; (2) when followed by $x_2$, it can carry pairs of integers; (3) when followed by $x_2$ and $x_3$, it can carry booleans; and (4) when preceeded by $x_2$, it can carry pairs of booleans. To accommodate this situation, we need to use a typing environment $\Gamma$ such that:
$\Gamma(x_1) = I_1$, 
$\Gamma(x_2) = I_2$, 
$\Gamma(x_3) = I_3$, 
$\Gamma(I_1) = \langle \TCHAN{\TINT} , \Delta_1 \rangle$, and
$\Gamma(I_2) = \langle \TNIL , \Delta_2 \rangle$, where
$\Delta_1(I_2) = \langle \TCHAN{\TINT,\TINT} , \Delta_3 \rangle$,
$\Delta_2(I_1) = \langle \TCHAN{\TBOOL,\TBOOL} , \EMPTYSET \rangle$, and
$\Delta_3(I_3) = \langle \TCHAN{\TBOOL} , \EMPTYSET \rangle$. 
Then, the processes $\OUTPUT{x_1}{3}$, $\OUTPUT{x_1\cdot x_2}{3,5}$, $\OUTPUT{x_1\cdot x_2 \cdot x_3}{\TRUE}$, and $\OUTPUT{x_2\cdot x_1}{\TRUE,\FALSE}$ are all typeable under $\Gamma$. 
We only depict the third inference, since the other ones are similar or simpler: 
$$
\dfrac{
\dfrac{
\begin{array}{c}
\tabularnewline
(\snd \circ \Gamma \circ \Gamma)(x_1) = \Delta_1  
\end{array}
\ 
\dfrac{
\begin{array}{c}
\tabularnewline
(\snd \circ \Delta_1 \circ \Gamma)(x_2) = \Delta_3  
\end{array}
\ 
\dfrac{(\fst \circ \Delta_3 \circ \Gamma)(x_3) = \TCHAN{\TBOOL}}
{\Gamma; \Delta_3 \vdash x_3 : \TCHAN{\TBOOL }}
}
{\Gamma; \Delta_1 \vdash x_2 \cdot x_3 : \TCHAN{\TBOOL}}
}
{\Gamma;\Gamma \vdash x_1\cdot x_2 \cdot x_3 : \TCHAN{\TBOOL}}
\
\begin{array}{c}
\tabularnewline
\Gamma \vdash \TRUE : \TBOOL
\quad
\Gamma \vdash \NIL
\end{array}
}{
\Gamma \vdash \OUTPUT{x_1\cdot x_2 \cdot x_3}{\TRUE}.\NIL
}
$$

\subsection{Run-time errors, Safety, and Soundness results}
\begin{figure}[h]
\begin{center}
\begin{semantics}
  \RULE[w-in][epi_simple_wrong_in]
    { \Gamma \not\vdash \VEC{x} : \TCHAN{\VEC{B}}\ \lor\ |\VEC{B}| \neq |\VEC{y}| }
    { \WRONG{\INPUT{\VEC{x}}{\VEC{y}}.P} }

  \RULE[w-par][epi_simple_wrong_par]
    { \WRONG{P_1}\ \lor\ \WRONG{P_2} }
    { \WRONG{P_1 \PAR P_2 } }

  \RULE[w-res][epi_simple_wrong_res]
    { \WRONG[\Gamma, \VEC{x} : \VEC{B}]{P}}
    { \WRONG{\NEW{\VEC{x} : \VEC{B}}P} }
\end{semantics}
\qquad\qquad
\begin{semantics}
  \RULE[w-out][epi_simple_wrong_out]
    {\Gamma \not\vdash \VEC{x} : \TCHAN{\VEC{B}}\ \lor\ \Gamma \not\vdash \VEC{e} : \VEC{B} }
    { \WRONG{\OUTPUT{\VEC{x}}{\VEC{e}}.P} }

  \RULE[w-sum][epi_simple_wrong_sum]
    {\exists i. (\Gamma \not\vdash e_i : \TBOOL\ \lor\ \WRONG{P_i}) }
    { \WRONG{\sum [\VEC{e}]\VEC{P}} }

  \RULE[w-rep][epi_simple_wrong_rep]
    { \WRONG{P} }
    { \WRONG{\REPL{P}} }
\end{semantics}
\end{center}
\caption{Error-predicate for processes.}
\label{wc:fig:epi_simple_type_error}
\end{figure}

We formalise the notion of safety as an error-predicate, written $\WRONG{P}$, which checks for mismatched types in input/output, and also whether each expression guard in a sum has a boolean type.
It is given by the rules in Figure~\ref{wc:fig:epi_simple_type_error}.
We say that a process $P$ is \emph{now-safe}, written $\NSAFE{P}$, if $P$ does not satisfy the error-predicate; and likewise, we say that a process is \emph{safe}, written $\SAFE{P}$, if it is now-safe for all its \TAU-labelled transitions:
\begin{equation*}
  \NSAFE{P} \DEFSYM \neg \WRONG{P} \qquad
  \SAFE{P} \DEFSYM \forall P'\SUCHTHAT (P {\trans[\tau]*} P' \implies \NSAFE{P'})
\end{equation*}

\noindent Note that the concepts of now-safety and safety only make sense if all the free names of $P$ appear in $\Gamma$; hence, in what follows, we shall implicitly assume that $\FN{P} \subseteq \DOM{\Gamma}$.
Then, our first result is that well-typed processes are now-safe: 

\begin{theorem}[Safety]\label{wc:thm:epi_simple_type_safety}
If $\Gamma \vdash P$ then $\NSAFE{P}$.
\end{theorem}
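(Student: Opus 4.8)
The plan is to prove the statement by induction on the derivation of $\Gamma \vdash P$ — equivalently, by structural induction on $P$ with the claim universally quantified over all well-formed $\Gamma$, since the typing rules of Figure~\ref{wc:fig:epi_simple_type_judgments} are syntax-directed. This generalisation over $\Gamma$ is essential: the only rule that alters the environment, \nameref{epi_simple_type_res}, passes the extended environment $\Gamma,\VEC{x}:\VEC{B}$ to its premise, and the matching error rule \nameref{epi_simple_wrong_res} inspects the body under the very same extension, so the induction hypothesis must be available for $\Gamma,\VEC{x}:\VEC{B}$. The guiding observation is that, constructor by constructor, the side conditions of the error rules in Figure~\ref{wc:fig:epi_simple_type_error} are precisely the negations of the premises of the corresponding typing rules; hence a successful typing derivation directly refutes every way in which $\WRONG{P}$ could be concluded. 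Since both the type rules and the error rules are syntax-directed (and there is no error rule for $\NIL$), for each well-typed $P$ at most one error rule can possibly apply, namely the one matching the top constructor of $P$, and it suffices to refute its conclusion.

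I would then proceed case-wise on the last rule of $\Gamma \vdash P$. The base case \nameref{epi_simple_type_nil} is immediate: no error rule matches $\NIL$, so $\neg\WRONG{\NIL}$ holds vacuously. For the prefix cases \nameref{epi_simple_type_in} and \nameref{epi_simple_type_out}, the only applicable error rules are \nameref{epi_simple_wrong_in} and \nameref{epi_simple_wrong_out}, and both disjuncts of their side conditions are falsified directly by the typing premises: the subject-vector judgment $\Gamma;\Gamma \vdash \VEC{x}:\TCHAN{\VEC{B}}$ holds, and in the output case $\Gamma \vdash \VEC{e}:\VEC{B}$ holds as well; no induction hypothesis is even needed here, reflecting that now-safety is a shallow, top-level check that never descends into prefix continuations. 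For the compound cases \nameref{epi_simple_type_par}, \nameref{epi_simple_type_res}, \nameref{epi_simple_type_rep} and \nameref{epi_simple_type_sum}, the matching error rule fires only if one of the immediate subterms is wrong (and, for \nameref{epi_simple_wrong_sum}, also if some guard $e_i$ fails to be boolean); the induction hypothesis applied to the sub-derivations rules out the former, while the premises $\Gamma \vdash e_i:\TBOOL$ of \nameref{epi_simple_type_sum} rule out the latter.

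The one point requiring care — and the closest thing to an obstacle — is the witness/arity bookkeeping in the input case. The side condition of \nameref{epi_simple_wrong_in} is ``$\Gamma \not\vdash \VEC{x}:\TCHAN{\VEC{B}} \lor |\VEC{B}| \neq |\VEC{y}|$'', in which $\VEC{B}$ must be read as the type actually derived for $\VEC{x}$. I would first note that the subject-vector judgment is functional: by \nameref{epi_simple_type_vec1} and \nameref{epi_simple_type_vec2} the capability is produced by the deterministic lookups $\fst \circ \Delta \circ \Gamma$ and $\snd \circ \Delta \circ \Gamma$, so at most one $\TCHAN{\VEC{B}}$ is derivable for $\VEC{x}$; consequently the $\VEC{B}$ in the error condition coincides with the one in the premise $\Gamma;\Gamma \vdash \VEC{x}:\TCHAN{\VEC{B}}$, refuting the first disjunct. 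The second disjunct is refuted because the second premise of \nameref{epi_simple_type_in} uses the extension $\Gamma,\VEC{y}:\VEC{B}$, which is well-defined only when $|\VEC{y}|=|\VEC{B}|$. I would also record two conventions used implicitly throughout: that ``$\Gamma \vdash \VEC{x}:\TCHAN{\VEC{B}}$'' in the error rules abbreviates the two-context judgment $\Gamma;\Gamma \vdash \VEC{x}:\TCHAN{\VEC{B}}$ employed at the process level, and that extending a well-formed $\Gamma$ by base-type bindings preserves well-formedness, so the induction hypothesis genuinely applies in the \nameref{epi_simple_type_res} case.
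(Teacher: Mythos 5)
Your proof is correct and follows essentially the same route as the paper's: induction on the typing derivation, matching each typing rule with the unique applicable error rule and refuting that rule's premises using the typing premises (and the induction hypothesis for the compound constructs, with the extended environment in the restriction case). Your additional care about the determinism of the subject-vector judgment and the arity condition $|\VEC{y}|=|\VEC{B}|$ in the input case merely makes explicit what the paper states briefly, so it is a refinement rather than a different argument.
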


\begin{proof}
By induction on (the height of) the judgement $\Gamma \vdash P$. The base case is when \nameref{epi_simple_type_nil} has been used and it is trivial, since the error predicate has no rule for $\NIL$.
For the inductive step, we reason by case analysis on the last type rule used for concluding $\Gamma \vdash P$.
\begin{itemize}

  \item \nameref{epi_simple_type_par}: 
  From the premise, we know that $\Gamma \vdash P_1$ and $\Gamma \vdash P_2$.
  By the induction hypothesis, this implies that $\NSAFE{P_1}$ and $\NSAFE{P_2}$, hence $\neg \WRONG{P_1}$ and $\neg \WRONG{P_2}$.
  The only rule, that can conclude an error in a parallel composition is $\nameref{epi_simple_wrong_par}$, but by the above, none of the premises can hold.
  Thus we conclude that $\NSAFE{P_1 \PAR P_2}$.

  \item \nameref{epi_simple_type_rep}: 
  From the premise, we have that $\Gamma \vdash P$.
  By the induction hypothesis, this implies $\NSAFE{P}$, hence $\neg \WRONG{P}$.
  To conclude that $\REPL{P}$ is not safe, we can only use \nameref{epi_simple_wrong_rep}, but by the above, the premise cannot hold.
  Thus we conclude that $\NSAFE{\REPL{P}}$.

  \item \nameref{epi_simple_type_res}: 
  From the premise, we have that $\Gamma, \VEC{x} : \VEC{B} \vdash P$, which by the induction hypothesis implies that $\NSAFE[\Gamma, \VEC{x}:\VEC{B}]{P}$, hence $\neg \WRONG[\Gamma, \VEC{x}:\VEC{B}]{P}$.
  The only rule that can be used to conclude that $\WRONG{\NEW{\VEC{x}:\VEC{B}}P}$ is \nameref{epi_simple_wrong_res}, but by the above, the premise cannot hold.
  Thus we conclude that $\NSAFE{\NEW{\VEC{x}:\VEC{B}}P}$. 

  \item \nameref{epi_simple_type_sum}: 
  From the premise, we have that every guard $e_i$ must have a boolean type, and every $P_i$ is such that $\Gamma \vdash P_i$.
  By the induction hypothesis, this implies $\NSAFE{P_i}$, hence $\neg \WRONG{P_i}$.
  The only rule that can be used to conclude $\WRONG{\sum [\VEC{e}]\VEC{P}}$ is \nameref{epi_simple_wrong_sum}, but by the above the premise cannot hold for every $i$.
  Thus we conclude that $\NSAFE{\sum [\VEC{e}]\VEC{P}}$.

  \item \nameref{epi_simple_type_in}: 
  From the premise, we know that $\Gamma;\Gamma \vdash \VEC{x} : \TCHAN{\VEC{B}}$, and $\Gamma, \VEC{y} : \VEC{B} \vdash P$, where the latter implies that $|\VEC{y}| = |\VEC{B}|$, since every name must be given a type.
  The only rule that can be used to conclude an error for input is \nameref{epi_simple_wrong_in}, but by the above, the premises cannot hold.
  Hence we conclude that $\NSAFE{\INPUT{\VEC{x}}{\VEC{y}}.P}$.

  \item \nameref{epi_simple_type_out}: 
  From the premise we have that $\Gamma \vdash \VEC{x} : \TCHAN{\VEC{B}}$ and $\Gamma \vdash \VEC{e} : \VEC{B}$.
  The only rule that can be used to conclude an error for an output is \nameref{epi_simple_wrong_out}, but by the above, none of the premises can hold.
  Hence we conclude that $\NSAFE{\OUTPUT{\VEC{x}}{\VEC{e}}.P}$.
\end{itemize}

This concludes the proof.
\end{proof}

The second result states that well-typedness is preserved by the (labelled) semantics.
To express this, we first need the concept of well-typed labels:

\begin{definition}[Well-typed label]\label{wc:def:epi_simple_type_welltyped_labels}
We say that a label $\alpha$ is well-typed for a given $\Gamma$ if it can be concluded using one of the following rules:
\begin{center}
\begin{fixqed}
\begin{semantics}
  \RULE[t-tau][epi_simple_type_label_tau]
    { }
    { \Gamma \vdash \tau }
\end{semantics}\kern-6pt
\ \
\begin{semantics}
  \RULE[t-snd][epi_simple_type_label_send]
    { \Gamma; \Gamma \vdash \VEC{x} : \TCHAN{\VEC{B}_2} \AND \Gamma, \VEC{z} : \VEC{B}_1 \vdash \VEC{v} : \VEC{B}_2 }
    { \Gamma \vdash \LSEND{\VEC{x}}{\NEW{\VEC{z}:\VEC{B}_1}\VEC{v}} }
\end{semantics}\kern-6pt
\ \ 
\begin{semantics}
  \RULE[t-rcv][epi_simple_type_label_recv]
    { \Gamma; \Gamma \vdash \VEC{x} : \TCHAN{\VEC{B}} \AND \Gamma \vdash \VEC{v} : \VEC{B} }
    { \Gamma \vdash \LRECV{\VEC{x}}{\VEC{v}} }
\end{semantics}
\end{fixqed}\qedhere
\end{center}
\end{definition}

The second result we desire to have is then that well-typedness is preserved by the transition relation:
\begin{theorem}[Subject reduction]\label{wc:thm:epi_simple_subject_reduction} 
Let $\Gamma \vdash P$.
If $P \trans[\alpha] P'$ and $\Gamma \vdash \alpha$, then $\Gamma' \vdash P'$, where $\Gamma' = \Gamma, \VEC{z} : \VEC{B}$ if $\alpha = \LSEND{\VEC{x}}{\NEW{\VEC{z} : \VEC{B}}\VEC{v}}$, and $\Gamma' = \Gamma$ otherwise.
\end{theorem}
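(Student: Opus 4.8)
The plan is to prove subject reduction by induction on the derivation of the transition $P \trans[\alpha] P'$, performing a case analysis on the last rule used in that derivation (the rules of Figure~\ref{wc:fig:epi_early_semantics}). For each transition rule, we invert the typing judgement $\Gamma \vdash P$ to extract typing information about the subterms, use the hypothesis $\Gamma \vdash \alpha$ about the label, and then reassemble a typing derivation for $P'$ under the appropriate environment $\Gamma'$. The auxiliary structural rules \nameref{epi_early_par1}, \nameref{epi_early_res}, \nameref{epi_early_rep} and \nameref{epi_early_sum} will follow almost mechanically from the induction hypothesis, since they preserve the shape of the subterm that actually moves; the interesting cases are the three rules that genuinely interact with types and environments, namely \nameref{epi_early_in}, \nameref{epi_early_out}/\nameref{epi_early_open}, and \nameref{epi_early_com1}.

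First I would dispatch the easy cases. For \nameref{epi_early_par1}, from $\Gamma \vdash P_1 \PAR P_2$ rule \nameref{epi_simple_type_par} gives $\Gamma \vdash P_1$ and $\Gamma \vdash P_2$; the induction hypothesis applied to $P_1 \trans[\alpha] P_1'$ yields $\Gamma' \vdash P_1'$, and since $P_2$ does not move and $\FRESH{\BN{\alpha}}P_2$ guarantees the freshly bound names in $\Gamma'$ do not clash with $P_2$, weakening gives $\Gamma' \vdash P_2$, whence $\Gamma' \vdash P_1' \PAR P_2$ by \nameref{epi_simple_type_par}. The rules \nameref{epi_early_res} (with side condition $\FRESH{\VEC{x}}\alpha$ ensuring $\Gamma'$ and the restricted names stay disjoint) and \nameref{epi_early_rep} are analogous, and \nameref{epi_early_sum} follows because \nameref{epi_simple_type_sum} types every branch $P_i$ under the same $\Gamma$.

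For the input case \nameref{epi_early_in}, we have $\INPUT{\VEC{x}}{\VEC{y}}.P \trans[\LRECV{\VEC{x}}{\VEC{v}}] P\SUBSTITUTE{\VEC{v}}{\VEC{y}}$ with $\Gamma' = \Gamma$. Inverting \nameref{epi_simple_type_in} gives $\Gamma, \VEC{y}:\VEC{B} \vdash P$, and the label hypothesis $\Gamma \vdash \LRECV{\VEC{x}}{\VEC{v}}$ via \nameref{epi_simple_type_label_recv} gives $\Gamma \vdash \VEC{v}:\VEC{B}$ for the \emph{same} $\VEC{B}$ (the subject-vector judgement $\Gamma;\Gamma \vdash \VEC{x}:\TCHAN{\VEC{B}}$ is deterministic, since \nameref{epi_simple_type_vec1} and \nameref{epi_simple_type_vec2} read types off $\Gamma$ by functional lookup). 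The result then follows from a \emph{substitution lemma} asserting that $\Gamma, \VEC{y}:\VEC{B} \vdash P$ and $\Gamma \vdash \VEC{v}:\VEC{B}$ imply $\Gamma \vdash P\SUBSTITUTE{\VEC{v}}{\VEC{y}}$. For the output cases, \nameref{epi_early_out} produces a free output with $\Gamma'=\Gamma$ and is immediate from \nameref{epi_simple_type_out}, while \nameref{epi_early_open} extends the bound-name list; here $\Gamma' = \Gamma, \VEC{z}:\VEC{B}$ exactly records the scope-extruded names, and inverting \nameref{epi_simple_type_res} on $\NEW{\VEC{z}:\VEC{B}}P$ supplies the typing of $P'$ under the enlarged environment. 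The communication rule \nameref{epi_early_com1} is handled by combining the two: the induction hypotheses give $\Gamma, \VEC{y}:\VEC{B}_1 \vdash P_1'$ and $\Gamma \vdash P_2'$ (the receiver absorbs the transmitted values via the input case), and closing the scope with \nameref{epi_simple_type_res} and \nameref{epi_simple_type_par} yields $\Gamma \vdash \NEW{\VEC{y}:\VEC{B}_1}(P_1' \PAR P_2')$ with $\Gamma'=\Gamma$, as required.

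The main obstacle I expect is the \textbf{substitution lemma} for the input case, because substitution $P\SUBSTITUTE{\VEC{v}}{\VEC{y}}$ replaces names by arbitrary values, and this interacts delicately with the subject-vector typing judgement $\Gamma;\Gamma \vdash \VEC{x}:\TCHAN{\VEC{B}}$: when a substituted name appears inside a composite subject, the nested-environment lookups $(\snd \circ \Delta \circ \Gamma)$ of \nameref{epi_simple_type_vec1} must still succeed and yield the same capability type after substitution. Proving this requires that substituting $\VEC{v}$ for $\VEC{y}$ preserves the base type $B$ at every occurrence (guaranteed because $\Gamma \vdash \VEC{v}:\VEC{B}$ matches the declared types $\Gamma, \VEC{y}:\VEC{B}$), and that the tree-structured type environments $\Delta$ are threaded consistently through the recursion. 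A second, more technical point is ensuring that the freshness side conditions on the labels ($\FRESH{\VEC{z}}\VEC{x},\VEC{y}$ in \nameref{epi_early_open} and $\FRESH{\VEC{y}}P_2$ in \nameref{epi_early_com1}) align with the well-formedness and domain assumptions on $\Gamma'$, so that the newly added entries $\VEC{z}:\VEC{B}$ neither shadow existing names nor break well-formedness of the extended environment.
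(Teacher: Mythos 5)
Your overall strategy---induction on the derivation of $P \trans[\alpha] P'$ with inversion of the typing rules, a substitution lemma for the input case, and weakening/strengthening for the structural cases---is the same as the paper's, and your treatment of the \nameref{epi_early_in}, \nameref{epi_early_out}, \nameref{epi_early_open}, \nameref{epi_early_par1}, \nameref{epi_early_res}, \nameref{epi_early_rep} and \nameref{epi_early_sum} cases is sound (your substitution lemma, which concludes $\Gamma \vdash P\SUBSTITUTE{\VEC{v}}{\VEC{y}}$ directly, just packages the paper's Substitution and Strengthening lemmata into one statement).

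The gap is in the communication case \nameref{epi_early_com1}. First, your claim that the induction hypothesis yields $\Gamma \vdash P_2'$ is false whenever the output is a genuine bound output: $P_2'$ is the receiver's continuation after substituting the received values $\VEC{v}$, which contain the extruded names $\VEC{y}$, so $\FN{P_2'} \not\subseteq \DOM{\Gamma}$ and $P_2'$ can at best be typed under $\Gamma, \VEC{y}:\VEC{B}_1$. Second---and this is the step that actually blocks the argument---you cannot apply the induction hypothesis to the receiver's transition $P_2 \trans[\LRECV{\VEC{x}}{\VEC{v}}] P_2'$ at all without first establishing that its label is well-typed: label well-typedness is a \emph{hypothesis} of the theorem, and for the com transition the theorem only supplies $\Gamma \vdash \TAU$, nothing about the labels of the sub-derivations. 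This must instead be \emph{derived} from the sender's side: since $\Gamma \vdash P_1$ and $P_1$ performs the output, a separate lemma (the paper's Lemma~\ref{wc:lemma:epi_simple_type_welltyped_labels}, stating that well-typed processes only emit well-typed output/$\TAU$ labels) gives $\Gamma \vdash \LSEND{\VEC{x}}{\NEW{\VEC{y}:\VEC{B}_1}\VEC{v}}$; inverting \nameref{epi_simple_type_label_send} yields $\Gamma, \VEC{y}:\VEC{B}_1 \vdash \VEC{v} : \VEC{B}'$, whence \nameref{epi_simple_type_label_recv} gives $\Gamma, \VEC{y}:\VEC{B}_1 \vdash \LRECV{\VEC{x}}{\VEC{v}}$. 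Only then, together with the weakening $\Gamma, \VEC{y}:\VEC{B}_1 \vdash P_2$ licensed by the side condition $\FRESH{\VEC{y}}P_2$, does the induction hypothesis produce $\Gamma, \VEC{y}:\VEC{B}_1 \vdash P_2'$, after which \nameref{epi_simple_type_par} and \nameref{epi_simple_type_res} close the scope and give $\Gamma \vdash \NEW{\VEC{y}:\VEC{B}_1}\PAREN{P_1' \PAR P_2'}$. Your proposal never mentions this auxiliary lemma, and without it (or an inlined equivalent induction on the sender's derivation) the com case does not go through.
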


The proof this theorem relies on some standard lemmata, that are all easily shown by induction on the typing judgement.
Weakening and strengthening say that we can add and remove unused typing assumptions.
Substitution says that we can replace a list of names by a list of values (which may include names), as long as the two lists have pointwise matching types; hence, no well-typed substitution can alter the type of (a vector of) names.

\begin{lemma}[Weakening]\label{wc:lemma:epi_simple_type_weakening}
If $\Gamma \vdash P$ and $x \notin \FN{P}$, then $\Gamma, x : B \vdash P$.
\end{lemma}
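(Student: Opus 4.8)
The plan is to proceed by induction on the height of the derivation of $\Gamma \vdash P$, with a case analysis on the last typing rule used, exactly as in the proof of Theorem~\ref{wc:thm:epi_simple_type_safety}. Before starting the main induction, I would isolate two auxiliary weakening facts on which the process cases depend. The first is weakening for the value/expression judgement: if $\Gamma \vdash e : B'$ and $x \notin \FN{e}$, then $\Gamma, x : B \vdash e : B'$. This is immediate by a sub-induction on the derivation of $\Gamma \vdash e : B'$, the only interesting case being \nameref{epi_simple_type_val} for a name $v$, where $v \neq x$ ensures $\Gamma(v) = (\Gamma, x : B)(v)$. The second, and the genuinely \emph{non-standard} one, is weakening for the subject-vector judgement: if $\Gamma; \Delta \vdash \VEC{w} : C$ and $x$ does not occur in $\VEC{w}$, then $(\Gamma, x : B); \Delta \vdash \VEC{w} : C$.

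For the subject-vector fact I would induct on the length of $\VEC{w}$ (equivalently, on the derivation, which alternates \nameref{epi_simple_type_vec1} and terminates with \nameref{epi_simple_type_vec2}). The key observation is that both rules query the name environment only through the compositions $\fst \circ \Delta \circ \Gamma$ and $\snd \circ \Delta \circ \Gamma$ applied to the names occurring in $\VEC{w}$. Since $x$ is distinct from all of these, every lookup is unchanged by passing to $\Gamma, x : B$, and the threaded environment $\Delta$ — which is built solely from the $\snd$-components already recorded in $\Gamma$ — is likewise unaffected. Hence the whole derivation is replayed verbatim over $\Gamma, x : B$.

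With these in hand, the routine cases of the main induction are \nameref{epi_simple_type_nil} (trivial), \nameref{epi_simple_type_par} and \nameref{epi_simple_type_rep} (direct appeals to the induction hypothesis, noting the set of free names only shrinks on subterms), and \nameref{epi_simple_type_sum}, \nameref{epi_simple_type_out}, \nameref{epi_simple_type_in}, which combine the induction hypothesis with the two auxiliary facts to re-derive the premises over $\Gamma, x : B$.

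The steps requiring care are the binder cases \nameref{epi_simple_type_res} and \nameref{epi_simple_type_in}, and this is where I expect the only real friction. In both, the premise types a continuation under an extended environment ($\Gamma, \VEC{w} : \VEC{B}$ for \nameref{epi_simple_type_res}, and $\Gamma, \VEC{y} : \VEC{B}$ for \nameref{epi_simple_type_in}), so I must ensure that the new assumption $x : B$ does not clash with the freshly bound names. As these are binding occurrences, alpha-conversion lets me assume they are all distinct from $x$; together with $x \notin \FN{P}$ this yields $x \notin \FN{P'}$ for the continuation $P'$, so the induction hypothesis applies and gives $\Gamma, \VEC{y} : \VEC{B}, x : B \vdash P'$. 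Because type environments are unordered sequences of pairs (Definition~\ref{wc:def:epi_simple_type_language}), this coincides with $\Gamma, x : B, \VEC{y} : \VEC{B} \vdash P'$, and re-applying the rule — using subject-vector weakening for the subject premise of \nameref{epi_simple_type_in} — closes the case. The main obstacle is thus not a single hard argument but the bookkeeping around the subject-vector judgement: verifying that weakening the name environment leaves its two-place lookups invariant is precisely what distinguishes this proof from a textbook weakening lemma.
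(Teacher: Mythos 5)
Your proof is correct and takes exactly the route the paper itself prescribes: the paper offers no details, remarking only that weakening is ``easily shown by induction on the typing judgement,'' and that is precisely the induction you carry out, including the one genuinely non-standard ingredient, namely the weakening fact for the subject-vector judgement derived via \nameref{epi_simple_type_vec1} and \nameref{epi_simple_type_vec2}. One small repair: state that auxiliary fact with \emph{both} components weakened --- from $\Gamma;\Delta \vdash \VEC{w} : C$ conclude $(\Gamma, x:B);(\Delta, x:B) \vdash \VEC{w} : C$, which is harmless since the second component is only ever queried at type names $I \in \TNAMES$, disjoint from the channel name $x \in \NAMES$ --- because as you state it (with $\Delta$ fixed), instantiating $\Delta = \Gamma$ yields $(\Gamma, x:B);\Gamma \vdash \VEC{x} : \TCHAN{\VEC{B}}$ rather than the premise $(\Gamma, x:B);(\Gamma, x:B) \vdash \VEC{x} : \TCHAN{\VEC{B}}$ literally required to reapply \nameref{epi_simple_type_in} and \nameref{epi_simple_type_out}.
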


\begin{lemma}[Strengthening]\label{wc:lemma:epi_simple_type_strengthening}
If $\Gamma, x : B \vdash P$ and $x \notin \FN{P}$, then $\Gamma \vdash P$.
\end{lemma}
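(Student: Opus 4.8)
The plan is to prove the statement by induction on the height of the derivation of $\Gamma, x : B \vdash P$, performing a case analysis on the last rule applied, exactly as in the proof of Theorem~\ref{wc:thm:epi_simple_type_safety}. Since the process rules of Figure~\ref{wc:fig:epi_simple_type_judgments} are syntax-directed, this amounts to an induction on the structure of $P$. Before starting, I would record two auxiliary strengthening facts needed by the prefix cases: one for expressions (if $\Gamma, x : B \vdash e : B'$ and $x \notin \FN{e}$, then $\Gamma \vdash e : B'$), and one for subject vectors (if $\Gamma, x : B\,;\,\Gamma, x : B \vdash \VEC{z} : C$ and $x \notin \VEC{z}$, then $\Gamma\,;\,\Gamma \vdash \VEC{z} : C$). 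The first follows by a trivial induction on \nameref{epi_simple_type_val}/\nameref{epi_simple_type_op}; the second is the crux and is discussed below.

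For the process cases themselves, the purely structural rules are routine. For \nameref{epi_simple_type_nil} the conclusion is immediate from the same rule. For \nameref{epi_simple_type_par}, \nameref{epi_simple_type_rep} and \nameref{epi_simple_type_sum} I would observe that $\FN{\cdot}$ distributes over the sub-terms, so $x$ is free in none of them, and then apply the induction hypothesis to each premise (using the expression fact for the guards of a sum) and re-apply the rule. The restriction case \nameref{epi_simple_type_res} needs a little care: given $P = \NEW{\VEC{y} : \VEC{B}'}P'$, I would first \ALPHA-convert so that $\VEC{y}$ is disjoint from $\SET{x}$; then from $x \notin \FN{P}$ and $x \notin \VEC{y}$ I obtain $x \notin \FN{P'}$, reorder the (unordered) environment so that the premise reads $(\Gamma, \VEC{y} : \VEC{B}'), x : B \vdash P'$, and apply the induction hypothesis to get $\Gamma, \VEC{y} : \VEC{B}' \vdash P'$, whence the rule yields $\Gamma \vdash P$.

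The prefix cases \nameref{epi_simple_type_in} and \nameref{epi_simple_type_out} reduce to the two auxiliary facts (plus, in the input case, one further environment-reordering step for the continuation, exactly as in the restriction case). Here lies the key observation, and what I expect to be the main obstacle: the subject-vector judgement. Adding the assumption $x : B$, where $x \in \NAMES$, cannot alter any derivation of $\Gamma\,;\,\Gamma \vdash \VEC{z} : C$. Inspecting rules \nameref{epi_simple_type_vec1} and \nameref{epi_simple_type_vec2}, the only lookups performed are (i) of the subject names $z_i$ in the first environment, each of which differs from $x$ since $x \notin \FN{P}$ forces $x \notin \VEC{z}$, and (ii) of type names in the second environment, which live in $\TNAMES$ and are hence distinct from $x \in \NAMES$. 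Consequently the entire subject-vector derivation tree is literally unchanged when $x : B$ is removed, so the strengthened judgement holds with the very same proof.

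The main thing to get right is thus the interaction between the strengthened name $x$ and the two-environment subject-vector judgement; once one notes that $x$, being a channel name, is never consulted as a type name and is excluded from the free subject names, the subject-vector fact is immediate and the remainder of the induction is entirely routine, mirroring the case analysis already carried out for Theorem~\ref{wc:thm:epi_simple_type_safety}. The only residual subtlety is the use of \ALPHA-conversion and environment-reordering in the two binding cases, which is handled uniformly by the conventions that type environments are unordered and that bound names may be chosen fresh for $x$.
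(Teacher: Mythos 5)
Your proposal is correct and takes exactly the route the paper intends: the paper gives no detailed proof, dispatching this lemma (together with weakening and substitution) with the single remark that it is ``easily shown by induction on the typing judgement,'' and your induction on the derivation of $\Gamma, x : B \vdash P$ — with the auxiliary strengthening facts for expressions and subject vectors, the \ALPHA-conversion/reordering step in the binding cases, and the key observation that $x \in \NAMES$ can never be consulted as a type name in $\TNAMES$ nor occur in the subject vector — is precisely the routine argument being alluded to, spelled out in full.
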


\begin{definition}[Well-typed substitution]
We say a substitution $\sigma$ is well-typed, written $\Gamma \vdash \sigma$, if $\DOM{\sigma} \subseteq \DOM{\Gamma}$ and $\forall x \in \DOM{\sigma} \SUCHTHAT \Gamma \vdash \sigma(x) : \Gamma(x)$.
\end{definition}

\begin{lemma}[Substitution]\label{wc:lemma:epi_simple_type_substitution}
If $\Gamma \vdash P$ and $\Gamma \vdash \sigma$, then $\Gamma \vdash P\sigma$.
\end{lemma}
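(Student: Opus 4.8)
The plan is to proceed by induction on the derivation of $\Gamma \vdash P$ (equivalently, by structural induction on $P$), with a case analysis on the last type rule applied. Before the main induction I would record two auxiliary facts, each by a straightforward sub-induction. First, substitution preserves the type of values: if $\Gamma \vdash v : B$ and $\Gamma \vdash \sigma$, then $\Gamma \vdash v\sigma : B$. This is immediate from rule \nameref{epi_simple_type_val}, since literals are unaffected by $\sigma$, and for a name $x$ we have either $x \notin \DOM{\sigma}$ with $x\sigma = x$, or $x \in \DOM{\sigma}$ and then the defining condition of $\Gamma \vdash \sigma$ yields $\Gamma \vdash \sigma(x) : \Gamma(x) = B$. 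Second, substitution preserves the type of expressions: if $\Gamma \vdash e : B$ then $\Gamma \vdash e\sigma : B$, by induction on $e$ using the value case and rule \nameref{epi_simple_type_op}.

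The crux of the argument, and the step I expect to be the main obstacle, is that substitution preserves the typing of \emph{subject vectors}: if $\Gamma; \Delta \vdash \VEC{x} : C$ and $\Gamma \vdash \sigma$, then $\Gamma; \Delta \vdash \VEC{x}\sigma : C$. The subtlety is that rules \nameref{epi_simple_type_vec1} and \nameref{epi_simple_type_vec2} traverse the tree-structured type by a chain of lookups of the form $\Delta(\Gamma(x_i))$, so I must ensure the entire chain is unchanged after replacing each $x_i$ by $\sigma(x_i)$. The observation that unlocks this is that a well-typed $\sigma$ can only map a name to another name carrying the \emph{same} type name: since $\Gamma(x_i)$ is a type name $I$ (by well-formedness of $\Gamma$), the requirement $\Gamma \vdash \sigma(x_i) : \Gamma(x_i) = I$ together with rule \nameref{epi_simple_type_val} forces $\sigma(x_i)$ to be a name $y$ with $\Gamma(y) = I$ (a literal could only receive type $\TINT$ or $\TBOOL$, never a type name, so this also guarantees that $\VEC{x}\sigma$ remains a genuine subject vector). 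Consequently $(\Delta \circ \Gamma)(\sigma(x_i)) = (\Delta \circ \Gamma)(x_i)$ for every $i$, so the derivation for $\VEC{x}\sigma$ proceeds through exactly the same lookups and concludes the same $C$. I would make this precise by induction on the length of $\VEC{x}$.

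With these facts in hand, the main induction is routine. The cases \nameref{epi_simple_type_nil}, \nameref{epi_simple_type_par} and \nameref{epi_simple_type_rep} follow immediately by re-applying the same rule to the inductively obtained typings of the components, since substitution commutes with the structural operators. The cases \nameref{epi_simple_type_out}, \nameref{epi_simple_type_in} and \nameref{epi_simple_type_sum} combine the induction hypothesis on the continuations with the auxiliary facts above: the subject-vector preservation result handles $\VEC{x}\sigma$ in the first two, while the expression-preservation fact handles $\VEC{e}\sigma$ and the guards $e_i\sigma$.

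The remaining care is with the binding constructs \nameref{epi_simple_type_res} and \nameref{epi_simple_type_in}, where the bound names ($\VEC{x}$, resp.\@ $\VEC{y}$) are added to the environment in the premise. Since substitution is capture-avoiding, I may assume by \ALPHA-conversion that these names are fresh for $\sigma$, i.e.\@ absent from both its domain and range. Freshness ensures that $\Gamma \vdash \sigma$ lifts to $\Gamma, \VEC{x} : \VEC{B} \vdash \sigma$, as the fresh extension leaves the type of each $\sigma(x)$ unchanged (immediate from rule \nameref{epi_simple_type_val}, and in spirit an instance of Weakening, Lemma~\ref{wc:lemma:epi_simple_type_weakening}). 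The induction hypothesis then applies to the body, and the desired typing follows by re-applying the rule, using that under the freshness assumption $(\NEW{\VEC{x}:\VEC{B}}P)\sigma = \NEW{\VEC{x}:\VEC{B}}(P\sigma)$ (and likewise that only the subject of an input, not its bound objects, is affected by $\sigma$).
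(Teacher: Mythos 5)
Your proposal is correct and takes essentially the same approach as the paper: the paper disposes of this lemma as one of the "standard lemmata\ldots easily shown by induction on the typing judgement," noting only the key fact that "no well-typed substitution can alter the type of (a vector of) names," which is precisely the subject-vector preservation property you isolate as the crux. Your elaboration of it --- that $\Gamma \vdash \sigma$ forces any subject component $x_i$ (whose type must be a type name $I$) to be mapped to a name with the \emph{same} type name, so the lookup chains in the vector rules are unchanged --- is exactly the paper's remark made rigorous, and the rest of your induction (including the freshness handling of binders) matches the intended routine argument.
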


The next lemma says that expression evaluation respects typing; this is shown by induction on the rules of $\trans_e$, which we have omitted since they depend on the operators $\op$ that we choose to admit in the language.

\begin{lemma}[Safety for expressions]\label{wc:lemma:epi_simple_type_safety_expressions}
If $\Gamma \vdash \VEC{e} : \VEC{B}$ and $\VEC{e} \trans_e \VEC{v}$ then $\Gamma \vdash \VEC{v} : \VEC{B}$
\end{lemma}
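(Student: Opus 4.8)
The plan is to prove the statement first for a single expression and then lift it to vectors componentwise. Since both the typing judgement $\Gamma \vdash \VEC{e} : \VEC{B}$ and the evaluation $\VEC{e} \trans_e \VEC{v}$ are defined pointwise (the latter being, by convention, an abbreviation of $e_1 \trans_e v_1, \ldots, e_n \trans_e v_n$), it suffices to show that $\Gamma \vdash e : B$ together with $e \trans_e v$ imply $\Gamma \vdash v : B$; the vector case then follows by applying this to each index, the matching arities being guaranteed by the shape of the abbreviations.

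For the single-expression case I would proceed by induction on the derivation of $e \trans_e v$, which here tracks the structure of $e$. In the base case $e$ is already a value $v$, which evaluates to itself; the typing $\Gamma \vdash v : B$ is then exactly the hypothesis, obtained by rule \nameref{epi_simple_type_val}, whether $v$ is a name, an integer, or a boolean. In the inductive case $e = \op(\VEC{e'})$, inverting \nameref{epi_simple_type_op} gives argument types $\VEC{B'}$ with $\Gamma \vdash \VEC{e'} : \VEC{B'}$ and a signature $\op : \VEC{B'} \to B$. The evaluation of $\op(\VEC{e'})$ first reduces the arguments, $\VEC{e'} \trans_e \VEC{v'}$, and then applies $\op$ as a function to obtain $v = \op(\VEC{v'})$. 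The induction hypothesis yields $\Gamma \vdash \VEC{v'} : \VEC{B'}$, so the arguments supplied to $\op$ match its declared input signature.

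The crux of the argument, and the main obstacle, lies here: I must know that evaluating $\op$ on values of the declared argument types produces a value of the declared result type $B$. Since the operators $\op$ and the evaluation relation $\trans_e$ are deliberately left as parameters of the language, this consistency between $\trans_e$ and the declared signatures must be assumed as a well-behavedness condition on the admitted operators, namely that $\op : \VEC{B'} \to B$ and $\Gamma \vdash \VEC{v'} : \VEC{B'}$ entail that $\op(\VEC{v'})$ is a value of type $B$. This is the single point at which the proof depends on properties of $\trans_e$ that the paper does not spell out. The assumption is mild, and is further constrained by the stipulation that no operator may yield a name: hence the result type $B$ can only be $\TINT$ or $\TBOOL$, so the computed value $v$ is a concrete integer or boolean and $\Gamma \vdash v : B$ follows immediately by \nameref{epi_simple_type_val}, closing the induction.
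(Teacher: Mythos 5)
Your proof is correct and follows essentially the same route as the paper, which disposes of this lemma in one line: induction on the rules of $\trans_e$, with the details omitted precisely because they depend on the choice of operators $\op$. Your fleshed-out version makes explicit what the paper leaves implicit — namely that the operator case rests on a well-behavedness assumption tying the semantics of each $\op$ to its declared signature $\VEC{B} \to B$ — which is exactly the dependency the paper acknowledges by declining to fix $\trans_e$.
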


Lastly, we show that a well-typed process can only originate well-typed (output/$\tau$) labels:
\begin{lemma}[Well-typed labels]\label{wc:lemma:epi_simple_type_welltyped_labels}
If $\Gamma \vdash P$ and $P \trans[\alpha] P'$, for $\alpha = \tau$ or $\alpha = \LSEND{\VEC{x}}{\NEW{\VEC{z} : \VEC{B}}\VEC{v}}$, then $\Gamma \vdash \alpha$.
\end{lemma}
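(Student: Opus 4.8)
The plan is to proceed by induction on the derivation of the transition $P \trans[\alpha] P'$, keeping the environment $\Gamma$ universally quantified so that the induction hypothesis may be re-applied under a larger environment in the cases that open or restrict names. Since we only have to consider $\alpha = \tau$ and output labels, the $\tau$-case is immediate: rule \nameref{epi_simple_type_label_tau} has no premises, so $\Gamma \vdash \tau$ holds unconditionally, and in particular \nameref{epi_early_com1} (whose conclusion is always a $\tau$-transition) is discharged at once. It remains to treat the transition rules that can emit an output label.

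For the base case, the only rule that derives an output transition directly is \nameref{epi_early_out}, yielding $\OUTPUT{\VEC{x}}{\VEC{e}}.P \trans[\alpha] P$ with $\alpha = \LSEND{\VEC{x}}{\VEC{v}}$ and $\VEC{e} \trans_e \VEC{v}$. Inverting the sole applicable typing rule \nameref{epi_simple_type_out} on $\Gamma \vdash \OUTPUT{\VEC{x}}{\VEC{e}}.P$ gives $\Gamma; \Gamma \vdash \VEC{x} : \TCHAN{\VEC{B}}$ and $\Gamma \vdash \VEC{e} : \VEC{B}$; Lemma~\ref{wc:lemma:epi_simple_type_safety_expressions} then upgrades the latter to $\Gamma \vdash \VEC{v} : \VEC{B}$, and \nameref{epi_simple_type_label_send} with an empty bound-name list gives $\Gamma \vdash \LSEND{\VEC{x}}{\VEC{v}}$. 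The inductive cases \nameref{epi_early_sum} and \nameref{epi_early_par1} are routine: inverting \nameref{epi_simple_type_sum} (resp.\ \nameref{epi_simple_type_par}) supplies $\Gamma \vdash P_i$ (resp.\ $\Gamma \vdash P_1$) under the \emph{same} $\Gamma$, so the induction hypothesis applies to the premise transition and delivers $\Gamma \vdash \alpha$ unchanged. For \nameref{epi_early_rep} I would first reconstruct $\Gamma \vdash \REPL{P} \PAR P$ from the hypothesis $\Gamma \vdash \REPL{P}$ and from $\Gamma \vdash P$ (obtained by inverting \nameref{epi_simple_type_rep}) via \nameref{epi_simple_type_par}, and then invoke the induction hypothesis on $\REPL{P} \PAR P \trans[\alpha] P'$.

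The two cases where the real work lies, and where I expect the main obstacle, are \nameref{epi_early_res} and \nameref{epi_early_open}. In both, the premise transition is typed under the \emph{extended} environment $\Gamma, \VEC{z} : \VEC{B}_z$ obtained from \nameref{epi_simple_type_res}, so the induction hypothesis yields a label judgement over $\Gamma, \VEC{z} : \VEC{B}_z$, whereas the goal is a judgement over $\Gamma$ alone (with, in the \nameref{epi_early_open} case, the opened names $\VEC{z}$ moved into the bound-name list of the label). The side conditions $\FRESH{\VEC{z}}{\VEC{x}}$ and $\FRESH{\VEC{x}}{\alpha}$ guarantee that the restricted names occur neither in the subject vector nor among the free names of the object, so their assumptions are genuinely unused. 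Concretely, for \nameref{epi_early_open} I would read off the two premises of \nameref{epi_simple_type_label_send} from $\Gamma, \VEC{z} : \VEC{B}_z \vdash \LSEND{\VEC{x}}{\NEW{\VEC{y} : \VEC{B}_1}\VEC{v}}$: the object premise $\Gamma, \VEC{z} : \VEC{B}_z, \VEC{y} : \VEC{B}_1 \vdash \VEC{v} : \VEC{B}_2$ is already, up to the reordering permitted by unordered environments, the object premise required for the bound-name annotation $\NEW{\VEC{y},\VEC{z} : \VEC{B}_1,\VEC{B}_z}$, and the subject premise $\Gamma, \VEC{z} : \VEC{B}_z; \Gamma, \VEC{z} : \VEC{B}_z \vdash \VEC{x} : \TCHAN{\VEC{B}_2}$ is strengthened down to $\Gamma; \Gamma \vdash \VEC{x} : \TCHAN{\VEC{B}_2}$; together these reconstruct $\Gamma \vdash \LSEND{\VEC{x}}{\NEW{\VEC{y},\VEC{z} : \VEC{B}_1,\VEC{B}_z}\VEC{v}}$. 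The \nameref{epi_early_res} case is the same strengthening step without the bound-name bookkeeping.

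The obstacle is precisely that Lemma~\ref{wc:lemma:epi_simple_type_strengthening} is stated only for process judgements $\Gamma \vdash P$, whereas I need its analogue for the subject-vector judgement $\Gamma; \Gamma \vdash \VEC{x} : C$ and hence for labels. I would settle this with a short auxiliary observation: the derivation of $\Gamma; \Gamma \vdash \VEC{x} : C$ by \nameref{epi_simple_type_vec2} and \nameref{epi_simple_type_vec1} only consults entries of $\Gamma$ reachable from the names appearing in $\VEC{x}$ through the lookups $\fst \circ \Delta \circ \Gamma$ and $\snd \circ \Delta \circ \Gamma$, so adding or removing assumptions about names fresh for $\VEC{x}$ cannot affect it; together with the (immediate) strengthening for the value judgement $\Gamma \vdash \VEC{v} : \VEC{B}$, this yields strengthening for labels and closes both cases.
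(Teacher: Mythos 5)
Your proof is correct and follows essentially the same route as the paper's: induction on the derivation of the transition, with the $\tau$-case (including \nameref{epi_early_com1}) immediate, the \nameref{epi_early_out} base case discharged via Lemma~\ref{wc:lemma:epi_simple_type_safety_expressions} and \nameref{epi_simple_type_label_send}, the congruence cases passed through the induction hypothesis, and \nameref{epi_early_open}/\nameref{epi_early_res} handled by applying the induction hypothesis under the extended environment $\Gamma, \VEC{z} : \VEC{B}_z$ and then rebuilding the label judgement over $\Gamma$. The only difference is that you make explicit the strengthening of the subject judgement $\Gamma, \VEC{z} : \VEC{B}_z ; \Gamma, \VEC{z} : \VEC{B}_z \vdash \VEC{x} : \TCHAN{\VEC{B}}$ down to $\Gamma ; \Gamma \vdash \VEC{x} : \TCHAN{\VEC{B}}$ (justified by the freshness side conditions and the fact that the vector rules only consult entries reachable from the names in $\VEC{x}$), a step the paper's sketch leaves implicit; this is a gap-free refinement of, rather than a departure from, the paper's argument.
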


\begin{proof}[Proof sketch]
If the label is $\tau$, then the result is immediate, since $\Gamma \vdash \tau$ for any $\Gamma$ by \nameref{epi_simple_type_label_tau}.
Otherwise, we have a number of cases to examine, depending on which rule was used to conclude the transition.
However, the two most interesting cases are for the output and open rules:
\begin{itemize}
  \item If \nameref{epi_early_out} was used, then the conclusion is of the form $\OUTPUT{\VEC{x}}{\VEC{e}}.P \trans[\LSEND{\VEC{x}}{\VEC{v}}] P$.
    The label is a free output.
    From the premise we have that $\VEC{e} \trans_e \VEC{v}$.
    Now $\Gamma \vdash \OUTPUT{\VEC{x}}{\VEC{e}}.P$ must have been concluded by \nameref{epi_simple_type_out}, and from its premises we know that $\Gamma; \Gamma \vdash \VEC{x} : \TCHAN{\VEC{B}}$ and $\Gamma \vdash \VEC{e} : \VEC{B}$.
    By Lemma~\ref{wc:lemma:epi_simple_type_safety_expressions}, we then have that $\Gamma \vdash \VEC{v} : \VEC{B}$.
    Hence, by \nameref{epi_simple_type_label_send}, we have that $\Gamma \vdash \LSEND{\VEC{x}}{\VEC{v}}$.

  \item If \nameref{epi_early_open} was used, then the conclusion is of the form $\NEW{\VEC{z} : \VEC{B}_z}P \trans[\LSEND{\VEC{x}}{\NEW{\VEC{y},\VEC{z} : \VEC{B}_y,\VEC{B}_z}\VEC{v}}] P'$.
    From the premise and side condition, we have that $P \trans[\LSEND{\VEC{x}}{\NEW{\VEC{y} : \VEC{B}_y}\VEC{v}}] P'$, $\VEC{z} \subseteq \VEC{v}$ and $\FRESH{\VEC{z}\ }\ \VEC{x},\VEC{y}$.
    Now $\Gamma \vdash \NEW{\VEC{z} : \VEC{B}_z}P$ must have been concluded by \nameref{epi_simple_type_res}, and from its premise we know that $\Gamma, \VEC{z} : \VEC{B}_z \vdash P$.
    By the induction hypothesis, we then have that $\Gamma, \VEC{z} : \VEC{B}_z \vdash \LSEND{\VEC{x}}{\NEW{\VEC{y} : \VEC{B}_y}\VEC{v}}$.
    We can then conclude $\Gamma \vdash \LSEND{\VEC{x}}{\NEW{\VEC{y},\VEC{z} : \VEC{B}_y,\VEC{B}_z}\VEC{v}}$ by \nameref{epi_simple_type_label_send}.
\end{itemize}

For the remaining \ALPHA-labelled rules  (\nameref{epi_early_par1}, \nameref{epi_early_sum}, \nameref{epi_early_res} and \nameref{epi_early_rep}), the result follows directly from the induction hypothesis.
For \nameref{epi_early_in}, the statement holds vacuously, since the label is not an output- nor a \TAU-label.
\end{proof}

Finally, we can prove the Subject Reduction result stated above:
\begin{proof}[Proof of Theorem~\ref{wc:thm:epi_simple_subject_reduction}]
By induction on the inference for $P \trans[\alpha] P'$.
The base case can be inferred in two ways:
\begin{itemize}
  \item If \nameref{epi_early_in} was used,
    then the conclusion is of the form $\INPUT{\VEC{x}}{\VEC{y}}.P \trans[\LRECV{\VEC{x}}{\VEC{v}}] P\SUBSTITUTE{\VEC{v}}{\VEC{y}}$.
    Now $\Gamma \vdash \INPUT{\VEC{x}}{\VEC{y}}.P$ must have been concluded by \nameref{epi_simple_type_in}, and from its premise we know that $\Gamma; \Gamma \vdash \VEC{x} : \TCHAN{\VEC{B}}$ and $\Gamma, \VEC{y} : \VEC{B} \vdash P$.
    By assumption, $\Gamma \vdash \LRECV{\VEC{x}}{\VEC{v}}$, which must have been concluded by \nameref{epi_simple_type_label_recv}, and from the premise we know that $\Gamma \vdash \VEC{v} : \VEC{B}$. 
    By Lemma~\ref{wc:lemma:epi_simple_type_substitution}, we can therefore conclude that $\Gamma, \VEC{y} : \VEC{B} \vdash P\SUBSTITUTE{\VEC{v}}{\VEC{y}}$.
    As $\FRESH{\VEC{y}}\FN{P\SUBSTITUTE{\VEC{v}}{\VEC{y}}}$, we can then conclude $\Gamma \vdash P\SUBSTITUTE{\VEC{v}}{\VEC{y}}$ by Lemma~\ref{wc:lemma:epi_simple_type_strengthening}.
    As we know that the label here is an input label, we therefore have that $\Gamma' = \Gamma$.

  \item If \nameref{epi_early_out} was used,
    then the conclusion is of the form $\OUTPUT{\VEC{x}}{\VEC{e}}.P \trans[\LSEND{\VEC{x}}{\VEC{v}}] P$, and from the premise we have that $\VEC{e} \trans_e \VEC{v}$.
    Now $\Gamma \vdash \OUTPUT{\VEC{x}}{\VEC{e}}.P$ must have been concluded by \nameref{epi_simple_type_out}, and from its premises we know that $\Gamma \vdash P$, thus giving us the desired conclusion.
    As this is a \emph{free} output, we know that $\BN{\alpha} = \EMPTYSET$; so, no new name will be added to the type environment and $\Gamma' = \Gamma$.
\end{itemize}

For the inductive step, we reason on the last rule of
Figure~\ref{wc:fig:epi_early_semantics} used to conclude the inference.
Note that well-typedness for \TAU-labels always holds, and well-typedness for output labels is ensured by  Lemma~\ref{wc:lemma:epi_simple_type_welltyped_labels}; so, it only needs to be assumed for input labels.

\begin{itemize}
  \item Suppose \nameref{epi_early_par1} was used. 
    Then the conclusion is of the form $P_1 \PAR P_2 \trans[\alpha] P_1' \PAR P_2$, and from the premise and side condition we have that $P_1 \trans[\alpha] P_1'$ and $\FRESH{\BN{\alpha}}P_2$. 
    Now $\Gamma \vdash P_1 \PAR P_2$ must have been concluded by \nameref{epi_simple_type_par}, and from the premise we have that $\Gamma \vdash P_1$ and $\Gamma \vdash P_2$, 
    We then have by induction hypothesis that $\Gamma' \vdash P_1'$.
    Now, if $\alpha = \LSEND{\VEC{x}}{\NEW{\VEC{z} : \VEC{B}}\VEC{v}}$, then $\Gamma' = \Gamma, \VEC{z} : \VEC{B}$, and we know from the side condition of \nameref{epi_early_par1} that $\FRESH{\VEC{z}}P_2$.
    Thus, by Lemma~\ref{wc:lemma:epi_simple_type_weakening} (weakening), we also have that $\Gamma' \vdash P_2$.
    Otherwise, $\Gamma' = \Gamma$.
    In either case, we can then conclude $\Gamma' \vdash P_1' \PAR P_2$ by \nameref{epi_simple_type_par}.
    The symmetric case for \RULENAME{Par$_2$} is similar.

  \item Suppose \nameref{epi_early_sum} was used.
    Then the conclusion is of the form $\sum [\VEC{e}]\VEC{P} \trans[\ALPHA] P_i'$, and from the premise and side condition we know that $P_i \trans P_i'$ and $e_i \trans_e \TRUE$.
    Then $\Gamma \vdash \sum [\VEC{e}]\VEC{P}$ must have been concluded by \nameref{epi_simple_type_sum}.
    From the premise, we know that $\Gamma \vdash e_i: \code{bool}$ and $\Gamma \vdash P_i$ for each $i \in 1, \ldots, |\VEC{e}|$.
    The former will have been concluded by the type rules for operations and \nameref{epi_simple_type_val}.
    We then have by induction hypothesis that $\Gamma' \vdash P_i'$, where either $\Gamma' = \Gamma$ or $\Gamma' = \Gamma, \VEC{z} : \VEC{B}$ depending on the label.

  \item Suppose \nameref{epi_early_res} was used.
    Then the conclusion is of the form $\NEW{\VEC{x} : \VEC{B}}P \trans[\alpha] \NEW{\VEC{x} : \VEC{B}}P'$, and from the premise and side condition we know that $P \trans[\alpha] P'$ and $\FRESH{\VEC{x}}\alpha$.
    Now $\Gamma \vdash \NEW{\VEC{x} : \VEC{B}}P$ must have been concluded by \nameref{epi_simple_type_res}.
    From the premise, we know that $\Gamma, \VEC{x} : \VEC{B} \vdash P$.
    Therefore, by the induction hypothesis, we have that $\Gamma', \VEC{x} : \VEC{B} \vdash P'$, where $\Gamma' = \Gamma$ or $\Gamma' = \Gamma, \VEC{z} : \VEC{B}'$ as determined by the label.
    By Lemma~\ref{wc:lemma:epi_simple_type_strengthening} (strengthening), we can therefore conclude that $\Gamma' \vdash \NEW{\VEC{x} : \VEC{B}}P'$.

  \item Suppose \nameref{epi_early_rep} was used.
    Then the conclusion is of the form $\REPL{P} \trans[\alpha] P'$, and from the premise we know that $\REPL{P} \PAR P \trans[\alpha] P'$.
    Now $\Gamma \vdash \REPL{P}$ must have been concluded by \nameref{epi_simple_type_rep}, and from its premise we know that $\Gamma \vdash P$.
    Thus, by rule \nameref{epi_simple_type_par}, we have that $\Gamma \vdash \REPL{P} \PAR P$. By induction, we conclude that $\Gamma' \vdash P'$, where either $\Gamma' = \Gamma$ or $\Gamma' = \Gamma, \VEC{z} : \VEC{B}$ depending on the label.

  \item Suppose \nameref{epi_early_open} was used.
    Then the conclusion is of the form $\NEW{\VEC{z} : \VEC{B}_z}P \trans[\LSEND{\VEC{x}}{\NEW{\VEC{y},\VEC{z} : \VEC{B}_y, \VEC{B}_z}\VEC{v}}] P'$, and from the premise we know that $P \trans[\LSEND{\VEC{x}}{\NEW{\VEC{y} : \VEC{B}_y}\VEC{v}}] P'$.
    Now $\Gamma \vdash \NEW{\VEC{z} : \VEC{B}_z}P$ must have been concluded by \nameref{epi_simple_type_res}, and from its premise we know that $\Gamma, \VEC{z} : \VEC{B}_z \vdash P$.
    Then by induction hypothesis, we have that $\Gamma, \VEC{y} : \VEC{B}_y, \VEC{z} : \VEC{B}_z \vdash P'$, where $\Gamma, \VEC{y} : \VEC{B}_y, \VEC{z} : \VEC{B}_z = \Gamma'$.

  \item Suppose \nameref{epi_early_com1} was used.
    Then the conclusion is of the form $P_1 \PAR P_2 \trans[\TAU] \NEW{\VEC{z} : \VEC{B}}\PAREN{P_1' \PAR P_2'}$, and from the premises and side condition we know that $P_1 \trans[\LSEND{x}{\NEW{\VEC{z} : \VEC{B}}\VEC{v}}] P_1'$, $P_2 \trans[\LRECV{\VEC{x}}{\VEC{v}}] P_2'$ and $\FRESH{\VEC{z}}P_2$. 
    Now $\Gamma \vdash P_1 \PAR P_2$ must have been concluded by \nameref{epi_simple_type_par}, and from its premises we know that $\Gamma \vdash P_1$ and $\Gamma \vdash P_2$.
    We now have two cases to examine:
    \begin{enumerate}
      \item If the list of bound names $\VEC{z}$ is empty, then $P_1$ performs a \emph{free} output, and no new name is added to the type environment.
        Then $\Gamma' = \Gamma$, and thus by the induction hypothesis we have that $\Gamma \vdash P_1'$.

        In the case of $P_2$, we know it performs an input, so here $\Gamma' = \Gamma$, and thus by the induction hypothesis we have that $\Gamma \vdash P_2'$.
        Likewise, the restriction in the conclusion is empty, so the conclusion simplifies to $\NEW{\epsilon}\PAREN{P_1' \PAR P_2'} = P_1' \PAR P_2'$.
        Then $\Gamma \vdash P_1' \PAR P_2'$ can be concluded by \nameref{epi_simple_type_par}.

      \item Otherwise, $P_1$ performs a \emph{bound} output with label $\alpha_1 = \LSEND{x}{\NEW{\VEC{z} : \VEC{B}}\VEC{v}}$ of the names $\VEC{z}$, and thus $\Gamma' = \Gamma, \VEC{z} : \VEC{B}$.
        Hence by the induction hypothesis, $\Gamma, \VEC{z} : \VEC{B} \vdash P_1'$.

        For $P_2$, we know from the side condition of \nameref{epi_early_com1} that $\FRESH{\VEC{z}}P_2$.
        Thus, by Lemma~\ref{wc:lemma:epi_simple_type_weakening} (weakening), we can also conclude $\Gamma, \VEC{z} : \VEC{B} \vdash P_2$.
        As we know that $P_2$ performs an input with label $\alpha_2 = \LRECV{\VEC{x}}{\VEC{v}}$, we also know that the type environment does not change.
        
        We must now show that $\Gamma, \VEC{z} : \VEC{B} \vdash \alpha_2$.
        As we know that $\Gamma \vdash P_1$ and $P_1 \trans[\LSEND{x}{\NEW{\VEC{z} : \VEC{B}}\VEC{v}}] P_1'$, we have by Lemma~\ref{wc:lemma:epi_simple_type_welltyped_labels} that $\Gamma \vdash \LSEND{x}{\NEW{\VEC{z} : \VEC{B}}\VEC{v}}$, which must have been concluded by \nameref{epi_simple_type_label_send}.
        From the premise, we have that $\Gamma \vdash \VEC{x} : \TCHAN{\VEC{B}'}$ and $\Gamma, \VEC{z} : \VEC{B} \vdash \VEC{v} : \VEC{B}'$.
        By \nameref{epi_simple_type_label_recv} we can then conclude that $\Gamma, \VEC{z} : \VEC{B} \vdash \LRECV{\VEC{x}}{\VEC{v}}$.
        
        Now by the induction hypothesis, we have that $\Gamma, \VEC{z} : \VEC{B} \vdash P_2'$.
        By \nameref{epi_simple_type_par} we can then conclude $\Gamma, \VEC{z} : \VEC{B} \vdash P_1' \PAR P_2'$, and finally by \nameref{epi_simple_type_res} we can conclude $\Gamma \vdash \NEW{\VEC{z} : \VEC{B}}\PAREN{P_1' \PAR P_2'}$.
    \end{enumerate}
    The symmetric case for \RULENAME{Com$_2$} is similar.
\end{itemize}

This concludes the proof.
\end{proof}

Then, by combining Theorems~\ref{wc:thm:epi_simple_type_safety} and~\ref{wc:thm:epi_simple_subject_reduction}, we obtain that a well-typed process remains now-safe after any number of \TAU-transitions:
\begin{corollary}[Soundness]
If $\Gamma \vdash P$, then $\SAFE{P}$.
\end{corollary}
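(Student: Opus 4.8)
The plan is to unfold the definition of $\SAFE{P}$ and reduce the claim to the two preceding theorems. By definition, $\SAFE{P}$ holds exactly when $\NSAFE{P'}$ for every $P'$ with $P \trans[\tau]* P'$; so I would fix an arbitrary such $P'$ and aim to establish $\Gamma \vdash P'$, from which $\NSAFE{P'}$ follows immediately by the Safety theorem (Theorem~\ref{wc:thm:epi_simple_type_safety}). The crucial observation is that well-typedness is preserved along $\tau$-steps \emph{without altering the environment}: this is precisely what Subject Reduction (Theorem~\ref{wc:thm:epi_simple_subject_reduction}) yields in the $\tau$ case, where $\Gamma' = \Gamma$, and it lets me thread the \emph{same} $\Gamma$ through an entire reduction sequence.

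Concretely, since $\trans[\tau]*$ is the reflexive and transitive closure of $\trans[\tau]$, there is a finite chain $P = P_0 \trans[\tau] P_1 \trans[\tau] \cdots \trans[\tau] P_n = P'$. I would prove by induction on $n$ that $\Gamma \vdash P_n$. The base case $n = 0$ is the hypothesis $\Gamma \vdash P$. For the inductive step, assume $\Gamma \vdash P_k$; we have $P_k \trans[\tau] P_{k+1}$, and the label $\tau$ is trivially well-typed for $\Gamma$ by rule \nameref{epi_simple_type_label_tau}. Hence Subject Reduction applies and gives $\Gamma' \vdash P_{k+1}$ with $\Gamma' = \Gamma$, i.e.\@ $\Gamma \vdash P_{k+1}$, closing the induction. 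Applying the Safety theorem to $\Gamma \vdash P'$ then yields $\NSAFE{P'}$, and since $P'$ was an arbitrary $\tau$-descendant of $P$, we conclude $\SAFE{P}$.

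I do not anticipate any real obstacle: the substance of the argument already resides in the two theorems, and the only subtlety is recognising that the $\tau$ case of Subject Reduction keeps $\Gamma$ fixed while $\Gamma \vdash \tau$ holds unconditionally, so a single environment suffices throughout the chain. Were the environment permitted to grow along the reductions (as it does for bound outputs), one would instead have to carry a sequence of progressively extended environments through the induction and reconcile them; restricting attention to $\tau$-labelled reductions is exactly what sidesteps this complication and makes the corollary a direct composition of the two theorems.
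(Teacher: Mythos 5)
Your proof is correct and matches the paper's intent exactly: the paper justifies this corollary simply by "combining Theorems~\ref{wc:thm:epi_simple_type_safety} and~\ref{wc:thm:epi_simple_subject_reduction}", and your argument—induction on the length of the $\tau$-chain, using that $\Gamma \vdash \tau$ holds by \nameref{epi_simple_type_label_tau} and that Subject Reduction keeps $\Gamma$ unchanged for $\tau$-labels, followed by one application of Safety—is precisely the unfolding of that combination.
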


\section{The \WCLANG{} language}\label{wc:sec:wc_language}
The intuition that a single-name input $\INPUT{x}{\VEC{y}}.P$ corresponds to a (non-persistent) function definition, and that an output $\OUTPUT{x}{\VEC{e}}$ corresponds to a function call, is well-known from Milner's \PI-calculus encoding of the \LAMBDA-calculus \cite{milner1992functions}, and also from Pierce and Turner's work on the PICT language \cite{PICT, turner1996phd}.
Also, the encoding of object-oriented (OO) languages into (variants of the) \PI-calculus is by now well established \cite{KS02,Sangiorgi98,Walker95}; hence, it is not surprising that polyadic synchronisations would make the latter task even easier. 
What is less obvious is what kind of typing discipline corresponds to the type system we have presented in Section~\ref{wc:sec:epi_type_system}.
To answer this question, we shall define \WCLANG{} (\textbf{W}hile with \textbf{C}lasses), an object-oriented, imperative language similar to \code{While} \cite{nielson_nielson2007semantics_with_applications}, but extended with classes, fields and methods, and propose an encoding into \EPI.
We will prove that the type system presented here for \EPI\ exactly corresponds to the ``expectable'' type system one would devise for \WCLANG.

\begin{figure}[t]
\begin{center}
\begin{minipage}{.5\textwidth}
\begin{syntax}[h]
  DC \IS \epsilon \OR \code{class $A$ \{ $DF$ $DM$ \} } DC     \tabularnewline
  DF \IS \epsilon \OR \code{field $p$ := $v$; } DF             \tabularnewline
  DM \IS \epsilon \OR \code{method $f(\VEC{x})$ \{ $S$ \} } DM \tabularnewline
  e  \IS v \kern-1pt\OR x \OR \code{this} \OR \code{$e$.$p$} \OR \op(\VEC{e}) 
\end{syntax}
\end{minipage}
\begin{minipage}{.45\textwidth}
\begin{syntax}[h]
  S \IS \code{skip} 
    \OR \code{var $x$ := $e$ in $S$} 
  \ISOR \code{$x$ := $e$} 
    \OR \code{this.$p$ := $e$} 
  \ISOR \code{$S_1$;$S_2$}
    \OR \code{if $e$ then $S_{\TRUE}$ else $S_{\FALSE}$ } 
  \ISOR \code{while $e$ do $S$} 
    \OR \WCCALL{e}{f}{\VEC{e}}
\end{syntax}
\end{minipage}
\end{center}

\noindent where $x, y \in \VNAMES$ (variable names), $p, q \in \FNAMES$ (field names), $A, B \in \CNAMES$ (class names), $f, g \in \MNAMES$ (method names), and $v \in \mathbb{Z} \UNION \mathbb{B} \UNION \CNAMES$ (values). 
Throughout the paper, we assume, without loss of generality, that all field names are distinct from each other (within a class), and likewise for all method names.
\caption{The syntax of \WCLANG.}
\label{wc:fig:syntax_wc}
\end{figure}

\subsection{Syntax and (big-step) operational semantics}
The syntax of \WCLANG{} is given in Figure~\ref{wc:fig:syntax_wc}, where, for simplicity, we assume that all name sets are pairwise disjoint, and that field and methods names are unique within each class declaration. 

To provide a suitable operational semantics, we shall need some environments to record the bindings of classes, methods, fields, and local variables, including the `magic variable' \code{this}.
We define them as sets of partial functions as follows:

\begin{definition}[Binding model]
We define the following sets of partial functions:
\begin{align*}
  \ENV{V} \in \SETENV{V} & \DCLSYM \VNAMES \UNION \SET{\code{this}} \PARTIAL \VALUES      \\
  \ENV{F} \in \SETENV{F} & \DCLSYM \FNAMES \PARTIAL \VALUES                               \\ 
  \ENV{M} \in \SETENV{M} & \DCLSYM \MNAMES \PARTIAL \VNAMES^* \times \STM                 \\
  \ENV{S} \in \SETENV{S} & \DCLSYM \CNAMES \PARTIAL \SETENV{F}                            \\
  \ENV{T} \in \SETENV{T} & \DCLSYM \CNAMES \PARTIAL \SETENV{M}
\end{align*}
We regard each environment $\ENV{X}$, for any $X \in \SET{V, F, M, S, T}$, as a list of tuples $(d, c), \ENV{X}'$ for some $d \in \DOM{\ENV{X}}$ and $c \in \CODOM{\ENV{X}}$.
The notation $\ENV{X}\UPDATE{d}{c}$ denotes an update of $\ENV{X}$.
We write $\ENV{X}^{\EMPTYSET}$ for the empty environment.
\end{definition}

\begin{figure}[t]
\begin{center}
\begin{semantics}
  \RULE[wc-dclf$_1$]
    { \CONF{DF, \ENV{F}} \trans_{DF} \ENV{F}' }
    { \CONF{\code{field $p$ := $v$;$DF$}, \ENV{F}} \trans_{DF} (p, v), \ENV{F}' }

  \RULE[wc-dclf$_2$]
    { }
    { \CONF{\epsilon, \ENV{F}} \trans_{DF} \ENV{F} }

  \RULE[wc-dclm$_1$]
    { \CONF{DM, \ENV{M}} \trans_{DM} \ENV{M}' }
    { \CONF{\code{method $f$($\VEC{x}$) \{ $S$ \} $DM$}, \ENV{M}} \trans_{DM} (f, (\VEC{x}, S)), \ENV{M}' }

  \RULE[wc-dclm$_2$]
    { }
    { \CONF{\epsilon, \ENV{M}} \trans_{DM} \ENV{M} }

  \RULE[wc-dclc$_1$]
    {
        \CONF{DF, \ENV{F}^{\EMPTYSET}} \trans_{DF} \ENV{F} \AND
        \CONF{DM, \ENV{M}^{\EMPTYSET}} \trans_{DM} \ENV{M} \AND
        \CONF{DC, \ENV{ST}} \trans_{DC} \ENV{ST}'  
    }
    { \CONF{\code{class $A$ \{ $DF$ $DM$ \} $DC$}, \ENV{ST}} \trans_{DC} ((A, \ENV{F}), \ENV{S}'), ((A, \ENV{M}), \ENV{T}') }

  \RULE[wc-dclc$_2$]
    { }
    { \CONF{\epsilon, \ENV{ST}} \trans_{DC} \ENV{ST} }
\end{semantics}
\end{center}
\caption{Semantics of \WCLANG{} declarations.}
\label{wc:fig:wc_semantics_dcl}
\end{figure}

When two or more environments appear together, we shall use the convention of writing e.g.\@ $\ENV{ST}$ instead of $\ENV{S}, \ENV{T}$ to simplify the notation.

Our binding model then consists of two environments:
A \emph{method table} $\ENV{T}$, which maps class names to method environments, such that we for each class can retrieve the list of methods it declares; and a \emph{state} $\ENV{S}$, which maps class names to lists of fields and their values.
The method table will be constant, once all declarations are performed, but the state will change during the evaluation of a program.
The semantics of declarations concern the initial construction of the field and method environments, $\ENV{F}$ and $\ENV{M}$, and the state and method table $\ENV{S}$ and $\ENV{T}$.
We give the semantics in classic big-step style. 
Transitions are thus on the form $\CONF{DX, \ENV{X}} \trans_{DX} \ENV{X}'$ for $X \in \SET{F, M, C, ST}$.
The rules are given in Figure~\ref{wc:fig:wc_semantics_dcl}.

In practice, we shall assume that class declarations are simply in terms of the two environments $\ENV{ST}$, representing the methods and fields declared for each class. 
Thus, for every declared class $A$, we have that $\ENV{T}(A)(f) = (\VEC{x}, S)$ and $\ENV{S}(A)(p) = v$, for every method $f$ (with formal parameters $\VEC x$ and body $S$) and field $p$ (with current value $v$) declared in $A$.
Furthermore, we shall use another environment, $\ENV{V}$, to store the bindings of local variables $x$, i.e.\@ such that $\ENV{V}(x) = v$ for every local variable $x$ with current value $v$.

\begin{figure}[t]
\begin{center}
\begin{semantics}
  \RULE[wc-val][wc_expr_val]
    { }
    { \ENV{SV} \vdash v \trans_e v }

  \RULE[wc-var][wc_expr_var]({ \ENV{V}(x) = v })
    { }
    { \ENV{SV} \vdash x \trans_e v }

  \RULE[wc-op][wc_expr_op]
    { \ENV{SV} \vdash \VEC{e} \trans_e \VEC{v} \AND \op(\VEC{v}) \trans_{\op} v}
    { \ENV{SV} \vdash \op(\VEC{e}) \trans_e v }

  \RULE[wc-field][wc_expr_field]({ \ENV{S}(A)(p) = v })
    { \ENV{SV} \vdash e \trans_e A }
    { \ENV{SV} \vdash e.p \trans_e v }
\end{semantics}
\end{center}
\caption{Semantics of \WCLANG{} expressions.}
\label{wc:fig:wc_semantics_expr}
\end{figure}

Figure~\ref{wc:fig:wc_semantics_expr} gives the semantics of expressions $e$.
Expressions have no side effects, so they cannot contain method calls, but they can access both local variables and fields of any class.
Thus transitions are of the form $\ENV{SV} \vdash e \trans_e v$, i.e.\@ transitions must be concluded relative to the state and variable environments.
We do not give explicit rules for the boolean and integer operators subsumed under $\op$, but simply assume that they can be evaluated to a single value by some semantics $\op(\VEC{v}) \trans_{\op} v$.
Note that we assume that \emph{no} operations can \emph{yield} a class name $A$, so we disallow any form of pointer arithmetic.

\begin{figure}[t]
\begin{center}
\begin{semantics}
  \RULE[wc-skip][wc_stm_skip]
    {}
    { \ENV{T} \vdash \CONF{\code{skip}, \ENV{SV}} \trans \ENV{SV} }

  \RULE[wc-seq][wc_stm_seq]
    { \ENV{T} \vdash \CONF{S_1, \ENV{SV}} \trans \ENV{SV}'' \AND \ENV{T} \vdash \CONF{S_2, \ENV{SV}''} \trans \ENV{SV}' }
    { \ENV{T} \vdash \CONF{\code{$S_1$;$S_2$}, \ENV{SV}} \trans \ENV{SV}' }

  \RULE[wc-if][wc_stm_if]({b \in \SET{\TRUE, \FALSE}})
    { \ENV{SV} \vdash e \trans_e b \AND \ENV{T} \vdash \CONF{S_b, \ENV{SV}} \trans \ENV{SV}' }
    { \ENV{T} \vdash \CONF{\code{if $e$ then $S_{\TRUE}$ else $S_{\FALSE}$ }, \ENV{SV}} \trans \ENV{SV}' }

  \RULE[wc-while$_{\TRUE}$][wc_stm_whiletrue]
    {
    \begin{array}{r @{~} l}
      \ENV{SV} & \vdash e \trans_e \TRUE \tabularnewline
      \ENV{T}  & \vdash \CONF{S, \ENV{SV}} \trans \ENV{SV}'' \tabularnewline
      \ENV{T}  & \vdash \CONF{\code{while $e$ do $S$}, \ENV{SV}''} \trans \ENV{SV}'
    \end{array}
    }
    { \ENV{T} \vdash \CONF{\code{while $e$ do $S$}, \ENV{SV}} \trans \ENV{SV}' }

  \RULE[wc-while$_{\FALSE}$][wc_stm_whilefalse]
    { \ENV{SV} \vdash e \trans_e \FALSE }
    { \ENV{T} \vdash \CONF{\code{while $e$ do $S$}, \ENV{SV}} \trans \ENV{SV} }

   \RULE[wc-decv][wc_stm_decv](x \notin \DOM{\ENV{V}})
    { \ENV{SV} \vdash e \trans_e v \AND \ENV{T} \vdash \CONF{S, \ENV{S}, ((x, v), \ENV{V})} \trans \ENV{S}', ((x, v'), \ENV{V}') }
    { \ENV{T} \vdash \CONF{\code{var $x$ := $e$ in $S$}, \ENV{SV}} \trans \ENV{SV}' } 

  \RULE[wc-assv][wc_stm_assv](x \in \DOM{\ENV{V}})
    { \ENV{SV} \vdash e \trans_e v }
    { \ENV{T} \vdash \CONF{\code{$x$ := $e$}, \ENV{SV}} \trans \ENV{S}, \ENV{V}\UPDATE{x}{v} }

  \RULE[wc-assf][wc_stm_assf](p \in \DOM{\ENV{F}})
    { \ENV{SV} \vdash e \trans_e v \AND \ENV{V}(\code{this}) = A \AND \ENV{S}(A) = \ENV{F} }
    { \ENV{T} \vdash \CONF{\code{this.$p$ := $e$}, \ENV{SV}} \trans \ENV{S}\UPDATE{A}{\ENV{F}\UPDATE{p}{v}}, \ENV{V} }

  \RULE[wc-call][wc_stm_call]
    { 
    \begin{array}{lll}
      \ENV{SV} \vdash e \trans_e A \qquad
      \ENV{T}(A)(f) = (\VEC{x}, S)
      &
      \ENV{SV} \vdash \VEC{e} \trans_e \VEC{v} \qquad
      |\VEC{x}| = |\VEC{v}| = k                \tabularnewline
      \ENV{V}' = (\code{this}, A), (x_1, v_1), \ldots, (x_k, v_k)
      &
      \ENV{T} \vdash \CONF{S, \ENV{S},\ENV{V}'} \trans \ENV{SV}''
    \end{array}
    }
    { \ENV{T} \vdash \CONF{\WCCALL{e}{f}{\VEC{e}}, \ENV{SV}} \trans \ENV{S}'', \ENV{V} }
\end{semantics}
\end{center}
\caption{Semantics of \WCLANG{} statements.}
\label{wc:fig:wc_semantics_stm}
\end{figure}

A program in \WCLANG{} consists of the class definitions, recorded in the environments $\ENV{ST}$, plus a single method call.
The semantics of statements describes the actual execution steps of a program.
In Figure~\ref{wc:fig:wc_semantics_stm} we give the semantics in big-step style, where a step describes the execution of a statement in its entirety.
Statements can read from the method table, and they can modify the variable and property bindings, and hence the state.
The result of executing a statement is a new state, so transitions must here be of the form 
\begin{equation*}
  \ENV{T} \vdash \CONF{S, \ENV{SV}} \trans \ENV{SV}'
\end{equation*}

\noindent with $\ENV{SV}' = \ENV{S}', \ENV{V}'$ denoting that both the property values in $\ENV{S}$ and the values of the local variables in $\ENV{V}$ \emph{may} have been modified.

\subsection{Typed \WCLANG}
There are also textbook examples of type systems for \code{While}-like languages; for example the language \code{BUMP} given in \cite[pp.\@ 185-198]{huttel2010transitions}, which also includes procedure calls.
We can define a similar type system for \WCLANG{} as follows:

\begin{definition}[\WCLANG-types]
We use the following language of types: 
\begin{center}
\begin{minipage}{.45\textwidth}
\begin{syntax}[h]
  IC \IS \epsilon \OR \code{interface $I$ \{ $IF$ $IM$ \} $IC$} \\
  IF \IS \epsilon \OR \code{field $p$ :~$B$; $IF$}               \\
  IM \IS \epsilon \OR \code{method $f$ :~$\TPROC{\VEC{B}}$; $IM$}
\end{syntax}
\end{minipage}
\begin{minipage}{.45\textwidth}
\begin{syntax}[h]
  B \IS I \OR \TINT \OR \TBOOL                    \\
  T \IS B \kern-3pt\OR \TPROC{\VEC{B}} \OR \Delta \\
 \Gamma, \Delta \IS \NAMES \PARTIAL T
\end{syntax}
\end{minipage}
\end{center}
\noindent where $I \in \TNAMES$ is a type name and $\NAMES \DCLSYM \CNAMES \UNION \FNAMES \UNION \VNAMES \UNION \MNAMES \UNION \TNAMES$.
\end{definition}

We use a simple `interface definition language' analogous to the class definitions, to specify the signatures of fields and methods.
We shall henceforth use a typed version of the syntax, where each class is annotated with a type name 
$I$ (viz., \code{class $A$:$I$ \{ $DF$ $DM$ \}}); a type environment $\Gamma$ then is such that $\Gamma(A) = I$ and $\Gamma(I) = \Delta$, where $\Delta$ is again a type environment containing the signatures of fields and methods listed in the interface definition for $I$; thus, different classes can implement the same interface.
Finally, also local variables are now declared with a type, i.e.\@ \code{var $B$ x := $e$ in $S$}.

\begin{figure}[h]\centering
\begin{semantics}
  \RULE[t-env$_T$][wc_type_env_envt]
    { \Gamma \vdash_A \ENV{M} \AND \Gamma \vdash \ENV{T} }
    { \Gamma \vdash (A, \ENV{M}) , \ENV{T} }

  \RULE[t-env$_S$][wc_type_env_envs]
    { \Gamma \vdash_A \ENV{F} \AND \Gamma \vdash \ENV{S} }
    { \Gamma \vdash (A, \ENV{F}) , \ENV{S} }

  \RULE[t-env$_V$][wc_type_env_envv]
    { \Gamma(x) = B \AND \Gamma \vdash v : B \AND \Gamma \vdash \ENV{V} }
    { \Gamma \vdash (x, v) , \ENV{V} }
\end{semantics}
\begin{semantics}
  \RULE[t-env$^{\EMPTYSET}$][wc_type_env_empty]
    { X \in \SET{T, S, V, M, F} }
    { \Gamma \vdash \ENV{X}^{\EMPTYSET} }

  \RULE[t-env$_M$][wc_type_env_envm]
    { 
    \begin{array}{ll}
      \Gamma(A) = I
      &
      \Gamma(I)(f) = \TPROC{\VEC{B}}     \\
      \Gamma, \VEC{x} : \VEC{B} \vdash S 
      &
      \Gamma \vdash_A \ENV{M}              
    \end{array}
    }
    { \Gamma \vdash_A (f, (\VEC{x}, S)) , \ENV{M}}

  \RULE[t-env$_F$][wc_type_env_envf]
    { 
    \begin{array}{ll}
      \Gamma(A) = I 
      & \Gamma(I)(p) = B     \\
      \Gamma \vdash v : B   &
      \Gamma \vdash_A \ENV{F}
    \end{array}
    }
    { \Gamma \vdash_A (p, v) , \ENV{F}}
\end{semantics}
\caption{Type rules for \WCLANG-environment agreement.}
\label{wc:fig:wc_typerules_env}
\end{figure}

\begin{figure}[h]\centering
\begin{semantics}
  \RULE[t-var][wc_type_expr_var]
    { \Gamma(x) = B }
    { \Gamma \vdash x : B }

  \RULE[t-field][wc_type_expr_field]
    { \Gamma \vdash e : I \AND \Gamma(I)(p) = B }
    { \Gamma \vdash e.p : B }
\end{semantics}
\begin{semantics}
  \RULE[t-op][wc_type_expr_op]
    { \Gamma \vdash \VEC{e} : \VEC{B} \AND \op : \VEC{B} \to B }
    { \Gamma \vdash \op(\VEC{e}) : B }

  \RULE[t-val][wc_type_expr_val]({
  B = %
  \begin{cases}
      \TINT     & \text{if $v \in \mathbb{Z}$} \tabularnewline
      \TBOOL    & \text{if $v \in \mathbb{B}$} \tabularnewline
      \Gamma(v) & \text{if $v \in \NAMES$}
  \end{cases}
  }) 
    { }
    { \Gamma \vdash v : B }
\end{semantics}
\caption{Type rules for \WCLANG-expressions}
\label{wc:fig:wc_typerules_expr}
\end{figure}

\begin{figure}[h]\centering
\begin{semantics}
  \RULE[t-skip][wc_type_stm_skip]
    { }
    { \Gamma \vdash \code{skip} }

  \RULE[t-assv][wc_type_stm_assv]
    { 
      \Gamma \vdash x : B \AND
      \Gamma \vdash e : B 
    }
    { \Gamma \vdash \code{$x$ := $e$} }

  \RULE[t-seq][wc_type_stm_seq]
    { 
      \Gamma \vdash S_1 \AND 
      \Gamma \vdash S_2 
    }
    { \Gamma \vdash \code{$S_1$;$S_2$} }

  \RULE[t-while][wc_type_stm_while]
    { 
      \Gamma \vdash e : \TBOOL \AND 
      \Gamma \vdash S 
    }
    { \Gamma \vdash \code{while $e$ do $S$} }
\end{semantics}
\begin{semantics}
  \RULE[t-decv][wc_type_stm_decv]
    { 
      \Gamma \vdash e : B \AND 
      \Gamma, x : B \vdash S 
    }
    { \Gamma \vdash \code{var $B$ $x$ := $e$ in $S$} }

  \RULE[t-assf][wc_type_stm_assf]
    { 
      \Gamma \vdash \code{this.$p$} : B \AND 
      \Gamma \vdash e : B 
    }
    { \Gamma \vdash \code{this.$p$ := $e$} }

  \RULE[t-call][wc_type_stm_call]
    { 
      \Gamma \vdash e : I             \AND
      \Gamma(I)(f) = \TPROC{\VEC{B}}  \AND
      \Gamma \vdash \VEC{e} : \VEC{B}  
    }
    { \Gamma \vdash \WCCALL{e}{f}{\VEC{e}} }

  \RULE[t-if][wc_type_stm_if]
    { 
      \Gamma \vdash e : \TBOOL \AND 
      \Gamma \vdash S_{\TRUE}  \AND 
      \Gamma \vdash S_{\FALSE} 
    }
    { \Gamma \vdash \code{if $e$ then $S_{\TRUE}$ else $S_{\FALSE}$} }

\end{semantics}
\caption{Type rules for \WCLANG-statements.}
\label{wc:fig:wc_typerules_stm}
\end{figure}

The type rules for \WCLANG{} are given in Figure~\ref{wc:fig:wc_typerules_env} (environment agreement), Figure~\ref{wc:fig:wc_typerules_expr} (expressions), and Figure~\ref{wc:fig:wc_typerules_stm}.
Note that we continue to use the abbreviated notation for environments, and thus write e.g.\@ $\Gamma \vdash \ENV{SV}$ for $\Gamma \vdash \ENV{S} \land \Gamma \vdash \ENV{V}$.

Well-typedness of statements ensures that: types are preserved in assignments (see rules \nameref{wc_type_stm_decv} and \nameref{wc_type_stm_assf}), methods are called with the correct number and types of arguments (rule \nameref{wc_type_stm_call}), and classes contain the accessed members (rules \nameref{wc_type_expr_field} and \nameref{wc_type_stm_call}).
We shall forgo the definition of an explicit \SAFE{\cdot} predicate, since it is obvious.
We can then show the following theorem, which assures us that the type system is sound: 

\begin{theorem}[Subject reduction]\label{wc:thm:wc_subject_reduction}
Let $\Gamma \vdash \ENV{TSV}$.
If $\Gamma \vdash S$ and $\ENV{T} \vdash \CONF{S, \ENV{SV}} \trans \ENV{SV}'$, then $\Gamma \vdash \ENV{SV}'$.
\end{theorem}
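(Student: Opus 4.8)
The plan is to proceed by \emph{induction on the derivation} of the big-step transition $\ENV{T} \vdash \CONF{S, \ENV{SV}} \trans \ENV{SV}'$, rather than on the structure of $S$: the rule \nameref{wc_stm_whiletrue} contains a sub-derivation for the \emph{same} statement $\code{while } e \code{ do } S$, so only the derivation height gives a decreasing measure. Before starting the case analysis I would establish the three ``standard'' auxiliary results for \WCLANG{}, analogous to those already used for \EPI. First, \emph{expression safety}: if $\Gamma \vdash \ENV{SV}$, $\Gamma \vdash e : B$ and $\ENV{SV} \vdash e \trans_e v$, then $\Gamma \vdash v : B$, proved by induction on the derivation of $\ENV{SV} \vdash e \trans_e v$ using the agreement rules \nameref{wc_type_env_envv} and \nameref{wc_type_env_envf}. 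Second and third, \emph{weakening} and \emph{strengthening}, which let me add and remove type assumptions for names that do not occur in the relevant judgement; these are what make the entry into and exit from local scopes sound.

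I would dispatch the simple cases first. In \nameref{wc_stm_skip} and \nameref{wc_stm_whilefalse} we have $\ENV{SV}' = \ENV{SV}$, so the conclusion is immediate from the hypothesis $\Gamma \vdash \ENV{SV}$. The structural cases \nameref{wc_stm_seq}, \nameref{wc_stm_if} and \nameref{wc_stm_whiletrue} invert the matching type rule to obtain well-typedness of the sub-statements, then thread the induction hypothesis through the intermediate environments; for \nameref{wc_stm_whiletrue} one applies the hypothesis once to the body, obtaining $\Gamma \vdash \ENV{SV}''$, and once to the residual loop, whose well-typedness is exactly the premise already in hand from \nameref{wc_type_stm_while}. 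For the assignment cases \nameref{wc_stm_assv} and \nameref{wc_stm_assf}, I would use expression safety to conclude $\Gamma \vdash v : B$, where $B = \Gamma(x)$ (resp.\@ $B$ is the declared type of field $p$ in the receiver's interface, recovered via \nameref{wc_type_expr_field}), and then observe that the updated binding $\ENV{V}\UPDATE{x}{v}$ (resp.\@ the updated state) still satisfies \nameref{wc_type_env_envv} (resp.\@ \nameref{wc_type_env_envs} and \nameref{wc_type_env_envf}), the rest of the environment being unchanged.

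The scope-introducing case \nameref{wc_stm_decv} is where weakening and strengthening first do real work: from $\Gamma \vdash e : B$ and expression safety I get $\Gamma \vdash v : B$, so the activation $((x,v), \ENV{V})$ agrees with $\Gamma, x : B$; weakening extends the agreement of $\ENV{S}$ and $\ENV{T}$ likewise, so the induction hypothesis applies to the body $S$ (typed under $\Gamma, x : B$ by \nameref{wc_type_stm_decv}) and yields $\Gamma, x : B \vdash \ENV{S}', ((x,v'), \ENV{V}')$. Since the returned configuration $\ENV{SV}'$ drops the local $x$, and neither $\ENV{S}'$ nor $\ENV{V}'$ mentions $x$ (using the side condition $x \notin \DOM{\ENV{V}}$), strengthening removes the assumption $x : B$ and gives $\Gamma \vdash \ENV{S}', \ENV{V}'$.

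The main obstacle is the method-call case \nameref{wc_stm_call}. Inverting \nameref{wc_type_stm_call} gives $\Gamma \vdash e : I$, $\Gamma(I)(f) = \TPROC{\VEC{B}}$ and $\Gamma \vdash \VEC{e} : \VEC{B}$; expression safety on the receiver yields $\Gamma(A) = I$ (since $\ENV{SV} \vdash e \trans_e A$), and on the arguments yields $\Gamma \vdash \VEC{v} : \VEC{B}$. The crux is to recover the well-typedness of the method body: from $\Gamma \vdash \ENV{T}$, peeling off the agreement of class $A$ through \nameref{wc_type_env_envt} and \nameref{wc_type_env_envm} shows that $\ENV{T}(A)(f) = (\VEC{x}, S)$ is typed as $\Gamma, \VEC{x} : \VEC{B} \vdash S$, with \code{this} carrying the receiver's interface $I$. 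I must then check that the activation $\ENV{V}' = (\code{this}, A), (x_1, v_1), \ldots, (x_k, v_k)$ agrees with $\Gamma, \VEC{x} : \VEC{B}$: the \code{this} entry needs $\Gamma(\code{this}) = I$ together with $\Gamma(A) = I$, and each $(x_i, v_i)$ needs $\Gamma \vdash v_i : B_i$, which we have. Weakening gives $\Gamma, \VEC{x} : \VEC{B} \vdash \ENV{S}$, so the induction hypothesis applies to the body and delivers $\Gamma, \VEC{x} : \VEC{B} \vdash \ENV{S}''$. Finally, since the returned configuration is $\ENV{S}'', \ENV{V}$ — the callee's locals are discarded and the caller's $\ENV{V}$ restored — strengthening removes $\VEC{x} : \VEC{B}$ from the state agreement (the state never refers to the formal parameters), and $\Gamma \vdash \ENV{V}$ is the unchanged hypothesis, yielding $\Gamma \vdash \ENV{S}'', \ENV{V}$. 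The delicate points here are the precise treatment of the \code{this} binding in the activation record and justifying the discard of the callee's scope by strengthening on names absent from the state.
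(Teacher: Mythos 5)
Your proposal is correct and follows essentially the same route as the paper: induction on the derivation of the big-step transition (for exactly the reason you give, the \nameref{wc_stm_whiletrue} case), supported by expression safety, weakening and strengthening, with the update/extension facts for $\ENV{V}$ and $\ENV{S}$ that the paper isolates as separate substitution/extension lemmata appearing inlined in your assignment, declaration and call cases. Your treatment of the two delicate cases (\nameref{wc_stm_decv} and \nameref{wc_stm_call}), including the \code{this} binding in the activation record and discarding the callee's locals via strengthening, matches the paper's argument.
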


This is the `expectable' result for a simple type system for a language with a big-step semantics, i.e.\@ that a well-typed program, if it terminates when given a well-typed `input' (in the form of the environments $\ENV{SV}$), will produce a well-typed output (environments $\ENV{SV}'$).

The proof of this theorem relies on some standard lemmata, that can be proved by straightforward inductions.
Firstly, weakening and strengthening express that we can add and remove unused type assumptions in $\Gamma$ when typing $\ENV{V}$:
\begin{lemma}[Weakening for $\ENV{V}$]\label{wc:lemma:wc_weakening_envv}
If $\Gamma \vdash \ENV{V}$ and $x \notin \DOM{\ENV{V}}$, then $\Gamma, x:B \vdash \ENV{V}$.
\end{lemma}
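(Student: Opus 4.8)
The plan is to proceed by induction on the derivation of $\Gamma \vdash \ENV{V}$ (equivalently, on the length of the list $\ENV{V}$), since the only rules in Figure~\ref{wc:fig:wc_typerules_env} that can conclude such a judgement are \nameref{wc_type_env_empty} and \nameref{wc_type_env_envv}.

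For the base case, $\ENV{V} = \ENV{V}^{\EMPTYSET}$ is concluded by \nameref{wc_type_env_empty}, and that very rule immediately yields $\Gamma, x : B \vdash \ENV{V}^{\EMPTYSET}$ for any environment, so nothing is to prove.

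For the inductive step, $\ENV{V} = (y, v), \ENV{V}'$ and the last rule applied is \nameref{wc_type_env_envv}, whose premises give $\Gamma(y) = B'$, $\Gamma \vdash v : B'$ and $\Gamma \vdash \ENV{V}'$ for some base type $B'$. From the hypothesis $x \notin \DOM{\ENV{V}}$ I extract both $x \neq y$ and $x \notin \DOM{\ENV{V}'}$. I then re-establish the three premises under the extended environment and re-apply \nameref{wc_type_env_envv}: (i) since $x \neq y$, the key lookup is unaffected, so $(\Gamma, x : B)(y) = \Gamma(y) = B'$; (ii) the induction hypothesis applied to $\Gamma \vdash \ENV{V}'$ with $x \notin \DOM{\ENV{V}'}$ gives $\Gamma, x : B \vdash \ENV{V}'$; and (iii) I must re-derive $\Gamma, x : B \vdash v : B'$.

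The only point requiring care — and hence the main, if mild, obstacle — is step (iii): a small weakening property for the value-typing judgement of Figure~\ref{wc:fig:wc_typerules_expr}. Inspecting rule \nameref{wc_type_expr_val}, the type of a value is fixed as $\TINT$ or $\TBOOL$ whenever $v \in \mathbb{Z} \cup \mathbb{B}$, and is otherwise read off as $\Gamma(v)$ with $v$ a class name. Since the stored values of $\ENV{V}$ range only over $\mathbb{Z} \cup \mathbb{B} \cup \CNAMES$, any value that is a name is necessarily a \emph{class} name, whereas $x$ is a variable name; by the standing disjointness of the name sets, $x \neq v$, so $(\Gamma, x : B)(v) = \Gamma(v)$ and the judgement survives the extension. (Equivalently, one may observe that $x$ is fresh for $\Gamma$, so that well-typedness of $v$, which forces $v \in \DOM{\Gamma}$ when $v$ is a name, already entails $x \neq v$.) With this observation the case closes, and the symmetric Strengthening statement for $\ENV{V}$ is handled in the same fashion.
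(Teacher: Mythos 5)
Your proof is correct and takes exactly the route the paper intends: the paper states this lemma without an explicit proof, remarking only that such lemmata ``can be proved by straightforward inductions'', and your induction on the derivation of $\Gamma \vdash \ENV{V}$ is precisely that induction. The one point of substance you identified --- re-establishing $\Gamma, x:B \vdash v : B'$ in the \RULENAME{t-env$_V$} case --- is handled soundly, since a stored value is either an integer/boolean (typed independently of the environment) or a class name, which is distinct from the added name $x$ by the paper's standing disjointness assumption on name sets.
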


\begin{lemma}[Strengthening for $\ENV{V}$]\label{wc:lemma:wc_strengthening_envv}
If $\Gamma, x:B \vdash \ENV{V}$ and $x \notin \DOM{\ENV{V}}$, then $\Gamma \vdash \ENV{V}$.
\end{lemma}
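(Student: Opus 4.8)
The plan is to prove the statement by induction on the structure of $\ENV{V}$ (equivalently, on the derivation of $\Gamma, x:B \vdash \ENV{V}$), since the only rules that can conclude such a judgement are \nameref{wc_type_env_empty} and \nameref{wc_type_env_envv}, and these follow the list structure of $\ENV{V}$. Throughout I would exploit the hypothesis $x \notin \DOM{\ENV{V}}$ to peel off the outermost binding without clashing with $x$, in exact analogy with the dual argument for Weakening (Lemma~\ref{wc:lemma:wc_weakening_envv}).

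For the base case $\ENV{V} = \ENV{V}^{\EMPTYSET}$, the goal $\Gamma \vdash \ENV{V}^{\EMPTYSET}$ holds immediately by \nameref{wc_type_env_empty}, which places no constraint on the environment. For the inductive step $\ENV{V} = (y,v), \ENV{V}'$, the derivation must end with \nameref{wc_type_env_envv}, whose premises give $(\Gamma, x:B)(y) = B'$, $\Gamma, x:B \vdash v : B'$, and $\Gamma, x:B \vdash \ENV{V}'$. From $x \notin \DOM{\ENV{V}}$ I would extract $y \neq x$ and $x \notin \DOM{\ENV{V}'}$. The first of these makes the lookup insensitive to the removed binding, so $\Gamma(y) = B'$; the induction hypothesis applied to the tail yields $\Gamma \vdash \ENV{V}'$; and re-applying \nameref{wc_type_env_envv} reassembles $\Gamma \vdash (y,v), \ENV{V}'$.

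The one step deserving care, and the expected (if minor) obstacle, is strengthening the value premise $\Gamma, x:B \vdash v : B'$ to $\Gamma \vdash v : B'$. Here I would observe that a stored value $v$ ranges over $\mathbb{Z} \UNION \mathbb{B} \UNION \CNAMES$ and is typed by \nameref{wc_type_expr_val}: if $v$ is an integer or boolean, its type is fixed without consulting $\Gamma$ at all, and if $v$ is a class name, its type is $\Gamma(v)$. Since the removed binding is a variable binding $x \in \VNAMES$ (as in the dual Weakening lemma and in every application of this result), the disjointness of the name sets, $\VNAMES \cap \CNAMES = \EMPTYSET$, guarantees $x \neq v$ in the class-name case, so $(\Gamma, x:B)(v) = \Gamma(v)$ and the value typing is unaffected. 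This is the only point where the shape of the environment lookup matters; everything else is routine bookkeeping that mirrors the proof of Weakening.
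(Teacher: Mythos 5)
Your proof is correct and takes the same route the paper intends: the paper only remarks that this lemma ``can be proved by straightforward inductions'', and your structural induction on $\ENV{V}$ (base case by \nameref{wc_type_env_empty}, inductive step by inverting and re-applying \nameref{wc_type_env_envv}) is exactly that induction. You also correctly isolated the only delicate step --- strengthening the value premise $\Gamma, x:B \vdash v : B'$ --- and your resolution via the convention $x \in \VNAMES$ plus pairwise disjointness of the name sets is indeed what makes it go through, since the stated hypothesis $x \notin \DOM{\ENV{V}}$ by itself says nothing about $x$ occurring among the stored values (if $x$ were allowed to be a class name stored in $\ENV{V}$, the claim would fail).
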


The next lemma expresses that, if an expression $e$ is judged to have type $B$ and $e$ evaluates to some value $v$, then $v$ is indeed of type $B$.
This is shown by induction on the rules for semantics of expressions (Figure~\ref{wc:fig:wc_semantics_expr}), and the rules for evaluating operations $\op$, which we have omitted.

\begin{lemma}[Safety for expressions]\label{wc:lemma:wc_safety_expressions}
Let $\Gamma \vdash \ENV{S}$ and $\Gamma \vdash \ENV{V}$.
If $\Gamma \vdash e : B$ and $\ENV{SV} \vdash e \trans_e v$, then $\Gamma \vdash v : B$.
\end{lemma}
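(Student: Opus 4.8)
The plan is to proceed by induction on the derivation of the evaluation judgement $\ENV{SV} \vdash e \trans_e v$, equivalently on the structure of $e$, with a case analysis on the last rule of Figure~\ref{wc:fig:wc_semantics_expr} applied. In each case I would first invert the typing judgement $\Gamma \vdash e : B$ to determine which expression-typing rule of Figure~\ref{wc:fig:wc_typerules_expr} concluded it. The two immediate cases are \nameref{wc_expr_val} and \nameref{wc_expr_var}. For a value $v$, the evaluation is $v \trans_e v$ and the hypothesis $\Gamma \vdash v : B$ is already the desired conclusion. For a variable $x$ with $\ENV{V}(x) = v$, inverting \nameref{wc_type_expr_var} gives $\Gamma(x) = B$; since $x \in \DOM{\ENV{V}}$, inverting the agreement judgement $\Gamma \vdash \ENV{V}$ (built from \nameref{wc_type_env_envv}) at the entry for $x$ yields $\Gamma \vdash \ENV{V}(x) : \Gamma(x)$, which is exactly $\Gamma \vdash v : B$ since $\Gamma$ is a function.

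For the operator case \nameref{wc_expr_op}, the expression is $\op(\VEC{e})$ with $\ENV{SV} \vdash \VEC{e} \trans_e \VEC{v}$ and $\op(\VEC{v}) \trans_{\op} v$. Inverting \nameref{wc_type_expr_op} gives $\Gamma \vdash \VEC{e} : \VEC{B}$ with signature $\op : \VEC{B} \to B$; applying the induction hypothesis componentwise then yields $\Gamma \vdash \VEC{v} : \VEC{B}$. The remaining step is to invoke the assumption on the operator semantics (analogous to the corresponding step in the \EPI\ proof) that a well-typed operator applied to well-typed arguments produces a result of the declared return type $B$; this closes the case.

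The main obstacle is the field-access case \nameref{wc_expr_field}, where $e = e'.p$ with $\ENV{SV} \vdash e' \trans_e A$ and $\ENV{S}(A)(p) = v$. Inverting \nameref{wc_type_expr_field} gives $\Gamma \vdash e' : I$ and $\Gamma(I)(p) = B$. The induction hypothesis applied to the subderivation for $e'$ yields $\Gamma \vdash A : I$; since $A$ is a class name, rule \nameref{wc_type_expr_val} forces $\Gamma(A) = I$. The crux is to combine this with the agreement hypothesis $\Gamma \vdash \ENV{S}$: since $\ENV{S}(A)$ is defined, inverting \nameref{wc_type_env_envs} isolates the entry $(A, \ENV{F})$ with $\Gamma \vdash_A \ENV{F}$, and since $\ENV{F}(p) = v$, a further inversion on \nameref{wc_type_env_envf} yields $\Gamma(A) = I'$, $\Gamma(I')(p) = B'$ and $\Gamma \vdash v : B'$. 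Reconciling the two interface lookups via $\Gamma(A) = I = I'$ and functionality of $\Gamma(I)$ gives $B = B'$, hence $\Gamma \vdash v : B$. The care required here lies entirely in the double inversion on the state environment and in matching the two ways the field's type is recorded (through the typing rule and through environment agreement); no deeper induction is needed.
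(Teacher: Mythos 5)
Your proof is correct and takes essentially the same approach as the paper, which only sketches this lemma as an induction on the rules for the semantics of expressions (Figure~\ref{wc:fig:wc_semantics_expr}) together with the (omitted) assumption that operator evaluation respects the signatures $\op : \VEC{B} \to B$. Your fleshed-out cases — in particular the double inversion on the agreement judgement $\Gamma \vdash \ENV{S}$ in the field-access case and the use of functionality of $\Gamma$ to reconcile the two interface lookups — supply exactly the details the paper leaves implicit.
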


The next two lemmata state that we can update an entry $(x,v)$ in $\ENV{V}$ with another value $v'$, if $v$ and $v'$ are of the same type, and that we can extend $\ENV{V}$ with a new entry $(x,v)$, as long as $x$ and $v$ have the same type:
\begin{lemma}[Substitution for $\ENV{V}$]\label{wc:lemma:wc_substitution_envv}
If $\Gamma, x:B \vdash \ENV{V}$ and $x \in \DOM{\ENV{V}}$ and $\Gamma, x:B \vdash v : B$, \\then $\Gamma, x:B \vdash \ENV{V}\UPDATE{x}{v}$.
\end{lemma}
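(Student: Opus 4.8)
The plan is to prove the statement by induction on the structure of the environment $\ENV{V}$ (equivalently, on the derivation of the agreement judgement $\Gamma, x : B \vdash \ENV{V}$). Since $x \in \DOM{\ENV{V}}$, the environment is nonempty, so the empty-environment case (rule \nameref{wc_type_env_empty}) is vacuous, and the derivation of $\Gamma, x:B \vdash \ENV{V}$ must end with an application of \nameref{wc_type_env_envv}. Writing $\ENV{V} = (y, w), \ENV{V}'$ for its head entry and tail, I would invert that last rule to recover its three premises, $(\Gamma, x:B)(y) = B_y$, $\Gamma, x:B \vdash w : B_y$, and $\Gamma, x:B \vdash \ENV{V}'$, renaming the recorded type to $B_y$ to keep it distinct from the lemma's $B$. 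The proof then splits on whether the head variable $y$ coincides with $x$.

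First I would treat the case $y = x$. Here the update rewrites precisely the head entry, so $\ENV{V}\UPDATE{x}{v} = (x, v), \ENV{V}'$. Since $(\Gamma, x:B)(x) = B$ by construction of the extended environment, the inverted domain premise forces $B_y = B$. It then suffices to re-apply \nameref{wc_type_env_envv}, discharging its domain premise with $(\Gamma, x:B)(x) = B$, its value premise with the hypothesis $\Gamma, x:B \vdash v : B$ in place of $\Gamma, x:B \vdash w : B$, and its tail premise with the unchanged $\Gamma, x:B \vdash \ENV{V}'$; this yields $\Gamma, x:B \vdash (x,v), \ENV{V}'$, which is exactly $\Gamma, x:B \vdash \ENV{V}\UPDATE{x}{v}$. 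Note that the old value $w$ plays no role, as expected of a substitution lemma.

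In the remaining case $y \neq x$, the update leaves the head untouched and descends into the tail, i.e.\@ $\ENV{V}\UPDATE{x}{v} = (y, w), \ENV{V}'\UPDATE{x}{v}$. Because $x \in \DOM{\ENV{V}}$ while $y \neq x$, the binding for $x$ lies in the tail, so $x \in \DOM{\ENV{V}'}$ and the induction hypothesis applies to $\ENV{V}'$, using $\Gamma, x:B \vdash \ENV{V}'$ together with the unchanged hypothesis $\Gamma, x:B \vdash v : B$; this gives $\Gamma, x:B \vdash \ENV{V}'\UPDATE{x}{v}$. A final application of \nameref{wc_type_env_envv}, reusing the untouched head premises $(\Gamma, x:B)(y) = B_y$ and $\Gamma, x:B \vdash w : B_y$, then reconstructs $\Gamma, x:B \vdash (y, w), \ENV{V}'\UPDATE{x}{v}$ and closes the case.

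I expect no genuine obstacle: this is the standard pointwise substitution property for environment agreement, and the induction is structurally shallow. The only points requiring care are bookkeeping ones — confirming that the $\UPDATE{\cdot}{\cdot}$ notation rewrites exactly the (unique) entry for $x$ guaranteed by the binding model, and observing that extending with $x:B$ pins the recorded type of $x$ to $B$, which is precisely what makes the head case go through without appealing to any property of the discarded value $w$.
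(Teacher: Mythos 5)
Your proof is correct and matches the paper's intended argument: the paper only remarks that this lemma (like the other environment lemmata) ``can be proved by straightforward inductions,'' and your structural induction on $\ENV{V}$ with inversion of rule \nameref{wc_type_env_envv}, split on whether the head variable is $x$, is exactly that straightforward induction carried out in full. The two bookkeeping points you flag (uniqueness of the entry for $x$ and the fact that $\Gamma, x:B$ pins the recorded type of $x$ to $B$) are handled correctly.
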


\begin{lemma}[Extension of $\ENV{V}$]\label{wc:lemma:wc_extension_envv}
If $\Gamma, x:B \vdash \ENV{V}$ and $x \notin \DOM{\ENV{V}}$ and $\Gamma, x:B \vdash v : B$, \\then $\Gamma, x:B \vdash (x,v), \ENV{V}$.
\end{lemma}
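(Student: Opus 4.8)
The plan is to derive the goal by a single application of the environment-agreement rule \nameref{wc_type_env_envv}, whose premises coincide almost verbatim with the hypotheses of the lemma. In contrast to the Substitution lemma, no induction on the structure of $\ENV{V}$ is needed here, because we are merely prepending a new entry rather than rewriting an existing one.

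First I would observe that $(x,v),\ENV{V}$ is a legitimate element of $\SETENV{V}$, i.e.\@ a genuine partial function: this is precisely the role of the side condition $x \notin \DOM{\ENV{V}}$, which guarantees that prepending the binding $(x,v)$ introduces no clash with any binding already recorded in $\ENV{V}$. It is worth stressing that this freshness assumption does not enter the construction of the typing derivation itself; it serves only to ensure that the object being typed is well formed.

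Next I would instantiate rule \nameref{wc_type_env_envv} with the environment $\Gamma, x:B$, the entry $(x,v)$, and the tail $\ENV{V}$, and check its three premises: (i) $(\Gamma, x:B)(x) = B$ holds immediately by the definition of environment lookup, since $\Gamma, x:B$ binds $x$ to $B$; (ii) $\Gamma, x:B \vdash v : B$ is exactly the third hypothesis of the lemma; and (iii) $\Gamma, x:B \vdash \ENV{V}$ is exactly the first hypothesis. With all three premises available, the rule yields $\Gamma, x:B \vdash (x,v),\ENV{V}$ in one step, which is the desired conclusion.

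I do not anticipate any genuine obstacle for this lemma; the only point worth spelling out is the well-formedness remark above. By contrast with the Substitution lemma (Lemma~\ref{wc:lemma:wc_substitution_envv}), which must recurse through $\ENV{V}$ to locate and rewrite the existing binding for $x$, the extension case is a one-rule derivation and is therefore immediate.
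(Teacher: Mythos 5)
Your proof is correct. The paper gives no explicit proof of this lemma (it merely groups it with the other environment lemmata as provable by ``straightforward inductions''), and your argument is the natural one: a single application of rule \nameref{wc_type_env_envv} with the environment instantiated to $\Gamma, x:B$, whose three premises are exactly the lookup $(\Gamma, x:B)(x) = B$ and the two typing hypotheses; your remark that $x \notin \DOM{\ENV{V}}$ is never consumed by the typing rules and serves only to ensure $(x,v),\ENV{V}$ is a well-formed partial function---in contrast with Lemma~\ref{wc:lemma:wc_substitution_envv}, which genuinely recurses through $\ENV{V}$---is accurate.
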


We shall need almost the same lemmata for $\ENV{S}$, except extension, since new fields cannot be declared at runtime.
As $\ENV{S}$ contains nested environments $\ENV{F}$, we shall use the notation $\FRESH{x}\ENV{S}$ to say that $x$ does not occur in $\ENV{S}$, i.e.\@ neither in its domain, or in the domain of any of its nested $\ENV{F}$ environments (nor, in principle, as a value).

\begin{lemma}[Weakening for $\ENV{S}$]\label{wc:lemma:wc_weakening_envs}
If $\Gamma \vdash \ENV{S}$ and $\FRESH{x}\ENV{S}$, then $\Gamma, x:B \vdash \ENV{S}$.
\end{lemma}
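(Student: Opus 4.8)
The plan is to proceed by induction on the derivation of $\Gamma \vdash \ENV{S}$ which, since the environment-agreement rules for $\ENV{S}$ are syntax-directed on the list structure, amounts to an induction on the shape of $\ENV{S}$. In the base case $\ENV{S} = \ENV{S}^{\EMPTYSET}$ the judgement is concluded by \nameref{wc_type_env_empty}, and the same rule immediately yields $\Gamma, x:B \vdash \ENV{S}^{\EMPTYSET}$, independently of the added assumption. In the inductive case $\ENV{S} = (A, \ENV{F}), \ENV{S}'$ the last rule must be \nameref{wc_type_env_envs}, whose premises are $\Gamma \vdash_A \ENV{F}$ and $\Gamma \vdash \ENV{S}'$. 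Since $\FRESH{x}\ENV{S}$ entails $\FRESH{x}\ENV{S}'$, the induction hypothesis gives $\Gamma, x:B \vdash \ENV{S}'$ directly; it then remains to re-establish the nested judgement $\Gamma, x:B \vdash_A \ENV{F}$, after which \nameref{wc_type_env_envs} closes the case.

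For the nested judgement I would prove an auxiliary weakening statement for field environments: if $\Gamma \vdash_A \ENV{F}$ and $\FRESH{x}\ENV{F}$ (with $x$ also distinct from $A$, which follows from $x$ being fresh for the enclosing $\ENV{S}$), then $\Gamma, x:B \vdash_A \ENV{F}$. This is shown by a second induction, now on the derivation of $\Gamma \vdash_A \ENV{F}$. The base case is again the empty environment. In the step $\ENV{F} = (p, v), \ENV{F}'$ the rule \nameref{wc_type_env_envf} supplies the premises $\Gamma(A) = I$, $\Gamma(I)(p) = B'$, $\Gamma \vdash v : B'$, and $\Gamma \vdash_A \ENV{F}'$. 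I would argue that each survives the extension of $\Gamma$ by $x:B$: the look-ups $\Gamma(A)$ and $\Gamma(I)$ are unaffected, so $(\Gamma, x:B)(A) = I$ and $(\Gamma, x:B)(I)(p) = B'$; the value judgement $\Gamma \vdash v : B'$ is re-derived via \nameref{wc_type_expr_val}, which is immediate when $v$ is an integer or a boolean and, when $v$ is a class name, uses $x \neq v$, guaranteed by freshness of $x$ for the values occurring in $\ENV{F}$. The induction hypothesis gives $\Gamma, x:B \vdash_A \ENV{F}'$, and a final application of \nameref{wc_type_env_envf} reassembles $\Gamma, x:B \vdash_A (p,v), \ENV{F}'$.

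The step I expect to require the most care is verifying that the extension by $x:B$ disturbs none of the look-ups used in the nested derivation, despite the two-level nesting of $\ENV{S}$. For the class name $A$ and any value name $v$ stored in $\ENV{F}$ this is immediate from $\FRESH{x}\ENV{S}$ — which excludes $x$ from the domain of $\ENV{S}$ and from all field names and values of every nested $\ENV{F}$ — together with the standing assumption that the name sets are pairwise disjoint. The one look-up not directly mentioned by the freshness hypothesis is that of the interface name $I$ in $\Gamma(I)(p) = B'$; here one uses that $I \in \DOM{\Gamma}$, since $\Gamma(I)$ is consulted, so that under the usual convention that $\Gamma, x:B$ extends $\Gamma$ with a name $x \notin \DOM{\Gamma}$ (equivalently, since a variable name $x$ is disjoint from all type names) we have $(\Gamma, x:B)(I) = \Gamma(I) = \Delta$ and the inner look-up $\Delta(p) = B'$ is untouched. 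Once this is settled, every premise of \nameref{wc_type_env_envf} and \nameref{wc_type_env_envs} transports verbatim and both inductions close; the whole argument is the evident lifting of Weakening for $\ENV{V}$ (Lemma~\ref{wc:lemma:wc_weakening_envv}) to the two-level state environment.
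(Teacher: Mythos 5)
Your proof is correct and follows exactly the route the paper intends: it explicitly declines to spell out these lemmata, stating only that they ``can be proved by straightforward inductions,'' and your structural induction on $\ENV{S}$ with an inner induction on the nested $\ENV{F}$ (using the pairwise disjointness of name sets and the paper's definition of $\FRESH{x}\ENV{S}$ to preserve the look-ups of $A$, $I$, $p$ and stored values) is precisely that straightforward argument.
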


\begin{lemma}[Strengthening for $\ENV{S}$]\label{wc:lemma:wc_strengthening_envs}
If $\Gamma, x:B \vdash \ENV{S}$ and $\FRESH{x}\ENV{S}$, then $\Gamma \vdash \ENV{S}$.
\end{lemma}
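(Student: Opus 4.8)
The plan is to prove the statement by structural induction on $\ENV{S}$ (equivalently, on the derivation of $\Gamma, x:B \vdash \ENV{S}$), mirroring the proof of Strengthening for $\ENV{V}$ (Lemma~\ref{wc:lemma:wc_strengthening_envv}), but with an inner induction to deal with the nested field environments $\ENV{F}$. First I would dispatch the base case: if $\ENV{S} = \ENV{S}^{\EMPTYSET}$, then $\Gamma \vdash \ENV{S}^{\EMPTYSET}$ holds by \nameref{wc_type_env_empty} for any $\Gamma$, so there is nothing to remove.

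For the inductive step, $\ENV{S} = (A, \ENV{F}), \ENV{S}'$, and the judgement $\Gamma, x:B \vdash (A,\ENV{F}), \ENV{S}'$ must have been concluded by \nameref{wc_type_env_envs}, giving $\Gamma, x:B \vdash_A \ENV{F}$ and $\Gamma, x:B \vdash \ENV{S}'$. Since $\FRESH{x}\ENV{S}$ entails $\FRESH{x}\ENV{S}'$, the induction hypothesis yields $\Gamma \vdash \ENV{S}'$. It then remains to strengthen the field environment, i.e.\@ to establish $\Gamma \vdash_A \ENV{F}$ from $\Gamma, x:B \vdash_A \ENV{F}$ together with the fact that $x$ does not occur in $\ENV{F}$.

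I would obtain this by a second induction on $\ENV{F}$. The empty case is again immediate. For $\ENV{F} = (p,v), \ENV{F}'$, the judgement ends in \nameref{wc_type_env_envf}, whose premises are $(\Gamma,x:B)(A) = I$, $(\Gamma, x:B)(I)(p) = B'$, $\Gamma, x:B \vdash v : B'$, and $\Gamma, x:B \vdash_A \ENV{F}'$. The inner induction hypothesis handles the tail $\ENV{F}'$, so the remaining work is to show that each of the two lookups and the value judgement is unaffected by deleting the binding $x:B$. For the first lookup, $A \in \DOM{\ENV{S}}$, so $\FRESH{x}\ENV{S}$ gives $x \neq A$, whence $\Gamma(A) = I$. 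For the value judgement, inspecting \nameref{wc_type_expr_val}: if $v$ is an integer or boolean, its type is fixed independently of the environment; if $v$ is a class name, then it occurs as a value of $\ENV{S}$, so $\FRESH{x}\ENV{S}$ gives $x \neq v$ and hence $\Gamma(v) = (\Gamma,x:B)(v) = B'$; either way $\Gamma \vdash v : B'$.

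The one slightly delicate point — and the main obstacle — is the middle premise $(\Gamma, x:B)(I)(p) = B'$, since the type name $I = \Gamma(A)$ need not literally occur inside $\ENV{S}$, so plain freshness does not directly yield $x \neq I$. Here the clean observation is that the existence of this premise already forces $x \neq I$: for $(\Gamma,x:B)(I)(p)$ to be defined, $(\Gamma,x:B)(I)$ must be a type environment $\Delta$, whereas $(\Gamma,x:B)(x) = B$ is a base type, and since each name has at most one binding we conclude $x \neq I$; therefore $\Gamma(I) = (\Gamma,x:B)(I)$ and $\Gamma(I)(p) = B'$. With the three premises re-established over $\Gamma$, rule \nameref{wc_type_env_envf} gives $\Gamma \vdash_A \ENV{F}$, and then \nameref{wc_type_env_envs} gives $\Gamma \vdash (A,\ENV{F}), \ENV{S}'$, completing both inductions.
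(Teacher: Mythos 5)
Your proof is correct and is essentially the paper's: the paper gives no explicit proof of this lemma, describing it as one of the ``standard lemmata, that can be proved by straightforward inductions'', and your nested structural induction on $\ENV{S}$ and its field environments $\ENV{F}$ is precisely that argument. The corner case you isolate --- ruling out $x = I$ so that the lookup $\Gamma(I)(p)$ survives deletion of the binding --- is resolved soundly by your observation that $(\Gamma, x:B)(x)$ is a base type while $(\Gamma, x:B)(I)$ must be a type environment; it also follows directly from the paper's standing assumption that the name sets are pairwise disjoint, since wherever the lemma is applied $x$ is a local variable name or \code{this}, whereas $I \in \TNAMES$.
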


\begin{lemma}[Substitution for $\ENV{S}$]\label{wc:lemma:wc_substitution_envs}
Assume $A \in \DOM{\ENV{S}}$ and $\ENV{S}(A) = \ENV{F}$.
If $\Gamma \vdash \ENV{S}$ \\ and $\Gamma \vdash A.p : B$ and $\Gamma \vdash v:B$, then $\Gamma \vdash \ENV{S}\UPDATE{A}{\ENV{F}\UPDATE{p}{v}}$.
\end{lemma}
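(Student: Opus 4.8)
The plan is to proceed by a straightforward structural induction, mirroring the argument for Substitution for $\ENV{V}$ (Lemma~\ref{wc:lemma:wc_substitution_envv}), but with one extra layer, since $\ENV{S}$ carries nested field environments $\ENV{F}$. First I would unfold the premise on the field access. Since $A \in \CNAMES$, the judgement $\Gamma \vdash A.p : B$ can only be concluded by \nameref{wc_type_expr_field}, whose premises force $\Gamma \vdash A : I$ and $\Gamma(I)(p) = B$ for some type name $I \in \TNAMES$; and $\Gamma \vdash A : I$ is in turn concluded by \nameref{wc_type_expr_val}, giving $\Gamma(A) = I$. Thus from the hypotheses I extract the three facts $\Gamma(A) = I$, $\Gamma(I)(p) = B$ and $\Gamma \vdash v : B$ that will drive the re-typing of the modified entry.

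Next I would establish an auxiliary claim, a ``substitution (and extension) lemma for $\ENV{F}$'': if $\Gamma \vdash_A \ENV{F}$ together with $\Gamma(A) = I$, $\Gamma(I)(p) = B$ and $\Gamma \vdash v : B$, then $\Gamma \vdash_A \ENV{F}\UPDATE{p}{v}$. This is shown by an inner induction on the list $\ENV{F}$. In the empty case the update appends the single entry $(p,v)$, which is well-typed by \nameref{wc_type_env_envf} (using the three extracted facts) on top of the empty field environment, which is well-typed by \nameref{wc_type_env_empty}. In the step $\ENV{F} = (q, w), \ENV{F}'$ I invert \nameref{wc_type_env_envf} to obtain $\Gamma \vdash_A \ENV{F}'$, then either overwrite (if $q = p$) by re-applying \nameref{wc_type_env_envf} with the new value $v$, or keep $(q,w)$ unchanged and recurse into $\ENV{F}'$ via the induction hypothesis. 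Since \nameref{wc_type_env_envf} only constrains the entries actually present, nothing forces $p$ to lie in $\DOM{\ENV{F}}$, so this one claim covers both the overwrite and the append situations.

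With the auxiliary claim in hand, the main result follows by an outer induction on the list structure of $\ENV{S}$. The empty case is vacuous, as it contradicts $A \in \DOM{\ENV{S}}$. In the step $\ENV{S} = (A', \ENV{F}'), \ENV{S}''$ I invert \nameref{wc_type_env_envs} to obtain $\Gamma \vdash_{A'} \ENV{F}'$ and $\Gamma \vdash \ENV{S}''$, and split on whether $A' = A$. If $A' \neq A$, the update only touches the tail, so I apply the induction hypothesis to $\ENV{S}''$ and reassemble with \nameref{wc_type_env_envs}. If $A' = A$, then $\ENV{F}' = \ENV{F}$ (the entry for $A$ is unique, as $\ENV{S}$ is the list representation of a partial function), the update yields $(A, \ENV{F}\UPDATE{p}{v}), \ENV{S}''$, and I conclude $\Gamma \vdash_A \ENV{F}\UPDATE{p}{v}$ from the auxiliary claim before reassembling with \nameref{wc_type_env_envs}.

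I expect no real difficulty here: the lemma is routine, and the bulk of the work is bookkeeping across the two nested inductions. The only point needing care is the inversion of $\Gamma \vdash A.p : B$, which must recover the shared interface name $I$ linking $\Gamma(A)$ to the field signature $\Gamma(I)(p)$; this same $I$ must then be threaded unchanged through both inductions so that the re-typing of the entry $(p,v)$ lines up with $\Gamma(I)(p) = B$. I would also note explicitly that, because the environment-agreement rules never demand that a field environment realise every field declared in its interface, no auxiliary hypothesis $p \in \DOM{\ENV{F}}$ is required, and the statement holds exactly as given.
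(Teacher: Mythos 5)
Your proof is correct and follows exactly the route the paper intends: the paper states this lemma without an explicit proof, remarking only that such lemmata ``can be proved by straightforward inductions'', and your argument---inverting $\Gamma \vdash A.p : B$ via the field-access and value rules to recover $\Gamma(A) = I$ and $\Gamma(I)(p) = B$, establishing an inner substitution claim for $\ENV{F}$ by induction on its list structure, and then running the outer induction on $\ENV{S}$ with the case split on whether the head entry is $A$---is precisely such a straightforward induction. Your side remark that no hypothesis $p \in \DOM{\ENV{F}}$ is needed is also consistent with the paper, since the environment-agreement rules only constrain the entries actually present in $\ENV{F}$.
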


Finally we can prove the Subject Reduction theorem:
\begin{proof}[Proof of Theorem \ref{wc:thm:wc_subject_reduction}]
By induction on the inference for $\ENV{T} \vdash \CONF{S, \ENV{SV}} \trans \ENV{SV}'$.
We have four possible base cases:
\begin{itemize}
  \item Suppose the transition was concluded by \nameref{wc_stm_skip}.
    Then we know that $S$ = \code{skip}, and the transition is of the form
    \begin{equation*}
      \ENV{T} \vdash \CONF{\code{skip}, \ENV{SV}} \trans \ENV{SV}
    \end{equation*}
    As neither $\ENV{S}$ nor $\ENV{V}$ are modified by the transition, we have that $\Gamma \vdash \ENV{SV}$ by assumption.

  \item Suppose \nameref{wc_stm_whilefalse} was used.
    Then we know that $S$ = \code{while $e$ do $S$}, and the transition is of the form
    \begin{equation*}
      \ENV{T} \vdash \CONF{\code{while $e$ do $S$}, \ENV{SV}} \trans \ENV{SV}
    \end{equation*}
    As neither $\ENV{S}$ nor $\ENV{V}$ are modified in the transition, we have that $\Gamma \vdash \ENV{SV}$ by assumption.

    \item Suppose \nameref{wc_stm_assv} was used.
    Then we know that $S$ = \code{$x$ := $e$}, and the transition is of the form 
    \begin{equation*}
      \ENV{T} \vdash \CONF{\code{$x$ := $e$}, \ENV{SV}} \trans \ENV{S}, \ENV{V}\UPDATE{x}{v}
    \end{equation*}

    From the premise of that rule we have that $\ENV{SV} \vdash e \trans_e v$. 
    Now, $\Gamma \vdash$ \code{$x$ := $e$} must have been concluded by \nameref{wc_type_stm_assv}, and from the premise of that rule we have that $\Gamma \vdash x :B$ and $\Gamma \vdash e : B$.
    We can then conclude the following:
    \begin{align*}
      \Gamma & \vdash v : B                 & \text{by Lemma~\ref{wc:lemma:wc_safety_expressions} (expressions safety),} \\
      \Gamma & \vdash \ENV{V}\UPDATE{x}{v} & \text{by Lemma~\ref{wc:lemma:wc_substitution_envv} (substitution for $\ENV{V}$),}
    \end{align*}

    \noindent and $\Gamma \vdash \ENV{S}$ by assumption, since it is not modified by the transition.
    Thus $\Gamma \vdash \ENV{S}$ and $\Gamma \vdash \ENV{V}\UPDATE{x}{v}$ as desired.

    \item Suppose \nameref{wc_stm_assf} was used.
    Then we know that $S$ = \code{this.$p$ := $e$}, and the transition is of the form
    \begin{equation*}
      \ENV{T} \vdash \CONF{\code{this.$p$ := $e$}, \ENV{SV}} \trans \ENV{S}\UPDATE{A}{\ENV{F}\UPDATE{p}{v}}, \ENV{V} 
    \end{equation*}

    From the premise of that rule we have that $\ENV{SV}\vdash e \trans_e v$,
    $\ENV{V}(\code{this}) = A$, and
    $\ENV{S}(A) = \ENV{F}$.
    Now, $\Gamma \vdash $ \code{this.$p$ := $e$} must have been concluded by \nameref{wc_type_stm_assf}, and from the premise of that rule we have that $\Gamma \vdash \code{this.$p$} : B$ and $\Gamma \vdash e : B$.
    We can then conclude that
    \begin{align*}
      \Gamma & \vdash v : B                                     & \text{by Lemma~\ref{wc:lemma:wc_safety_expressions} (expressions safety),} \\
      \Gamma & \vdash \ENV{S}\UPDATE{A}{\ENV{F}\UPDATE{p}{v}} & \text{by Lemma~\ref{wc:lemma:wc_substitution_envs} (substitution for $\ENV{S}$),}
    \end{align*}
    and $\Gamma \vdash \ENV{V}$ by assumption, since it is not modified by the transition.
    Thus $\Gamma \vdash \ENV{V}$ and $\Gamma \vdash \ENV{S}\UPDATE{A}{\ENV{F}\UPDATE{p}{v}}$ as desired.
\end{itemize}

For the inductive step, we reason on the rule of Figure~\ref{wc:fig:wc_semantics_stm} used for concluding the transition.
Note that, when stating the induction hypothesis in the cases below, we shall omit some of the premises for brevity, if they hold by assumption; especially $\Gamma \vdash \ENV{T}$ and $\ENV{T} \vdash \CONF{S, \ENV{SV}} \trans \ENV{SV}'$.

\begin{itemize}
  \item Suppose \nameref{wc_stm_seq} was used.
    Then we know that $S = S_1;S_2$, and the transition is of the form
    \begin{equation*}
      \ENV{T} \vdash \CONF{\code{$S_1$; $S_2$}, \ENV{SV}} \trans \ENV{SV}'
    \end{equation*}
    From the premise of that rule we have that $\ENV{T} \vdash \CONF{S_1, \ENV{SV}} \trans \ENV{SV}''$ and $\ENV{T} \vdash \CONF{S_2, \ENV{SV}''} \trans \ENV{SV}'$.
    Now, $\Gamma \vdash S_1;S_2$ must have been concluded by \nameref{wc_type_stm_seq}, and from the premise of that rule we have that $\Gamma \vdash S_1$ and $\Gamma \vdash S_2$.
    By the induction hypothesis we can then conclude that $\Gamma \vdash \ENV{SV}''$ and $\Gamma \vdash \ENV{SV}'$, as desired.

  \item Suppose \nameref{wc_stm_if} was used.
    Then we know that $S$ = \code{if $e$ then $S_{\TRUE}$ else $S_{\FALSE}$}, and the transition is of the form
    \begin{equation*}
      \ENV{T} \vdash \CONF{\code{if $e$ then $S_{\TRUE}$ else $S_{\FALSE}$ }, \ENV{SV}} \trans \ENV{SV}'
    \end{equation*}
    From the premise of that rule we have that $\ENV{SV} \vdash e \trans_e b$ and $\ENV{T} \vdash \CONF{S_b, \ENV{SV}} \trans \ENV{SV}'$.
    Now, $\Gamma \vdash $ \code{if $e$ then $S_{\TRUE}$ else $S_{\FALSE}$} must have been concluded by \nameref{wc_type_stm_if}, and from the premise of that rule we have that $\Gamma \vdash S_{\TRUE}$ and $\Gamma \vdash S_{\FALSE}$.
    By the induction hypothesis we can then conclude that $\Gamma \vdash \ENV{SV}'$, as desired.

  \item Suppose \nameref{wc_stm_whiletrue} was used.
    Then we know that $S$ = \code{while $e$ do $S$}, and the transition is of the form
    \begin{equation*}
      \ENV{T} \vdash \CONF{\code{while $e$ do $S$}, \ENV{SV}} \trans \ENV{SV}'
    \end{equation*}
    From the premise of that rule we have that $\ENV{T} \vdash \CONF{S, \ENV{SV}} \trans \ENV{SV}''$ and $\ENV{T} \vdash \CONF{\code{while $e$ do $S$}, \ENV{SV}''} \trans \ENV{SV}'$.
    Now, $\Gamma \vdash $ \code{while $e$ do $S$} must have been concluded by \nameref{wc_type_stm_while}, and from the premise of that rule we have that $\Gamma \vdash S$.
    As we know the transition was in fact concluded, we also know that the derivation tree for the transition is of finite height, and we can therefore use the induction hypothesis.
    Thus, by the induction hypothesis we can conclude that $\Gamma \vdash \ENV{SV}''$ and $\Gamma \vdash \ENV{SV}'$, as desired.

    \item Suppose \nameref{wc_stm_decv} was used.
    Then we know that $S$ = \code{var $x$ := $e$ in $S$}, and the transition is of the form
    \begin{equation*}
      \ENV{T} \vdash \CONF{\code{var $x$ := $e$ in $S$}, \ENV{SV}} \trans \ENV{SV}'
    \end{equation*}
    From the premise of that rule we have that $\ENV{SV} \vdash e \trans_e v$ and $\ENV{T} \vdash \CONF{S, \ENV{S}, ((x, v), \ENV{V})} \trans \ENV{S}', ((x, v'), \ENV{V}')$.
    Now, $\Gamma \vdash $ \code{var $x$ := $e$ in $S$} must have been concluded by \nameref{wc_type_stm_decv}, and from the premise of that rule we have that $\Gamma \vdash e : B$ and $\Gamma, x : B \vdash S$.
    By Lemma~\ref{wc:lemma:wc_safety_expressions} (expressions safety) we conclude that $\Gamma \vdash v : B$ and by Lemma~\ref{wc:lemma:wc_extension_envv} we conclude that $\Gamma, x:B \vdash (x,v), \ENV{V}$.
    As $\FRESH{x}\ENV{S}$, we therefore also have that $\Gamma, x:B \vdash \ENV{S}$, by Lemma~\ref{wc:lemma:wc_weakening_envs}.
    By the induction hypothesis we can then conclude that $\Gamma, x:B \vdash \ENV{S}'$ and $\Gamma, x:B \vdash (x,v'), \ENV{V}'$.
    As we know that $\FRESH{x}\ENV{S}'$ and $x \notin \DOM{\ENV{V}'}$, we can apply Lemmas~\ref{wc:lemma:wc_strengthening_envs} and~\ref{wc:lemma:wc_strengthening_envv}  to conclude that $\Gamma \vdash \ENV{S}'$ and $\Gamma \vdash \ENV{V}'$ as desired.

    \item Suppose \nameref{wc_stm_call} was used.
    Then we know that $S$ = \WCCALL{e}{f}{\VEC{e}}, and the transition is of the form
    \begin{equation*}
      \ENV{T} \vdash \CONF{\WCCALL{e}{f}{\VEC{e}}, \ENV{SV}} \trans \ENV{S}', \ENV{V}
    \end{equation*}

    From the premise of that rule we have that
    \begin{align*}
      \ENV{SV}  & \vdash e \trans_e A             \\
      \ENV{SV}  & \vdash \VEC{e} \trans_e \VEC{v} \\
      \ENV{T}   & \vdash \CONF{S, \ENV{S},\ENV{V}''} \trans \ENV{SV}'
    \end{align*}
    where $\ENV{T}(A)(f) = (\VEC{x}, S)$ and $\ENV{V}'' = (\code{this}, A), (x_1, v_1), \ldots, (x_k, v_k), \ENV{V}^{\EMPTYSET}$ for some $k = |\VEC{e}| = |\VEC{x}| = |\VEC{v}|$.
    Now, $\Gamma \vdash \WCCALL{e}{f}{\VEC{e}}$ must have been concluded by \nameref{wc_type_stm_call}, and from the premise of that rule we have that $\Gamma \vdash e : I$ and $\Gamma(I)(f) = \TPROC{\VEC{B}}$ and $\Gamma \vdash \VEC{e} : \VEC{B}$.

    By assumption $\Gamma \vdash \ENV{T}$, which was concluded by rules \nameref{wc_type_env_envt} and \nameref{wc_type_env_envm}.
    From the premise of the latter rule, we get that $\Gamma', \VEC{x} : \VEC{B} \vdash S$ where $\Gamma' = \Gamma, \code{this} : I$ and $\Gamma'(I)(f) = \TPROC{\VEC{B}}$.

    Now let $\Gamma' = \Gamma, \code{this} : I, \VEC{x} : \VEC{B}$.
    We can then conclude the following:
    \begin{align*}
      \Gamma   & \vdash A : I                & \text{by Lemma~\ref{wc:lemma:wc_safety_expressions} (expressions safety),} \\
      \Gamma   & \vdash \VEC{v} : \VEC{B}    & \text{by Lemma~\ref{wc:lemma:wc_safety_expressions} (expressions safety),} \\
      \Gamma   & \vdash \ENV{V}^{\EMPTYSET}  & \text{by \nameref{wc_type_env_empty},}                                     \\
      \Gamma'' & \vdash \ENV{V}''            & \text{by Lemma~\ref{wc:lemma:wc_extension_envv} (extension of $\ENV{V}$) and the preceding,} \\
      \Gamma'' & \vdash \ENV{S}              & \text{by Lemma~\ref{wc:lemma:wc_weakening_envs} (weakening of $\ENV{S}$),} \\
      \Gamma'' & \vdash \ENV{S}'             & \text{by the induction hypothesis.}
    \end{align*}

    Note that the $\ENV{V}'$ obtained from the execution of $S$ is discarded in the conclusion of \nameref{wc_stm_call}, so we do not need to infer its well-typedness.
    It is instead replaced with the original $\ENV{V}$, which is unmodified in the transition, and by assumption $\Gamma \vdash \ENV{V}$.
    Finally, as we know that $\code{this} \notin \DOM{\ENV{S}}$ and $\FRESH{\VEC{x}}\ENV{S}$, we can apply Lemma~\ref{wc:lemma:wc_strengthening_envs} (repeatedly) to conclude that $\Gamma \vdash \ENV{S}'$.
    Thus we have that $\Gamma \vdash \ENV{V}$ and $\Gamma \vdash \ENV{S}'$ as desired.
\end{itemize}

This concludes the proof.
\end{proof}

\section{Encoding \WCLANG{} in \EPI}\label{wc:sec:encoding}
We shall now see how to represent the \WCLANG-language in the \EPI-calculus.
In \WCLANG, we already assume that all field names are distinct from each other (within a class), and likewise for all method names.
Our strategy is to let all class names $A$, field names $p$, variable names $x$ (including \code{this}), and method names $f$, be names drawn from the set of \EPI{} names $\NAMES$, and then create composite names to scope the methods and fields, such that a field $p$ in a class $A$ will get the composite name $A \cdot p$; and likewise, a method $f$ will now get the composite name $A \cdot f$.

To encode imperative variables (and fields) and the memory store (and state) represented by the environments $\ENV{SV}$, we shall use a technique proposed in \cite{hirschkoff2020references_picalc}, where they show how a reference $\ell$ (i.e., a location/address) can be represented in the \PI-calculus simply by means of an asynchronous output:
\begin{equation*}
  \LOCNEW{\ell:B}{e}P  \DEFSYM \NEW{\ell:B}    \PAREN{ \OUTPUT{\ell}{e} \PAR P } \quad
  \LOCREAD{\ell}{y}.P  \DEFSYM \INPUT{\ell}{y}.\PAREN{ \OUTPUT{\ell}{y} \PAR P } \quad
  \LOCWRITE{\ell}{e}.P \DEFSYM \INPUT{\ell}{x}. \PAREN{ \OUTPUT{\ell}{e} \PAR P } 
\end{equation*}

\noindent where $x \not\in fn(P,e)$.
Here, 
$\LOCNEW{\ell:B}{e}P$ declares a new reference $\ell$ of type $B$ in $P$, which is initialised with the expression $e$; then,
$\LOCREAD{\ell}{y}.P$ reads the contents of $\ell$ (non-destructively) and binds it to $y$ within $P$; and
$\LOCWRITE{\ell}{e}.P$ writes the expression $e$ into $\ell$, overwriting the previous value, and continues as $P$.

In \cite{hirschkoff2020references_picalc}, $\ell$ is just a single \PI-calculus name, but as we are using this representation in \EPI, the reference will be a \emph{vector} of names.
Thus, we can encode both local variables and fields, where, in the latter case, $\ell$ will be the vector $A \cdot p$, representing the field named $p$ in the class $A$.
Note however that, unlike local variables (which are private), fields are publicly visible; therefore, their declaration must be represented as a plain output $\OUTPUT{A \cdot p}{e}$, i.e.\@ without restricting them.
This is safe, since we assume that the field names and method names are unique within each class, so the encoding of fields cannot clash with the encoding of method calls.

In the type system for \WCLANG, there is no distinction between a value of type $B$ and a variable (or field) capable of \emph{containing} such a value; however, given our representation of variables as outputs in \EPI{}, we must now explicitly distinguish between the two.
A value $v$ will have a base type $B$ as in $\WCLANG$, but variables (and fields) \emph{storing} such a value will now be given a `container type' $\TIFC{B} \mapsto (\TCHAN{B}, \EMPTYSET)$.
Thus, a variable storing e.g.\@ a boolean value in \WCLANG{} can now be given the type $\TIFC{\TBOOL} \mapsto (\TCHAN{\TBOOL}, \EMPTYSET)$, matching the encoding in \cite{hirschkoff2020references_picalc} of a channel capable of communicating a single boolean value.
We shall assume one such type is defined for each base type $B$ used in the \WCLANG-program to be translated.
Furthermore, we shall need one other type, $\TRET \mapsto (\TCHAN{}, \EMPTYSET)$, called the \emph{return type}. 
As shown below, it denotes a name that can only be used for pure synchronisation signals; we use such names in the encoding to control the execution of sequential compositions and to signal the return from method calls.

The encoding must also be parameterised with a type environment $\Gamma$ of \WCLANG-types, containing the interface definitions for the program; this is necessary because we are using a typed syntax, and the encoding will introduce some auxiliary names, which must therefore be given a type.
Note, however, that the encoding itself will not depend on $\Gamma$: had we used an untyped syntax, it could have been omitted.

Finally, recall that the semantics of \WCLANG{} is given in terms of transitions of the form $\ENV{T} \vdash \CONF{S, \ENV{SV}} \trans \ENV{SV}'$, since $\ENV{T}$ is never modified during transitions.
However, for the purpose of the encoding, we shall slightly change this format  and instead write $\CONF{S, \ENV{TSV}} \trans \ENV{TSV}'$, since all elements of the configuration, including $\ENV{T}$, must be encoded.
Likewise, we shall write transitions for expressions as $\CONF{e, \ENV{SV}} \trans_e \CONF{v, \ENV{SV}}$.
The top-level call to the translation function for initial and final configurations is then:
\begin{align*}
  \PTRANS[\CONF{S, \ENV{TSV}}]<\Gamma> & = \PTRANS[\ENV{T}]<\Gamma> \PAR \PTRANS[\ENV{S}] \PAR \NEW{r : \TRET}\PAREN{ \PTRANS[\ENV{V}]<\Gamma>\HOLE[{\PTRANS[S](r)<\Gamma>}] \PAR \INPUT{r}{}.\NIL } \\
  \PTRANS[\ENV{TSV}]<\Gamma>           & = \PTRANS[\ENV{T}]<\Gamma> \PAR \PTRANS[\ENV{S}] \PAR \PTRANS[\ENV{V}]<\Gamma>\HOLE[\NIL]
\end{align*}

\noindent where the notation $\PTRANS[\ENV{V}]<\Gamma>\HOLE[{\PTRANS[S](r)<\Gamma>}]$ (resp.\@ $\PTRANS[\ENV{V}]<\Gamma>\HOLE[\NIL]$) indicates that the variable environment $\ENV{V}$ is translated as a \emph{process context} containing a single hole, written $\HOLE$, into which the translation of the statement $\PTRANS[S](r)<\Gamma>$ (resp.\@ the stopped process $\NIL$) is inserted.
We also declare a new name $r$, i.e.\@ the \emph{return signal}, which is passed as a parameter to the encoding of $S$, and await an input on this name, which signals that the program has finished.
The two other environments are translated as processes running in parallel; their encoding is given below:

\bigskip
\noindent
\begin{math}
\begin{array}{r @{\qquad} r @{~} l}
  \PTRANS[\ENV{F}^\EMPTYSET](A)         = \NIL      &  \PTRANS[(p,v),\ENV{F}](A)                                   & = \PTRANS[\ENV{F}](A) \PAR \OUTPUT{A \cdot p}{v}            \vspace*{.2cm} \\
  \PTRANS[\ENV{M}^\EMPTYSET](A)<\Gamma> = \NIL      &  \PTRANS[{\PAREN[big]{f, (\VEC{x}, S)}, \ENV{M}}](A)<\Gamma> & = \PTRANS[\ENV{M}](A)<\Gamma> \PAR \REPL{\INPUT{A \cdot f}{r, \VEC{a}}} 
      . \LOCNEW{\VEC{x} : \TIFC{\VEC{B}}}{\VEC{a}} 
        \LOCNEW{\code{this} : \TIFC{\Gamma(A)}}{A} \PAREN{ \PTRANS[S](r)<\Gamma> } \\
                                                    &                                                              & \quad \text{where $\Gamma(A) = I_A$ and $\Gamma(I_A)(f) = \TPROC{\VEC{B}}$}\vspace*{.2cm} \\
  \PTRANS[\ENV{T}^\EMPTYSET]<\Gamma>    = \NIL      &  \PTRANS[(A, \ENV{M}),\ENV{T}]<\Gamma>                       & = \PTRANS[\ENV{T}]<\Gamma> \PAR \PTRANS[\ENV{M}](A)<\Gamma> 
\end{array}
\end{math}

\noindent 
\begin{math}
\begin{array}{r @{\qquad} r @{~} l}
  \PTRANS[\ENV{S}^\EMPTYSET]            = \NIL      &  \PTRANS[(A, \ENV{F}), \ENV{S}]                              & = \PTRANS[\ENV{S}] \PAR \PTRANS[\ENV{F}](A)                 \vspace*{.2cm} \\
  \PTRANS[\ENV{V}^{\EMPTYSET}]<\Gamma>  = \HOLE[~]  &
  \PTRANS[(x,v), \ENV{V}]<\Gamma>                   & = \LOCNEW{x :\TIFC{\Gamma(x)}}{v}\PAREN{ \PTRANS[\ENV{V}]<\Gamma> }             
\end{array}
\end{math}

\bigskip
There are a few points to note:
First, the local variable bindings in $\ENV{V}$ are translated as a process context consisting of a series of declarations and with the empty environment translated as the hole $\HOLE$;
thus, the declarations bind the translated statement $\PTRANS[S](r)<\Gamma>$, which is inserted into the hole.
Second, the encoding of the environments containing the field and method declarations (i.e., $\ENV{F}$ and $\ENV{M}$) are both parameterised with the name of the current class, $A$, which is used to compose the subject vector of the outputs.
Third, a method declaration $(f, (\VEC{x}, S))$ is translated as an input-guarded replication $\REPL{\INPUT{A \cdot f}{r, \VEC{a}}}$ to make it persistent.
The actual parameters will be received and bound to $\VEC{a}$, and these, in turn, are then bound to the formal parameters $\VEC{x}$ in a list of declarations $\LOCNEW{\VEC{x}:\VEC{B}}{\VEC{a}}$.
This extra re-binding step is necessary, since the formal parameters should act as \emph{variables} (rather than values) within the body, so they too must follow the protocol of the encoding in \cite{hirschkoff2020references_picalc}.
Likewise, we also add a binding for the special variable \code{this}, which is available within the body $S$, so it can be used to access the fields of the current class.
Finally, besides the formal parameters, we also add an extra name $r$ to be received by the input $\INPUT{A \cdot f}{r, \VEC{a}}$; this is the return signal, on which the method will signal when it finishes.

Statements are encoded as follows:
\begin{align*}
  \PTRANS[\code{skip}](r)<\Gamma>                        & = \OUTPUT{r}{}                                                                                                                          \\
  \PTRANS[\code{var $B$ $x$\,:=\,$e$ in $S$}](r)<\Gamma> & = \NEW{z:\TIFC{B}}\PAREN{ \PTRANS[e](z)<\Gamma> \PAR \INPUT{z}{y} . \LOCNEW{x:\TIFC{B}}{y}\PAREN{ \PTRANS[S](r)<\Gamma, x:B> } }        \\
  \PTRANS[\code{$x$\,:=\,$e$}](r)<\Gamma>                & = \NEW{z:\TIFC{B}}\PAREN{ \PTRANS[e](z)<\Gamma> \PAR \INPUT{z}{y} . \LOCWRITE{x}{y} . \OUTPUT{r}{} }                      \hspace*{3cm} \text{where} ~ \Gamma \vdash e : B \\
  \PTRANS[\code{this.$p$\,:=\,$e$}](r)<\Gamma>           & = \NEW{z:\TIFC{B}}\PAREN{ \PTRANS[e](z)<\Gamma> \PAR \INPUT{z}{y} . \LOCREAD{\code{this}}{Y} . \LOCWRITE{Y \cdot p}{y} . \OUTPUT{r}{} } 
  \hspace*{.65cm}\text{where} ~ \Gamma \vdash e : B
\\
  \PTRANS[\code{$S_1$;$S_2$}](r_2)<\Gamma>                         & = \NEW{r_1: \TRET} \PAREN{ \PTRANS[S_1](r_1)<\Gamma> \PAR \INPUT{r_1}{}.\PTRANS[S_2](r_2)<\Gamma> }                                                     \\
  \PTRANS[\code{if $e$ then $S_\TRUE$ else $S_\FALSE$}](r)<\Gamma> & = \NEW{z:\TIFC{\TBOOL}}\PAREN{ \PTRANS[e](z)<\Gamma> \PAR \INPUT{z}{y}. \IFTHENELSE{y}{ \PTRANS[S_\TRUE](r)<\Gamma> }{ \PTRANS[S_\FALSE](r)<\Gamma> } } \\
  \PTRANS[\code{while $e$ do $S$}](r)<\Gamma>                      & = \NEW{r':\TRET}\PAREN[Big]{ \OUTPUT{r'}{} \PAR \REPL{\INPUT{r'}{}}                                                                                     %
                                                                     . \NEW{z:\TIFC{\TBOOL}}\PAREN{ \PTRANS[e](z)<\Gamma> \PAR \INPUT{z}{y}. \IFTHENELSE{y}{ \PTRANS[S](r')<\Gamma> }{ \OUTPUT{r}{} } } }                    \\
  \PTRANS[\WCCALL{e}{f}{\VEC{e}}](r)<\Gamma>                       & = \NEW{a:\TIFC{{I_A}}, \VEC{z}:\TIFC{{\VEC{B}}}}\PAREN[Big]{ \PTRANS[e](a)<\Gamma> \PAR \PTRANS[\VEC{e}](\VEC{z})<\Gamma>  \PAR \INPUT{a}{Y}            %
                                                                     . \INPUT{z_1}{y_1} \ldots \INPUT{z_n}{y_n} . \OUTPUT{Y \cdot f}{ r, y_1, \ldots, y_n } }                                                                \\
                                                                   & \qquad \text{where} ~ \Gamma \vdash e : I_A \text{ and } \Gamma \vdash \VEC{e} : \VEC{B} \text{ and } |\VEC{e}| = |\VEC{z}| = n              \\
                                                                   & \qquad \text{and} ~ \PTRANS[\VEC{e}](\VEC{z})<\Gamma> = \PTRANS[e_1](z_1)<\Gamma> \PAR \ldots \PAR \PTRANS[e_n](z_n)<\Gamma>
\end{align*}

Most of this encoding is straightforward: \code{skip} does nothing, so it simply emits the return signal.
In a sequential composition \code{$S_1$;$S_2$}, we first declare a new return signal $r'$, which is passed to the first component; then, we wait a synchronisation on this name before executing the second component, which then will emit a signal on $r$.
The \code{if-then-else} construct maps directly to our syntactic sugar for guarded choice, with both branches receiving $r$, since only one of them is chosen.
We also use this in the encoding of \code{while}, which is just a guarded replication of an \code{if-then-else}: the true-branch will emit a return signal on a fresh name $r'$, which will trigger another replication; the false-branch will emit a return signal on the outer return name $r$.
Lastly, in method calls, $r$ is passed as the first parameter.

Note that, in each construct containing an expression $e$, we translate $e$ as a separate process, rather than mapping it directly to an expression in \EPI{}.
This, again, is a consequence of our use of the encoding from \cite{hirschkoff2020references_picalc}: since expressions in \WCLANG{} can contain variables, these must be read in accordance with the protocol.
The translation function for expressions is parametrised with a name $z$, on which the value resulting from evaluating the expression will be delivered.
It is defined as follows:
\begingroup\allowdisplaybreaks
\begin{align*}
  \PTRANS[v](z)<\Gamma>                     & = \OUTPUT{z}{v}                                                                                                                    \\
  \PTRANS[x](z)<\Gamma>                     & = \LOCREAD{x}{y} . \OUTPUT{z}{y}                                                                                                   \\ 
  \PTRANS[e.p](z)<\Gamma>                   & = \NEW{z' : \TIFC{I_A}}\PAREN{ \PTRANS[e](z')<\Gamma> \PAR \INPUT{z'}{Y} . \LOCREAD{Y \cdot p}{y} . \OUTPUT{z}{y} }                \quad \text{where} ~ \Gamma \vdash e : I_A \\
  \PTRANS[\op(e_1, \ldots, e_n)](z)<\Gamma> & = \NEW{z_1 : \TIFC{{B_1}}, \ldots, z_n : \TIFC{{B_n}}}\PAREN{ \PTRANS[e_1](z_1)<\Gamma> \PAR \ldots \PAR \PTRANS[e_n](z_n)<\Gamma> %
                                                \PAR \INPUT{z_1}{y_1} \ldots \INPUT{z_n}{y_n} . \OUTPUT{z}{\op(y_1, \ldots, y_n)} }                                              \\
                                            & \qquad \text{where} ~ \Gamma \vdash e_1 : B_1, \ldots, \Gamma \vdash e_n : B_n                                                     
\end{align*}
\endgroup

Finally we can give the translation of the types and the type environment $\Gamma$.
This is complicated by the fact that we now need to add a container type $\TIFC{B}$ for each variable and field of type $B$, and for each method of type $\TPROC{\VEC{B}}$.
We use a multi-level encoding, and we assume for the sake of simplicity that duplicate entries in the output are ignored:
\begin{center}
\begin{math}
\begin{array}{r @{~} l}
                            &                                                                           \\
  \PTRANS[x:B, \Gamma]      & = x:\TIFC{B}, \TIFC{B}:(\TCHAN{B}, \EMPTYSET), \PTRANS[\Gamma]            \\
  \PTRANS[A:I, \Gamma]      & = A:I, \PTRANS[\Gamma]                                                    \\
  \PTRANS[I:\Delta, \Gamma] & = I:(\TNIL, (\PTRANS[\Delta]<2>)), \PTRANS[\Delta]<3>, \PTRANS[\Gamma]    \\
  \PTRANS[\epsilon]         & = \epsilon                                                                \\
                            &                                                                           \\
\end{array}\hspace*{1cm}
\begin{array}{r @{~} l}
  \PTRANS[p:B, \Delta]<2>               & = p:\TIFC{B}, \PTRANS[\Delta]<2>       \\
  \PTRANS[f:\TPROC{\VEC{B}},\Delta]<2>  & = f:\TIFC{\VEC{B}}, \PTRANS[\Delta]<2> \\
  \PTRANS[\epsilon]<2>                  & = \epsilon                             \\
  \PTRANS[p:B, \Delta]<3>               & = \TIFC{B}:(\TCHAN{B}, \EMPTYSET), \PTRANS[\Delta]<3>                     \\
  \PTRANS[f:\TPROC{\VEC{B}},\Delta]<3>  & = \TIFC{\VEC{B}}:(\TCHAN{\TRET, \VEC{B}}, \EMPTYSET), \PTRANS[\Delta]<3>  \\
  \PTRANS[\epsilon]<3>                  & = \epsilon
\end{array}
\end{math}
\end{center}

Our main result now states that the two type systems exactly correspond:
well-typed \WCLANG-programs are mapped to well-typed \EPI-processes, and vice versa for ill-typed programs.
Thus we know that the \EPI-type system exactly captures the notion of well-typedness in \WCLANG.

\begin{theorem}[Type correspondence]\label{wc:thm:type_correspondence}
$\Gamma \vdash \ENV{TSV}$ and $\Gamma \vdash S$ iff $\PTRANS[\Gamma] \vdash \PTRANS[\CONF{S, \ENV{TSV}}]<\Gamma>$.
\end{theorem}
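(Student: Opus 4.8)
The plan is to prove the biconditional through a family of \emph{compositional} correspondence lemmas --- one per syntactic category of \WCLANG{} (type environments, expressions, statements, and the binding environments $\ENV{T}$, $\ENV{S}$, $\ENV{V}$), each stated as an iff --- and then to assemble the theorem from them, exploiting that the target $\PTRANS[\CONF{S, \ENV{TSV}}]<\Gamma>$ is a parallel composition translating $\ENV{T}$, $\ENV{S}$, and $\ENV{V}$-with-$S$ separately. The encoding is entirely syntax-directed, and every auxiliary name it introduces ($z$, $z'$, $a$, the return names $r$, $r'$, $r_1$, and the re-bound formals and \code{this}) is either placed under a restriction with a uniquely determined type or received as the object of an input whose subject has a determined channel type; hence the shape of any \EPI{} typing derivation for the target is forced by the source. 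Reading the premises of the applicable rules (\nameref{epi_simple_type_out}, \nameref{epi_simple_type_in}, \nameref{epi_simple_type_par}, \nameref{epi_simple_type_res}) off such a derivation yields exactly the hypotheses of the matching \WCLANG{} rule, and conversely; so each lemma is a single structural induction whose two directions run in parallel.

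The foundation is a lemma on the type-environment translation, establishing that $\PTRANS[\Gamma]$ is well-formed and that its lookups mirror those of $\Gamma$: if $\Gamma(A) = I$ then $\PTRANS[\Gamma](A) = I$ and $\PTRANS[\Gamma](I) = (\TNIL, \Delta_I)$, where $\Delta_I$ sends each field name $p$ of type $B$ to $\TIFC{B}$ and each method name $f$ of type $\TPROC{\VEC{B}}$ to $\TIFC{\VEC{B}}$, and these carry the capabilities $\TCHAN{B}$ and $\TCHAN{\TRET, \VEC{B}}$ respectively. The payoff is the \emph{composite-subject correspondence}: the derivation of $\PTRANS[\Gamma]; \PTRANS[\Gamma] \vdash A \cdot p : \TCHAN{B}$ through \nameref{epi_simple_type_vec2} and \nameref{epi_simple_type_vec1} recovers exactly the field capability, and $A \cdot f$ likewise yields $\TCHAN{\TRET, \VEC{B}}$, matching the method-call output $\OUTPUT{Y \cdot f}{r, y_1, \ldots, y_n}$. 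Building on this, I would prove the expression lemma, $\Gamma \vdash e : B \iff \PTRANS[\Gamma], z : \TIFC{B} \vdash \PTRANS[e](z)<\Gamma>$, by induction on $e$ (values via \nameref{epi_simple_type_val}; variables and field access $e.p$ via the read protocol and the composite lookup; operators via \nameref{epi_simple_type_op}), and then the statement lemma, $\Gamma \vdash S \iff \PTRANS[\Gamma], r : \TRET \vdash \PTRANS[S](r)<\Gamma>$, by induction on $S$, invoking the expression lemma throughout and the composite-subject correspondence for $Y \cdot f$ in the case of \WCCALL{e}{f}{\VEC{e}}.

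For the binding environments I would show $\Gamma \vdash \ENV{T} \iff \PTRANS[\Gamma] \vdash \PTRANS[\ENV{T}]<\Gamma>$ and $\Gamma \vdash \ENV{S} \iff \PTRANS[\Gamma] \vdash \PTRANS[\ENV{S}]$ by induction on the environment, the method-table case delegating each body to the statement lemma under $\PTRANS[\Gamma]$ extended with \code{this} and the formals --- exactly mirroring the re-bindings $\LOCNEW{\VEC{x} : \TIFC{\VEC{B}}}{\VEC{a}}$ and $\LOCNEW{\code{this} : \TIFC{\Gamma(A)}}{A}$ and the hypotheses of rule \nameref{wc_type_env_envm}. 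Because $\ENV{V}$ is encoded as a context, its lemma must be a context-typing statement: $\PTRANS[\Gamma] \vdash \PTRANS[\ENV{V}]<\Gamma>\HOLE[Q]$ iff $\Gamma \vdash \ENV{V}$ and $Q$ is well-typed under $\PTRANS[\Gamma]$ augmented with $x : \TIFC{\Gamma(x)}$ for each $(x,v) \in \ENV{V}$. The theorem then follows by splitting the target through \nameref{epi_simple_type_par} into its three parallel parts, discharging the outer $\NEW{r : \TRET}$ binder and the trailing synchronisation $\INPUT{r}{}.\NIL$ (well-typed since $r : \TRET$ carries the empty tuple) with \nameref{epi_simple_type_res}, and combining the three component lemmas with the context lemma instantiated at $Q = \PTRANS[S](r)<\Gamma>$.

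The main obstacle is the composite-subject correspondence underpinning the second step: checking that the three-layered type translation ($\PTRANS[\Gamma]$ together with its auxiliary levels $\PTRANS[\Delta]<2>$ and $\PTRANS[\Delta]<3>$) assembles into a well-formed \EPI{} environment in which the tree-type lookups along $A \cdot p$ and $A \cdot f$ deliver precisely the interface's field and method signatures. This is where the conceptual claim of the paper resides --- that the tree-shaped types genuinely behave as OO interfaces --- and it demands care that the container type names $\TIFC{I}$ for interface types $I$ (needed for \code{this} and for method receivers $a$) actually occur in $\PTRANS[\Gamma]$, which rests on the standing assumption that a container type is provided for every base type used. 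A secondary difficulty is keeping the iff tight in the backward direction: I must verify that each restriction and input-binding in the encoding supplies exactly the binding its target derivation consumes, so that no target process can be well-typed for a reason not reflected in the source, and that \ALPHA-renaming of the fresh auxiliary names never disturbs the correspondence.
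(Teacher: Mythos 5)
Your proposal is correct and follows essentially the same route as the paper's own proof: the same decomposition into per-category iff lemmas --- expressions ($\Gamma \vdash e : B \iff \PTRANS[\Gamma], z:\TIFC{B} \vdash \PTRANS[e](z)<\Gamma>$), statements ($\Gamma \vdash S \iff \PTRANS[\Gamma], r:\TRET \vdash \PTRANS[S](r)<\Gamma>$), and the environments $\ENV{T}$, $\ENV{S}$, $\ENV{V}$, with $\ENV{V}$ treated as a typed context --- each proved by structural induction reading off the forced shape of the \EPI{} derivation, and finally assembled by splitting the translated configuration into its parallel components under the outer restriction. Your minor deviations (a standalone lookup lemma for $\PTRANS[\Gamma]$ where the paper expands the translation inline in each case, and an explicitly augmented environment in the $\ENV{V}$ lemma, which collapses to the paper's formulation since those entries $x:\TIFC{\Gamma(x)}$ already occur in $\PTRANS[\Gamma]$) are organizational rather than substantive.
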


We break the proof down into a series of lemmata:
\ghostsubsection{Lemma: Values type correspondence}
\begin{lemma}\label{wc:lemma:type_correspondence_val}
$\Gamma \vdash v : B \iff \PTRANS[\Gamma] \vdash v : B$.
\end{lemma}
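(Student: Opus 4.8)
The plan is to reduce the biconditional to the observation that the two value-typing rules, \nameref{wc_type_expr_val} for \WCLANG{} and \nameref{epi_simple_type_val} for \EPI, are literally the same rule: in both systems $v$ receives $\TINT$ when $v \in \mathbb{Z}$, $\TBOOL$ when $v \in \mathbb{B}$, and the result of an environment lookup when $v$ is a name. The sole formal discrepancy is that the left-hand judgement consults $\Gamma$ while the right-hand one consults $\PTRANS[\Gamma]$. Since a \WCLANG{} value ranges over $\mathbb{Z} \UNION \mathbb{B} \UNION \CNAMES$, the only subcase in which this discrepancy could matter is when $v$ is a class name. Accordingly, I would proceed by a case analysis on the form of $v$, proving both directions of the \emph{iff} simultaneously by showing that the type assigned to $v$ is the same in each system.

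First I would dispatch the two base-type cases. If $v \in \mathbb{Z}$, then both rules force $B = \TINT$ regardless of the environment, so $\Gamma \vdash v : \TINT$ and $\PTRANS[\Gamma] \vdash v : \TINT$ each hold unconditionally and the equivalence is immediate; the case $v \in \mathbb{B}$ is identical with $\TBOOL$ in place of $\TINT$. This leaves the case $v = A$ for a class name $A \in \CNAMES$, in which both rules reduce to an environment lookup, so it suffices to establish that $\bigl(\PTRANS[\Gamma]\bigr)(A) = \Gamma(A)$. I would prove this as an auxiliary fact by induction on the structure of $\Gamma$, using the defining clauses of $\PTRANS[\cdot]$ on type environments. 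The only clause that introduces a binding for a class name is $\PTRANS[A:I, \Gamma'] = A:I, \PTRANS[\Gamma']$, which copies the binding $A:I$ verbatim; every other clause introduces entries only for variable/field names, for type names $I$, or for the generated container and return type names $\TIFC{B}$ (yielding tuples of the form $(\TCHAN{\VEC{B}}, \EMPTYSET)$ or $(\TNIL, \Delta)$), none of which lie in $\CNAMES$. Hence the class-name bindings of $\PTRANS[\Gamma]$ are exactly those of $\Gamma$, so the two lookups coincide and both typing judgements assign $A$ the same type name.

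The main obstacle — insofar as there is one — is purely a matter of bookkeeping: one must confirm that the auxiliary entries introduced by the encoding never shadow or overwrite a class-name binding. This rests entirely on the disjointness of the relevant name spaces, namely that $\CNAMES$, $\TNAMES$, and the freshly generated container-type names are pairwise disjoint, which is already assumed in the paper. Once that disjointness is invoked in the inductive step, the coincidence $\bigl(\PTRANS[\Gamma]\bigr)(A) = \Gamma(A)$ follows, and with it both directions of the stated equivalence.
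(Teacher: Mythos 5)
Your proposal is correct and takes essentially the same approach as the paper: a case analysis on $v$ in which the integer and boolean cases are immediate (neither rule consults the environment there), and the class-name case reduces to the observation that the encoding clause $\PTRANS[A:I, \Gamma'] = A:I, \PTRANS[\Gamma']$ copies class-name bindings verbatim, so the two lookups agree. Your auxiliary fact ${\PTRANS[\Gamma]}(A) = \Gamma(A)$, proved by induction on $\Gamma$ with an appeal to the disjointness of the name sets, is just a slightly more careful packaging of the expansion argument the paper performs directly in each direction.
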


\begin{proof}
By case analysis of the type of $v$.

For the forward direction:
\begin{itemize}
  \item If $v \in \mathbb{Z}$ then $\Gamma \vdash v : \TINT$ by \nameref{wc_type_expr_val}.
    We can then conclude $\PTRANS[\Gamma] \vdash v : \TINT$ by \nameref{epi_simple_type_val}, since this does not depend on $\Gamma$.

  \item If $v \in \mathbb{B}$ then $\Gamma \vdash v : \TBOOL$ by \nameref{wc_type_expr_val}.
    We can then conclude $\PTRANS[\Gamma] \vdash v : \TBOOL$ by \nameref{epi_simple_type_val}, since this does not depend on $\Gamma$.

  \item If $v \in \NAMES$ then $\Gamma \vdash v : \Gamma(v)$ by \nameref{wc_type_expr_val}.
    By the syntax of \WCLANG{} (Figure~\ref{wc:fig:syntax_wc}), the only names that can appear as values are $\CNAMES$ (class names), and not method names, field names or variable names.
    Thus, $v$ must be a class name $A$, and $\Gamma(A) = I$, so we can expand the type environment as $\Gamma = A:I, \Gamma'$.

    By the encoding of $\Gamma$, we have that $\PTRANS[A :I, \Gamma'] = A:I, \PTRANS[\Gamma']$, so we can conclude that
    \begin{equation*}
      (A:I, \PTRANS[\Gamma'])(A) = I
    \end{equation*}
\end{itemize}

For the other direction:
\begin{itemize}
  \item If $v \in \mathbb{Z}$ then $\PTRANS[\Gamma] \vdash v : \TINT$ by \nameref{epi_simple_type_val} for any $\PTRANS[\Gamma]$.
    We can then conclude $\Gamma \vdash v : \TINT$ by \nameref{wc_type_expr_val}.

  \item If $v \in \mathbb{B}$ then $\PTRANS[\Gamma] \vdash v : \TBOOL$ by \nameref{epi_simple_type_val} for any $\PTRANS[\Gamma]$. 
    We can then conclude $\Gamma \vdash v : \TBOOL$ by \nameref{wc_type_expr_val}.

  \item If $v \in \NAMES$ then $\PTRANS[\Gamma] \vdash v : {\PTRANS[\Gamma]}(v)$ by \nameref{epi_simple_type_val}.
    Hence, $\PTRANS[\Gamma]$ can be expanded as $\PTRANS[\Gamma] = v:B, \PTRANS[\Gamma']$ for some $B$.
    Since we know that $\Gamma$ is used to type \WCLANG-programs, we also know by the syntax of \WCLANG{} (Figure~\ref{wc:fig:syntax_wc}) that 
    $v$ must be a class name $A$, which must have an interface type $I$, so $\PTRANS[\Gamma] = A:I, \PTRANS[\Gamma']$, and therefore $(A:I, \PTRANS[\Gamma'])(A) = I$.
    
    By the encoding of $\Gamma$, we have that $\PTRANS[A:I, \Gamma'] = A:I, \PTRANS[\Gamma']$, and thus $\Gamma = A:I, \Gamma'$.
    We can then conclude that $(A:I, \Gamma)(A) = I$, as desired.
\end{itemize}
\end{proof}

\ghostsubsection{Lemma: Expressions type correspondence}
\begin{lemma}\label{wc:lemma:type_correspondence_expr}
$\Gamma \vdash e : B \iff \PTRANS[\Gamma], z : \TIFC{B} \vdash \PTRANS[e](z)<\Gamma>$.
\end{lemma}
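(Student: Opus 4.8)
The plan is to prove both directions simultaneously by structural induction on the expression $e$, exploiting the fact that the four productions of $e$ (a value $v$, a variable $x$ or \code{this}, a field access $e'.p$, and an operator application $\op(\VEC{e})$) are mapped to syntactically distinct $\EPI$-processes, so that each direction is obtained by applying — respectively inverting — a fixed sequence of $\EPI$-type rules. Throughout, I would first desugar the reference macros of \cite{hirschkoff2020references_picalc}, so that $\PTRANS[x](z)<\Gamma> = \INPUT{x}{y}.(\OUTPUT{x}{y} \PAR \OUTPUT{z}{y})$ and similarly $\LOCREAD{Y \cdot p}{y}.Q = \INPUT{Y \cdot p}{y}.(\OUTPUT{Y \cdot p}{y} \PAR Q)$, reducing everything to the primitive input/output/restriction rules. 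I would also record once and for all the one fact about the translated environment that the container discipline guarantees: that $\PTRANS[\Gamma]$ assigns every variable, field, and the bound name $z$ a container type name $\TIFC{B}$ whose top-level entry is $(\TCHAN{B}, \EMPTYSET)$, so that any \emph{single} name standing for a $B$-container is typed as a channel carrying $B$ by one application of \nameref{epi_simple_type_vec2}.

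The base cases are the value and variable cases. For $e = v$ the encoding is the single output $\OUTPUT{z}{v}$; typing it by \nameref{epi_simple_type_out} requires exactly $z : \TCHAN{B}$ (immediate from $z : \TIFC{B}$) together with $\PTRANS[\Gamma] \vdash v : B$, which is Lemma~\ref{wc:lemma:type_correspondence_val}, and reading the derivation backwards recovers $\Gamma \vdash v : B$ by the same lemma. For $e = x$ (and the identical treatment of \code{this}), the encoding reads $x$ and re-emits on $z$; typing the guarding input by \nameref{epi_simple_type_in} forces the payload received for $y$ to coincide with the container content of $x$, i.e.\@ $\Gamma(x) = B$, and the two continuation outputs then type under $\PTRANS[\Gamma], y : B$. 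The operator case goes through by the induction hypothesis on each $e_i$ (giving $\PTRANS[\Gamma], z_i : \TIFC{B_i} \vdash \PTRANS[e_i](z_i)<\Gamma>$), weakening (Lemma~\ref{wc:lemma:epi_simple_type_weakening}) to move each component under the common restriction $\NEW{\VEC{z} : \TIFC{\VEC{B}}}$, and a single use of \nameref{epi_simple_type_op} to type the final output $\OUTPUT{z}{\op(\VEC{y})}$; inverting these same rules yields the premises of \nameref{wc_type_expr_op}.

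The crux — and the step I expect to be the main obstacle — is the field-access case $e = e'.p$, since it is the only point where the tree structure of the type language is genuinely exercised. Here the encoding introduces the \emph{composite} subject $Y \cdot p$, with $Y$ bound by $\INPUT{z'}{Y}$ to the value of $e'$ (of interface type $I_A$), and typing it requires the mutually recursive rules \nameref{epi_simple_type_vec1} and \nameref{epi_simple_type_vec2}. The key sub-fact I must isolate as a separate claim is: whenever $\Gamma \vdash e' : I_A$ and $\Gamma(I_A)(p) = B$, the translated environment satisfies $\PTRANS[\Gamma]; \PTRANS[\Gamma] \vdash Y \cdot p : \TCHAN{B}$ for any $Y$ with $\PTRANS[\Gamma](Y) = I_A$. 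Verifying this amounts to unfolding the three-level environment translation and checking that the name-space descent dictated by the subject-vector rules matches the \WCLANG{} member-lookup $\Gamma(I_A)(p)$: the entry $\PTRANS[I_A : \Delta, \Gamma]$ places the members of the interface as the sub-environment reached below $I_A$ via $\PTRANS[\Delta]<2>$, while $\PTRANS[\Delta]<3>$ supplies the top-level tuple $(\TCHAN{B}, \EMPTYSET)$ for the member's container type name, so that the two-step descent "$Y$ then $p$" resolves precisely to the channel capability $\TCHAN{B}$. Granting this sub-fact, the forward direction follows by \nameref{epi_simple_type_res} over $z'$, the induction hypothesis on $e'$ for the left component, and the input/output rules for the right component; the backward direction inverts the same derivation, recovers $\Gamma \vdash e' : I_A$ from the induction hypothesis, reads off $\Gamma(I_A)(p) = B$ from the resolved capability, and concludes by \nameref{wc_type_expr_field}. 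All the delicate bookkeeping lives in that one claim about $\PTRANS[\Gamma]$; keeping it local lets the remainder of the induction proceed mechanically.
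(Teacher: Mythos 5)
Your proposal is correct and follows essentially the same route as the paper's proof: structural induction on $e$ in both directions, desugaring the reference macros, handling $v$ via the value-correspondence lemma, $x$ via the container-type entries of $\PTRANS[\Gamma]$, $\op(\VEC{e})$ via the induction hypothesis plus weakening, and $e'.p$ via the two-step descent through \nameref{epi_simple_type_vec1}/\nameref{epi_simple_type_vec2} matched against the three-level translation of $\Gamma$. The only difference is organizational — you isolate the composite-subject typing fact for $Y \cdot p$ as a named claim, which the paper instead recomputes inline (here and again in the statement and environment lemmas) — and that is a presentational improvement rather than a different argument.
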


\begin{proof}\allowdisplaybreaks
We have two directions to prove.

For the forward direction, we must show that $\Gamma \vdash e : B \implies \PTRANS[\Gamma], z : \TIFC{B} \vdash \PTRANS[e](z)<\Gamma>$.
We proceed by induction on the structure of $e$.
\begin{itemize}
  \item Suppose the expression is $v$, so $\Gamma \vdash v : B$.
    By the encoding, $\PTRANS[v](z)<\Gamma> = \OUTPUT{z}{v}$, and we can then conclude $\PTRANS[\Gamma], z : \TIFC{B} \vdash \OUTPUT{z}{v}$ as follows:
    By Lemma~\ref{wc:lemma:type_correspondence_val}, we have that $\PTRANS[\Gamma] \vdash v : B$. Then,  by \nameref{epi_simple_type_vec2}, 
    \begin{equation*}
      (\PTRANS[\Gamma], z:\TIFC{B}); (\PTRANS[\Gamma], z:\TIFC{B}) \vdash z : \TCHAN{B} 
    \end{equation*}

    \noindent since by assumption, type definitions $\TIFC{B} \mapsto (\TCHAN{B}, \EMPTYSET)$ are added to $\PTRANS[\Gamma]$ for all the base types used.
    Finally, we can conclude 
    \begin{equation*}
      \PTRANS[\Gamma], z:\TIFC{B} \vdash \OUTPUT{z}{v}
    \end{equation*}

    \noindent by \nameref{epi_simple_type_out}, as desired.

  \item Suppose the expression is $x$, so $\Gamma \vdash x : B$.
    This must have been concluded by \nameref{wc_type_expr_var}, and from its premise, we know that $\Gamma(x) = B$.
    Hence, we can expand the type environment as $\Gamma = x:B, \Gamma'$.
    Now, by the encoding of expressions:
    \begin{equation*}
        \PTRANS[x]<\Gamma> 
      = \LOCREAD{x}{y} . \OUTPUT{z}{y} 
      = \INPUT{x}{y} . \PAREN{ \OUTPUT{x}{y} \PAR \OUTPUT{z}{y} }
    \end{equation*}

    \noindent and by the encoding for $\Gamma$:
    \begin{equation*}
      \PTRANS[x:B, \Gamma'] = x : \TIFC{B}, \TIFC{B} : (\TCHAN{B}, \EMPTYSET), \PTRANS[\Gamma']
    \end{equation*}

    \noindent so by \nameref{epi_simple_type_vec2} we can then conclude that 
    \begin{align*}
      (x : \TIFC{B}, \TIFC{B} : (\TCHAN{B}, \EMPTYSET), \PTRANS[\Gamma'], z : \TIFC{B}); (x : \TIFC{B}, \TIFC{B} : (\TCHAN{B}, \EMPTYSET), \PTRANS[\Gamma'], z : \TIFC{B}) & \vdash x : \TCHAN{B} \\
      (x : \TIFC{B}, \TIFC{B} : (\TCHAN{B}, \EMPTYSET), \PTRANS[\Gamma'], z : \TIFC{B}); (x : \TIFC{B}, \TIFC{B} : (\TCHAN{B}, \EMPTYSET), \PTRANS[\Gamma'], z : \TIFC{B}) & \vdash z : \TCHAN{B}
    \end{align*}

    Using this, we can conclude
    \begin{align*}
      x : \TIFC{B}, \TIFC{B} : (\TCHAN{B}, \EMPTYSET), \PTRANS[\Gamma'], z : \TIFC{B}, y : B & \vdash \OUTPUT{x}{y}                                             & \text{by \nameref{epi_simple_type_out}} \\
      x : \TIFC{B}, \TIFC{B} : (\TCHAN{B}, \EMPTYSET), \PTRANS[\Gamma'], z : \TIFC{B}, y : B & \vdash \OUTPUT{z}{y}                                             & \text{by \nameref{epi_simple_type_out}} \\
      x : \TIFC{B}, \TIFC{B} : (\TCHAN{B}, \EMPTYSET), \PTRANS[\Gamma'], z : \TIFC{B}, y : B & \vdash \OUTPUT{x}{y} \PAR \OUTPUT{z}{y}                          & \text{by \nameref{epi_simple_type_par}} \\
      x : \TIFC{B}, \TIFC{B} : (\TCHAN{B}, \EMPTYSET), \PTRANS[\Gamma'], z : \TIFC{B}        & \vdash \INPUT{x}{y} . \PAREN{ \OUTPUT{x}{y} \PAR \OUTPUT{z}{y} } & \text{by \nameref{epi_simple_type_in}} 
    \end{align*}

    \noindent as desired.

  \item Suppose the expression is $e.p$, so $\Gamma \vdash e.p : B$.
    This must have been concluded by \nameref{wc_type_expr_field}, and from its premise we know that $\Gamma \vdash e : I_A$ and $\Gamma(I_A)(p) = B$. 
    Thus, we know that $\Gamma$ can be written as $I_A:(p : B, \Delta), \Gamma'$ for some $\Delta$.
    By the encoding of $\Gamma$, we then have that 
    \begin{equation*}
      \PTRANS[I_A:(p : B, \Delta), \Gamma'] = I_A:(\TNIL, (p:\TIFC{B}, \PTRANS[\Delta]<2>)), \TIFC{B}:(\TCHAN{B}, \EMPTYSET), \PTRANS[\Delta]<3>, \PTRANS[\Gamma']
    \end{equation*}

    However, we shall leave $\PTRANS[\Gamma]$ unexpanded in the following for the sake of readability.
    Now, by the encoding of expressions:
    \begin{align*}
      \PTRANS[e.p](z)<\Gamma> & = \NEW{z' : \TIFC{I_A}}\PAREN{ \PTRANS[e](z')<\Gamma> \PAR \INPUT{z'}{Y} . \LOCREAD{Y \cdot p}{y} . \OUTPUT{z}{y} } \\
                              & = \NEW{z' : \TIFC{I_A}}\PAREN{ \PTRANS[e](z')<\Gamma> \PAR \INPUT{z'}{Y} . \INPUT{Y \cdot p}{y}. \PAREN{ \OUTPUT{Y \cdot p}{y} \PAR \OUTPUT{z}{y} } }
    \end{align*}

    \noindent and by the induction hypothesis $\PTRANS[\Gamma], z':\TIFC{I_A} \vdash \PTRANS[e](z')<\Gamma>$.
    Using this, we conclude as follows:
    \begin{align*}
      \PTRANS[\Gamma], z:\TIFC{B}, z':\TIFC{I_A}, y:B        & \vdash \OUTPUT{z}{y}                                                                                                         & \text{by \nameref{epi_simple_type_out}} \\
      \PTRANS[\Gamma], z:\TIFC{B}, z':\TIFC{I_A}, Y:I_A, y:B & \vdash \OUTPUT{Y \cdot p}{y}                                                                                                 & \text{by \nameref{epi_simple_type_out}} \\
      \PTRANS[\Gamma], z:\TIFC{B}, z':\TIFC{I_A}, Y:I_A, y:B & \vdash \OUTPUT{Y \cdot p}{y} \PAR \OUTPUT{z}{y}                                                                              & \text{by \nameref{epi_simple_type_par}} \\
      \PTRANS[\Gamma], z:\TIFC{B}, z':\TIFC{I_A}, Y:I_A      & \vdash \INPUT{Y \cdot p}{y} . \PAREN{ \OUTPUT{Y \cdot p}{y} \PAR \OUTPUT{z}{y} }                                             & \text{by \nameref{epi_simple_type_in}}  \\
      \PTRANS[\Gamma], z:\TIFC{B}, z':\TIFC{I_A}             & \vdash \INPUT{z'}{Y} . \INPUT{Y \cdot p}{y} . \PAREN{ \OUTPUT{Y \cdot p}{y} \PAR \OUTPUT{z}{y} }                             & \text{by \nameref{epi_simple_type_in}}  \\
      \PTRANS[\Gamma], z:\TIFC{B}, z':\TIFC{I_A}             & \vdash \PTRANS[e](z')<\Gamma> \PAR \INPUT{z'}{Y} . \INPUT{Y \cdot p}{y} . \PAREN{ \OUTPUT{Y \cdot p}{y} \PAR \OUTPUT{z}{y} } & \text{by \nameref{epi_simple_type_par}} \\
      \PTRANS[\Gamma], z:\TIFC{B}                            & \vdash \NEW{z' : \TIFC{I_A}} \PAREN{                                                                                         %
                                                                    \PTRANS[e](z')<\Gamma> \PAR \INPUT{z'}{Y} . \INPUT{Y \cdot p}{y} . \PAREN{ \OUTPUT{Y \cdot p}{y} \PAR \OUTPUT{z}{y} } } & \text{by \nameref{epi_simple_type_res}}
    \end{align*}

    \noindent as desired.

  \item Suppose the expression is $\op(e_1, \ldots, e_n)$, so $\Gamma \vdash \op(e_1, \ldots, e_n) : B$.
    This must have been concluded by \nameref{epi_simple_type_op}, and from its premise we know that 
    \begin{align*}
      \Gamma \vdash e_1 : B_1 \quad\ldots\quad
      \Gamma \vdash e_n : B_n
    \end{align*}
    and $\vdash \op : B_1, \ldots, B_n \to B$.
    Now, by the encoding of expressions:
    \begin{equation*}
        \PTRANS[\op(e_1, \ldots, e_n)](z)<\Gamma> 
      = \NEW{z_1 : \TIFC{{B_1}}, \ldots, z_n : \TIFC{{B_n}}}\PAREN{ \PTRANS[e_1](z_1)<\Gamma> \PAR \ldots \PAR \PTRANS[e_n](z_n)<\Gamma> %
          \PAR \INPUT{z_1}{y_1} \ldots \INPUT{z_n}{y_n} . \OUTPUT{z}{\op(y_1, \ldots, y_n)} }                                            \\
    \end{equation*}

    \noindent and by $n$ applications of the induction hypothesis, we have that $\PTRANS[\Gamma], z_1 : \TIFC{B_1} \vdash \PTRANS[e_1](z_1)<\Gamma>$, \ldots, $\PTRANS[\Gamma], z_n : \TIFC{B_n} \vdash \PTRANS[e_n](z_n)<\Gamma>$.
    By Lemma~\ref{wc:lemma:epi_simple_type_weakening}, we can then also conclude that 
    \begin{equation*} 
      \PTRANS[\Gamma], z:\TIFC{B}, z_1 : \TIFC{B_1}, \ldots, z_n : \TIFC{B_n} \vdash \PTRANS[e_i](z_i)<\Gamma>
    \end{equation*}

    \noindent for each $i \in 1..n$, and thus by $n-1$ applications of rule \nameref{epi_simple_type_par}, we can conclude 
    \begin{equation*} 
      \PTRANS[\Gamma], z:\TIFC{B}, z_1 : \TIFC{B_1}, \ldots, z_n : \TIFC{B_n} \vdash \PTRANS[e_1](z_1)<\Gamma> \PAR \ldots \PAR \PTRANS[e_n](z_n)<\Gamma>
    \end{equation*}

    Furthermore, we can conclude
    \begin{equation*} 
      \PTRANS[\Gamma], z:\TIFC{B}, z_1 : \TIFC{B_1}, \ldots, z_n : \TIFC{B_n}, y_1 : B_1, \ldots, y_n : B_n \vdash \OUTPUT{z}{\op(y_1, \ldots, y_n)}
    \end{equation*}

    \noindent by \nameref{epi_simple_type_out}, and then, by $n$ applications of \nameref{epi_simple_type_in}, we can conclude
    \begin{equation*} 
      \PTRANS[\Gamma], z:\TIFC{B}, z_1 : \TIFC{B_1}, \ldots, z_n : \TIFC{B_n} \vdash \INPUT{z_1}{y_1} \ldots \INPUT{z_n}{y_n} . \OUTPUT{z}{\op(y_1, \ldots, y_n)}
    \end{equation*}

    Thus, we can conclude
    \begin{equation*}
      \PTRANS[\Gamma], z:\TIFC{B}, z_1 : \TIFC{B_1}, \ldots, z_n : \TIFC{B_n} \vdash \PTRANS[e_1](z_1)<\Gamma> \PAR \ldots \PAR \PTRANS[e_n](z_n)<\Gamma> \PAR \INPUT{z_1}{y_1} \ldots \INPUT{z_n}{y_n} . \OUTPUT{z}{\op(y_1, \ldots, y_n)}
    \end{equation*}

    \noindent by another application of \nameref{epi_simple_type_par}, and finally 
    \begin{equation*}
       \PTRANS[\Gamma], z:\TIFC{B} \vdash \NEW{z_1 : \TIFC{{B_1}}, \ldots, z_n : \TIFC{{B_n}}}\PAREN{ \PTRANS[e_1](z_1)<\Gamma> \PAR \ldots \PAR \PTRANS[e_n](z_n)<\Gamma> %
                                          \PAR \INPUT{z_1}{y_1} \ldots \INPUT{z_n}{y_n} . \OUTPUT{z}{\op(y_1, \ldots, y_n)} } 
    \end{equation*}

    \noindent by \nameref{epi_simple_type_res}, as desired.
\end{itemize}

This concludes the proof for the forward direction.

For the other direction, we must show that $\PTRANS[\Gamma], z : \TIFC{B} \vdash \PTRANS[e](z)<\Gamma> \implies \Gamma \vdash e : B$.
We proceed again by induction on the structure of $e$:
\begin{itemize}
  \item Suppose the expression is $v$, so $\PTRANS[v](z)<\Gamma> = \OUTPUT{z}{v}$; by \nameref{epi_simple_type_out}, we have that $\PTRANS[\Gamma], z : \TIFC{B} \vdash \OUTPUT{z}{v}$ and, from its premise, we have that 
    $\PTRANS[\Gamma], z:\TIFC{B} \vdash x : \TCHAN{B}$ and $\PTRANS[\Gamma], z:\TIFC{B} \vdash v : B$.
    By Lemma~\ref{wc:lemma:epi_simple_type_strengthening}, we have that  $\PTRANS[\Gamma] \vdash v : B$, since we know that $\FRESH{r}v$, as the name was introduced by the translation.
    Then, by Lemma~\ref{wc:lemma:type_correspondence_val}, $\Gamma \vdash v:B$, as desired.
    
  \item Suppose the expression is $x$, so
    \begin{equation*}
      \PTRANS[x]<\Gamma> = \INPUT{x}{y} . \PAREN{ \OUTPUT{x}{y} \PAR \OUTPUT{z}{y} }
    \end{equation*}
    
    \noindent and 
    \begin{equation*}
      \PTRANS[\Gamma], z:\TIFC{B} \vdash \INPUT{x}{y} . \PAREN{ \OUTPUT{x}{y} \PAR \OUTPUT{z}{y} } 
    \end{equation*}

    \noindent which must have been concluded by \nameref{epi_simple_type_in}, \nameref{epi_simple_type_par} and \nameref{epi_simple_type_out}.
    From the premise of the latter, we get that $\PTRANS[\Gamma]; \PTRANS[\Gamma] \vdash x : \TCHAN{B}$, so it must be the case that ${\PTRANS[\Gamma]}(x) = \TIFC{B}$, and $\PTRANS[\Gamma]$ contains the type assignment $\TIFC{B} \mapsto (\TCHAN{B}, \EMPTYSET)$.
    By the translation of $\Gamma$, we then have that 
    \begin{equation*}
      \PTRANS[x:B, \Gamma'] = x : \TIFC{B}, \TIFC{B} : (\TCHAN{B}, \EMPTYSET), \PTRANS[\Gamma']
    \end{equation*}

    \noindent where $\Gamma = x:B, \Gamma'$, and $x:B, \Gamma' \vdash x:B$ can then be concluded by \nameref{wc_type_expr_var} as desired.

  \item Suppose the expression is $e.p$, so 
    \begin{equation*}
      \PTRANS[e.p](z)<\Gamma> = \NEW{z' : \TIFC{I_A}}\PAREN{ \PTRANS[e](z')<\Gamma> \PAR \INPUT{z'}{Y} . \INPUT{Y \cdot p}{y}. \PAREN{ \OUTPUT{Y \cdot p}{y} \PAR \OUTPUT{z}{y} } } 
    \end{equation*}

    \noindent and
    \begin{equation*}
      \PTRANS[\Gamma], z:\TIFC{B} \vdash \NEW{z' : \TIFC{I_A}}\PAREN{ \PTRANS[e](z')<\Gamma> \PAR \INPUT{z'}{Y} . \INPUT{Y \cdot p}{y}. \PAREN{ \OUTPUT{Y \cdot p}{y} \PAR \OUTPUT{z}{y} } }
    \end{equation*}

    \noindent which must have been concluded by \nameref{epi_simple_type_res}, \nameref{epi_simple_type_par}, \nameref{epi_simple_type_in} and \nameref{epi_simple_type_out}.
    We omit the full derivation here, but see the corresponding case for the forward direction.
    
    Our goal is to infer $\Gamma \vdash e.p : B$ using  \nameref{wc_type_expr_field}. 
    The premises of that rule are $\Gamma \vdash e : I_A$ and $\Gamma(I_A)(p) = B$, so we must show that they both are satisfied.    
    Since one of the premises of \nameref{epi_simple_type_par} is $\PTRANS[\Gamma], z':\TIFC{I_A} \vdash \PTRANS[e](z')<\Gamma>$, by the induction hypothesis we have that $\Gamma \vdash e : I_A$.

    In another branch of the derivation tree, $\PTRANS[\Gamma], Y:I_A, y:B \vdash \OUTPUT{Y \cdot p}{y}$ is concluded by \nameref{epi_simple_type_out} (some unused names are omitted for clarity); its premise requires that
    \begin{equation*}
      (\PTRANS[\Gamma], Y:I_A, y:B); (\PTRANS[\Gamma], Y:I_A, y:B) \vdash \TCHAN{B}
    \end{equation*}
    is concluded by \nameref{epi_simple_type_vec1} and \nameref{epi_simple_type_vec2}.
    Hence, $\PTRANS[\Gamma]$ must contain $I_A \mapsto(\TNIL, (p:\TIFC{B}, \PTRANS[\Delta]<2>))$ and $\TIFC{B} \mapsto (\TCHAN{B}, \EMPTYSET)$.
    Hence, by the encoding of $\Gamma$, we have that 
    \begin{equation*}
      \PTRANS[I_A:(p : B, \Delta), \Gamma'] = I_A:(\TNIL, (p:\TIFC{B}, \PTRANS[\Delta]<2>)), \TIFC{B}:(\TCHAN{B}, \EMPTYSET), \PTRANS[\Delta]<3>, \PTRANS[\Gamma']
    \end{equation*}

    \noindent and clearly $(I_A:(p : B, \Delta), \Gamma')(I_A)(p) = B$.
    Thus, as all premises of \nameref{wc_type_expr_field} are satisfied, we can conclude that $\Gamma \vdash e.p : B$.

  \item Suppose the expression is $\op(e_1, \ldots, e_n)$, so
    \begin{align*}
         & \PTRANS[\op(e_1, \ldots, e_n)](z)<\Gamma> \\ 
      = ~& \NEW{z_1 : \TIFC{{B_1}}, \ldots, z_n : \TIFC{{B_n}}}\PAREN{ \PTRANS[e_1](z_1)<\Gamma> \PAR \ldots \PAR \PTRANS[e_n](z_n)<\Gamma> %
          \PAR \INPUT{z_1}{y_1} \ldots \INPUT{z_n}{y_n} . \OUTPUT{z}{\op(y_1, \ldots, y_n)} } 
    \end{align*}

    \noindent and 
    \begin{equation*}
      \PTRANS[\Gamma], z:\TIFC{B} \vdash \NEW{z_1 : \TIFC{{B_1}}, \ldots, z_n : \TIFC{{B_n}}}\PAREN{ \PTRANS[e_1](z_1)<\Gamma> \PAR \ldots \PAR \PTRANS[e_n](z_n)<\Gamma> %
          \PAR \INPUT{z_1}{y_1} \ldots \INPUT{z_n}{y_n} . \OUTPUT{z}{\op(y_1, \ldots, y_n)} }
    \end{equation*}

    \noindent which must have been concluded by \nameref{epi_simple_type_res}.
    From its premise, we then have that 
    \begin{equation*}
      \PTRANS[\Gamma], z:\TIFC{B}, z_1 : \TIFC{{B_1}}, \ldots, z_n : \TIFC{{B_n}} \vdash  \PTRANS[e_1](z_1)<\Gamma> \PAR \ldots \PAR \PTRANS[e_n](z_n)<\Gamma> %
          \PAR \INPUT{z_1}{y_1} \ldots \INPUT{z_n}{y_n} . \OUTPUT{z}{\op(y_1, \ldots, y_n)}
    \end{equation*}

    \noindent and by $n$ applications of the induction hypothesis (and Lemma~\ref{wc:lemma:epi_simple_type_strengthening} to remove unused names), we have that $\Gamma \vdash e_i : B_i$, for all $i$.
    Furthermore,  
    \begin{equation*}
      \PTRANS[\Gamma], z:\TIFC{B}, z_1 : \TIFC{{B_1}}, \ldots, z_n : \TIFC{{B_n}} \vdash \INPUT{z_1}{y_1} \ldots \INPUT{z_n}{y_n} . \OUTPUT{z}{\op(y_1, \ldots, y_n)}
    \end{equation*}

    \noindent must have been concluded by $n$ applications of \nameref{epi_simple_type_in}, and with 
    \begin{equation*} 
      \PTRANS[\Gamma], z:\TIFC{B}, y_1 : B_1, \ldots, y_n : B_n \vdash \OUTPUT{z}{\op(y_1, \ldots, y_n)}
    \end{equation*}

    \noindent as the premise (we omit the full derivation here, but see the corresponding case for the forward direction).
    Its premise then requires that $\PTRANS[\Gamma], z:\TIFC{B}; \PTRANS[\Gamma]; z:\TIFC{B} \vdash z:\TCHAN{B}$ (which follows from the presence of $z:\TIFC{B}$ in accordance with our use of interface types for basic types) and of $\vdash \op : B_1, \ldots, B_n \to B$.
    Thus we can conclude $\Gamma \vdash \op(e_1, \ldots, e_n) : B$ by \nameref{wc_type_expr_op} as desired.
\end{itemize}

This concludes the proof for the other direction.
\end{proof}

\ghostsubsection{Lemma: Statements type correspondence}
\begin{lemma}\label{wc:lemma:type_correspondence_stm}
$\Gamma \vdash S \iff \PTRANS[\Gamma], r : \TRET \vdash \PTRANS[S](r)<\Gamma>$.
\end{lemma}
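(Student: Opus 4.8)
The plan is to prove the two implications separately, both by structural induction on the statement $S$, using Lemma~\ref{wc:lemma:type_correspondence_expr} (expressions correspondence) as the bridge for every embedded expression and Lemma~\ref{wc:lemma:type_correspondence_val} for the values it ultimately reaches. Since the encoding of each statement is a fixed process context determined by the outermost constructor of $S$, the shape of $S$ fixes both the last \WCLANG-type rule and the top-level \EPI-encoding, so the case analyses in the two directions mirror each other. For the binding construct \code{var $B$ x := $e$ in $S$}, note that the body is translated under $\Gamma, x:B$ and that $\PTRANS[\Gamma, x:B]$ already supplies the container entry $x:\TIFC{B}, \TIFC{B}:(\TCHAN{B}, \EMPTYSET)$; hence the induction hypothesis is applied under the extended environment, matching the premise $\Gamma, x:B \vdash S$ of \nameref{wc_type_stm_decv}.

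For the forward direction ($\Gamma \vdash S \implies \PTRANS[\Gamma], r:\TRET \vdash \PTRANS[S](r)<\Gamma>$), I would proceed by cases on the last rule concluding $\Gamma \vdash S$, unfold the encoding, and build the \EPI-derivation bottom-up. The fresh names introduced by the translation (the result names $z, z', z_i$, the inner return name $r'$ of \code{while} and \code{$S_1$;$S_2$}) are typed with their annotated container or return types via \nameref{epi_simple_type_res}; each translated sub-expression $\PTRANS[e](z)<\Gamma>$ is typed by Lemma~\ref{wc:lemma:type_correspondence_expr}, after \nameref{wc:lemma:epi_simple_type_weakening} to bring the surrounding names into scope; and the reads and writes coming from the reference encoding of \cite{hirschkoff2020references_picalc} are typed exactly as in the expression-lemma cases (an input followed by an output on a name of type $\TIFC{B}$, whose channel capability is $\TCHAN{B}$). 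The guarded-choice rule \nameref{epi_simple_type_sum} handles the sugar $\IFTHENELSE{y}{\cdots}{\cdots}$ in \code{if} and \code{while}, both branches being typed under the same environment since only one is selected. The backward direction is the symmetric inversion: each encoded statement has a unique outermost former, so the \EPI-derivation is peeled off deterministically (strip the $\NEW{}$s by \nameref{epi_simple_type_res}, the parallel components by \nameref{epi_simple_type_par}, the prefixes by \nameref{epi_simple_type_in} and \nameref{epi_simple_type_out}); the induction hypothesis then recovers $\Gamma \vdash e : B$ for each sub-expression via Lemma~\ref{wc:lemma:type_correspondence_expr}, unused auxiliary names are discarded with \nameref{wc:lemma:epi_simple_type_strengthening}, and the matching \WCLANG-rule is reassembled.

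The delicate cases, and the place where the correspondence genuinely uses the tree-structured types, are the two that produce composite subjects: field assignment \code{this.$p$ := $e$} and the method call $\WCCALL{e}{f}{\VEC{e}}$. For the call, the crux is typing the subject vector in $\OUTPUT{Y \cdot f}{r, y_1, \ldots, y_n}$, where $Y$ is bound to the class and carries its interface type $I_A$. Here I must read off from the encoding of $\Gamma$ that $\PTRANS[\Gamma]$ assigns $I_A \mapsto (\TNIL, \Delta_A)$ with $f \mapsto \TIFC{\VEC{B}}$ recorded in $\Delta_A$ (via the level-2 layer $\PTRANS[\Delta]<2>$), and separately $\TIFC{\VEC{B}} \mapsto (\TCHAN{\TRET, \VEC{B}}, \EMPTYSET)$ (via the level-3 layer $\PTRANS[\Delta]<3>$); then \nameref{epi_simple_type_vec2} applied to $Y$ followed by \nameref{epi_simple_type_vec1} applied to $f$ yields exactly $\PTRANS[\Gamma]; \PTRANS[\Gamma] \vdash Y \cdot f : \TCHAN{\TRET, \VEC{B}}$, which matches the object $r, y_1, \ldots, y_n$ once $r : \TRET$ and $y_i : B_i$ are in scope. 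Conversely, inverting this output forces $\PTRANS[\Gamma]$ to contain precisely those two entries, which by the definition of $\PTRANS[\cdot]$ can only have arisen from an entry $I \mapsto \Delta$ of $\Gamma$ with $\Gamma(I)(f) = \TPROC{\VEC{B}}$, so the premises of \nameref{wc_type_stm_call} are recovered. The field-assignment case is analogous but shorter, using the container entry $\TIFC{B}$ sitting under $I_A$ in the same two-level fashion. I expect this composite-subject bookkeeping --- keeping the level-2 and level-3 layers of the type-environment encoding aligned with the \nameref{epi_simple_type_vec1} and \nameref{epi_simple_type_vec2} lookups --- to be the main obstacle, while all the remaining statement forms are routine once the expression lemma is in hand.
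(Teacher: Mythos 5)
Your proposal is correct and follows essentially the same route as the paper's proof: structural induction on $S$ in both directions, with Lemma~\ref{wc:lemma:type_correspondence_expr} bridging every embedded expression, weakening/strengthening managing the auxiliary names introduced or discarded by the translation, and the composite subjects $Y \cdot p$ and $Y \cdot f$ resolved through the level-2/level-3 layers of $\PTRANS[\Gamma]$ exactly as in the paper's delicate cases. One small slip in rule names: for $Y \cdot f$, the rule that consumes the head $Y$ (the $\snd$-lookup, with the recursive premise) is \nameref{epi_simple_type_vec1} and the terminal $\fst$-lookup on $f$ is \nameref{epi_simple_type_vec2}, i.e.\@ the opposite of the assignment you give --- Figure~\ref{wc:fig:epi_simple_type_judgments} and the paper's own proof fix this convention, although the paper's prose in Section~\ref{wc:sec:epi_type_system} contains the same swap.
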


\begin{proof}\allowdisplaybreaks
We have two statements to prove.
For both directions, we proceed by induction on the structure of $S$.

For the forward direction, we must show that $\Gamma \vdash S \implies \PTRANS[\Gamma], r : \TRET \vdash \PTRANS[S](r)<\Gamma>$. 
We first recall that $\TRET \mapsto (\TCHAN{}, \EMPTYSET)$ is included in the definition of $\PTRANS[\Gamma]$.
\begin{itemize}
  \item Suppose the statement is \code{skip}, so $\Gamma \vdash \code{skip}$.
    This must have been concluded by \nameref{wc_type_stm_skip}, which is an axiom.
    By the encoding of statements, we have that 
    \begin{equation*}
      \PTRANS[\code{skip}](r)<\Gamma> = \OUTPUT{r}{}
    \end{equation*}

    \noindent and we can therefore conclude $\PTRANS[\Gamma], r : \TRET \vdash \OUTPUT{r}{}$ as desired. 

  \item Suppose the statement is \code{var $B$ $x$ := $e$ in $S$}, so $\Gamma \vdash \code{var $B$ $x$ := $e$ in $S$}$.
    This must have been concluded by \nameref{wc_type_stm_decv}, and from its premise we have that 
      $\Gamma        \vdash e : B$ and 
      $\Gamma, x : B \vdash S$.
    By the encoding of statements, we have that
    \begin{align*}
        ~& \PTRANS[\code{var $B$ $x$\,:=\,$e$ in $S$}](r)<\Gamma> \\
      = ~& \NEW{z:\TIFC{B}}\PAREN{ \PTRANS[e](z)<\Gamma> \PAR \INPUT{z}{y} . \LOCNEW{x:\TIFC{B}}{y}\PAREN{ \PTRANS[S](r)<\Gamma, x:B> } } \\
      = ~& \NEW{z:\TIFC{B}}\PAREN{ \PTRANS[e](z)<\Gamma> \PAR \INPUT{z}{y} . \NEW{x:\TIFC{B}}\PAREN{ \OUTPUT{x}{y} \PAR \PTRANS[S](r)<\Gamma, x:B> } }
    \end{align*}

    By Lemma~\ref{wc:lemma:type_correspondence_expr}, we have that $\PTRANS[\Gamma], z:\TIFC{B} \vdash \PTRANS[e](z)<\Gamma>$ and, by the induction hypothesis, that $\PTRANS[\Gamma, x:B], r:\TRET \vdash \PTRANS[S](r)<\Gamma, x:B>$.
    By the encoding of $\Gamma$, we have that $\PTRANS[\Gamma, x:B] = x:\TIFC{B}, \PTRANS[\Gamma]$ where $\TIFC{B} \mapsto \TIFC{B}:(\TCHAN{B}, \EMPTYSET)$.
    We then conclude as follows:
    \begin{align*}
      \PTRANS[\Gamma], x:\TIFC{B}, r:\TRET, y:B & \vdash \OUTPUT{x}{y}                                                                       & \text{by \nameref{epi_simple_type_out}}                            \\
      \PTRANS[\Gamma], x:\TIFC{B}, r:\TRET, y:B & \vdash \PTRANS[S](r)<\Gamma, x:B>                                                          & \text{by Lemma~\ref{wc:lemma:epi_simple_type_weakening}}              \\
      \PTRANS[\Gamma], x:\TIFC{B}, r:\TRET, y:B & \vdash \OUTPUT{x}{y} \PAR \PTRANS[S](r)<\Gamma, x:B>                                       & \text{by \nameref{epi_simple_type_par}}                            \\
      \PTRANS[\Gamma], r:\TRET, y:B             & \vdash \NEW{x:\TIFC{B}}\PAREN{ \OUTPUT{x}{y} \PAR \PTRANS[S](r)<\Gamma, x:B>}              & \text{by \nameref{epi_simple_type_res}}                            \\
      \PTRANS[\Gamma], r:\TRET, z:\TIFC{B}      & \vdash \INPUT{z}{y}.\NEW{x:\TIFC{B}}\PAREN{ \OUTPUT{x}{y} \PAR \PTRANS[S](r)<\Gamma, x:B>} & \text{by \nameref{epi_simple_type_in} and Lemma~\ref{wc:lemma:epi_simple_type_weakening}} \\
      \PTRANS[\Gamma], r:\TRET, z:\TIFC{B}      & \vdash \PTRANS[e](z)<\Gamma>                                                               & \text{by Lemma~\ref{wc:lemma:epi_simple_type_weakening}}              \\
      \PTRANS[\Gamma], r:\TRET, z:\TIFC{B}      & \vdash \PTRANS[e](z)<\Gamma> \PAR \INPUT{z}{y}.\NEW{x:\TIFC{B}}\PAREN{ \OUTPUT{x}{y} \PAR \PTRANS[S](r)<\Gamma, x:B>} & \text{by \nameref{epi_simple_type_par}} \\
      \PTRANS[\Gamma], r:\TRET                  & \vdash \NEW{z:\TIFC{B}}\PAREN{ \PTRANS[e](z)<\Gamma> \PAR \INPUT{z}{y}.\NEW{x:\TIFC{B}}\PAREN{ \OUTPUT{x}{y} \PAR \PTRANS[S](r)<\Gamma, x:B>} } & \text{by \nameref{epi_simple_type_res}}
    \end{align*}

    \noindent as desired.

  \item Suppose the statement is \code{$x$ := $e$}, so $\Gamma \vdash \code{$x$ := $e$}$.
    This must have been concluded by \nameref{wc_type_stm_assv}, and from its premise we have that 
      $\Gamma \vdash e : B$ and 
      $\Gamma \vdash x : B$ 
    By the encoding of statements, we have that
    \begin{align*}
      \PTRANS[\code{$x$\,:=\,$e$}](r)<\Gamma> & = \NEW{z:\TIFC{B}}\PAREN{ \PTRANS[e](z)<\Gamma> \PAR \INPUT{z}{y} . \LOCWRITE{x}{y} . \OUTPUT{r}{} } \\
                                              & = \NEW{z:\TIFC{B}}\PAREN{ \PTRANS[e](z)<\Gamma> \PAR \INPUT{z}{y} . \INPUT{x}{w} . \PAREN{ \OUTPUT{x}{y} \PAR \OUTPUT{r}{} } }
    \end{align*}

    \noindent for some unused name $w$.
    By Lemma~\ref{wc:lemma:type_correspondence_expr} we have that
    $\PTRANS[\Gamma], z:\TIFC{B} \vdash \PTRANS[e](z)<\Gamma> $, and so 
    \begin{equation*}
      \PTRANS[\Gamma], r:\TRET \vdash \NEW{z:\TIFC{B}}\PAREN{ \PTRANS[e](z)<\Gamma> \PAR \INPUT{z}{y} . \INPUT{x}{w} . \PAREN{ \OUTPUT{x}{y} \PAR \OUTPUT{r}{} } }
    \end{equation*}

    \noindent can be easily shown by rules \nameref{epi_simple_type_in}, \nameref{epi_simple_type_out}, \nameref{epi_simple_type_par} and concluded by \nameref{epi_simple_type_res}, \emph{if} we can show that $\PTRANS[\Gamma]; \PTRANS[\Gamma] \vdash x : \TCHAN{B}$, to be concluded by \nameref{epi_simple_type_vec2} for the premise of \nameref{epi_simple_type_in} and \nameref{epi_simple_type_out}.

    We show this as follows:
    Since we know that $\Gamma \vdash x:B$, then we also know that $\Gamma$ can be expanded as $\Gamma = (x:B), \Gamma'$.
    By the encoding of $\Gamma$, we have that
    \begin{equation*}
      \PTRANS[(x:B), \Gamma'] = x:\TIFC{B}, \TIFC{B}:(\TCHAN{B}, \EMPTYSET), \PTRANS[\Gamma']
    \end{equation*}

    \noindent and hence
    \begin{align*}
      (x:\TIFC{B}, \TIFC{B}:(\TCHAN{B}, \EMPTYSET), \PTRANS[\Gamma'])(x)        & = \TIFC{B}               \\
      (x:\TIFC{B}, \TIFC{B}:(\TCHAN{B}, \EMPTYSET), \PTRANS[\Gamma'])(\TIFC{B}) & = (\TCHAN{B}, \EMPTYSET) \\
      \fst (\TCHAN{B}, \EMPTYSET)                                               & = \TCHAN{B}
    \end{align*}

    \noindent and thus
    \begin{equation*}
      (x:\TIFC{B}, \TIFC{B}:(\TCHAN{B}, \EMPTYSET), \PTRANS[\Gamma']) ; (x:\TIFC{B}, \TIFC{B}:(\TCHAN{B}, \EMPTYSET), \PTRANS[\Gamma']) \vdash x : \TCHAN{B}
    \end{equation*}

    \noindent can indeed be concluded by \nameref{epi_simple_type_vec2}, as required.

  \item Suppose the statement is \code{this.$p$ = $e$}, so $\Gamma \vdash \code{this.$p$ = $e$}$.
    This must have been concluded by \nameref{wc_type_stm_assf}, and from its premise we have that 
    $\Gamma \vdash \code{this.$p$} : B$ and $\Gamma \vdash e : B$.
    By the encoding of statements, we have that
    \begin{align*}
      \PTRANS[\code{this.$p$ := $e$}](r)<\Gamma> & = \NEW{z:\TIFC{B}}\PAREN{ \PTRANS[e](z)<\Gamma> \PAR \INPUT{z}{y} . \LOCREAD{\code{this}}{Y} . \LOCWRITE{Y \cdot p}{y} . \OUTPUT{r}{} } \\
                                                 & = \NEW{z:\TIFC{B}}\PAREN{ \PTRANS[e](z)<\Gamma> \PAR \INPUT{z}{y} . \INPUT{\code{this}}{Y}.\PAREN{ \OUTPUT{\code{this}}{Y} \PAR \LOCWRITE{Y \cdot p}{y} . \OUTPUT{r}{} } } \\
                                                 & = \NEW{z:\TIFC{B}}\PAREN{ \PTRANS[e](z)<\Gamma> \PAR \INPUT{z}{y} . \INPUT{\code{this}}{Y}.\PAREN{ \OUTPUT{\code{this}}{Y} \PAR \INPUT{Y \cdot p}{w}.\PAREN{ \OUTPUT{Y \cdot p}{y} \PAR \OUTPUT{r}{} } } } 
    \end{align*}
    and, by Lemma~\ref{wc:lemma:type_correspondence_expr}, we have that $\PTRANS[\Gamma], z:\TIFC{B} \vdash \PTRANS[e](z)<\Gamma>$.
    As in the case for \code{$x$ := $e$} above, it is then straightforward to show that 
    \begin{equation*}
      \PTRANS[\Gamma], r:\TRET \vdash \NEW{z:\TIFC{B}}\PAREN{ \PTRANS[e](z)<\Gamma> \PAR \INPUT{z}{y} . \INPUT{\code{this}}{Y}.\PAREN{ \OUTPUT{\code{this}}{Y} \PAR \INPUT{Y \cdot p}{w}.\PAREN{ \OUTPUT{Y \cdot p}{y} \PAR \OUTPUT{r}{} } } }
    \end{equation*}

    \noindent by rules \nameref{epi_simple_type_in}, \nameref{epi_simple_type_out}, \nameref{epi_simple_type_par} and concluded by \nameref{epi_simple_type_res}, \emph{if} we can show that
    \begin{align*}
      \PTRANS[\Gamma]; \PTRANS[\Gamma]                   & \vdash \code{this} : \TCHAN{I_A} \\
      (\PTRANS[\Gamma], Y:I_A); (\PTRANS[\Gamma], Y:I_A) & \vdash Y \cdot p : \TCHAN{B}
    \end{align*}

    We show this as follows:
    $\Gamma \vdash \code{this}.p : B$ must have been concluded by \nameref{wc_type_expr_field}, and from its premise we know that
    $\Gamma \vdash \code{this} : I_A$ and
    $\Gamma(I_A)(p) = B$.
    Thus we know that $\Gamma$ can be expanded as 
    \begin{equation*}
      \Gamma = \code{this}:I_A, I_A:(p:B, \Delta), \Gamma'
    \end{equation*}

    \noindent and by the encoding for $\Gamma$, we have that
    \begin{align*}
         & \PTRANS[\code{this}:I_A, I_A:(p:B, \Delta), \Gamma'] \\
      = ~& \code{this}:\TIFC{I_A}, \TIFC{I_A}:(\TCHAN{I_A}, \EMPTYSET), I_A:(\TNIL, \PTRANS[p:B, \Delta]<2>), \PTRANS[p:B, \Delta]<3>, \PTRANS[\Gamma'] \\ 
      = ~& \code{this}:\TIFC{I_A}, \TIFC{I_A}:(\TCHAN{I_A}, \EMPTYSET), I_A:(\TNIL, (p:\TIFC{B}, \PTRANS[\Delta]<2>)), \TIFC{B}:(\TCHAN{B}, \EMPTYSET), \PTRANS[\Delta]<3>, \PTRANS[\Gamma'] 
    \end{align*}

    Hence, we can conclude:
    \begin{align*}
      (\code{this}:\TIFC{I_A}, \TIFC{I_A}:(\TCHAN{I_A}, \EMPTYSET), I_A:(\TNIL, (p:\TIFC{B}, \PTRANS[\Delta]<2>)), \TIFC{B}:(\TCHAN{B}, \EMPTYSET), \PTRANS[\Delta]<3>, \PTRANS[\Gamma'])(\code{this}) & = \TIFC{I_A}               \\
      (\code{this}:\TIFC{I_A}, \TIFC{I_A}:(\TCHAN{I_A}, \EMPTYSET), I_A:(\TNIL, (p:\TIFC{B}, \PTRANS[\Delta]<2>)), \TIFC{B}:(\TCHAN{B}, \EMPTYSET), \PTRANS[\Delta]<3>, \PTRANS[\Gamma'])(\TIFC{I_A})  & = (\TCHAN{I_A}, \EMPTYSET) \\
      \fst (\TCHAN{I_A}, \EMPTYSET)                                                                                                                                                    & = \TCHAN{I_A}
    \end{align*}

    \noindent by \nameref{epi_simple_type_vec2}.
    Thus
    \begin{equation*}
      \INPUT{\code{this}}{Y}.\PAREN{ \OUTPUT{\code{this}}{Y} \PAR \INPUT{Y \cdot p}{w}.\PAREN{ \OUTPUT{Y \cdot p}{y} \PAR \OUTPUT{r}{} } } 
    \end{equation*}

    \noindent can be typed by \nameref{epi_simple_type_in}, and with the continuation typed as
    \begin{equation*}
      \PTRANS[\Gamma], r:\TRET, Y:I_A \vdash \OUTPUT{\code{this}}{Y} \PAR \INPUT{Y \cdot p}{w}.\PAREN{ \OUTPUT{Y \cdot p}{y} \PAR \OUTPUT{r}{} }
    \end{equation*}

    Now, what remains to show is that $(\PTRANS[\Gamma], Y:I_A); \PTRANS[\Gamma], Y:I_A \vdash Y \cdot p : \TCHAN{B}$.
    Like before, we get
    \begin{align*} 
      (\code{this}:\TIFC{I_A}, \TIFC{I_A}:(\TCHAN{I_A}, \EMPTYSET), I_A:(\TNIL, (p:\TIFC{B}, \PTRANS[\Delta]<2>)), \TIFC{B}:(\TCHAN{B}, \EMPTYSET), \PTRANS[\Delta]<3>, \PTRANS[\Gamma'], Y:I_A)(Y)   & = I_A               \\
      (\code{this}:\TIFC{I_A}, \TIFC{I_A}:(\TCHAN{I_A}, \EMPTYSET), I_A:(\TNIL, (p:\TIFC{B}, \PTRANS[\Delta]<2>)), \TIFC{B}:(\TCHAN{B}, \EMPTYSET), \PTRANS[\Delta]<3>, \PTRANS[\Gamma'], Y:I_A)(I_A) & = (\TNIL, (p:\TIFC{B}, \PTRANS[\Delta]<2>)) \\
      \snd (\TNIL, (p:\TIFC{B}, \PTRANS[\Delta]<2>)) & = (p:\TIFC{B}, \PTRANS[\Delta]<2>)
    \end{align*}

    \noindent which satisfies the first premise of \nameref{epi_simple_type_vec1}.
    For the second premise, we must conclude $\PTRANS[\Gamma]; (p:\TIFC{B}, \PTRANS[\Delta]<2>) \vdash p : \TCHAN{B}$.
    We have that 
    \begin{align*}
      (p:\TIFC{B}, \PTRANS[\Delta]<2>)(p) & = \TIFC{B} \\
      (\code{this}:\TIFC{I_A}, \TIFC{I_A}:(\TCHAN{I_A}, \EMPTYSET), I_A:(\TNIL, (p:\TIFC{B}, \PTRANS[\Delta]<2>)), \TIFC{B}:(\TCHAN{B}, \EMPTYSET), \PTRANS[\Delta]<3>, \PTRANS[\Gamma'], Y:I_A)(\TIFC{B}) & = (\TCHAN{B}, \EMPTYSET) \\
      \fst (\TCHAN{B}, \EMPTYSET) & = \TCHAN{B}
    \end{align*}

    \noindent as required for \nameref{epi_simple_type_vec1}.
    Thus 
    \begin{equation*}
      \PTRANS[\Gamma], Y:I_A, y:B \vdash \OUTPUT{Y \cdot p}{y}
    \end{equation*}

    \noindent can be concluded by \nameref{epi_simple_type_out}, as required.

  \item Suppose the statement is \code{$S_1$;$S_2$}, so $\Gamma \vdash \code{$S_1$;$S_2$}$.
    This must have been concluded by \nameref{wc_type_stm_seq}, and from its premise we have that
    $\Gamma \vdash S_1$ and 
    $\Gamma \vdash S_2$.
    By the encoding of statements, we have that 
    \begin{equation*}
      \PTRANS[\code{$S_1$;$S_2$}](r_2)<\Gamma> = \NEW{r_1: \TRET} \PAREN{ \PTRANS[S_1](r_1)<\Gamma> \PAR \INPUT{r_1}{}.\PTRANS[S_2](r_2)<\Gamma> } 
    \end{equation*}

    \noindent and by the induction hypothesis we have that $\PTRANS[\Gamma], r_1:\TRET \vdash \PTRANS[S](r_1)<\Gamma>$ and $\PTRANS[\Gamma], r_2:\TRET \vdash \PTRANS[S](r_2)<\Gamma>$.
    Thus we can conclude the following:
    \begin{align*}
      \PTRANS[\Gamma], r_1 : \TRET, r_2 : \TRET & \vdash \INPUT{r_1}{}.\PTRANS[S_2](r_2)<\Gamma>                                                           & \text{by \nameref{epi_simple_type_in}} \\
      \PTRANS[\Gamma], r_1 : \TRET, r_2 : \TRET & \vdash \PTRANS[S_1](r_1)<\Gamma> \PAR \INPUT{r_1}{}.\PTRANS[S_2](r_2)<\Gamma>                            & \text{by \nameref{epi_simple_type_par}} \\
      \PTRANS[\Gamma], r_2 : \TRET              & \vdash \NEW{r_1 : \TRET}\PAREN{ \PTRANS[S_1](r_1)<\Gamma> \PAR \INPUT{r_1}{}.\PTRANS[S_2](r_2)<\Gamma> } & \text{by \nameref{epi_simple_type_res}}
    \end{align*}

    \noindent as desired.

  \item Suppose the statement is \code{if $e$ then $S_\TRUE$ else $S_\FALSE$}, so $\Gamma \vdash \code{if $e$ then $S_\TRUE$ else $S_\FALSE$}$.
    This must have been concluded by \nameref{wc_type_stm_if}, and from its premise we have that $\Gamma \vdash e : \TBOOL$, $\Gamma \vdash S_{\TRUE}$, and $\Gamma \vdash S_{\FALSE}$.
    By the encoding of statements, we have that 
    \begin{align*}
         & \PTRANS[\code{if $e$ then $S_\TRUE$ else $S_\FALSE$}](r)<\Gamma> \\ 
      = ~& \NEW{z:\TIFC{\TBOOL}}\PAREN{ \PTRANS[e](z)<\Gamma> \PAR \INPUT{z}{y}. \IFTHENELSE{y}{ \PTRANS[S_\TRUE](r)<\Gamma> }{ \PTRANS[S_\FALSE](r)<\Gamma> } } \\
      = ~& \NEW{z:\TIFC{\TBOOL}}\PAREN{ \PTRANS[e](z)<\Gamma> \PAR \INPUT{z}{y}. ([y = \TRUE]\PTRANS[S_\TRUE](r)<\Gamma> + [y = \FALSE]\PTRANS[S_\FALSE](r)<\Gamma>) }
    \end{align*}

    \noindent where we know that $\FRESH{z,y}\PTRANS[S_\TRUE](r)<\Gamma>, \PTRANS[S_\FALSE](r)<\Gamma>$.
    Furthermore, by the induction hypothesis, we have that
    $\PTRANS[\Gamma], r:\TRET \vdash \PTRANS[S_\TRUE](r)<\Gamma>$ and $\PTRANS[\Gamma], r:\TRET \vdash \PTRANS[S_\FALSE](r)<\Gamma>$;
    furthermore, by Lemma~\ref{wc:lemma:type_correspondence_expr}, we have that $\PTRANS[\Gamma], z:\TIFC{\TBOOL} \vdash \PTRANS[e](z)<\Gamma>$.
    After an application of Lemma~\ref{wc:lemma:epi_simple_type_weakening} to add the type entries $z:\TIFC{\TBOOL}, y:\TBOOL$, we can then conclude the following:
    \begin{align*}
      \PTRANS[\Gamma], r:\TRET, z:\TIFC{\TBOOL}, y:\TBOOL & \vdash [y = \TRUE]\PTRANS[S_\TRUE](r)<\Gamma> + [y = \FALSE]\PTRANS[S_\FALSE](r)<\Gamma>                                                                    & \text{by \nameref{epi_simple_type_sum}} \\
      \PTRANS[\Gamma], r:\TRET, z:\TIFC{\TBOOL}           & \vdash \INPUT{z}{y}.([y = \TRUE]\PTRANS[S_\TRUE](r)<\Gamma> + [y = \FALSE]\PTRANS[S_\FALSE](r)<\Gamma>)                                                           & \text{by \nameref{epi_simple_type_in}} \\
      \PTRANS[\Gamma], r:\TRET, z:\TIFC{\TBOOL}           & \vdash \PTRANS[e](z)<\Gamma> \PAR \INPUT{z}{y}.([y = \TRUE]\PTRANS[S_\TRUE](r)<\Gamma> + [y = \FALSE]\PTRANS[S_\FALSE](r)<\Gamma>)                                & \text{by \nameref{epi_simple_type_par}} \\
      \PTRANS[\Gamma], r:\TRET                            & \vdash \NEW{z:\TIFC{\TBOOL}}\PAREN{ \PTRANS[e](z)<\Gamma> \PAR \INPUT{z}{y}.( [y = \TRUE]\PTRANS[S_\TRUE](r)<\Gamma> + [y = \FALSE]\PTRANS[S_\FALSE](r)<\Gamma>) } & \text{by \nameref{epi_simple_type_res}}
    \end{align*}

    \noindent as desired.

  \item Suppose the statement is \code{while $e$ do $S$}, so $\Gamma \vdash \code{while $e$ do $S$}$.
    This must have been concluded by \nameref{wc_type_stm_while}, and from its premise we have that $\Gamma \vdash e : \TBOOL$ and $
      \Gamma \vdash S$.
    By the encoding of statements, we have that
    \begin{align*}
      \PTRANS[\code{while $e$ do $S$}](r)<\Gamma> & = \NEW{r':\TRET}\PAREN[Big]{ \OUTPUT{r'}{} \PAR \REPL{\INPUT{r'}{}}                                                                       %
                                                    . \NEW{z:\TIFC{\TBOOL}}\PAREN{ \PTRANS[e](z)<\Gamma> \PAR \INPUT{z}{y}. \IFTHENELSE{y}{ \PTRANS[S](r')<\Gamma> }{ \OUTPUT{r}{} } } }      \\ 
                                                  & = \NEW{r':\TRET}\PAREN[Big]{ \OUTPUT{r'}{} \PAR \REPL{\INPUT{r'}{}}                                                                       %
                                                    . \NEW{z:\TIFC{\TBOOL}}\PAREN{ \PTRANS[e](z)<\Gamma> \PAR \INPUT{z}{y}.( [y=\TRUE]\PTRANS[S](r')<\Gamma> + [y=\FALSE]\OUTPUT{r}{}) } }  
    \end{align*}

    \noindent where we know that $\FRESH{z,y}\PTRANS[S](r')<\Gamma>$, and $\FRESH{r'}\PTRANS[e](z)<\Gamma>$.
    Furthermore, by the induction hypothesis, we have that
    $\PTRANS[\Gamma], r':\TRET \vdash \PTRANS[S](r')<\Gamma>$
    and, by Lemma~\ref{wc:lemma:type_correspondence_expr}, that $\PTRANS[\Gamma], z:\TIFC{\TBOOL} \vdash \PTRANS[e](z)<\Gamma>$.
    After an application of Lemma~\ref{wc:lemma:epi_simple_type_weakening} to add the type entries $r':\TRET, z:\TIFC{\TBOOL}, y:\TBOOL$, we can then conclude the following:
    \begin{align*}
      \PTRANS[\Gamma], r:\TRET, r':\TRET, z:\TIFC{\TBOOL}, y:\TBOOL & \vdash \OUTPUT{r}{}                                                                                                                        & \text{by \nameref{epi_simple_type_out}} \\
      \PTRANS[\Gamma], r:\TRET, r':\TRET, z:\TIFC{\TBOOL}, y:\TBOOL & \vdash  [y=\TRUE]\PTRANS[S](r')<\Gamma> +[y=\FALSE]\OUTPUT{r}{}                                                                        & \text{by \nameref{epi_simple_type_sum}} \\
      \PTRANS[\Gamma], r:\TRET, r':\TRET, z:\TIFC{\TBOOL}           & \vdash \INPUT{z}{y}.( [y=\TRUE]\PTRANS[S](r')<\Gamma> + [y=\FALSE]\OUTPUT{r}{})                                                           & \text{by \nameref{epi_simple_type_in}} \\
      \PTRANS[\Gamma], r:\TRET, r':\TRET, z:\TIFC{\TBOOL}           & \vdash \PTRANS[e](z)<\Gamma> \PAR \INPUT{z}{y}.([y=\TRUE]\PTRANS[S](r')<\Gamma> + [y=\FALSE]\OUTPUT{r}{})                                & \text{by \nameref{epi_simple_type_par}} \\
      \PTRANS[\Gamma], r:\TRET, r':\TRET                            & \vdash \NEW{z:\TIFC{\TBOOL}}\PAREN{ \PTRANS[e](z)<\Gamma> \PAR \INPUT{z}{y}.( [y=\TRUE]\PTRANS[S](r')<\Gamma> + [y=\FALSE]\OUTPUT{r}{}) } & \text{by \nameref{epi_simple_type_res}}
    \end{align*}
    \begin{align*}
      \PTRANS[\Gamma], r:\TRET, r':\TRET & \vdash \INPUT{r'}{}.\NEW{z:\TIFC{\TBOOL}}\PAREN{ \PTRANS[e](z)<\Gamma> \PAR \INPUT{z}{y}.( [y=\TRUE]\PTRANS[S](r')<\Gamma> + [y=\FALSE]\OUTPUT{r}{}) }                           & \text{by \nameref{epi_simple_type_in}} \\
      \PTRANS[\Gamma], r:\TRET, r':\TRET & \vdash \REPL{\INPUT{r'}{}}.\NEW{z:\TIFC{\TBOOL}}\PAREN{ \PTRANS[e](z)<\Gamma> \PAR \INPUT{z}{y}.( [y=\TRUE]\PTRANS[S](r')<\Gamma> + [y=\FALSE]\OUTPUT{r}{}) }                    & \text{by \nameref{epi_simple_type_rep}} \\
      \PTRANS[\Gamma], r:\TRET, r':\TRET & \vdash \OUTPUT{r'}{}                                                                                                                                                              & \text{by \nameref{epi_simple_type_out}} \\
      \PTRANS[\Gamma], r:\TRET, r':\TRET & \vdash \OUTPUT{r'}{} \PAR \REPL{\INPUT{r'}{}}.\NEW{z:\TIFC{\TBOOL}}\PAREN{ \PTRANS[e](z)<\Gamma> \PAR \INPUT{z}{y}.( [y=\TRUE]\PTRANS[S](r')<\Gamma> + [y=\FALSE]\OUTPUT{r}{}) } & \text{by \nameref{epi_simple_type_par}}
    \end{align*}
    \begin{equation*}
      \PTRANS[\Gamma], r:\TRET \vdash \NEW{r':\TRET}\PAREN{ \OUTPUT{r'}{} \PAR \REPL{\INPUT{r'}{}}.\NEW{z:\TIFC{\TBOOL}}\PAREN{ \PTRANS[e](z)<\Gamma> \PAR \INPUT{z}{y}.( [y=\TRUE]\PTRANS[S](r')<\Gamma> + [y=\FALSE]\OUTPUT{r}{}) } } \ \ \text{by \nameref{epi_simple_type_res}}
    \end{equation*}

    \noindent as desired.

  \item Suppose the statement is \WCCALL{e}{f}{\VEC{e}}, so $\Gamma \vdash \WCCALL{e}{f}{\VEC{e}}$.
    This must have been concluded by \nameref{wc_type_stm_call}, and from its premise we have that 
    \begin{align*}
      \Gamma \vdash e : I_A   
      \qquad
      \Gamma \vdash e_1 : B_1
        \quad\ldots\quad
      \Gamma \vdash e_n : B_n 
      \qquad 
      \Gamma(I_A)(f) = \TPROC{B_1, \ldots, B_n} 
    \end{align*}

    By the encoding of statements, we have that 
    \begin{align*}
      \PTRANS[\WCCALL{e}{f}{\VEC{e}}](r)<\Gamma> & = \NEW{a:\TIFC{{I_A}}, \VEC{z}:\TIFC{{\VEC{B}}}}\PAREN[Big]{ \PTRANS[e](a)<\Gamma> \PAR \PTRANS[\VEC{e}](\VEC{z})<\Gamma>  %
                                                     \PAR \INPUT{a}{Y} . \INPUT{z_1}{y_1} \ldots \INPUT{z_n}{y_n} . \OUTPUT{Y \cdot f}{ r, y_1, \ldots, y_n } } 
    \end{align*}

    \noindent where $\PTRANS[\VEC{e}](\VEC{z})<\Gamma> = \PTRANS[e_1](z_1)<\Gamma> \PAR \ldots \PAR \PTRANS[e_n](z_n)<\Gamma>$ and $\VEC{z}:\TIFC{{\VEC{B}}} = z_1:\TIFC{B_1}, \ldots, z_n:\TIFC{B_n}$ for brevity.
    By Lemma~\ref{wc:lemma:type_correspondence_expr}, we have that 
    $\PTRANS[\Gamma], a : \TIFC{I_A} \vdash \PTRANS[e](a)<\Gamma>$ and that $\PTRANS[\Gamma], z_i : \TIFC{B_i} \vdash \PTRANS[e_i](z_i)<\Gamma>$, for all $i = 1,\ldots,n$.
    Hence, we can immediately conclude
    \begin{equation*}
      \PTRANS[\Gamma], r:\TRET, a:\TIFC{I_A}, z_1:\TIFC{B_1}, \ldots, z_n:\TIFC{B_n} \vdash \PTRANS[e](a)<\Gamma> \PAR \PTRANS[e_1](z_1)<\Gamma> \PAR \ldots \PAR \PTRANS[e_n](z_n)<\Gamma>
    \end{equation*}

    \noindent by using rule \nameref{epi_simple_type_par} and with Lemma~\ref{wc:lemma:epi_simple_type_weakening} used to add the assumption $r:\TRET$ (since we know that $\FRESH{r}\PTRANS[e](a)<\Gamma>, \PTRANS[\VEC{e}](\VEC{z})<\Gamma>$).
    
    What remains to be shown is that also 
    \begin{equation*}
      \PTRANS[\Gamma], r:\TRET, a:\TIFC{I_A}, z_1:\TIFC{B_1}, \ldots, z_n:\TIFC{B_n} \vdash \INPUT{a}{Y} . \INPUT{z_1}{y_1} \ldots \INPUT{z_n}{y_n} . \OUTPUT{Y \cdot f}{ r, y_1, \ldots, y_n }
    \end{equation*}

    \noindent which is shown by using rule \nameref{epi_simple_type_in}, until we are left with the final premise to show, which is:
    \begin{equation*} 
      \PTRANS[\Gamma], r:\TRET, a:\TIFC{I_A}, z_1:\TIFC{B_1}, \ldots, z_n:\TIFC{B_n}, Y:I_A, y_1:B_1, \ldots, y_n:B_n \vdash \OUTPUT{Y \cdot f}{ r, y_1, \ldots, y_n }
    \end{equation*}

    From $\Gamma(I_A)(f) = \TPROC{B_1, \ldots, B_n}$ we know that $\Gamma$ can be expanded as
    \begin{equation*}
      \Gamma = I_A : (f : \TPROC{B_1, \ldots, B_n}, \Delta), \Gamma'
    \end{equation*}

    \noindent and by the encoding of $\Gamma$ we have that 
    \begin{align*}
         & \PTRANS[I_A:(f:\TPROC{B_1, \ldots, B_n}, \Delta), \Gamma']                                                                                                              \\
      = ~& I_A:(\TNIL, (\PTRANS[f:\TPROC{B_1, \ldots, B_n}, \Delta]<2>)), \PTRANS[f:\TPROC{B_1, \ldots, B_n}, \Delta]<3>, \PTRANS[\Gamma']                                         \\
      = ~& I_A:(\TNIL, (f:\TIFC{B_1, \ldots, B_n}, \PTRANS[\Delta]<2>)), \TIFC{B_1, \ldots, B_n}:(\TCHAN{\TRET, B_1, \ldots, B_n}, \EMPTYSET), \PTRANS[\Delta]<3>, \PTRANS[\Gamma'] 
    \end{align*}

    \noindent Hence, we can conclude 
    \begin{equation*}
      \PTRANS[\Gamma], \ldots, Y:I_A; \PTRANS[\Gamma], \ldots, Y:I_A \vdash Y \cdot f : \TCHAN{\TRET, B_1, \ldots, B_n} 
    \end{equation*}

    \noindent by \nameref{epi_simple_type_vec1} (and \nameref{epi_simple_type_vec2} for its premise), and then 
    \begin{equation*} 
      \PTRANS[\Gamma], r:\TRET, a:\TIFC{I_A}, z_1:\TIFC{B_1}, \ldots, z_n:\TIFC{B_n}, Y:I_A, y_1:B_1, \ldots, y_n:B_n \vdash \OUTPUT{Y \cdot f}{ r, y_1, \ldots, y_n }
    \end{equation*}

    \noindent by \nameref{epi_simple_type_out}.
    At last, we can therefore conclude
    \begin{equation*}
      \PTRANS[\Gamma], r:\TRET \vdash \NEW{a:\TIFC{{I_A}}, \VEC{z}:\TIFC{{\VEC{B}}}}\PAREN[Big]{ \PTRANS[e](a)<\Gamma> \PAR \PTRANS[\VEC{e}](\VEC{z})<\Gamma> \PAR \INPUT{a}{Y} . \INPUT{z_1}{y_1} \ldots \INPUT{z_n}{y_n} . \OUTPUT{Y \cdot f}{ r, y_1, \ldots, y_n } } 
    \end{equation*}

    \noindent by \nameref{epi_simple_type_par} and then \nameref{epi_simple_type_res}, as desired.
\end{itemize}

This concludes the proof for the forward direction.

For the other direction, we must show that $\PTRANS[\Gamma], r : \TRET \vdash \PTRANS[S](r)<\Gamma> \implies \Gamma \vdash S$ and again the proof is by induction on $S$.
\begin{itemize}
  \item Suppose the statement is \code{skip} and $\PTRANS[\Gamma], r:\TRET \vdash \PTRANS[\code{skip}](r)<\Gamma>$.
    Then $\Gamma \vdash \code{skip}$ holds for any $\Gamma$ by \nameref{wc_type_stm_skip}, as desired.

  \item Suppose the statement is \code{var $B$ $x$ := $e$ in $S$}, and $\PTRANS[\Gamma], r:\TRET \vdash \PTRANS[\code{var $B$ $x$ := $e$ in $S$}]$.
    Then 
    \begin{align*}
        ~& \PTRANS[\code{var $B$ $x$\,:=\,$e$ in $S$}](r)<\Gamma> \\
      = ~& \NEW{z:\TIFC{B}}\PAREN{ \PTRANS[e](z)<\Gamma> \PAR \INPUT{z}{y} . \NEW{x:\TIFC{B}}\PAREN{ \OUTPUT{x}{y} \PAR \PTRANS[S](r)<\Gamma, x:B> } }
    \end{align*}

    Our goal is to infer $\Gamma \vdash \code{var $B$ $x$ := $e$ in $S$}$ using \nameref{wc_type_stm_decv}. Thus, we must show that the premises of that rule, namely $\Gamma \vdash e:B$ and $\Gamma, x:B \vdash S$, can be satisfied.
    
    We know that
    \begin{equation*}
      \PTRANS[\Gamma], r:\TRET \vdash \NEW{z:\TIFC{B}}\PAREN{ \PTRANS[e](z)<\Gamma> \PAR \INPUT{z}{y} . \NEW{x:\TIFC{B}}\PAREN{ \OUTPUT{x}{y} \PAR \PTRANS[S](r)<\Gamma, x:B> } }
    \end{equation*}

    \noindent which must have been concluded by \nameref{epi_simple_type_res}, \nameref{epi_simple_type_par}, \nameref{epi_simple_type_in}, \nameref{epi_simple_type_res} again, \nameref{epi_simple_type_par} again, and finally \nameref{epi_simple_type_out}.
    We omit the full derivation here, but see the corresponding case in the proof for the forward direction.

    From their respective premises, we get that $\PTRANS[\Gamma], z:\TIFC{B} \vdash \PTRANS[e](z)<\Gamma>$ and $\PTRANS[\Gamma], x:\TIFC{B}, r:\TRET \vdash \PTRANS[S](r)<\Gamma>$.
    By Lemma~\ref{wc:lemma:type_correspondence_expr}, the former implies that 
    $\Gamma \vdash e : B$.
    The latter implies that $\PTRANS[\Gamma]$ contains an interface definition for $\TIFC{B}$ of the form $\TIFC{B} \mapsto (\TCHAN{B}, \EMPTYSET)$; by the encoding for $\Gamma$, we have that 
    \begin{equation*}
      \PTRANS[\Gamma], x:\TIFC{B}, \TIFC{B}:(\TCHAN{B}, \EMPTYSET), r:\TRET = \PTRANS[\Gamma, x:B], r:\TRET
    \end{equation*}

    \noindent and thus $\PTRANS[\Gamma, x:B], r:\TRET \vdash \PTRANS[S](r)<\Gamma>$. By the induction hypothesis, $\Gamma, x:B \vdash S$;
    
    We can then conclude $\Gamma \vdash \code{var $B$ $x$ := $e$ in $S$}$ by \nameref{wc_type_stm_decv}, as desired.

  \item Suppose the statement is \code{$x$ := $e$}, and $\PTRANS[\Gamma], r:\TRET \vdash \PTRANS[\code{$x$ := $e$}](r)<\Gamma>$.
    Then
    \begin{equation*}
      \PTRANS[\code{$x$ := $e$}](r)<\Gamma> = \NEW{z:\TIFC{B}}\PAREN{ \PTRANS[e](z)<\Gamma> \PAR \INPUT{z}{y} . \INPUT{x}{w} . \PAREN{ \OUTPUT{x}{y} \PAR \OUTPUT{r}{} } } 
    \end{equation*}

    Our goal is to conclude by \nameref{wc_type_stm_assv}, thus we must show that its premises can be satisfied, which are $\Gamma \vdash x:B$ and $\Gamma \vdash e:B$.

    We know that 
    \begin{equation*}
      \PTRANS[\Gamma], r:\TRET \vdash \NEW{z:\TIFC{B}}\PAREN{ \PTRANS[e](z)<\Gamma> \PAR \INPUT{z}{y} . \INPUT{x}{w} . \PAREN{ \OUTPUT{x}{y} \PAR \OUTPUT{r}{} } } 
    \end{equation*}

    \noindent which must have been concluded by \nameref{epi_simple_type_res}, \nameref{epi_simple_type_par}, \nameref{epi_simple_type_in} (twice) and then \nameref{epi_simple_type_par} and \nameref{epi_simple_type_out}.
    We omit the full derivation here, but see the corresponding case in the proof for the forward direction.
    However, in brief, we infer from the entry $z:\TIFC{B}$ that $x$ must be typable as $\TCHAN{B}$.
    In particular, we get that $\PTRANS[\Gamma]$ can be expanded as $\PTRANS[\Gamma'], x:\TIFC{B}$ (since $x$ is free in the above process).
    Then $\PTRANS[\Gamma'], x:\TIFC{B}, y:B \vdash \OUTPUT{x}{y}$, which was concluded by \nameref{epi_simple_type_out}, and for its premise that
    \begin{equation*}
      (\PTRANS[\Gamma'], x:\TIFC{B}); (\PTRANS[\Gamma'], x:\TIFC{B}) \vdash x : \TCHAN{B}
    \end{equation*}

    \noindent which was concluded by \nameref{epi_simple_type_vec2}.
    This, in turn, requires that $\PTRANS[\Gamma']$ must contain $\TIFC{B} \mapsto (\TCHAN{B}, \EMPTYSET)$, and, by the encoding for $\Gamma$, we have that 
    \begin{equation*}
      \PTRANS[\Gamma'], x:\TIFC{B}, \TIFC{B}:(\TCHAN{B}, \EMPTYSET), r:\TRET = \PTRANS[\Gamma', x:B], r:\TRET
    \end{equation*}

    \noindent hence, $\Gamma = \Gamma', x:B$, and clearly $\Gamma', x:B \vdash x:B$ by \nameref{wc_type_expr_var}.
    
    Furthermore, from the premise of \nameref{epi_simple_type_res} and \nameref{epi_simple_type_par}, we have  $\PTRANS[\Gamma], z:\TIFC{B} \vdash \PTRANS[e](z)<\Gamma>$ that, by Lemma~\ref{wc:lemma:type_correspondence_expr}, implies $\Gamma \vdash e : B$.
    Thus we can conclude $\Gamma \vdash \code{$x$ := $e$}$ by \nameref{wc_type_stm_assv}, as desired.

  \item Suppose the statement is \code{this.$p$ := $e$}, and $\PTRANS[\Gamma], r:\TRET \vdash \PTRANS[\code{this.$p$ := $e$}](r)<\Gamma>$.
    Then
    \begin{equation*}
      \PTRANS[\code{this.$p$ := $e$}](r)<\Gamma>  = \NEW{z:\TIFC{B}}\PAREN{ \PTRANS[e](z)<\Gamma> \PAR \INPUT{z}{y} . \INPUT{\code{this}}{Y}.\PAREN{ \OUTPUT{\code{this}}{Y} \PAR \INPUT{Y \cdot p}{w}.\PAREN{ \OUTPUT{Y \cdot p}{y} \PAR \OUTPUT{r}{} } } } 
    \end{equation*}

    Our goal is to conclude by \nameref{wc_type_stm_assf}, thus we must show that its premises can be satisfied, which are $\Gamma \vdash \code{this}.p:B$ and $\Gamma \vdash e:B$.

    We know that
    \begin{equation*}
      \PTRANS[\Gamma], r:\TRET \vdash \NEW{z:\TIFC{B}}\PAREN{ \PTRANS[e](z)<\Gamma> \PAR \INPUT{z}{y} . \INPUT{\code{this}}{Y}.\PAREN{ \OUTPUT{\code{this}}{Y} \PAR \INPUT{Y \cdot p}{w}.\PAREN{ \OUTPUT{Y \cdot p}{y} \PAR \OUTPUT{r}{} } } } 
    \end{equation*}

    \noindent which must have been concluded by \nameref{epi_simple_type_res}, \nameref{epi_simple_type_par}, \nameref{epi_simple_type_in} and \nameref{epi_simple_type_out}.
    We omit the full derivation here, but see the corresponding case in the proof for the forward direction.
    However, in brief, we infer from the entry $z:\TIFC{B}$ that $y$ must have type $B$, and therefore $Y \cdot p$ must be typable as $\TCHAN{B}$, which is required to type the output $\OUTPUT{Y \cdot p}{y}$.
    This, in turn, requires that \code{this} has a type $\TIFC{A}$, which must have an interface definition of the form $I_A \mapsto (\TNIL, (p:\TIFC{B}, \Delta))$ for some $\Delta$.
    Thus we get that $\PTRANS[\Gamma] = 
      \code{this}:\TIFC{I_A}, \TIFC{I_A}:(\TCHAN{A}, \EMPTYSET), I_A:(\TNIL, (p:\TIFC{B}, \PTRANS[\Delta]<2>)), \TIFC{B}:(\TCHAN{B}, \EMPTYSET), \PTRANS[\Delta]<3>, \PTRANS[\Gamma']$ 
    and, by the translation for $\Gamma$, we have that
    \begin{align*}
         & \code{this}:\TIFC{I_A}, \TIFC{I_A}:(\TCHAN{I_A}, \EMPTYSET), I_A:(\TNIL, (p:\TIFC{B}, \PTRANS[\Delta]<2>)), \TIFC{B}:(\TCHAN{B}, \EMPTYSET), \PTRANS[\Delta]<3>, \PTRANS[\Gamma'] \\
      = ~& \PTRANS[\code{this}:I_A, I_A:(p:B, \Delta), \Gamma'] 
    \end{align*}

    \noindent so $\Gamma = \code{this}:I_A, I_A:(p:B, \Delta), \Gamma'$.
    Then $\code{this}:I_A, I_A:(p:B, \Delta), \Gamma' \vdash \code{this}.p : B$ can be concluded by \nameref{wc_type_expr_field}.

   Like in the previous case, from the premise of \nameref{epi_simple_type_res} and \nameref{epi_simple_type_par}, we have that $\PTRANS[\Gamma], z:\TIFC{B} \vdash \PTRANS[e](z)<\Gamma>$ and, by Lemma~\ref{wc:lemma:type_correspondence_expr}, $\Gamma \vdash e : B$.
    Thus we can conclude $\Gamma \vdash \code{this.$p$ := $e$}$ by \nameref{wc_type_stm_assf}, as desired.

  \item Suppose the statement is \code{$S_1$;$S_2$}, and $\PTRANS[\Gamma], r_2:\TRET \vdash \PTRANS[\code{$S_1$;$S_2$}](r_2)<\Gamma>$.
    Then 
    \begin{equation*}
      \PTRANS[\code{$S_1$;$S_2$}](r_2)<\Gamma> = \NEW{r_1: \TRET} \PAREN{ \PTRANS[S_1](r_1)<\Gamma> \PAR \INPUT{r_1}{}.\PTRANS[S_2](r_2)<\Gamma> } 
    \end{equation*}

    Our goal is to conclude by \nameref{wc_type_stm_seq}, thus we must show that its premises can be satisfied, which are $\Gamma \vdash S_1$ and $\Gamma \vdash S_S$.
    
    We know that
    \begin{equation*} 
      \PTRANS[\Gamma], r_2 : \TRET \vdash \NEW{r_1 : \TRET}\PAREN{ \PTRANS[S_1](r_1)<\Gamma> \PAR \INPUT{r_1}{}.\PTRANS[S_2](r_2)<\Gamma> }     
    \end{equation*}

    \noindent which must have been concluded by \nameref{epi_simple_type_res} and \nameref{epi_simple_type_par}, and with the premises:
    \begin{align*}
      \PTRANS[\Gamma], r_1 : \TRET, r_2 : \TRET \vdash \PTRANS[S_1](r_1)<\Gamma>   
      \qquad
      \PTRANS[\Gamma], r_1 : \TRET, r_2 : \TRET \vdash \INPUT{r_1}{}.\PTRANS[S_2](r_2)<\Gamma>
    \end{align*}
    By applying \nameref{epi_simple_type_in} to the latter, we also know that $\PTRANS[\Gamma], r_1 : \TRET, r_2 : \TRET \vdash \PTRANS[S_2](r_2)<\Gamma>$.
    Then, after removing the unused type assumptions by Lemma~\ref{wc:lemma:epi_simple_type_strengthening} and by two applications of the induction hypothesis, we have the desired $\Gamma \vdash S_1$ and $\Gamma \vdash S_2$.

  \item Suppose the statement is \code{if $e$ then $S_\TRUE$ else $S_\FALSE$}, and \mbox{$\PTRANS[\Gamma], r:\TRET \vdash \PTRANS[\code{if $e$ then $S_\TRUE$ else $S_\FALSE$}](r)<\Gamma>$}.
    Then
    \begin{align*}
         & \PTRANS[\code{if $e$ then $S_\TRUE$ else $S_\FALSE$}](r)<\Gamma> \\ 
      = ~& \NEW{z:\TIFC{\TBOOL}}\PAREN{ \PTRANS[e](z)<\Gamma> \PAR \INPUT{z}{y}. ([y = \TRUE]\PTRANS[S_\TRUE](r)<\Gamma> + [y = \FALSE]\PTRANS[S_\FALSE](r)<\Gamma>) }
    \end{align*}

    Our goal is to conclude by \nameref{wc_type_stm_if}, thus we must show that its premises can be satisfied, which are $\Gamma \vdash e:\TBOOL$ and $\Gamma \vdash S_\TRUE$ and $\Gamma \vdash S_\FALSE$.

    We know that 
    \begin{equation*}
      \PTRANS[\Gamma], r:\TRET \vdash \NEW{z:\TIFC{\TBOOL}}\PAREN{ \PTRANS[e](z)<\Gamma> \PAR \INPUT{z}{y}. ([y = \TRUE]\PTRANS[S_\TRUE](r)<\Gamma> + [y = \FALSE]\PTRANS[S_\FALSE](r)<\Gamma>) } 
    \end{equation*}

    \noindent must then have been concluded by \nameref{epi_simple_type_res}, \nameref{epi_simple_type_par}, \nameref{epi_simple_type_in} and \nameref{epi_simple_type_sum} (see the derivation in the corresponding forward case).
    From their respective premises, we get that $\PTRANS[\Gamma], z:\TIFC{\TBOOL} \vdash \PTRANS[e](z)<\Gamma>$ and $\PTRANS[\Gamma], r:\TRET \vdash \PTRANS[S_\TRUE](r)<\Gamma>$ and $\PTRANS[\Gamma], r:\TRET \vdash \PTRANS[S_\FALSE](r)<\Gamma>$.
    Respectively by Lemma~\ref{wc:lemma:type_correspondence_expr} and by two applications of the induction hypothesis, these imply the desired $\Gamma \vdash e : \TBOOL$, $\Gamma \vdash S_\TRUE$ and $\Gamma \vdash S_\FALSE$.
    
  \item Suppose the statement is \code{while $e$ do $S$}, and $\PTRANS[\Gamma], r:\TRET \vdash \PTRANS[\code{while $e$ do $S$}](r)<\Gamma>$.
    Then
    \begin{align*}
         & \PTRANS[\code{while $e$ do $S$}](r)<\Gamma>                       \\
      = ~& \NEW{r':\TRET}\PAREN[Big]{ \OUTPUT{r'}{} \PAR \REPL{\INPUT{r'}{}} %
         . \NEW{z:\TIFC{\TBOOL}}\PAREN{ \PTRANS[e](z)<\Gamma> \PAR \INPUT{z}{y}.( [y=\TRUE]\PTRANS[S](r')<\Gamma> + [y=\FALSE]\OUTPUT{r}{}) } }  
    \end{align*}

    Our goal is to conclude by \nameref{wc_type_stm_while}, thus we must show that its premises can be satisfied, which are $\Gamma \vdash e:\TBOOL$ and $\Gamma \vdash S$.

    We know that
    \begin{equation*}
      \PTRANS[\Gamma], r:\TRET \vdash \NEW{r':\TRET}\PAREN[Big]{ \OUTPUT{r'}{} \PAR \REPL{\INPUT{r'}{}} %
                                    . \NEW{z:\TIFC{\TBOOL}}\PAREN{ \PTRANS[e](z)<\Gamma> \PAR \INPUT{z}{y}.( [y=\TRUE]\PTRANS[S](r')<\Gamma> + [y=\FALSE]\OUTPUT{r}{}) } }
    \end{equation*}

    \noindent must have been concluded by \nameref{epi_simple_type_res}, \nameref{epi_simple_type_par}, \nameref{epi_simple_type_out}, \nameref{epi_simple_type_rep}, \nameref{epi_simple_type_in} and \nameref{epi_simple_type_sum}.
    We omit the full derivation here, but see the corresponding case in the proof for the forward direction.

    From the respective premises of these rules, we get that $\PTRANS[\Gamma], z:\TIFC{\TBOOL} \vdash \PTRANS[e](z)<\Gamma>$ and $\PTRANS[\Gamma], r':\TRET \vdash \PTRANS[S](r')<\Gamma>$, and with unused type assumptions removed by Lemma~\ref{wc:lemma:epi_simple_type_strengthening}.
    Again by Lemma~\ref{wc:lemma:type_correspondence_expr} and by the induction hypothesis, these allow us to conclude.  

  \item Suppose the statement is \WCCALL{e}{f}{\VEC{e}}, and $\PTRANS[\Gamma], r:\TRET \vdash \PTRANS[\WCCALL{e}{f}{\VEC{e}}](r)<\Gamma>$.
    Then 
    \begin{align*}
         & \PTRANS[\WCCALL{e}{f}{\VEC{e}}](r)<\Gamma> \\
      = ~& \NEW{a:\TIFC{{I_A}}, \VEC{z}:\TIFC{{\VEC{B}}}}\PAREN[Big]{ \PTRANS[e](a)<\Gamma> \PAR \PTRANS[\VEC{e}](\VEC{z})<\Gamma>  %
           \PAR \INPUT{a}{Y} . \INPUT{z_1}{y_1} \ldots \INPUT{z_n}{y_n} . \OUTPUT{Y \cdot f}{ r, y_1, \ldots, y_n } } 
    \end{align*}
    
    \noindent where $\PTRANS[\VEC{e}](\VEC{z})<\Gamma> = \PTRANS[e_1](z_1)<\Gamma> \PAR \ldots \PAR \PTRANS[e_n](z_n)<\Gamma>$ and $\VEC{z}:\TIFC{{\VEC{B}}} = z_1:\TIFC{B_1}, \ldots, z_n:\TIFC{B_n}$ for brevity.

    Our goal is to conclude by \nameref{wc_type_stm_call}, and to do that, we must satisfy its premises, which are
    \begin{equation*}
      \Gamma \vdash e:B \qquad \Gamma \vdash \VEC{e} : \VEC{B} \qquad \Gamma(I_A)(f) = \TPROC{\VEC{B}}
    \end{equation*}

    We have that 
    \begin{equation*}
      \PTRANS[\Gamma], r:\TRET \vdash \NEW{a:\TIFC{{I_A}}, \VEC{z}:\TIFC{{\VEC{B}}}}\PAREN[Big]{ \PTRANS[e](a)<\Gamma> \PAR \PTRANS[\VEC{e}](\VEC{z})<\Gamma>  %
           \PAR \INPUT{a}{Y} . \INPUT{z_1}{y_1} \ldots \INPUT{z_n}{y_n} . \OUTPUT{Y \cdot f}{ r, y_1, \ldots, y_n } }
    \end{equation*}

    \noindent must have been concluded by \nameref{epi_simple_type_res}, \nameref{epi_simple_type_par}, \nameref{epi_simple_type_in} and \nameref{epi_simple_type_out}.
    We omit the full derivation here, but see the corresponding case in the proof for the forward direction.
    From their respective premises, we get that 
    \begin{align*}
      \PTRANS[\Gamma], a:\TIFC{I_A} \vdash \PTRANS[e](a)<\Gamma>  
      \qquad
      \PTRANS[\Gamma], z_1:\TIFC{B_1}  \vdash \PTRANS[e_1](z_1)<\Gamma>
      \quad
      \ldots
      \quad
      \PTRANS[\Gamma], z_n:\TIFC{B_n} \vdash \PTRANS[e_n](z_n)<\Gamma>
    \end{align*}

    \noindent and that $\PTRANS[\Gamma], r:\TRET, Y:I_A, y_1:B_1, \ldots, y_n:B_n \vdash \OUTPUT{Y \cdot f}{ r, y_1, \ldots, y_n }$, with unused type assumptions removed by Lemma~\ref{wc:lemma:epi_simple_type_strengthening}.
    This must have been concluded by \nameref{epi_simple_type_out}, and from its premise we then get that
    \begin{equation*}
      (\PTRANS[\Gamma], r:\TRET, Y:I_A, y_1:B_1, \ldots, y_n:B_n); (\PTRANS[\Gamma], r:\TRET, Y:I_A, y_1:B_1, \ldots, y_n:B_n) \vdash Y \cdot f : \TCHAN{\TRET, B_1, \ldots, B_n}
    \end{equation*}

    \noindent which was concluded by \nameref{epi_simple_type_vec1} and \nameref{epi_simple_type_vec2}.
    Again, we omit the full derivation here, since it appears in the forward case. 
    However, it requires that $\PTRANS[\Gamma]$ must contain a type definition for $I_A$, i.e.:
    \begin{equation*}
      \PTRANS[\Gamma] = I_A:(\TNIL, (f:\TIFC{B_1, \ldots, B_n}, \PTRANS[\Delta]<2>)), \TIFC{B_1, \ldots, B_n}:(\TCHAN{\TRET, B_1, \ldots, B_n}, \EMPTYSET), \PTRANS[\Delta]<3>, \PTRANS[\Gamma']
    \end{equation*}

    \noindent and by the encoding of $\Gamma$, we have that 
    \begin{align*}
         & \PTRANS[I_A:(f:\TPROC{B_1, \ldots, B_n}, \Delta), \Gamma']                                                                                                              \\
      = ~& I_A:(\TNIL, (\PTRANS[f:\TPROC{B_1, \ldots, B_n}, \Delta]<2>)), \PTRANS[f:\TPROC{B_1, \ldots, B_n}, \Delta]<3>, \PTRANS[\Gamma']                                         \\
      = ~& I_A:(\TNIL, (f:\TIFC{B_1, \ldots, B_n}, \PTRANS[\Delta]<2>)), \TIFC{B_1, \ldots, B_n}:(\TCHAN{\TRET, B_1, \ldots, B_n}, \EMPTYSET), \PTRANS[\Delta]<3>, \PTRANS[\Gamma'] 
    \end{align*}

    Thus, $\Gamma$ can be written as $\Gamma = I_A:(f:\TPROC{B_1, \ldots, B_n}, \Delta), \Gamma'$, and clearly 
    \begin{equation*}
      (I_A:(f:\TPROC{B_1, \ldots, B_n}, \Delta), \Gamma')(I_A)(f) = \TPROC{B_1, \ldots, B_n}
    \end{equation*}

    Furthermore, by using Lemma~\ref{wc:lemma:type_correspondence_expr}, we conclude the desired
    \begin{align*}
      \Gamma \vdash e : I_A   
      \qquad
      \Gamma \vdash e_1 : B_1 
      \quad
      \ldots
      \quad
      \Gamma \vdash e_n : B_n
    \end{align*}
\end{itemize}

This concludes the proof for the other direction. 
\end{proof}

\ghostsubsection{Lemma: \ENV{M} type correspondence}
\begin{lemma}\label{wc:lemma:type_correspondence_envm}
Assume $\Gamma(A) = I_A$.
Then $\Gamma \vdash_A \ENV{M} \iff \PTRANS[\Gamma] \vdash \PTRANS[\ENV{M}](A)<\Gamma>$.
\end{lemma}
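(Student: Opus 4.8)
The plan is to prove both directions simultaneously by induction on the structure of $\ENV{M}$, i.e.\@ on the number of method entries it contains. In the base case $\ENV{M} = \ENV{M}^{\EMPTYSET}$ the WC-judgement $\Gamma \vdash_A \ENV{M}^{\EMPTYSET}$ holds by \nameref{wc_type_env_empty} while the encoding is $\PTRANS[\ENV{M}^{\EMPTYSET}](A)<\Gamma> = \NIL$, which is typed by \nameref{epi_simple_type_nil} for any $\PTRANS[\Gamma]$; since both judgements hold unconditionally, the equivalence is immediate in either direction.

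For the inductive step I would take $\ENV{M} = (f, (\VEC{x}, S)), \ENV{M}'$, whose encoding decomposes as $\PTRANS[\ENV{M}](A)<\Gamma> = \PTRANS[\ENV{M}'](A)<\Gamma> \PAR \REPL{\INPUT{A \cdot f}{r, \VEC{a}}}.Q$ with $Q = \LOCNEW{\VEC{x} : \TIFC{\VEC{B}}}{\VEC{a}}\LOCNEW{\code{this} : \TIFC{I_A}}{A}\PAREN{\PTRANS[S](r)<\Gamma>}$. Both directions then hinge on \nameref{epi_simple_type_par}: the $\PTRANS[\ENV{M}']$-component is handled by the induction hypothesis (producing, resp.\@ consuming, $\Gamma \vdash_A \ENV{M}'$), while the new conjunct of \nameref{wc_type_env_envm} must be matched against the typing of the replicated input. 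After stripping \nameref{epi_simple_type_rep} and \nameref{epi_simple_type_in}, the crucial premise is that the subject vector $A \cdot f$ receives the capability $\TCHAN{\TRET, \VEC{B}}$. I would establish this by reading off the encoding of $\Gamma$: since $\Gamma(A) = I_A$ and $\Gamma(I_A)(f) = \TPROC{\VEC{B}}$, the translation places $A : I_A$, $I_A : (\TNIL, (f : \TIFC{\VEC{B}}, \ldots))$ and $\TIFC{\VEC{B}} : (\TCHAN{\TRET, \VEC{B}}, \EMPTYSET)$ into $\PTRANS[\Gamma]$, so that \nameref{epi_simple_type_vec1} (selecting $\snd$ at $A$) followed by \nameref{epi_simple_type_vec2} (selecting $\fst$ at $f$) yields exactly $\TCHAN{\TRET, \VEC{B}}$; running the same reasoning backwards recovers $\Gamma(A) = I_A$ and $\Gamma(I_A)(f) = \TPROC{\VEC{B}}$.

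The continuation $Q$ is then typed by peeling off the container declarations: each $\LOCNEW{x_i : \TIFC{B_i}}{a_i}$ introduces $x_i : \TIFC{B_i}$ by \nameref{epi_simple_type_res} and types the initialising output $\OUTPUT{x_i}{a_i}$ using $\PTRANS[\Gamma](\TIFC{B_i}) = (\TCHAN{B_i}, \EMPTYSET)$ and the fact that $a_i : B_i$ is in scope from the input, while $\LOCNEW{\code{this} : \TIFC{I_A}}{A}$ uses $\PTRANS[\Gamma](\TIFC{I_A}) = (\TCHAN{I_A}, \EMPTYSET)$ together with $A : I_A$. What remains is $\PTRANS[S](r)<\Gamma>$, which I would connect to the WC-judgement on the body via Lemma~\ref{wc:lemma:type_correspondence_stm} instantiated at the extended environment $\Gamma, \code{this} : I_A, \VEC{x} : \VEC{B}$: the forward direction produces the required EPI-typing from $\Gamma, \code{this}:I_A, \VEC{x}:\VEC{B} \vdash S$, and the backward direction extracts $\Gamma, \code{this}:I_A, \VEC{x}:\VEC{B} \vdash S$, which feeds \nameref{wc_type_env_envm}.

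The main obstacle I anticipate is reconciling the environment under which $\PTRANS[S](r)<\Gamma>$ is actually typed inside $Q$ — namely $\PTRANS[\Gamma]$ extended with $r : \TRET$, the dead value bindings $\VEC{a} : \VEC{B}$, and the container bindings $\VEC{x} : \TIFC{\VEC{B}}, \code{this} : \TIFC{I_A}$ — with the environment $\PTRANS[\Gamma, \code{this}:I_A, \VEC{x}:\VEC{B}], r:\TRET$ demanded by Lemma~\ref{wc:lemma:type_correspondence_stm}. These agree only after (i) inserting, resp.\@ discarding, the bindings $\VEC{a} : \VEC{B}$, which are not free in the body, via Weakening and Strengthening (Lemmas~\ref{wc:lemma:epi_simple_type_weakening} and~\ref{wc:lemma:epi_simple_type_strengthening}), and (ii) observing that the container definitions $\TIFC{B_i} : (\TCHAN{B_i}, \EMPTYSET)$ and $\TIFC{I_A} : (\TCHAN{I_A}, \EMPTYSET)$ introduced by encoding the extended $\Gamma$ are already present in $\PTRANS[\Gamma]$ under the standing assumption that container types are defined for every base type used, so the apparent duplicates are absorbed. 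Keeping this bookkeeping precise — and in particular threading the \code{this} binding required by \nameref{wc_type_env_envm} consistently through both the encoding and Lemma~\ref{wc:lemma:type_correspondence_stm} — is the delicate part; the remaining rule applications are routine.
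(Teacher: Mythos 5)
Your proposal is correct and follows essentially the same route as the paper's own proof: induction on the structure of $\ENV{M}$ in both directions, splitting off $\PTRANS[\ENV{M}'](A)<\Gamma>$ via \nameref{epi_simple_type_par} and the induction hypothesis, deriving the key capability $\PTRANS[\Gamma];\PTRANS[\Gamma] \vdash A \cdot f : \TCHAN{\TRET, \VEC{B}}$ from the encoding of $\Gamma$ via \nameref{epi_simple_type_vec1}/\nameref{epi_simple_type_vec2}, and connecting the body through Lemma~\ref{wc:lemma:type_correspondence_stm}. The bookkeeping issues you flag (discarding the dead $\VEC{a} : \VEC{B}$ bindings by Lemmas~\ref{wc:lemma:epi_simple_type_weakening}/\ref{wc:lemma:epi_simple_type_strengthening}, and absorbing the duplicate container-type entries so that $\PTRANS[\Gamma, \VEC{x}:\VEC{B}, \code{this}:I_A]$ matches the environment arising inside the encoding) are exactly the points the paper's proof handles in the same way.
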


\begin{proof}
We have two directions to show.
For both, we proceed by induction on the structure of $\ENV{M}$.
For the forward direction, we must show that $\Gamma \vdash_A \ENV{M} \implies \PTRANS[\Gamma] \vdash \PTRANS[\ENV{M}](A)<\Gamma>$.
\begin{itemize}
  \item Suppose $\ENV{M} = \ENV{M}^\EMPTYSET$.
    Then $\Gamma \vdash_A \ENV{M}^\EMPTYSET$ was concluded by \nameref{wc_type_env_empty}, which is an axiom.
    The translation for environments then yields $\PTRANS[\ENV{M}^\EMPTYSET](A)<\Gamma> = \NIL$, and $\PTRANS[\Gamma] \vdash \NIL$ can then be concluded by \nameref{epi_simple_type_nil} for any type environment $\PTRANS[\Gamma]$.

  \item Suppose $\ENV{M} = (f, (\VEC{x}, S)), \ENV{M}'$.
    Then $\Gamma \vdash_A (f, (\VEC{x}, S)), \ENV{M}'$ must have been concluded by \nameref{wc_type_env_envm}, and from its premise we have that
    \begin{equation*}
      \Gamma(I_A)(f)                           = \TPROC{\VEC{B}} \qquad
      \Gamma, \code{this}:I_A, \VEC{x}:\VEC{B} \vdash S  \qquad
      \Gamma \vdash_A \ENV{M}' 
    \end{equation*}

    By the translation for environments, we have that
    \begin{align*}
         & \PTRANS[{\PAREN[big]{f, (\VEC{x}, S)}, \ENV{M}'}](A)<\Gamma> \\
      = ~& \PTRANS[\ENV{M}'](A)<\Gamma> \PAR \REPL{\INPUT{A \cdot f}{r, \VEC{a}}} . \LOCNEW{\VEC{x} : \TIFC{\VEC{B}}}{\VEC{a}} \LOCNEW{\code{this} : \TIFC{I_A}}{A} \PAREN{ \PTRANS[S](r)<\Gamma> } \\
      = ~& \PTRANS[\ENV{M}'](A)<\Gamma> \PAR \REPL{\INPUT{A \cdot f}{r, \VEC{a}}} . \NEW{x_1:\TIFC{B_1}, \ldots, x_n:\TIFC{B_n}, \code{this}:\TIFC{I_A}}\PAREN{ \OUTPUT{x_1}{a_1} \PAR \ldots \PAR \OUTPUT{x_n}{a_n} \PAR \OUTPUT{\code{this}}{A} \PAR \PTRANS[S](r)<\Gamma> } 
    \end{align*}

    \noindent where $\Gamma(A) = I_A$ and $\Gamma(I_A)(f) = \TPROC{\VEC{B}}$.
    By the induction hypothesis, we have that $\PTRANS[\Gamma] \vdash \PTRANS[\ENV{M}](A)<\Gamma>$.
    Thus, to allow us to conclude by \nameref{epi_simple_type_par}, what remains to be shown is that
    \begin{equation*}
      \PTRANS[\Gamma] \vdash \REPL{\INPUT{A \cdot f}{r, \VEC{a}}} . \NEW{x_1:\TIFC{B_1}, \ldots, x_n:\TIFC{B_n}, \code{this}:\TIFC{I_A}}\PAREN{ \OUTPUT{x_1}{a_1} \PAR \ldots \PAR \OUTPUT{x_n}{a_n} \PAR \OUTPUT{\code{this}}{A} \PAR \PTRANS[S](r)<\Gamma> }
    \end{equation*}

    We note that 
    \begin{equation*} 
      \PTRANS[\Gamma], r:\TRET, a_1:B_1, \ldots, a_n:B_n, x_1:\TIFC{B_1}, \ldots, x_n:\TIFC{B_n}, \code{this}:\TIFC{I_A} \vdash \OUTPUT{x_1}{a_1} \PAR \ldots \PAR \OUTPUT{x_n}{a_n} \PAR \OUTPUT{\code{this}}{A}
    \end{equation*}

    \noindent easily holds by using \nameref{epi_simple_type_out} and \nameref{epi_simple_type_par}.
    What remains to be shown is that
    \begin{equation*} 
      \PTRANS[\Gamma], x_1:\TIFC{B_1}, \ldots, x_n:\TIFC{B_n}, r:\TRET, \code{this}:\TIFC{I_A} \vdash \PTRANS[S](r)<\Gamma>
    \end{equation*}

    \noindent where the entries $a_1:B_1, \ldots, a_n:B_n$ have been removed by Lemma~\ref{wc:lemma:epi_simple_type_strengthening}, since none of them occur in $\PTRANS[S](r)<\Gamma>$.
    As we know that $\Gamma, \code{this}:I_A, \VEC{x}:\VEC{B} \vdash S$, we have by Lemma~\ref{wc:lemma:type_correspondence_stm} that 
    \begin{equation*}
      \Gamma, \VEC{x}:\VEC{B}, \code{this}:I_A \vdash S \implies \PTRANS[\Gamma, \VEC{x}:\VEC{B}, \code{this}:I_A], r:\TRET \vdash \PTRANS[S](r)<\Gamma>
    \end{equation*}

    \noindent so we must show that 
    \begin{equation*}
      \PTRANS[\Gamma, \VEC{x}:\VEC{B}, \code{this}:I_A] = \PTRANS[\Gamma], x_1:\TIFC{B_1}, \ldots, x_n:\TIFC{B_n}, r:\TRET
    \end{equation*}

    This is seen to be the case from the translation of $\Gamma$ (local variables), which gives $\PTRANS[x:B, \Gamma] = x:\TIFC{B}, \TIFC{B}:(\TCHAN{B}, \EMPTYSET), \PTRANS[\Gamma]$.
    In the case above, the interface definition $\TIFC{B}:(\TCHAN{B}, \EMPTYSET)$ is omitted, since it is already assumed to be in $\PTRANS[\Gamma]$, since we assume interface types are defined for each base type $B$ that is used.
    Thus, by using \nameref{epi_simple_type_res}, we can conclude 
    \begin{equation*}
      \PTRANS[\Gamma], r:\TRET, a_1:B_1, \ldots, a_n:B_n \vdash \NEW{x_1:\TIFC{B_1}, \ldots, x_n:\TIFC{B_n}, \code{this}:\TIFC{I_A}}\PAREN{ \OUTPUT{x_1}{a_1} \PAR \ldots \PAR \OUTPUT{x_n}{a_n} \PAR \OUTPUT{\code{this}}{A} \PAR \PTRANS[S](r)<\Gamma> }
    \end{equation*}

    Let 
    \begin{equation*}
      P \DEFSYM \NEW{x_1:\TIFC{B_1}, \ldots, x_n:\TIFC{B_n}, \code{this}:\TIFC{I_A}}\PAREN{ \OUTPUT{x_1}{a_1} \PAR \ldots \PAR \OUTPUT{x_n}{a_n} \PAR \OUTPUT{\code{this}}{A} \PAR \PTRANS[S](r)<\Gamma> }
    \end{equation*}

    \noindent for readability.
    What remains to be shown is that 
    \begin{equation*}
      \PTRANS[\Gamma] \vdash \REPL{\INPUT{A \cdot f}{r, \VEC{a}}} . P
    \end{equation*}

    \noindent which holds by \nameref{epi_simple_type_rep} and \nameref{epi_simple_type_in}, \emph{if}, by using \nameref{epi_simple_type_vec1}, we can conclude that
    \begin{equation*}
      \PTRANS[\Gamma]; \PTRANS[\Gamma] \vdash A \cdot p : \TCHAN{\TRET, \VEC{B}}
    \end{equation*}

    To show this, recall that we know that $\Gamma(A) = I_A$ and $\Gamma(I_A)(f) = \TPROC{\VEC{B}}$.
    Thus, we also know that the type environment can be expanded as $\Gamma = A:I_A, I_A:(f:\TPROC{\VEC{B}}, \Delta), \Gamma'$.
    \noindent and by the encoding of $\Gamma$ we have that 
    \begin{align*}
         & \PTRANS[A:I_A, I_A:(f:\TPROC{B_1, \ldots, B_n}, \Delta), \Gamma']                                                                                                              \\
      = ~& A:I_A, I_A:(\TNIL, (\PTRANS[f:\TPROC{B_1, \ldots, B_n}, \Delta]<2>)), \PTRANS[f:\TPROC{B_1, \ldots, B_n}, \Delta]<3>, \PTRANS[\Gamma']                                         \\
      = ~& A:I_A, I_A:(\TNIL, (f:\TIFC{B_1, \ldots, B_n}, \PTRANS[\Delta]<2>)), \TIFC{B_1, \ldots, B_n}:(\TCHAN{\TRET, B_1, \ldots, B_n}, \EMPTYSET), \PTRANS[\Delta]<3>, \PTRANS[\Gamma'] 
    \end{align*}

    \noindent which lets us conclude that 
    \begin{align*}
      {\PTRANS[A:I_A, I_A:(f:\TPROC{B_1, \ldots, B_n}, \Delta), \Gamma']}(A)                         & = I_A                                                      \\
      {\PTRANS[A:I_A, I_A:(f:\TPROC{B_1, \ldots, B_n}, \Delta), \Gamma']}(I_A)                       & = (\TNIL, (f:\TIFC{B_1, \ldots, B_n}, \PTRANS[\Delta]<2>)) \\
      \snd (\TNIL, (f:\TIFC{B_1, \ldots, B_n}, \PTRANS[\Delta]<2>))                                  & = (f:\TIFC{B_1, \ldots, B_n}, \PTRANS[\Delta]<2>)          \\
      (f:\TIFC{B_1, \ldots, B_n}, \PTRANS[\Delta]<2>)(f)                                             & = \TIFC{B_1, \ldots, B_n}                                  \\
      {\PTRANS[A:I_A, I_A:(f:\TPROC{B_1, \ldots, B_n}, \Delta), \Gamma']}(\TIFC{B_1, \ldots, B_n})   & = (\TCHAN{\TRET, B_1, \ldots, B_n}, \EMPTYSET)             \\
      \fst (\TCHAN{\TRET, B_1, \ldots, B_n}, \EMPTYSET)                                              & = \TCHAN{\TRET, B_1, \ldots, B_n} 
    \end{align*}

    \noindent as required.
\end{itemize}

This concludes the proof for the forward direction.

For the other direction, we must show that $\PTRANS[\Gamma] \vdash \PTRANS[\ENV{M}](A)<\Gamma> \implies \Gamma \vdash_A \ENV{M}$.
\begin{itemize}
  \item Suppose $\ENV{M} = \ENV{M}^\EMPTYSET$.
    The translation for environments  yields $\PTRANS[\ENV{M}^\EMPTYSET](A)<\Gamma> = \NIL$, and $\PTRANS[\Gamma] \vdash \NIL$ was then concluded by \nameref{epi_simple_type_nil}, which holds for any type environment $\PTRANS[\Gamma]$.
    Then $\Gamma \vdash_A \ENV{M}^\EMPTYSET$ can be concluded by \nameref{wc_type_env_empty}.

  \item Suppose $\ENV{M} = (f, (\VEC{x}, S)), \ENV{M}'$.
    By the translation for environments, we have that
    \begin{align*}
         & \PTRANS[{\PAREN[big]{f, (\VEC{x}, S)}, \ENV{M}'}](A)<\Gamma> \\
      = ~& \PTRANS[\ENV{M}'](A)<\Gamma> \PAR \REPL{\INPUT{A \cdot f}{r, \VEC{a}}} . \NEW{x_1:\TIFC{B_1}, \ldots, x_n:\TIFC{B_n}, \code{this}:\TIFC{I_A}}\PAREN{ \OUTPUT{x_1}{a_1} \PAR \ldots \PAR \OUTPUT{x_n}{a_n} \PAR \OUTPUT{\code{this}}{A} \PAR \PTRANS[S](r)<\Gamma> } 
    \end{align*}

    \noindent where $\Gamma(A) = I_A$ and $\Gamma(I_A)(f) = \TPROC{\VEC{B}}$.
    Our goal is to conclude by \nameref{wc_type_env_envm}, which requires us to show that the following premises are satisfied:
    \begin{equation*}
      \Gamma(A)                                = I_A             \qquad
      \Gamma(I_A)(f)                           = \TPROC{\VEC{B}} \qquad
      \Gamma, \code{this}:I_A, \VEC{x}:\VEC{B} \vdash S          \qquad
      \Gamma \vdash_A \ENV{M}' 
    \end{equation*}

    We know that 
    \begin{equation*}
      \PTRANS[\Gamma] \vdash \PTRANS[\ENV{M}'](A)<\Gamma> \PAR \REPL{\INPUT{A \cdot f}{r, \VEC{a}}} . \NEW{x_1:\TIFC{B_1}, \ldots, x_n:\TIFC{B_n}, \code{this}:\TIFC{I_A}}\PAREN{ \OUTPUT{x_1}{a_1} \PAR \ldots \PAR \OUTPUT{x_n}{a_n} \PAR \OUTPUT{\code{this}}{A} \PAR \PTRANS[S](r)<\Gamma> }
    \end{equation*}

    \noindent which must have been concluded by \nameref{epi_simple_type_par}, and from its premise we then get that 
    \begin{align*}
      \PTRANS[\Gamma] & \vdash \PTRANS[\ENV{M}'](A)<\Gamma> \\
      \PTRANS[\Gamma] & \vdash \REPL{\INPUT{A \cdot f}{r, \VEC{a}}} . \NEW{x_1:\TIFC{B_1}, \ldots, x_n:\TIFC{B_n}, \code{this}:\TIFC{I_A}}\PAREN{ \OUTPUT{x_1}{a_1} \PAR \ldots \PAR \OUTPUT{x_n}{a_n} \PAR \OUTPUT{\code{this}}{A} \PAR \PTRANS[S](r)<\Gamma> }
    \end{align*}

    By the induction hypothesis applied to the former, we obtain $\Gamma \vdash_A \ENV{M}'$.
    The other premise must have been concluded by a combination of \nameref{epi_simple_type_rep}, \nameref{epi_simple_type_in}, \nameref{epi_simple_type_res} and \nameref{epi_simple_type_par}.
    We omit the lengthy derivation here, but see the corresponding case for the forward direction.
    However, it yields a premise
    \begin{equation*} 
      \PTRANS[\Gamma], x_1:\TIFC{B_1}, \ldots, x_n:\TIFC{B_n}, r:\TRET, \code{this}:\TIFC{I_A} \vdash \PTRANS[S](r)<\Gamma>
    \end{equation*}

    \noindent for one of the applications of \nameref{epi_simple_type_par}, and by Lemma~\ref{wc:lemma:type_correspondence_stm} we have that 
    \begin{equation*} 
      \PTRANS[\Gamma, \VEC{x}:\VEC{B}, \code{this}:I_A], r:\TRET \vdash \PTRANS[S](r)<\Gamma> \implies \Gamma, \VEC{x}:\VEC{B}, \code{this}:I_A \vdash S
    \end{equation*}

    Finally, typing the input $\INPUT{A \cdot f}{r, \VEC{a}}.P$ with the continuation 
    \begin{equation*}
      P = \NEW{x_1:\TIFC{B_1}, \ldots, x_n:\TIFC{B_n}, \code{this}:\TIFC{I_A}}\PAREN{ \OUTPUT{x_1}{a_1} \PAR \ldots \PAR \OUTPUT{x_n}{a_n} \PAR \OUTPUT{\code{this}}{A} \PAR \PTRANS[S](r)<\Gamma> }
    \end{equation*}

    \noindent is likewise shown to require the subject $A \cdot p$ to be typable as $\TCHAN{\TRET, B_1, \ldots, B_n}$, which is concluded by \nameref{epi_simple_type_vec1} and \nameref{epi_simple_type_vec2}.
    It must therefore be the case that $\PTRANS[\Gamma] = 
      A:I_A, I_A:(\TNIL, (f:\TIFC{B_1, \ldots, B_n}, \PTRANS[\Delta]<2>)), \TIFC{B_1, \ldots, B_n}:(\TCHAN{\TRET, B_1, \ldots, B_n}, \EMPTYSET), \PTRANS[\Delta]<3>, \PTRANS[\Gamma']$ and, by the encoding of $\Gamma$:
    \begin{align*}
         & \PTRANS[A:I_A, I_A:(f:\TPROC{B_1, \ldots, B_n}, \Delta), \Gamma']                                                                                                              \\
      = ~& A:I_A, I_A:(\TNIL, (f:\TIFC{B_1, \ldots, B_n}, \PTRANS[\Delta]<2>)), \TIFC{B_1, \ldots, B_n}:(\TCHAN{\TRET, B_1, \ldots, B_n}, \EMPTYSET), \PTRANS[\Delta]<3>, \PTRANS[\Gamma'] 
    \end{align*}

    Thus, $\Gamma = A:I_A, I_A:(f:\TPROC{B_1, \ldots, B_n}, \Delta), \Gamma'$, and clearly, 
    \begin{align*}
      (A:I_A, I_A:(f:\TPROC{B_1, \ldots, B_n}, \Delta), \Gamma')(A)      & = I_A \\
      (A:I_A, I_A:(f:\TPROC{B_1, \ldots, B_n}, \Delta), \Gamma')(I_A)(f) & = \TPROC{B_1, \ldots, B_n} 
    \end{align*}
     
     Thus, all the premises of \nameref{wc_type_env_envm} are satisfied, and we can therefore conclude that $\Gamma \vdash (f, (\VEC{x}, S)), \ENV{M}'$, as desired.
\end{itemize}

This concludes the proof for the other direction.
\end{proof}

\ghostsubsection{Lemma: \ENV{T} type correspondence}
\begin{lemma}\label{wc:lemma:type_correspondence_envt}
$\Gamma \vdash \ENV{T} \iff \PTRANS[\Gamma] \vdash \PTRANS[\ENV{T}]<\Gamma>$.
\end{lemma}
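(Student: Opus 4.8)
The plan is to prove the biconditional by a straightforward structural induction on the method table $\ENV{T}$, viewed (per the binding-model convention) as a list of tuples $(A, \ENV{M}), \ENV{T}'$ terminated by $\ENV{T}^\EMPTYSET$. All the genuine content about individual method environments has already been isolated in Lemma~\ref{wc:lemma:type_correspondence_envm}, so each inductive case reduces to a single appeal to that lemma plus some bookkeeping with \nameref{epi_simple_type_par} and \nameref{epi_simple_type_nil}. Throughout I rely on the standing well-formedness assumption on $\Gamma$, namely that every class name $A \in \DOM{\ENV{T}}$ satisfies $\Gamma(A) = I_A$ for some interface type name $I_A$ (guaranteed by the typed syntax, where each declared class is annotated with a type name); this is exactly the side condition Lemma~\ref{wc:lemma:type_correspondence_envm} requires.

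For the base case I would take $\ENV{T} = \ENV{T}^\EMPTYSET$. Here $\Gamma \vdash \ENV{T}^\EMPTYSET$ holds for every $\Gamma$ by \nameref{wc_type_env_empty}, while $\PTRANS[\ENV{T}^\EMPTYSET]<\Gamma> = \NIL$ and $\PTRANS[\Gamma] \vdash \NIL$ holds for every $\PTRANS[\Gamma]$ by \nameref{epi_simple_type_nil}. Since both sides of the equivalence hold unconditionally, the base case is immediate.

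For the inductive step I would take $\ENV{T} = (A, \ENV{M}), \ENV{T}'$, noting that the encoding gives $\PTRANS[(A, \ENV{M}), \ENV{T}']<\Gamma> = \PTRANS[\ENV{T}']<\Gamma> \PAR \PTRANS[\ENV{M}](A)<\Gamma>$, and prove the two directions separately. In the forward direction, $\Gamma \vdash (A, \ENV{M}), \ENV{T}'$ must have been concluded by \nameref{wc_type_env_envt}, whose premises are $\Gamma \vdash_A \ENV{M}$ and $\Gamma \vdash \ENV{T}'$; the induction hypothesis applied to the latter yields $\PTRANS[\Gamma] \vdash \PTRANS[\ENV{T}']<\Gamma>$, and Lemma~\ref{wc:lemma:type_correspondence_envm} applied to the former yields $\PTRANS[\Gamma] \vdash \PTRANS[\ENV{M}](A)<\Gamma>$, so a single application of \nameref{epi_simple_type_par} gives the required judgement. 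The converse is symmetric: the process $\PTRANS[\ENV{T}']<\Gamma> \PAR \PTRANS[\ENV{M}](A)<\Gamma>$ must have been typed by \nameref{epi_simple_type_par}, and its two premises feed the induction hypothesis and Lemma~\ref{wc:lemma:type_correspondence_envm} to recover $\Gamma \vdash \ENV{T}'$ and $\Gamma \vdash_A \ENV{M}$, which are precisely the premises of \nameref{wc_type_env_envt}.

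The only point at which the argument is not purely mechanical is discharging the hypothesis $\Gamma(A) = I_A$ demanded by Lemma~\ref{wc:lemma:type_correspondence_envm}. When $\ENV{M}$ is non-empty this can be read directly off the derivation of $\Gamma \vdash_A \ENV{M}$, since rule \nameref{wc_type_env_envm} carries $\Gamma(A) = I_A$ among its premises; when $\ENV{M}$ is empty it follows from the standing well-formedness assumption above (and in that degenerate case $\Gamma \vdash_A \ENV{M}^\EMPTYSET$ and $\PTRANS[\Gamma] \vdash \NIL$ both hold regardless, so the equivalence is unaffected). I expect no further obstacles, as the recursion on the remaining table $\ENV{T}'$ is exactly the shape handled by the induction hypothesis.
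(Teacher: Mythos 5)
Your proof is correct and follows essentially the same route as the paper's: structural induction on $\ENV{T}$, with the base case handled by \nameref{wc_type_env_empty} and \nameref{epi_simple_type_nil}, and the inductive step reduced to the induction hypothesis, Lemma~\ref{wc:lemma:type_correspondence_envm}, and inversion/application of \nameref{wc_type_env_envt} and \nameref{epi_simple_type_par}. If anything, you are slightly more careful than the paper in explicitly discharging the side condition $\Gamma(A) = I_A$ required by Lemma~\ref{wc:lemma:type_correspondence_envm}, which the paper asserts without comment.
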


\begin{proof}
We have two directions to show.
For both, we proceed by induction on the structure of $\ENV{T}$.
For the forward direction, we must show that $\Gamma \vdash \ENV{T} \implies \PTRANS[\Gamma] \vdash \PTRANS[\ENV{T}]<\Gamma>$.
\begin{itemize}
  \item Suppose $\ENV{T} = \ENV{T}^\EMPTYSET$.
    Then $\Gamma \vdash \ENV{T}^\EMPTYSET$ was concluded by \nameref{wc_type_env_empty}, which is an axiom.
    The translation for environments then yields $\PTRANS[\ENV{T}^\EMPTYSET]<\Gamma> = \NIL$, and $\PTRANS[\Gamma] \vdash \NIL$ can then be concluded by \nameref{epi_simple_type_nil} for any type environment $\PTRANS[\Gamma]$.

  \item Suppose $\ENV{T} = (A, \ENV{M}), \ENV{T}'$.
    Then $\Gamma \vdash (A, \ENV{M}), \ENV{T}'$ must have been concluded by \nameref{wc_type_env_envt}, and from its premise we have that $\Gamma \vdash_A \ENV{M}$ and $\Gamma \vdash \ENV{T}'$, where $\Gamma(A) = I_A$ for some interface type $I_A$.
    By the encoding for environments, we have that 
    \begin{equation*}
      \PTRANS[(A, \ENV{M}),\ENV{T}']<\Gamma> = \PTRANS[\ENV{T}']<\Gamma> \PAR \PTRANS[\ENV{M}](A)<\Gamma> 
    \end{equation*}

    \noindent By the induction hypothesis, we have that $\PTRANS[\Gamma] \vdash \PTRANS[\ENV{T}']$ and, by Lemma~\ref{wc:lemma:type_correspondence_envm}, that $\PTRANS[\Gamma] \vdash \PTRANS[\ENV{M}](A)<\Gamma>$. Hence 
    \begin{equation*}
      \PTRANS[\Gamma] \vdash \PTRANS[\ENV{T}']<\Gamma> \PAR \PTRANS[\ENV{M}](A)<\Gamma>
    \end{equation*}

    \noindent can be concluded by \nameref{epi_simple_type_par} as desired. 
\end{itemize}

This concludes the proof for the forward direction.

For the other direction, we must show that $\PTRANS[\Gamma] \vdash \PTRANS[\ENV{T}]<\Gamma> \implies \Gamma \vdash \ENV{T}$.
\begin{itemize}
  \item Suppose $\ENV{T} = \ENV{T}^\EMPTYSET$.
    The translation for environments then yields $\PTRANS[\ENV{T}^\EMPTYSET]<\Gamma> = \NIL$, and $\PTRANS[\Gamma] \vdash \NIL$ must therefore have been concluded by \nameref{epi_simple_type_nil}, which is an axiom and holds for any type environment $\PTRANS[\Gamma]$.
    Then $\Gamma \vdash \ENV{T}^\EMPTYSET$ can be concluded by \nameref{wc_type_env_empty}.

  \item Suppose $\ENV{T} = (A, \ENV{M}), \ENV{T}'$.
    By the encoding for environments, we have that 
    \begin{equation*}
      \PTRANS[(A, \ENV{M}),\ENV{T}']<\Gamma> = \PTRANS[\ENV{T}']<\Gamma> \PAR \PTRANS[\ENV{M}](A)<\Gamma> 
    \end{equation*}

    \noindent and $\PTRANS[\Gamma] \vdash \PTRANS[\ENV{T}']<\Gamma> \PAR \PTRANS[\ENV{M}](A)<\Gamma>$ must therefore have been concluded by \nameref{epi_simple_type_par}.
    From its premise, we then know that $\PTRANS[\Gamma] \vdash \PTRANS[\ENV{T}']<\Gamma>$ and $\PTRANS[\Gamma] \vdash \PTRANS[\ENV{M}](A)<\Gamma>$.
    By the induction hypothesis on the former and Lemma~\ref{wc:lemma:type_correspondence_envm} to the latter, we obtain $\Gamma \vdash \ENV{T}'$ and $\Gamma \vdash_A \ENV{M}$.
    Then we can conclude $\Gamma \vdash (A, \ENV{M}), \ENV{T}'$ by \nameref{wc_type_env_envt}, as desired.
\end{itemize}

This concludes the proof for the other direction.
\end{proof}

\ghostsubsection{Lemma: \ENV{F} type correspondence}
\begin{lemma}\label{wc:lemma:type_correspondence_envf}
Assume $\Gamma(A) = I_A$.
Then $\Gamma \vdash_A \ENV{F} \iff \PTRANS[\Gamma] \vdash \PTRANS[\ENV{F}](A)$.
\end{lemma}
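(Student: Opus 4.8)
The plan is to establish the biconditional by induction on the structure of $\ENV{F}$, following the proof of Lemma~\ref{wc:lemma:type_correspondence_envm} step for step but with a much lighter inductive case: the translation of a field binding $(p,v)$ is simply the plain output $\OUTPUT{A \cdot p}{v}$, rather than the replicated, parameter-rebinding input used for methods. In the base case $\ENV{F} = \ENV{F}^\EMPTYSET$ the encoding gives $\PTRANS[\ENV{F}^\EMPTYSET](A) = \NIL$, so $\PTRANS[\Gamma] \vdash \NIL$ holds for every $\PTRANS[\Gamma]$ by \nameref{epi_simple_type_nil}, matching $\Gamma \vdash_A \ENV{F}^\EMPTYSET$, which holds by \nameref{wc_type_env_empty}; both directions are thus immediate.

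For the inductive step I take $\ENV{F} = (p,v), \ENV{F}'$, whose translation is $\PTRANS[\ENV{F}'](A) \PAR \OUTPUT{A \cdot p}{v}$. In the forward direction I invert \nameref{wc_type_env_envf}, obtaining $\Gamma(A) = I_A$, $\Gamma(I_A)(p) = B$, $\Gamma \vdash v : B$ and $\Gamma \vdash_A \ENV{F}'$; the induction hypothesis gives $\PTRANS[\Gamma] \vdash \PTRANS[\ENV{F}'](A)$, so it only remains to type the output by \nameref{epi_simple_type_out}. Its premises are $\PTRANS[\Gamma]; \PTRANS[\Gamma] \vdash A \cdot p : \TCHAN{B}$, the object judgement $\PTRANS[\Gamma] \vdash v : B$ (which follows from Lemma~\ref{wc:lemma:type_correspondence_val}), and $\PTRANS[\Gamma] \vdash \NIL$ for the omitted trailing $\NIL$ by \nameref{epi_simple_type_nil}. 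For the subject judgement I expand $\Gamma = A:I_A, I_A:(p:B, \Delta), \Gamma'$ and apply the encoding of $\Gamma$ to read off $\PTRANS[\Gamma](A) = I_A$, $\PTRANS[\Gamma](I_A) = (\TNIL, (p:\TIFC{B}, \PTRANS[\Delta]<2>))$ and $\PTRANS[\Gamma](\TIFC{B}) = (\TCHAN{B}, \EMPTYSET)$, so that one application of \nameref{epi_simple_type_vec1} (taking $\snd$ to descend under $A$) followed by \nameref{epi_simple_type_vec2} (taking $\fst$ at $p$) yields $A \cdot p : \TCHAN{B}$. A final \nameref{epi_simple_type_par} combines the two parallel components. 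This is exactly the $A \cdot f$ computation of Lemma~\ref{wc:lemma:type_correspondence_envm}, specialised to a field capability $\TCHAN{B}$ in place of $\TCHAN{\TRET, \VEC{B}}$.

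The reverse direction runs the same argument backwards. I invert \nameref{epi_simple_type_par} to split the hypothesis into $\PTRANS[\Gamma] \vdash \PTRANS[\ENV{F}'](A)$, from which the induction hypothesis yields $\Gamma \vdash_A \ENV{F}'$, and $\PTRANS[\Gamma] \vdash \OUTPUT{A \cdot p}{v}$, which must have been concluded by \nameref{epi_simple_type_out}. From its premises I recover $\PTRANS[\Gamma] \vdash v : B$, hence $\Gamma \vdash v : B$ by Lemma~\ref{wc:lemma:type_correspondence_val}, and $\PTRANS[\Gamma]; \PTRANS[\Gamma] \vdash A \cdot p : \TCHAN{B}$; the derivation of the latter via \nameref{epi_simple_type_vec1} and \nameref{epi_simple_type_vec2} forces $\PTRANS[\Gamma]$ to contain $A:I_A$, an entry $I_A:(\TNIL, (p:\TIFC{B}, \PTRANS[\Delta]<2>))$ and $\TIFC{B}:(\TCHAN{B}, \EMPTYSET)$. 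Reading the type-name clause of the encoding of $\Gamma$ backwards then shows $\Gamma = A:I_A, I_A:(p:B, \Delta), \Gamma'$, so $\Gamma(A) = I_A$ and $\Gamma(I_A)(p) = B$, and all four premises of \nameref{wc_type_env_envf} are satisfied, giving $\Gamma \vdash_A (p,v), \ENV{F}'$.

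I do not anticipate any real difficulty here. The only substantive ingredient is the chain of lookups establishing $A \cdot p : \TCHAN{B}$ through the multi-level encoded environment, and this is a strict simplification of the $A \cdot f$ computation already carried out for Lemma~\ref{wc:lemma:type_correspondence_envm}. The single point needing a little care is that the reverse direction reconstructs the precise shape of $\Gamma$ from the entries present in $\PTRANS[\Gamma]$ by reading the encoding's type-name clause backwards, just as in the method case.
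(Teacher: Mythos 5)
Your proposal is correct and follows essentially the same route as the paper's own proof: structural induction on $\ENV{F}$, with the inductive case handled by inverting \RULENAME{t-env$_F$} (resp.\@ \RULENAME{t-par}/\RULENAME{t-out}), using the value-correspondence lemma for the object, and establishing $\PTRANS[\Gamma]; \PTRANS[\Gamma] \vdash A \cdot p : \TCHAN{B}$ via the same chain of lookups through the encoded environment, with the reverse direction reconstructing the shape of $\Gamma$ from the entries forced in $\PTRANS[\Gamma]$. No gaps; the only divergence is cosmetic (you make the typing of the trailing $\NIL$ explicit, and you correctly use the typing rule for output where the paper's text has a slip referring to the semantic rule).
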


\begin{proof}
We have two directions to show.
For both, we proceed by induction on the structure of $\ENV{F}$.
For the forward direction, we must show that $\Gamma \vdash_A \ENV{F} \implies \PTRANS[\Gamma] \vdash \PTRANS[\ENV{F}](A)$.
\begin{itemize}
  \item Suppose $\ENV{F} = \ENV{F}^\EMPTYSET$.
    Then $\Gamma \vdash \ENV{F}^\EMPTYSET$ was concluded by \nameref{wc_type_env_empty}, which is an axiom.
    The translation for environments then yields $\PTRANS[\ENV{F}^\EMPTYSET] = \NIL$, and $\PTRANS[\Gamma] \vdash \NIL$ can then be concluded by \nameref{epi_simple_type_nil} for any type environment $\PTRANS[\Gamma]$.

  \item Suppose $\ENV{F} = (p,v), \ENV{F}'$.
    Then $\Gamma \vdash_A (p,v), \ENV{F}'$ must have been concluded by \nameref{wc_type_env_envf}, and from its premise we have that
    \begin{align*}
      \Gamma(A) = I_A \qquad
      \Gamma(I_A)(p) = B \qquad
      \Gamma \vdash v : B \qquad
      \Gamma \vdash_A \ENV{F}' 
    \end{align*}

    By the encoding of environments, we have that
    \begin{equation*}
      \PTRANS[(p,v), \ENV{F}'](A) = \PTRANS[\ENV{F}'](A) \PAR \OUTPUT{A \cdot p}{v}
    \end{equation*}

    By Lemma~\ref{wc:lemma:type_correspondence_val}, we have that $\PTRANS[\Gamma] \vdash v : B$ and, by the induction hypothesis, that $\PTRANS[\Gamma] \vdash \PTRANS[\ENV{F}']$.
    What remains to be shown is therefore that $\PTRANS[\Gamma]; \PTRANS[\Gamma] \vdash A \cdot p : \TCHAN{B}$.
    As we know that $\Gamma(A) = I_A$ and $\Gamma(I_A)(p) = B$, we can write $\Gamma$ as $A:I_A, I_A:(p : B, \Delta), \Gamma'$ for some $\Delta$.
    By the encoding of $\Gamma$, we then have that 
    \begin{equation*}
      \PTRANS[A:I_A, I_A:(p : B, \Delta), \Gamma'] = A:I_A, I_A:(\TNIL, (p:\TIFC{B}, \PTRANS[\Delta]<2>)), \TIFC{B}:(\TCHAN{B}, \EMPTYSET), \PTRANS[\Delta]<3>, \PTRANS[\Gamma']
    \end{equation*}

    \noindent and we can therefore conclude
    \begin{align*}
      (A:I_A, I_A:(\TNIL, (p:\TIFC{B}, \PTRANS[\Delta]<2>)), \TIFC{B}:(\TCHAN{B}, \EMPTYSET), \PTRANS[\Delta]<3>, \PTRANS[\Gamma'])(A)        & = I_A                                       \\
      (A:I_A, I_A:(\TNIL, (p:\TIFC{B}, \PTRANS[\Delta]<2>)), \TIFC{B}:(\TCHAN{B}, \EMPTYSET), \PTRANS[\Delta]<3>, \PTRANS[\Gamma'])(I_A)      & = (\TNIL, (p:\TIFC{B}, \PTRANS[\Delta]<2>)) \\
      \snd (\TNIL, (p:\TIFC{B}, \PTRANS[\Delta]<2>))                                                                                          & = (p:\TIFC{B}, \PTRANS[\Delta]<2>)          \\
      (p:\TIFC{B}, \PTRANS[\Delta]<2>)(p)                                                                                                     & = \TIFC{B}                                  \\
      (A:I_A, I_A:(\TNIL, (p:\TIFC{B}, \PTRANS[\Delta]<2>)), \TIFC{B}:(\TCHAN{B}, \EMPTYSET), \PTRANS[\Delta]<3>, \PTRANS[\Gamma'])(\TIFC{B}) & = (\TCHAN{B}, \EMPTYSET)                    \\
      \fst (\TCHAN{B}, \EMPTYSET)                                                                                                             & = \TCHAN{B}
    \end{align*}

    \noindent by \nameref{epi_simple_type_vec1} and \nameref{epi_simple_type_vec2}.
    Then, by \nameref{epi_simple_type_out} and  \nameref{epi_simple_type_par}, we can conclude 
    \begin{equation*}
      \PTRANS[\Gamma] \vdash \PTRANS[\ENV{F}'](A) \PAR \OUTPUT{A \cdot p}{v}
    \end{equation*}

    \noindent as desired.
\end{itemize}

This concludes the proof for the forward direction.

For the other direction, we must show that $\PTRANS[\Gamma] \vdash \PTRANS[\ENV{F}](A) \implies \Gamma \vdash_A \ENV{F}$.
\begin{itemize}
  \item Suppose $\ENV{F} = \ENV{F}^\EMPTYSET$.
    The translation for environments yields $\PTRANS[\ENV{F}^\EMPTYSET] = \NIL$, and $\PTRANS[\Gamma] \vdash \NIL$ was concluded by \nameref{epi_simple_type_nil} for any type environment $\PTRANS[\Gamma]$.
    Then $\Gamma \vdash \ENV{F}^\EMPTYSET$ can be concluded by \nameref{wc_type_env_empty}.

  \item Suppose $\ENV{F} = (p,v), \ENV{F}'$.
    By the encoding of environments, we have that
    \begin{equation*}
      \PTRANS[(p,v), \ENV{F}'](A) = \PTRANS[\ENV{F}'](A) \PAR \OUTPUT{A \cdot p}{v}
    \end{equation*}

    \noindent so $\PTRANS[\Gamma] \vdash \PTRANS[\ENV{F}'](A) \PAR \OUTPUT{A \cdot p}{v}$ must have been concluded by \nameref{epi_simple_type_par}, and from its premise we have that $\PTRANS[\Gamma] \vdash \PTRANS[\ENV{F}'](A)$ and $\PTRANS[\Gamma] \vdash \OUTPUT{A \cdot p}{v}$. by the induction hypothesis on the former, we have that $\Gamma \vdash_A \ENV{F}'$.
    The latter must have been concluded by \nameref{epi_early_out}, and from its premise we then get that $\PTRANS[\Gamma]; \PTRANS[\Gamma] \vdash A \cdot p : \TCHAN{B}$ and $\PTRANS[\Gamma] \vdash v : B$. by Lemma~\ref{wc:lemma:type_correspondence_val} to the latter, we conclude $\Gamma \vdash v : B$.
    
    What remains to be shown is that $\Gamma(A) = I_A$ and $\Gamma(I_A)(p) = B$.
    As we know that $\PTRANS[\Gamma]; \PTRANS[\Gamma] \vdash A \cdot p : \TCHAN{B}$ was concluded by \nameref{epi_simple_type_vec1} and \nameref{epi_simple_type_vec2}, we also know that $\PTRANS[\Gamma]$ must contain the entries
    \begin{equation*}
      \PTRANS[\Gamma] = A:I_A, I_A:(\TNIL, (p:\TIFC{B}, \PTRANS[\Delta]<2>)), \TIFC{B}:(\TCHAN{B}, \EMPTYSET), \PTRANS[\Delta]<3>, \PTRANS[\Gamma']
    \end{equation*}

    \noindent (see the derivation in the corresponding case in the forward direction).
    Hence, by the translation of $\Gamma$ we then have that 
    \begin{equation*}
      \PTRANS[A:I_A, I_A:(p : B, \Delta), \Gamma'] = A:I_A, I_A:(\TNIL, (p:\TIFC{B}, \PTRANS[\Delta]<2>)), \TIFC{B}:(\TCHAN{B}, \EMPTYSET), \PTRANS[\Delta]<3>, \PTRANS[\Gamma']
    \end{equation*}

    \noindent and clearly $(A:I_A, I_A:(p : B, \Delta), \Gamma')(A) = I_A$ and $(A:I_A, I_A:(p : B, \Delta), \Gamma')(I_A)(p) = B$.
    This allows us to conclude $\Gamma \vdash_A (p,v), \ENV{F}'$ by \nameref{wc_type_env_envf} as desired.
\end{itemize}

This concludes the proof for the other direction.
\end{proof}

\ghostsubsection{Lemma: \ENV{P} type correspondence}
\begin{lemma}\label{wc:lemma:type_correspondence_envs}
$\Gamma \vdash \ENV{S} \iff \PTRANS[\Gamma] \vdash \PTRANS[\ENV{S}]$.
\end{lemma}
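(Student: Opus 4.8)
The plan is to mirror the proof of Lemma~\ref{wc:lemma:type_correspondence_envt}, proceeding by induction on the structure of the state environment $\ENV{S}$ and treating the two directions of the biconditional separately. The state $\ENV{S}$ is structurally identical to the method table $\ENV{T}$ — a list of pairs $(A, \ENV{F})$ — and its encoding $\PTRANS[(A, \ENV{F}), \ENV{S}'] = \PTRANS[\ENV{S}'] \PAR \PTRANS[\ENV{F}](A)$ has exactly the same shape as that of $\ENV{T}$, with the field environment $\ENV{F}$ playing the role that $\ENV{M}$ played there. Correspondingly, the agreement rules \nameref{wc_type_env_envs} and \nameref{wc_type_env_empty} have the same form, so the same skeleton of type rules applies.

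For the base case $\ENV{S} = \ENV{S}^\EMPTYSET$, the encoding yields $\PTRANS[\ENV{S}^\EMPTYSET] = \NIL$. In the forward direction, $\Gamma \vdash \ENV{S}^\EMPTYSET$ holds by \nameref{wc_type_env_empty} and $\PTRANS[\Gamma] \vdash \NIL$ follows by \nameref{epi_simple_type_nil}; the other direction is symmetric, since \nameref{epi_simple_type_nil} holds for any environment and \nameref{wc_type_env_empty} is an axiom. For the inductive case $\ENV{S} = (A, \ENV{F}), \ENV{S}'$, the key observation is that the field component $\ENV{F}$ is handled exactly by Lemma~\ref{wc:lemma:type_correspondence_envf} (field type correspondence), which plays the role that Lemma~\ref{wc:lemma:type_correspondence_envm} played for methods in the $\ENV{T}$ case. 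In the forward direction, $\Gamma \vdash (A, \ENV{F}), \ENV{S}'$ is concluded by \nameref{wc_type_env_envs}, giving $\Gamma \vdash_A \ENV{F}$ and $\Gamma \vdash \ENV{S}'$; by the induction hypothesis $\PTRANS[\Gamma] \vdash \PTRANS[\ENV{S}']$, and by Lemma~\ref{wc:lemma:type_correspondence_envf} $\PTRANS[\Gamma] \vdash \PTRANS[\ENV{F}](A)$, so \nameref{epi_simple_type_par} concludes $\PTRANS[\Gamma] \vdash \PTRANS[\ENV{S}'] \PAR \PTRANS[\ENV{F}](A)$. The other direction decomposes the \nameref{epi_simple_type_par}-derivation into its two premises and applies the induction hypothesis and Lemma~\ref{wc:lemma:type_correspondence_envf} in reverse.

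The only point requiring a little care is the side condition $\Gamma(A) = I_A$ demanded by Lemma~\ref{wc:lemma:type_correspondence_envf}. As in the $\ENV{T}$ proof, I would observe that each class $A$ occurring in a well-formed state is a declared class and hence has an interface type $\Gamma(A) = I_A$: when $\ENV{F}$ is non-empty this is forced by the premise $\Gamma(A) = I$ of \nameref{wc_type_env_envf}, and when $\ENV{F}$ is empty both sides of the biconditional hold trivially (the encoding reduces to $\NIL$ and agreement is the axiom \nameref{wc_type_env_empty}), so no obstacle arises. Beyond this bookkeeping the argument is entirely routine — essentially a transcription of Lemma~\ref{wc:lemma:type_correspondence_envt} with $\ENV{F}$, \nameref{wc_type_env_envs}, and Lemma~\ref{wc:lemma:type_correspondence_envf} substituted for $\ENV{M}$, \nameref{wc_type_env_envt}, and Lemma~\ref{wc:lemma:type_correspondence_envm} — so I anticipate no genuine difficulty, and the discharge of the $\Gamma(A) = I_A$ side condition is the only step worth spelling out explicitly.
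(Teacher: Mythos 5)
Your proposal is correct and takes essentially the same approach as the paper's proof: structural induction on $\ENV{S}$ with the two directions treated separately, using Lemma~\ref{wc:lemma:type_correspondence_envf} for the field component, the induction hypothesis for the tail, and the rules \nameref{wc_type_env_envs}, \nameref{wc_type_env_empty}, \nameref{epi_simple_type_par} and \nameref{epi_simple_type_nil} to glue the pieces together. If anything, your explicit discharge of the side condition $\Gamma(A) = I_A$ (via the premise of \nameref{wc_type_env_envf} when $\ENV{F}$ is non-empty, and vacuously when it is empty) is spelled out more carefully than in the paper, which simply asserts it.
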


\begin{proof}
We have two directions to show.
For both, we proceed by induction on the structure of $\ENV{S}$.
For the forward direction, we must show that $\Gamma \vdash \ENV{S} \implies \PTRANS[\Gamma] \vdash \PTRANS[\ENV{S}]$.
\begin{itemize}
   \item Suppose $\ENV{S} = \ENV{S}^\EMPTYSET$.
    Then $\Gamma \vdash \ENV{S}^\EMPTYSET$ was concluded by \nameref{wc_type_env_empty}, which is an axiom.
    The translation for environments then yields $\PTRANS[\ENV{S}^\EMPTYSET] = \NIL$, and $\PTRANS[\Gamma] \vdash \NIL$ can then be concluded by \nameref{epi_simple_type_nil} for any type environment $\PTRANS[\Gamma]$.

  \item Suppose $\ENV{S} = (A, \ENV{F}, \ENV{S}')$.
    Then $\Gamma \vdash (A, \ENV{F}), \ENV{S}'$ must have been concluded by \nameref{wc_type_env_envs}, and from its premise we have that $\Gamma \vdash_A \ENV{F}$ and $\Gamma \vdash \ENV{S}'$, where $\Gamma(A) = I_A$ for some interface type $I_A$.
    By the encoding of environments, we have that 
    \begin{equation*}
      \PTRANS[(A, \ENV{F}),\ENV{S}']  = \PTRANS[\ENV{S}'] \PAR \PTRANS[\ENV{F}](A)
    \end{equation*}

    By the induction hypothesis, we have that $\PTRANS[\Gamma] \vdash \PTRANS[\ENV{S}']$ and, by Lemma~\ref{wc:lemma:type_correspondence_envf}, that $\PTRANS[\Gamma] \vdash \PTRANS[\ENV{F}](A)$. 
    Hence 
    \begin{equation*}
      \PTRANS[\Gamma] \vdash \PTRANS[\ENV{S}']<\Gamma> \PAR \PTRANS[\ENV{F}](A)
    \end{equation*}

    \noindent can be concluded by \nameref{epi_simple_type_par} as desired.
\end{itemize}

This concludes the proof for the forward direction.

For the other direction, we must show that $\PTRANS[\Gamma] \vdash \PTRANS[\ENV{S}] \implies \Gamma \vdash \ENV{S}$.
\begin{itemize}
   \item Suppose $\ENV{S} = \ENV{S}^\EMPTYSET$.
    Then $\PTRANS[\ENV{S}^\EMPTYSET] = \NIL$, and $\PTRANS[\Gamma] \vdash \NIL$ was concluded by \nameref{epi_simple_type_nil}, which is an axiom and holds for any \PTRANS[\Gamma].
    Then $\Gamma \vdash \ENV{S}^\EMPTYSET$ can be concluded by \nameref{wc_type_env_empty}.

  \item Suppose $\ENV{S} = (A, \ENV{F}, \ENV{S}')$.
    Then $\PTRANS[(A, \ENV{F}),\ENV{S}']  = \PTRANS[\ENV{S}'] \PAR \PTRANS[\ENV{F}](A)$, and 
    \begin{equation*}
      \PTRANS[\Gamma] \vdash \PTRANS[(A, \ENV{F}),\ENV{S}']  = \PTRANS[\ENV{S}'] \PAR \PTRANS[\ENV{F}](A) 
    \end{equation*}

    \noindent was concluded by \nameref{epi_simple_type_par}, and from its premise we know that $\PTRANS[\Gamma] \vdash \PTRANS[\ENV{S}']$ and $\PTRANS[\Gamma] \vdash \PTRANS[\ENV{F}](A)$.
    By the induction hypothesis on the former and Lemma~\ref{wc:lemma:type_correspondence_envf} on the latter, 
    we have that $\Gamma \vdash \ENV{S}'$ and $\Gamma \vdash_A \ENV{F}$.
    Then we can conclude $\Gamma \vdash (A, \ENV{F}), \ENV{S}'$ by \nameref{wc_type_env_envs}, as desired.
\end{itemize}

This concludes the proof for the other direction.
\end{proof}

\ghostsubsection{Lemma: \ENV{V} type correspondence}
\begin{lemma}\label{wc:lemma:type_correspondence_envv}
Assume $\PTRANS[\Gamma] \vdash P$ for some process $P$.
Then $\Gamma \vdash \ENV{V} \iff \PTRANS[\Gamma] \vdash \PTRANS[\ENV{V}]<\Gamma>\HOLE[P]$.
\end{lemma}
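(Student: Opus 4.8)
The plan is to prove the biconditional by induction on the structure of the variable environment $\ENV{V}$, carrying both directions of the $\iff$ through each case simultaneously. Throughout I would unfold the reference sugar, recalling that $\LOCNEW{x:\TIFC{B}}{v}Q = \NEW{x:\TIFC{B}}(\OUTPUT{x}{v} \PAR Q)$, so that
\begin{equation*}
  \PTRANS[(x,v),\ENV{V}']<\Gamma>\HOLE[P] = \NEW{x:\TIFC{B}}\PAREN{ \OUTPUT{x}{v} \PAR \PTRANS[\ENV{V}']<\Gamma>\HOLE[P] }
\end{equation*}
where $B = \Gamma(x)$, while for the empty environment $\PTRANS[\ENV{V}^\EMPTYSET]<\Gamma>\HOLE[P] = P$. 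The standing hypothesis $\PTRANS[\Gamma] \vdash P$ is exactly what is needed to discharge the base case.

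In the base case $\ENV{V} = \ENV{V}^\EMPTYSET$ the encoding collapses to $P$: forward, $\Gamma \vdash \ENV{V}^\EMPTYSET$ holds by \nameref{wc_type_env_empty} and the conclusion $\PTRANS[\Gamma] \vdash P$ is the hypothesis; backward, $\Gamma \vdash \ENV{V}^\EMPTYSET$ again holds by \nameref{wc_type_env_empty}, irrespective of the premise. In the inductive case $\ENV{V} = (x,v),\ENV{V}'$, for the forward direction I would invert \nameref{wc_type_env_envv} to extract $\Gamma(x) = B$, $\Gamma \vdash v : B$ and $\Gamma \vdash \ENV{V}'$; applying the induction hypothesis to $\ENV{V}'$ (with the same $P$) yields $\PTRANS[\Gamma] \vdash \PTRANS[\ENV{V}']<\Gamma>\HOLE[P]$. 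I then type the output: since the encoding of $\Gamma$ places $x:\TIFC{B}$ and $\TIFC{B}:(\TCHAN{B},\EMPTYSET)$ in $\PTRANS[\Gamma]$, rule \nameref{epi_simple_type_vec2} gives $\PTRANS[\Gamma];\PTRANS[\Gamma] \vdash x : \TCHAN{B}$, and Lemma~\ref{wc:lemma:type_correspondence_val} turns $\Gamma \vdash v : B$ into $\PTRANS[\Gamma] \vdash v : B$, so \nameref{epi_simple_type_out} types $\OUTPUT{x}{v}$. Combining these via \nameref{epi_simple_type_par} and then \nameref{epi_simple_type_res} reconstructs the restricted parallel composition. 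The backward direction runs the same derivation in reverse: invert \nameref{epi_simple_type_res}, \nameref{epi_simple_type_par} and \nameref{epi_simple_type_out}, read off $\Gamma(x) = B$ from the mandatory container entry in $\PTRANS[\Gamma]$, recover $\Gamma \vdash v : B$ via Lemma~\ref{wc:lemma:type_correspondence_val}, obtain $\Gamma \vdash \ENV{V}'$ from the induction hypothesis applied to the residual context judgement, and finally assemble $\Gamma \vdash (x,v),\ENV{V}'$ with \nameref{wc_type_env_envv}.

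The one delicate point is the bound name $x$ introduced by the restriction $\NEW{x:\TIFC{B}}$. Because $\Gamma \vdash \ENV{V}$ (and indeed the well-definedness of the annotation $\TIFC{\Gamma(x)}$) forces $\Gamma(x) = B$, the entry $x:\TIFC{B}$ is already present in $\PTRANS[\Gamma]$; hence applying \nameref{epi_simple_type_res} extends $\PTRANS[\Gamma]$ by an entry it already contains, so $\PTRANS[\Gamma], x:\TIFC{B}$ is literally $\PTRANS[\Gamma]$. This is what lets the induction hypothesis, stated with respect to $\PTRANS[\Gamma]$, line up with the body typed under the restriction with no mismatch. If one prefers not to lean on this idempotency, the same gap is bridged by Lemma~\ref{wc:lemma:epi_simple_type_weakening} (to move the IH judgement under the restriction, forward) and Lemma~\ref{wc:lemma:epi_simple_type_strengthening} (to remove $x$ again, backward), invoking the variable convention to treat $x$ as fresh for the continuation, with a trivial analogue for value judgements handling $\PTRANS[\Gamma], x:\TIFC{B} \vdash v : B$. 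This name-management step --- reconciling the restriction-bound $x$ with the $\PTRANS[\Gamma]$-entry of the same name --- is the only real obstacle; everything else is a mechanical matching of the encoding against the \EPI{} type rules.
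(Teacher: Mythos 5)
Your proof is correct and takes essentially the same route as the paper's: structural induction on $\ENV{V}$, typing each encoded entry via \nameref{epi_simple_type_vec2}, \nameref{epi_simple_type_out}, \nameref{epi_simple_type_par} and \nameref{epi_simple_type_res} together with Lemma~\ref{wc:lemma:type_correspondence_val}, and resolving the bound-name issue exactly as the paper's ``technical detail'' remark does, namely by observing that $x:\TIFC{B}$ is already an entry of $\PTRANS[\Gamma]$ so that $\PTRANS[\Gamma], x:\TIFC{B}$ coincides with $\PTRANS[\Gamma]$ (the paper's backward direction merely flattens $\ENV{V}$ into a single $n$-ary restriction with $n$ parallel outputs instead of recurring entry by entry, which is an inessential presentational difference). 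One caution: your proposed fallback via Lemma~\ref{wc:lemma:epi_simple_type_weakening}/Lemma~\ref{wc:lemma:epi_simple_type_strengthening} and the variable convention would \emph{not} work, because $x$ is in general free in $P$ --- the encoded statement is deliberately captured by these restrictions --- so the freshness side conditions of those lemmata fail, and $x$ cannot be $\alpha$-renamed away without also rewriting $P$ itself; the idempotency argument you give as your primary route needs no such repair and is the one to keep.
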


\begin{proof}
We have two directions to show.
For both, we proceed by induction on the structure of $\ENV{V}$.
For the forward direction, we must show that $\Gamma \vdash \ENV{V} \implies \PTRANS[\Gamma] \vdash \PTRANS[\ENV{V}]<\Gamma>\HOLE[P]$.
\begin{itemize}
  \item Suppose $\ENV{V} = \ENV{V}^\EMPTYSET$.
    Then $\Gamma \vdash \ENV{V}^\EMPTYSET$ was concluded by \nameref{wc_type_env_empty}, which is an axiom.
    The translation for environments then yields $\PTRANS[\ENV{V}^\EMPTYSET]<\Gamma> = \HOLE$, hence $\PTRANS[\ENV{V}^\EMPTYSET]<\Gamma>\HOLE[P] = P$, and $\PTRANS[\Gamma] \vdash P$ holds by assumption.
  
  \item Suppose $\ENV{V} = (x,v), \ENV{V}'$.
    Then $\Gamma \vdash (x,v), \ENV{V}'$ must have been concluded by \nameref{wc_type_env_envv}, and from its premise we have that $\Gamma(x) = B$ and $\Gamma \vdash v : B$ and $\Gamma \vdash \ENV{V}'$.

    By the encoding of environments, we have that 
    \begin{align*}
         & \PTRANS[(x,v), \ENV{V}']<\Gamma>                                       \\
      = ~& \LOCNEW{x :\TIFC{\Gamma(x)}}{v}\PAREN{ \PTRANS[\ENV{V}']<\Gamma> }     \\
      = ~& \NEW{x:\TIFC{B}}\PAREN{ \OUTPUT{x}{v} \PAR \PTRANS[\ENV{V}']<\Gamma> } 
    \end{align*}

    \noindent and since $\Gamma(x) = B$, we can expand $\Gamma$ as $\Gamma = x:B, \Gamma'$.
    The translation for $\Gamma$ then yields
    \begin{equation*}
      \PTRANS[x:B, \Gamma'] = x:\TIFC{B}, \TIFC{B}:(\TCHAN{B}, \EMPTYSET), \PTRANS[\Gamma']
    \end{equation*}

    \noindent so we know that the interface definition $\TIFC{B}:(\TCHAN{B}, \EMPTYSET)$ is present in $\PTRANS[\Gamma]$.
    Thus we can conclude
    \begin{align*}
      \PTRANS[\Gamma]             & \vdash \PTRANS[\ENV{V}']<\Gamma>\HOLE[P]                                       & \text{by the induction hypothesis} \\
      \PTRANS[\Gamma], x:\TIFC{B} & \vdash \OUTPUT{x}{v} \PAR \PTRANS[\ENV{V}']<\Gamma>\HOLE[P]                    & \text{by \nameref{epi_simple_type_out} and \nameref{epi_simple_type_par}} \\
      \PTRANS[\Gamma]             & \NEW{x:\TIFC{B}}\PAREN{ \OUTPUT{x}{v} \PAR \PTRANS[\ENV{V}']<\Gamma>\HOLE[P] } & \text{by \nameref{epi_simple_type_res}}
    \end{align*}

    \noindent as desired.

    A technical detail:
    The entry $x:\TIFC{B}$ still appears in $\PTRANS[\Gamma]$, even though the name is bound in $\PTRANS[\ENV{V}]<\Gamma>$, and can thus be removed afterwards.
    This happens, because the actual type assignment is transferred to the encoding of $\ENV{V}$ via the parameter $\Gamma$.
    Hence, even though there might exist a different $\Gamma'$, such that $\PTRANS[\Gamma']$ does not contain the entry $x:\TIFC{V}$, and $\PTRANS[\Gamma'] \vdash \PTRANS[\ENV{V}]<\Gamma>\HOLE[P]$ still holds, then $\PTRANS[\Gamma'] \vdash \PTRANS[\ENV{V}]<\Gamma'>\HOLE[P]$ would \emph{not} hold.
\end{itemize}

This concludes the proof for the forward direction.

For the other direction, we must show that $\PTRANS[\Gamma] \vdash \PTRANS[\ENV{V}]<\Gamma>\HOLE[P] \implies \Gamma \vdash \ENV{V}$.
\begin{itemize}
  \item Suppose $\ENV{V} = \ENV{V}^\EMPTYSET$, so $\PTRANS[\ENV{V}^\EMPTYSET]<\Gamma> = \HOLE$, and $\PTRANS[\Gamma] \vdash \PTRANS[\ENV{V}^\EMPTYSET]<\Gamma>\HOLE[P]$ therefore holds by assumption.
    Then $\Gamma \vdash \ENV{V}^\EMPTYSET$ can be concluded by \nameref{wc_type_env_empty} for any $\Gamma$.

  \item Suppose $\ENV{V} = (x_1,v_1), \ldots, (x_n, v_n), \ENV{V}^\EMPTYSET$, so $\PTRANS[\Gamma] \vdash \PTRANS[(x_1,v_1), \ldots, (x_n, v_n), \ENV{V}^\EMPTYSET]<\Gamma>\HOLE[P]$.
    By the encoding of $\ENV{V}$, we have that it is translated as a process context consisting of a list of scopes and bound outputs
    \begin{equation*}
      \NEW{x_1 : \TIFC{\Gamma(x_1)}, \ldots, x_n : \TIFC{\Gamma(x_n)}}\PAREN{ \OUTPUT{x_1}{v_1} \PAR \ldots \PAR \OUTPUT{x_n}{v_n} \PAR \HOLE }
    \end{equation*}

    \noindent So, $\PTRANS[\Gamma] \vdash \PTRANS[\ENV{V}]<\Gamma>\HOLE[P]$ is concluded by \nameref{epi_simple_type_res} and \nameref{epi_simple_type_par}, and from the premise of the latter, we get that 
    \begin{equation*}
      \PTRANS[\Gamma], x_1 : \TIFC{\Gamma(x_1)}, \ldots, x_n : \TIFC{\Gamma(x_n)} \vdash \OUTPUT{x_1}{v_1} \PAR \ldots \PAR \OUTPUT{x_n}{v_n} \PAR P
    \end{equation*}

    However, by Lemma~\ref{wc:lemma:type_correspondence_expr} applied to $\PTRANS[\Gamma], x:\TIFC{\Gamma(x)} \vdash x$, we also know that $\Gamma, x:\Gamma(x) \vdash x:\Gamma(x)$, hence $\Gamma$ must already contain a type entry $B$ for $x$, which is used in the encoding of $\ENV{V}$ given above.
    Then by the encoding of $\Gamma$
    \begin{equation*}
      \PTRANS[x_1:B_1, \ldots, x_n:B_n, \Gamma'] = x_1:\TIFC{B_1}, \TIFC{B_1}:(\TCHAN{B_1}, \EMPTYSET), \ldots, x_n:\TIFC{B_n}, \TIFC{B_n}:(\TCHAN{B_n}, \EMPTYSET), \PTRANS[\Gamma']
    \end{equation*}

    \noindent so the type entries $x_1:\TIFC{B_1}, \ldots, x_n:\TIFC{B_n}$ are already in $\PTRANS[\Gamma]$.

    Also by the premise of \nameref{epi_simple_type_out}, we have that $\PTRANS[\Gamma] \vdash v_i : B_i$ for each of the object values, where the corresponding subject has type $\TIFC{B_i}$, since the definition of each such interface is $\TIFC{B_i} \mapsto (\TCHAN{B_i}, \EMPTYSET)$.

    Then, by the translation of $\Gamma$, we have that entries $x_i:B_i$ are translated as $x_i:\TIFC{B_i}$, so we know that $\Gamma$ can be written as $\Gamma = x_1:B_1, \ldots, x_n:B_n, \Gamma'$.
    Then by repeated applications of \nameref{wc_type_env_envv}, we can therefore conclude that 
    \begin{equation*}
      x_1:B_1, \ldots, x_n:B_n, \Gamma' \vdash (x_1,v_1), \ldots, (x_n,v_n), \ENV{V}^\EMPTYSET
    \end{equation*}

    \noindent as desired.
\end{itemize}

This concludes the proof for the other direction.
\end{proof}

\ghostsubsection{Proof of the Type Correspondence Theorem}
Finally, we can prove the type correspondence theorem:
\begin{proof}[Proof of Theorem~\ref{wc:thm:type_correspondence}]
We have two directions to prove.
For the forward direction, we must show that $\Gamma \vdash \ENV{T}$ and $\Gamma \vdash \ENV{S}$ and $\Gamma \vdash \ENV{V}$ and $\Gamma \vdash S$ together imply that $\PTRANS[\Gamma] \vdash \PTRANS[\CONF{S, \ENV{TSV}}]<\Gamma>$.
The translation for initial configurations yields:
\begin{equation*}
  \PTRANS[\CONF{S, \ENV{TSV}}]<\Gamma> = \PTRANS[\ENV{T}]<\Gamma> \PAR \PTRANS[\ENV{S}] \PAR \NEW{r : \TRET}\PAREN{ \PTRANS[\ENV{V}]<\Gamma>\HOLE[{\PTRANS[S](r)<\Gamma>}] \PAR \INPUT{r}{}.\NIL }
\end{equation*}

Assume $\PTRANS[\Gamma] \vdash P$ for some process $P$.
We then have the following:
\begin{align*}
  \Gamma \vdash S       & \implies \PTRANS[\Gamma], r:\TRET \vdash \PTRANS[S](r)<\Gamma>   & \text{by Lemma~\ref{wc:lemma:type_correspondence_stm}}  \\
  \Gamma \vdash \ENV{T} & \implies \PTRANS[\Gamma] \vdash \PTRANS[\ENV{T}]<\Gamma>         & \text{by Lemma~\ref{wc:lemma:type_correspondence_envt}} \\
  \Gamma \vdash \ENV{S} & \implies \PTRANS[\Gamma] \vdash \PTRANS[\ENV{S}]                 & \text{by Lemma~\ref{wc:lemma:type_correspondence_envs}} \\
  \Gamma \vdash \ENV{V} & \implies \PTRANS[\Gamma] \vdash \PTRANS[\ENV{V}]<\Gamma>\HOLE[P] & \text{by Lemma~\ref{wc:lemma:type_correspondence_envv}}
\end{align*}

We can then conclude the following:
\begin{align*}
  \PTRANS[\Gamma] & \vdash \NEW{r:\TRET}\PAREN{ \PTRANS[S](r)<\Gamma> \PAR \INPUT{r}{}.\NIL }  & \text{by \nameref{epi_simple_type_in}, \nameref{epi_simple_type_par}, \nameref{epi_simple_type_res}} \\
  \PTRANS[\Gamma] & \vdash \PTRANS[\ENV{V}]<\Gamma>\HOLE[{ \NEW{r:\TRET}\PAREN{ \PTRANS[S](r)<\Gamma> \PAR \INPUT{r}{}.\NIL } }] \\
  \PTRANS[\Gamma] & \vdash \NEW{r:\TRET}\PAREN{ \PTRANS[\ENV{V}]<\Gamma>\HOLE[{ \PTRANS[S](r)<\Gamma> }] \PAR \INPUT{r}{}.\NIL } 
\end{align*}

\noindent where the last step follows, because we know that $\FRESH{r}\PTRANS[\ENV{V}]<\Gamma>$, so the scope can be extruded.
Then we can conclude that 
\begin{equation*}
  \PTRANS[\Gamma] \vdash \PTRANS[\ENV{T}]<\Gamma> \PAR \PTRANS[\ENV{S}] \PAR \NEW{r : \TRET}\PAREN{ \PTRANS[\ENV{V}]<\Gamma>\HOLE[{\PTRANS[S](r)<\Gamma>}] \PAR \INPUT{r}{}.\NIL }
\end{equation*}

\noindent by \nameref{epi_simple_type_res} and \nameref{epi_simple_type_par}, as desired.

For the other direction, we must show that $\PTRANS[\Gamma] \vdash \PTRANS[\CONF{S, \ENV{TSV}}]<\Gamma>$ implies that $\Gamma \vdash \ENV{T}$ and $\Gamma \vdash \ENV{S}$ and $\Gamma \vdash \ENV{V}$ and $\Gamma \vdash S$.
First, we unfold the translation of the initial configuration, which yields:
\begin{equation*}
  \PTRANS[\CONF{S, \ENV{TSV}}]<\Gamma> = \PTRANS[\ENV{T}]<\Gamma> \PAR \PTRANS[\ENV{S}] \PAR \NEW{r : \TRET}\PAREN{ \PTRANS[\ENV{V}]<\Gamma>\HOLE[{\PTRANS[S](r)<\Gamma>}] \PAR \INPUT{r}{}.\NIL }
\end{equation*}

\noindent and thus we know that 
\begin{equation*}
  \PTRANS[\Gamma] \vdash \PTRANS[\ENV{T}]<\Gamma> \PAR \PTRANS[\ENV{S}] \PAR \NEW{r : \TRET}\PAREN{ \PTRANS[\ENV{V}]<\Gamma>\HOLE[{\PTRANS[S](r)<\Gamma>}] \PAR \INPUT{r}{}.\NIL }
\end{equation*}

\noindent which must have been concluded by the rules \nameref{epi_simple_type_in}, \nameref{epi_simple_type_par} and \nameref{epi_simple_type_res}.
From their premises, we get  
\begin{align*}
  \PTRANS[\Gamma] \vdash \PTRANS[\ENV{T}]<\Gamma>  
  \qquad
  \PTRANS[\Gamma] \vdash \PTRANS[\ENV{S}] 
  \qquad
  \PTRANS[\Gamma] \vdash \NEW{r:\TRET}\PAREN{ \PTRANS[\ENV{V}]<\Gamma>\HOLE[{ \PTRANS[S](r)<\Gamma> }] \PAR \INPUT{r}{}.\NIL }
\end{align*}
As we know that the name $r$ does not occur in $\PTRANS[\ENV{V}]<\Gamma>$, since the name is introduced by the translation, we can also conclude that $\PTRANS[\Gamma] \vdash \PTRANS[\ENV{V}]<\Gamma>\HOLE[{ \NEW{r:\TRET}\PAREN{ \PTRANS[S](r)<\Gamma>  \PAR \INPUT{r}{}.\NIL } }]$ by intruding the scope.

By the encoding of $\ENV{V}$, we have that it is translated as a process context consisting of a list of scopes and bound outputs
\begin{equation*}
  \NEW{x_1 : \TIFC{\Gamma(x_1)}, \ldots, x_n : \TIFC{\Gamma(x_n)}}\PAREN{ \OUTPUT{x_1}{v_1} \PAR \ldots \PAR \OUTPUT{x_n}{v_n} \PAR \HOLE }
\end{equation*}

\noindent so $\PTRANS[\Gamma] \vdash \PTRANS[\ENV{V}]<\Gamma>\HOLE[{ \NEW{r:\TRET}\PAREN{ \PTRANS[S](r)<\Gamma>  \PAR \INPUT{r}{}.\NIL } }]$ is concluded by \nameref{epi_simple_type_res} and \nameref{epi_simple_type_par}, and from the premise of the latter, we get that 
\begin{equation*}
  \PTRANS[\Gamma], x_1 : \TIFC{\Gamma(x_1)}, \ldots, x_n : \TIFC{\Gamma(x_n)} \vdash \NEW{r:\TRET}\PAREN{ \PTRANS[S](r)<\Gamma>  \PAR \INPUT{r}{}.\NIL } 
\end{equation*}

However, by Lemma~\ref{wc:lemma:type_correspondence_expr}applied to $\PTRANS[\Gamma], x:\TIFC{\Gamma(x)} \vdash x$, we also know that $\Gamma, x:\Gamma(x) \vdash x:\Gamma(x)$, hence $\Gamma$ must already contain a type entry $B$ for $x$, which is used in the encoding of $\ENV{V}$ given above.
Then by the encoding of $\Gamma$
\begin{equation*}
  \PTRANS[x:B, \Gamma'] = x:\TIFC{B}, \TIFC{B}:(\TCHAN{B}, \EMPTYSET), \PTRANS[\Gamma']
\end{equation*}

\noindent so the type entries $x:\TIFC{B}$ are already in $\PTRANS[\Gamma]$.
Thus, we can simplify the statement above as 
\begin{equation*}
  \PTRANS[\Gamma] \vdash \NEW{r:\TRET}\PAREN{ \PTRANS[S](r)<\Gamma>  \PAR \INPUT{r}{}.\NIL } 
\end{equation*}

\noindent which must be inferred by \nameref{epi_simple_type_res}, and from its premise, we have that 
\begin{equation*}
  \PTRANS[\Gamma], r:\TRET \vdash \PTRANS[S](r)<\Gamma>  \PAR \INPUT{r}{}.\NIL 
\end{equation*}

Now, let $P \DEFSYM \NEW{r:\TRET}\PAREN{ \PTRANS[S](r)<\Gamma>  \PAR \INPUT{r}{}.\NIL }$ and note that we know that $\PTRANS[\Gamma] \vdash P$.
Finally, we have that 
\begin{align*}
  \PTRANS[\Gamma] \vdash \PTRANS[\ENV{T}]<\Gamma>         & \implies \Gamma \vdash \ENV{T} & \text{by Lemma~\ref{wc:lemma:type_correspondence_envt}} \\
  \PTRANS[\Gamma] \vdash \PTRANS[\ENV{S}]                 & \implies \Gamma \vdash \ENV{S} & \text{by Lemma~\ref{wc:lemma:type_correspondence_envs}} \\
  \PTRANS[\Gamma], r:\TRET \vdash \PTRANS[S](r)<\Gamma>   & \implies \Gamma \vdash S       & \text{by Lemma~\ref{wc:lemma:type_correspondence_stm}}  \\
  \PTRANS[\Gamma] \vdash \PTRANS[\ENV{V}]<\Gamma>\HOLE[P] & \implies \Gamma \vdash \ENV{V} & \text{by Lemma~\ref{wc:lemma:type_correspondence_envv}}
\end{align*}

\noindent which allows us to conclude that $\Gamma \vdash \ENV{T}$, $\Gamma \vdash \ENV{S}$, $\Gamma \vdash \ENV{V}$ and $\Gamma \vdash S$, as desired.
\end{proof}

\subsection{On the quality of the encoding}
Theorem~\ref{wc:thm:type_correspondence} above only speaks of the correspondence between the two type systems, not of the correctness of the encoding itself.
However, once we define an encoding, it is natural to also inquire about its quality, since there is by now a well-established literature on the topic (see~\cite{G12,G18,GORLA,gorla_nestmann2014_full_abstraction,Par08,PG15}, just to mention the most methodological papers).
We conjecture that the encoding presented in this paper satisfies most of the properties a `good encoding' should have, in particular operational correspondence and divergence sensitiveness. 
However, proving these properties for an encoding between languages  that rely on different styles of semantics (small-step for \EPI{} vs.\@ big-step for \WCLANG) is a non-trivial task.
For example, the definition of when a statement diverges under a big-step operational semantics is in general not straightforward (see, e.g., \cite{Char13,LG09,PM14} on this topic). 
We realise that for \WCLANG{}, divergence coincides with not having any (big-step) evolution; however, even with this simplifying feature, proving divergence reflection turned out to be technically challenging. 
Since the quality of our encoding is an orthogonal issue to the main aim of this paper (that is the study of typing composite names), we prefer to  leave this investigation for future work. 
One possibility is to keep the big-step semantics for \WCLANG{} and follow the path put forward in~\cite{DBZD20}; alternatively, we can abandon the (more straightforward) big-step semantics in favour of a small-step one, which would simplify the proofs for the quality of the encoding at the price of complicating those for subject reduction in \WCLANG.

\section{Conclusion and future works}\label{wc:sec:conclusion}
In the present paper we have explored the question of how to create a type system with subtyping for a language with composite channels, when such channels can be composed at runtime.
The core of our type system is equivalent to the nominal type system given by Carbone in \cite[Chapt. 6.5]{carbone2005phd}, but the structure of our composite types is different, which yields a more intuitive correspondence with the concept of interfaces known from OO languages, as is made clear through our encoding of a small, class-based language.
In that sense, our work is in line with the remark by Milner in \cite[p.\@ 120]{milner1992functions}, where he notes the connexion between the \PI-calculus and object-oriented programming: both are characterised by transmitting \emph{access} to agents (resp.\@ objects), rather than the agents themselves.
The polyadic synchronisation of \EPI{} makes this connexion even stronger, as evidenced by the simplicity of our encoding, since it yields a natural way to represent fields and methods, scoped under a common class name, as synchronisation vectors with a common prefix.
Our `tree-shaped' types then follow naturally from this similarity.

As mentioned in Section~\ref{wc:sec:related_works}, our work relates to that of Hüttel \cite{huttel2011typed, huttel2013resourcespsi, huttel2016sessionpsi}, since these papers did not provide a general solution to the question of how to type composite subjects.
It is therefore also worth emphasising here that our type system can be used to type the encoding of D\PI, even in the cases where the location name or the located subject name might be bound.
This can be done by creating types for the locations $l$ of the form
\begin{equation*}
  I_l \mapsto (\TNIL, (I_{1} \mapsto (\TCHAN{B_1}, \EMPTYSET), \ldots, I_{n} \mapsto (\TCHAN{B_n}, \EMPTYSET)))
\end{equation*}

\noindent and with the types $I_{i}$ assigned to the subjects $x_i$ occurring at location $l$.
This closely corresponds to our encoding of the \WCLANG-types, with location names $l$ corresponding to class names $A$.
The \WCLANG-language could in principle also have been encoded directly in D\PI, since the encoding only makes use of the $\PI^2$ fragment of \EPI.
However, \WCLANG{} could easily be extended to allow nested class declarations, as is found in some object-oriented languages, i.e.\@ with declarations of the form 
\begin{equation*}
  DC \DCLSYM \epsilon \ORSYM \code{class $A$ \{ $DC$ $DF$ $DM$ \}} 
\end{equation*}

\noindent which would then necessitate subject vectors of arbitrary length to encode an arbitrary nesting depth.
In the present paper, we have forgone this possibility for the sake of simplicity.
Nevertheless, our correspondence results contribute to further the understanding of how a typed, object-oriented language may be represented in the setting of typed process calculi, whilst still preserving some of the structure afforded by interface types in such languages.

A natural further step in this direction would be to consider subtyping, as was done for the \PI-calculus by Pierce and Sangiorgi \cite{PICAPABLE}, who distinguish between the input and output capabilities of a channel.
Indeed, such a subtyping relation is also assumed (to be provided as a parameter) in the type systems by Hüttel for the \PSI-calculus, but none of the instance examples provide any hints as to how it might be defined.
The type system for \EPI{} in \cite{carbone2005phd} also does not have a subtyping relation, so the question of how it might be defined is entirely open.

Input/output capabilities can of course easily be introduced for the channel capability component $C$ of our types, so the interesting question in this regard is rather what a subtyping relation might look like for the \emph{composition capability} component $\Delta$.
Again, the correspondence with object-oriented languages may provide some intuitions.
For example, it would seem natural to require that the composition capabilities of the subtype $\Delta_1$ should be \emph{at least} the same as the supertype $\Delta_2$; or that each type name $I$ in $\Delta_1$ again should be a subtype of an entry in $\Delta_2$.
This would correspond to the usual understanding from object-oriented languages, where an object $O_1$ is a subtype of another object $O_2$, if $O_1$ contains at least the same public fields and methods as $O_2$, and the type of each such field, resp.\@ method, in $O_1$ again is a subtype of the type of the corresponding field, resp.\@ method, in $O_2$.

\bibliographystyle{eptcs}
\bibliography{literature}

\begin{thebibliography}{10}
\providecommand{\bibitemdeclare}[2]{}
\providecommand{\surnamestart}{}
\providecommand{\surnameend}{}
\providecommand{\urlprefix}{Available at }
\providecommand{\url}[1]{\texttt{#1}}
\providecommand{\href}[2]{\texttt{#2}}
\providecommand{\urlalt}[2]{\href{#1}{#2}}
\providecommand{\doi}[1]{doi:\urlalt{https://doi.org/#1}{#1}}
\providecommand{\eprint}[1]{arXiv:\urlalt{https://arxiv.org/abs/#1}{#1}}
\providecommand{\bibinfo}[2]{#2}

\bibitemdeclare{inproceedings}{bengtson2009psi}
\bibitem{bengtson2009psi}
\bibinfo{author}{Jesper \surnamestart Bengtson\surnameend},
  \bibinfo{author}{Magnus \surnamestart Johansson\surnameend},
  \bibinfo{author}{Joachim \surnamestart Parrow\surnameend} \&
  \bibinfo{author}{Bj{\"o}rn \surnamestart Victor\surnameend}
  (\bibinfo{year}{2009}): \emph{\bibinfo{title}{Psi-calculi: Mobile processes,
  nominal data, and logic}}.
\newblock In: {\slshape \bibinfo{booktitle}{2009 24th Annual IEEE Symposium on
  Logic In Computer Science}}, \bibinfo{organization}{IEEE}, pp.
  \bibinfo{pages}{39--48}, \doi{10.1016/S1571-0661(05)80361-5}.

\bibitemdeclare{article}{bengtson2011psi}
\bibitem{bengtson2011psi}
\bibinfo{author}{Jesper \surnamestart Bengtson\surnameend},
  \bibinfo{author}{Magnus \surnamestart Johansson\surnameend},
  \bibinfo{author}{Joachim \surnamestart Parrow\surnameend} \&
  \bibinfo{author}{Björn \surnamestart Victor\surnameend}
  (\bibinfo{year}{2011}): \emph{\bibinfo{title}{{Psi-calculi: a framework for
  mobile processes with nominal data and logic}}}.
\newblock {\slshape \bibinfo{journal}{{Logical Methods in Computer Science}}}
  \bibinfo{volume}{{Volume 7, Issue 1}}, \doi{10.2168/LMCS-7(1:11)2011}.
\newblock \urlprefix\url{https://lmcs.episciences.org/696}.

\bibitemdeclare{phdthesis}{carbone2005phd}
\bibitem{carbone2005phd}
\bibinfo{author}{Marco \surnamestart Carbone\surnameend}
  (\bibinfo{year}{2005}): \emph{\bibinfo{title}{Trust and Mobility}}.
\newblock Ph.D. thesis, \bibinfo{school}{University of Aarhus}.
\newblock \bibinfo{note}{BRICS Dissertation Series Number DS-05-3}.

\bibitemdeclare{article}{CARBONEMAFFEIS}
\bibitem{CARBONEMAFFEIS}
\bibinfo{author}{Marco \surnamestart Carbone\surnameend} \&
  \bibinfo{author}{Sergio \surnamestart Maffeis\surnameend}
  (\bibinfo{year}{2003}): \emph{\bibinfo{title}{On the Expressive Power of
  Polyadic Synchronisation in Pi-Calculus}}.
\newblock {\slshape \bibinfo{journal}{Nordic Journal of Computing}}
  \bibinfo{volume}{10}(\bibinfo{number}{2}), pp. \bibinfo{pages}{70--98},
  \doi{10.1016/S1571-0661(05)80361-5}.

\bibitemdeclare{inproceedings}{Char13}
\bibitem{Char13}
\bibinfo{author}{Arthur \surnamestart Chargu{\'{e}}raud\surnameend}
  (\bibinfo{year}{2013}): \emph{\bibinfo{title}{Pretty-Big-Step Semantics}}.
\newblock In: {\slshape \bibinfo{booktitle}{Proc. of {ESOP}}}, {\slshape
  \bibinfo{series}{LNCS}} \bibinfo{volume}{7792},
  \bibinfo{publisher}{Springer}, pp. \bibinfo{pages}{41--60},
  \doi{10.1007/978-3-642-37036-6\_3}.

\bibitemdeclare{inproceedings}{DBZD20}
\bibitem{DBZD20}
\bibinfo{author}{Francesco \surnamestart Dagnino\surnameend},
  \bibinfo{author}{Viviana \surnamestart Bono\surnameend},
  \bibinfo{author}{Elena \surnamestart Zucca\surnameend} \&
  \bibinfo{author}{Mariangiola \surnamestart Dezani{-}Ciancaglini\surnameend}
  (\bibinfo{year}{2020}): \emph{\bibinfo{title}{Soundness Conditions for
  Big-Step Semantics}}.
\newblock In: {\slshape \bibinfo{booktitle}{Proc. of {ESOP}}}, {\slshape
  \bibinfo{series}{LNCS}} \bibinfo{volume}{12075},
  \bibinfo{publisher}{Springer}, pp. \bibinfo{pages}{169--196},
  \doi{10.1007/978-3-030-44914-8\_7}.

\bibitemdeclare{article}{gabbay2002nominal}
\bibitem{gabbay2002nominal}
\bibinfo{author}{Murdoch \surnamestart Gabbay\surnameend} \&
  \bibinfo{author}{Andrew \surnamestart Pitts\surnameend}
  (\bibinfo{year}{2002}): \emph{\bibinfo{title}{A New Approach to Abstract
  Syntax with Variable Binding}}.
\newblock {\slshape \bibinfo{journal}{Formal Asp. Comput.}}
  \bibinfo{volume}{13}, pp. \bibinfo{pages}{341--363},
  \doi{10.1007/s001650200016}.

\bibitemdeclare{inproceedings}{G12}
\bibitem{G12}
\bibinfo{author}{Rob \surnamestart van Glabbeek\surnameend}
  (\bibinfo{year}{2012}): \emph{\bibinfo{title}{Musings on Encodings and
  Expressiveness}}.
\newblock In: {\slshape \bibinfo{booktitle}{Proc. of {EXPRESS/SOS}}}, {\slshape
  \bibinfo{series}{{EPTCS}}}~\bibinfo{volume}{89}, pp. \bibinfo{pages}{81--98},
  \doi{10.4204/EPTCS.89.7}.

\bibitemdeclare{inproceedings}{G18}
\bibitem{G18}
\bibinfo{author}{Rob \surnamestart van Glabbeek\surnameend}
  (\bibinfo{year}{2018}): \emph{\bibinfo{title}{A Theory of Encodings and
  Expressiveness (Extended Abstract)}}.
\newblock In: {\slshape \bibinfo{booktitle}{Proc. of {FoSSaCS}}}, {\slshape
  \bibinfo{series}{LNCS}} \bibinfo{volume}{10803},
  \bibinfo{publisher}{Springer}, pp. \bibinfo{pages}{183--202},
  \doi{10.1007/978-3-319-89366-2\_10}.

\bibitemdeclare{article}{GORLA}
\bibitem{GORLA}
\bibinfo{author}{Daniele \surnamestart Gorla\surnameend}
  (\bibinfo{year}{2010}): \emph{\bibinfo{title}{Towards a unified approach to
  encodability and separation results for process calculi}}.
\newblock {\slshape \bibinfo{journal}{Information and Computation}}
  \bibinfo{volume}{208}(\bibinfo{number}{9}), pp. \bibinfo{pages}{1031--1053},
  \doi{10.1016/j.ic.2010.05.002}.

\bibitemdeclare{article}{gorla_nestmann2014_full_abstraction}
\bibitem{gorla_nestmann2014_full_abstraction}
\bibinfo{author}{Daniele \surnamestart Gorla\surnameend} \&
  \bibinfo{author}{Uwe \surnamestart Nestmann\surnameend}
  (\bibinfo{year}{2016}): \emph{\bibinfo{title}{Full abstraction for
  expressiveness: history, myths and facts}}.
\newblock {\slshape \bibinfo{journal}{Mathematical Structures in Computer
  Science}} \bibinfo{volume}{26}, pp. \bibinfo{pages}{639 -- 654},
  \doi{10.1017/S0960129514000279}.

\bibitemdeclare{article}{DPICALC}
\bibitem{DPICALC}
\bibinfo{author}{Matthew \surnamestart Hennessy\surnameend} \&
  \bibinfo{author}{James \surnamestart Riely\surnameend}
  (\bibinfo{year}{2002}): \emph{\bibinfo{title}{Resource Access Control in
  Systems of Mobile Agents}}.
\newblock {\slshape \bibinfo{journal}{Information and Computation}}
  \bibinfo{volume}{173}(\bibinfo{number}{1}), pp. \bibinfo{pages}{82--120},
  \doi{10.1006/inco.2001.3089}.

\bibitemdeclare{inproceedings}{hirschkoff2020references_picalc}
\bibitem{hirschkoff2020references_picalc}
\bibinfo{author}{Daniel \surnamestart Hirschkoff\surnameend},
  \bibinfo{author}{Enguerrand \surnamestart Prebet\surnameend} \&
  \bibinfo{author}{Davide \surnamestart Sangiorgi\surnameend}
  (\bibinfo{year}{2020}): \emph{\bibinfo{title}{On the Representation of
  References in the Pi-Calculus}}.
\newblock In \bibinfo{editor}{Igor \surnamestart Konnov\surnameend} \&
  \bibinfo{editor}{Laura \surnamestart Kov{\'{a}}cs\surnameend}, editors:
  {\slshape \bibinfo{booktitle}{31st International Conference on Concurrency
  Theory, {CONCUR} 2020, September 1-4, 2020, Vienna, Austria (Virtual
  Conference)}}, {\slshape \bibinfo{series}{LIPIcs}} \bibinfo{volume}{171},
  \bibinfo{publisher}{Schloss Dagstuhl - Leibniz-Zentrum f{\"{u}}r Informatik},
  pp. \bibinfo{pages}{34:1--34:20}, \doi{10.4230/LIPICS.CONCUR.2020.34}.

\bibitemdeclare{book}{huttel2010transitions}
\bibitem{huttel2010transitions}
\bibinfo{author}{Hans \surnamestart H{\"{u}}ttel\surnameend}
  (\bibinfo{year}{2010}): \emph{\bibinfo{title}{Transitions and Trees - An
  Introduction to Structural Operational Semantics}}.
\newblock \bibinfo{publisher}{Cambridge University Press},
  \doi{10.1017/CBO9780511840449}.

\bibitemdeclare{inproceedings}{huttel2011typed}
\bibitem{huttel2011typed}
\bibinfo{author}{Hans \surnamestart Hüttel\surnameend} (\bibinfo{year}{2011}):
  \emph{\bibinfo{title}{Typed $\psi$-calculi}}.
\newblock In: {\slshape \bibinfo{booktitle}{International Conference on
  Concurrency Theory}}, \bibinfo{organization}{Springer}, pp.
  \bibinfo{pages}{265--279}, \doi{10.1007/978-3-642-23217-6\_18}.

\bibitemdeclare{inproceedings}{huttel2013resourcespsi}
\bibitem{huttel2013resourcespsi}
\bibinfo{author}{Hans \surnamestart Hüttel\surnameend} (\bibinfo{year}{2014}):
  \emph{\bibinfo{title}{Types for Resources in $\psi$-calculi}}.
\newblock In \bibinfo{editor}{Martín \surnamestart Abadi\surnameend} \&
  \bibinfo{editor}{Alberto \surnamestart Lluch~Lafuente\surnameend}, editors:
  {\slshape \bibinfo{booktitle}{Trustworthy Global Computing}},
  \bibinfo{publisher}{Springer International Publishing},
  \bibinfo{address}{Cham}, pp. \bibinfo{pages}{83--102},
  \doi{10.1007/978-3-319-05119-2\_6}.

\bibitemdeclare{inproceedings}{huttel2016sessionpsi}
\bibitem{huttel2016sessionpsi}
\bibinfo{author}{Hans \surnamestart Hüttel\surnameend} (\bibinfo{year}{2016}):
  \emph{\bibinfo{title}{Binary Session Types for Psi-Calculi}}.
\newblock In \bibinfo{editor}{Atsushi \surnamestart Igarashi\surnameend},
  editor: {\slshape \bibinfo{booktitle}{Programming Languages and Systems}},
  \bibinfo{publisher}{Springer International Publishing},
  \bibinfo{address}{Cham}, pp. \bibinfo{pages}{96--115},
  \doi{10.1007/978-3-319-47958-3\_6}.

\bibitemdeclare{article}{huttel/2024/iandc/hopsitypes}
\bibitem{huttel/2024/iandc/hopsitypes}
\bibinfo{author}{Hans \surnamestart Hüttel\surnameend}, \bibinfo{author}{Stian
  \surnamestart Lybech\surnameend}, \bibinfo{author}{Alex~R. \surnamestart
  Bendixen\surnameend} \& \bibinfo{author}{Bjarke~B. \surnamestart
  Bojesen\surnameend} (\bibinfo{year}{2024}): \emph{\bibinfo{title}{A Generic
  Type System for Higher-Order $\Psi$-calculi}}.
\newblock {\slshape \bibinfo{journal}{Information and Computation}}, p.
  \bibinfo{pages}{105190}, \doi{https://doi.org/10.1016/j.ic.2024.105190}.
\newblock
  \urlprefix\url{https://www.sciencedirect.com/science/article/pii/S0890540124000555}.

\bibitemdeclare{article}{KS02}
\bibitem{KS02}
\bibinfo{author}{Josva \surnamestart Kleist\surnameend} \&
  \bibinfo{author}{Davide \surnamestart Sangiorgi\surnameend}
  (\bibinfo{year}{2002}): \emph{\bibinfo{title}{Imperative objects as mobile
  processes}}.
\newblock {\slshape \bibinfo{journal}{Sci. Comput. Program.}}
  \bibinfo{volume}{44}(\bibinfo{number}{3}), pp. \bibinfo{pages}{293--342},
  \doi{10.1016/S0167-6423(02)00034-5}.

\bibitemdeclare{article}{LG09}
\bibitem{LG09}
\bibinfo{author}{Xavier \surnamestart Leroy\surnameend} \&
  \bibinfo{author}{Herv{\'{e}} \surnamestart Grall\surnameend}
  (\bibinfo{year}{2009}): \emph{\bibinfo{title}{Coinductive big-step
  operational semantics}}.
\newblock {\slshape \bibinfo{journal}{Inf. Comput.}}
  \bibinfo{volume}{207}(\bibinfo{number}{2}), pp. \bibinfo{pages}{284--304},
  \doi{10.1016/J.IC.2007.12.004}.

\bibitemdeclare{article}{lybech/2024/iandc/rhocalc}
\bibitem{lybech/2024/iandc/rhocalc}
\bibinfo{author}{Stian \surnamestart Lybech\surnameend} (\bibinfo{year}{2024}):
  \emph{\bibinfo{title}{The reflective higher-order calculus: Encodability,
  typability and separation}}.
\newblock {\slshape \bibinfo{journal}{Information and Computation}}
  \bibinfo{volume}{297}, p. \bibinfo{pages}{105138},
  \doi{10.1016/j.ic.2024.105138}.

\bibitemdeclare{article}{milner1992functions}
\bibitem{milner1992functions}
\bibinfo{author}{Robin \surnamestart Milner\surnameend} (\bibinfo{year}{1992}):
  \emph{\bibinfo{title}{Functions as processes}}.
\newblock {\slshape \bibinfo{journal}{Mathematical structures in computer
  science}} \bibinfo{volume}{2}(\bibinfo{number}{2}), pp.
  \bibinfo{pages}{119--141}.

\bibitemdeclare{incollection}{PICALC}
\bibitem{PICALC}
\bibinfo{author}{Robin \surnamestart Milner\surnameend} (\bibinfo{year}{1993}):
  \emph{\bibinfo{title}{The Polyadic $\pi$-Calculus: a Tutorial}}.
\newblock In: {\slshape \bibinfo{booktitle}{Logic and Algebra of
  Specification}}, \bibinfo{publisher}{Springer Berlin Heidelberg}, pp.
  \bibinfo{pages}{203--246}, \doi{10.1007/978-3-642-58041-3\_6}.

\bibitemdeclare{article}{milner_walker_parrow1992picalc}
\bibitem{milner_walker_parrow1992picalc}
\bibinfo{author}{Robin \surnamestart Milner\surnameend},
  \bibinfo{author}{Joachim \surnamestart Parrow\surnameend} \&
  \bibinfo{author}{David \surnamestart Walker\surnameend}
  (\bibinfo{year}{1992}): \emph{\bibinfo{title}{A calculus of mobile processes,
  I}}.
\newblock {\slshape \bibinfo{journal}{Information and Computation}}
  \bibinfo{volume}{100}(\bibinfo{number}{1}), pp. \bibinfo{pages}{1--40},
  \doi{10.1016/0890-5401(92)90008-4}.

\bibitemdeclare{inproceedings}{NR99}
\bibitem{NR99}
\bibinfo{author}{Uwe \surnamestart Nestmann\surnameend} \&
  \bibinfo{author}{Ant{\'{o}}nio \surnamestart Ravara\surnameend}
  (\bibinfo{year}{1999}): \emph{\bibinfo{title}{Semantics of Objects as
  Processes {(SOAP)}}}.
\newblock In: {\slshape \bibinfo{booktitle}{ECOOP'99 Workshops}}, {\slshape
  \bibinfo{series}{LNCS}} \bibinfo{volume}{1743},
  \bibinfo{publisher}{Springer}, pp. \bibinfo{pages}{314--325}.

\bibitemdeclare{book}{nielson_nielson2007semantics_with_applications}
\bibitem{nielson_nielson2007semantics_with_applications}
\bibinfo{author}{Hanne~Riis \surnamestart Nielson\surnameend} \&
  \bibinfo{author}{Flemming \surnamestart Nielson\surnameend}
  (\bibinfo{year}{2007}): \emph{\bibinfo{title}{Semantics with Applications: An
  Appetizer}}.
\newblock \bibinfo{publisher}{Springer-Verlag London},
  \doi{10.1007/978-1-84628-692-6}.

\bibitemdeclare{incollection}{parrow2001introduction}
\bibitem{parrow2001introduction}
\bibinfo{author}{Joachim \surnamestart Parrow\surnameend}
  (\bibinfo{year}{2001}): \emph{\bibinfo{title}{An introduction to the
  $\pi$-calculus}}.
\newblock In: {\slshape \bibinfo{booktitle}{Handbook of Process Algebra}},
  \bibinfo{publisher}{Elsevier}, pp. \bibinfo{pages}{479--543},
  \doi{10.1016/B978-044482830-9/50026-6}.

\bibitemdeclare{inproceedings}{Par08}
\bibitem{Par08}
\bibinfo{author}{Joachim \surnamestart Parrow\surnameend}
  (\bibinfo{year}{2006}): \emph{\bibinfo{title}{Expressiveness of Process
  Algebras}}.
\newblock In: {\slshape \bibinfo{booktitle}{Emerging Trends in Concurrency
  Theory}}, {\slshape \bibinfo{series}{ENTCS}} \bibinfo{volume}{209},
  \bibinfo{publisher}{Elsevier}, pp. \bibinfo{pages}{173--186},
  \doi{10.1016/J.ENTCS.2008.04.011}.

\bibitemdeclare{article}{parrow2014higher}
\bibitem{parrow2014higher}
\bibinfo{author}{Joachim \surnamestart Parrow\surnameend},
  \bibinfo{author}{Johannes \surnamestart Borgstr{\"o}m\surnameend},
  \bibinfo{author}{Palle \surnamestart Raabjerg\surnameend} \&
  \bibinfo{author}{Johannes \surnamestart {\AA}man~Pohjola\surnameend}
  (\bibinfo{year}{2014}): \emph{\bibinfo{title}{Higher-order psi-calculi}}.
\newblock {\slshape \bibinfo{journal}{Mathematical Structures in Computer
  Science}} \bibinfo{volume}{24}(\bibinfo{number}{2}),
  \doi{10.1017/S0960129513000170}.

\bibitemdeclare{inproceedings}{PG15}
\bibitem{PG15}
\bibinfo{author}{Kirstin \surnamestart Peters\surnameend} \&
  \bibinfo{author}{Rob~J. \surnamestart van Glabbeek\surnameend}
  (\bibinfo{year}{2015}): \emph{\bibinfo{title}{Analysing and Comparing
  Encodability Criteria}}.
\newblock In: {\slshape \bibinfo{booktitle}{Proc. of {EXPRESS/SOS}}}, {\slshape
  \bibinfo{series}{{EPTCS}}} \bibinfo{volume}{190}, pp.
  \bibinfo{pages}{46--60}, \doi{10.4204/EPTCS.190.4}.

\bibitemdeclare{inproceedings}{PICAPABLE}
\bibitem{PICAPABLE}
\bibinfo{author}{Benjamin \surnamestart Pierce\surnameend} \&
  \bibinfo{author}{Davide \surnamestart Sangiorgi\surnameend}
  (\bibinfo{year}{1993}): \emph{\bibinfo{title}{Typing and subtyping for mobile
  processes}}.
\newblock In: {\slshape \bibinfo{booktitle}{[1993] Proceedings Eighth Annual
  IEEE Symposium on Logic in Computer Science}}, \bibinfo{organization}{IEEE},
  pp. \bibinfo{pages}{376--385}, \doi{10.1109/LICS.1993.287570}.

\bibitemdeclare{inproceedings}{PICT}
\bibitem{PICT}
\bibinfo{author}{Benjamin~C. \surnamestart Pierce\surnameend} \&
  \bibinfo{author}{David~N. \surnamestart Turner\surnameend}
  (\bibinfo{year}{2000}): \emph{\bibinfo{title}{Pict: a programming language
  based on the Pi-Calculus}}.
\newblock In \bibinfo{editor}{Gordon~D. \surnamestart Plotkin\surnameend},
  \bibinfo{editor}{Colin \surnamestart Stirling\surnameend} \&
  \bibinfo{editor}{Mads \surnamestart Tofte\surnameend}, editors: {\slshape
  \bibinfo{booktitle}{Proof, Language, and Interaction, Essays in Honour of
  Robin Milner}}, \bibinfo{publisher}{The {MIT} Press}, pp.
  \bibinfo{pages}{455--494}.

\bibitemdeclare{inproceedings}{PM14}
\bibitem{PM14}
\bibinfo{author}{Casper~Bach \surnamestart Poulsen\surnameend} \&
  \bibinfo{author}{Peter~D. \surnamestart Mosses\surnameend}
  (\bibinfo{year}{2014}): \emph{\bibinfo{title}{Deriving Pretty-Big-Step
  Semantics from Small-Step Semantics}}.
\newblock In: {\slshape \bibinfo{booktitle}{Proc. of {ESOP}}}, {\slshape
  \bibinfo{series}{LNCS}} \bibinfo{volume}{8410},
  \bibinfo{publisher}{Springer}, pp. \bibinfo{pages}{270--289},
  \doi{10.1007/978-3-642-54833-8\_15}.

\bibitemdeclare{techreport}{RV96}
\bibitem{RV96}
\bibinfo{author}{Ant\'onio \surnamestart Ravara\surnameend} \&
  \bibinfo{author}{Vasco~Thudichum \surnamestart Vasconcelos\surnameend}
  (\bibinfo{year}{1996}): \emph{\bibinfo{title}{An Operational Semantics and a
  Type System for {GNOME} Based on a Typed Calculus of Objects}}.
\newblock \bibinfo{type}{Technical Report}, \bibinfo{institution}{17-96, Dep.
  of Mathematics, Tech. Univ. of Lisbon}.

\bibitemdeclare{article}{Sangiorgi98}
\bibitem{Sangiorgi98}
\bibinfo{author}{Davide \surnamestart Sangiorgi\surnameend}
  (\bibinfo{year}{1998}): \emph{\bibinfo{title}{An Interpretation of Typed
  Objects into Typed pi-Calculus}}.
\newblock {\slshape \bibinfo{journal}{Inf. Comput.}}
  \bibinfo{volume}{143}(\bibinfo{number}{1}), pp. \bibinfo{pages}{34--73},
  \doi{10.1006/INCO.1998.2711}.

\bibitemdeclare{book}{sangiorgi2003pi}
\bibitem{sangiorgi2003pi}
\bibinfo{author}{Davide \surnamestart Sangiorgi\surnameend} \&
  \bibinfo{author}{David \surnamestart Walker\surnameend}
  (\bibinfo{year}{2003}): \emph{\bibinfo{title}{The pi-calculus: a Theory of
  Mobile Processes}}.
\newblock \bibinfo{publisher}{Cambridge university press}.

\bibitemdeclare{phdthesis}{turner1996phd}
\bibitem{turner1996phd}
\bibinfo{author}{David~N. \surnamestart Turner\surnameend}
  (\bibinfo{year}{1996}): \emph{\bibinfo{title}{The Polymorphic Pi-calculus:
  Theory and Implementation}}.
\newblock Ph.D. thesis, \bibinfo{school}{University of Edinburgh, {UK}}.
\newblock \urlprefix\url{https://hdl.handle.net/1842/395}.

\bibitemdeclare{inproceedings}{Vasco95}
\bibitem{Vasco95}
\bibinfo{author}{Vasco~Thudichum \surnamestart Vasconcelos\surnameend}
  (\bibinfo{year}{1995}): \emph{\bibinfo{title}{An Operational Semantics and a
  Type System for {ABCL/1} Based on a Calculus of Objects}}.
\newblock In: {\slshape \bibinfo{booktitle}{Object-Oriented Computing III,
  Lecture Notes}}, \bibinfo{publisher}{Kindai Kagaku Sha}.

\bibitemdeclare{article}{Walker95}
\bibitem{Walker95}
\bibinfo{author}{David \surnamestart Walker\surnameend} (\bibinfo{year}{1995}):
  \emph{\bibinfo{title}{Objects in the $\pi$-Calculus}}.
\newblock {\slshape \bibinfo{journal}{Information and Computation}}
  \bibinfo{volume}{116}(\bibinfo{number}{2}), pp. \bibinfo{pages}{253--271},
  \doi{10.1006/inco.1995.1018}.

\end{thebibliography}

\end{document}